\newif\ifrepeatthm
\newcommand\numberthis{\addtocounter{equation}{1}\tag{\theequation}}
\crefname{enumi}{}{}
\crefname{rmksq}{Remark}{Remarks}
\Crefname{rmksq}{Remark}{Remarks}
\crefname{rmk}{Remark}{Remarks}
\Crefname{rmk}{Remark}{Remarks}
\newcommand{\naive}{\textsc{naive}}
\newcommand{\oracle}{\textsc{oracle}}
\newcommand{\indepnpmle}{\textsc{independent-npmle}}
\newcommand{\meanrank}{\textsc{mean rank}}
\newcommand{\topprob}{\textsc{top-{\small 20} probability}}
\newcommand{\incarceration}{\textsc{incarceration}}
\newcommand{\pooled}{\textsc{pooled}}
\newcommand{\raw}{\mathrm{raw}}
\newcommand{\RF}{R_{\mathrm{F}}}
\newcommand{\RB}{R_{\mathrm{B}}}
\newcommand{\A}{\mathbf{A}}
\newcommand{\absauto}[1]{\left| #1 \right|}
\newcommand{\normauto}[1]{\left\lVert #1 \right\rVert}
\newcommand{\sigl}{\sigma_\ell}
\newcommand{\sigu}{\sigma_u}
\newcommand{\fs}[1]{f_{#1, \boldsymbol{\cdot}}}
\newcommand{\barh}{\bar h}
\newcommand{\invphi}{\varphi_{+}}
\newcommand{\hyperparams}{\sigl, \sigu, s_{0\ell}, s_{0u}}
\newcommand{\Hyperparams}{\mathcal{H}}
\newcommand{\leh}{\lesssim_{\hyperparams}}
\newcommand{\leH}{\lesssim_{\mathcal{H}}}
\newcommand{\hatm}{\hat m}
\newcommand{\hats}{\hat s}
\newcommand{\hateta}{\hat\eta}
\newcommand{\regret}{\mathrm{BayesRegret}_n}
\newcommand{\reg}{\mathrm{MSERegret}_n}
\newcommand{\sub}{\mathrm{Sub}}
\newcommand{\PE}{\mathbf{E}}
\newcommand{\comp}{\mathrm{C}}
\newcommand{\EB}{\mathrm{EB}}
\newcommand{\bY}{\mathbf{Y}}
\newcommand{\btheta}{\bm{\theta}}
\newcommand{\utilmax}{\textsc{utility maximization by selection}}
\newcommand{\topm}{\textsc{top-{\small $m$} selection}}
\newcommand{\utilmaxreg}{\mathrm{UMRegret}_n}
\newcommand{\topmreg}{\mathrm{TopRegret}_{n}^{(m)}}
\newcommand{\close}{\textsc{close}}
\newcommand{\npmle}{\textsc{npmle}}
\newcommand{\closenpmle}{\close-\npmle}
\newcommand{\closegauss}{\close-\textsc{gauss}}
\newcommand{\bdelta}{\bm{\delta}}
\newcommand{\Dinfty}{\norm{\hateta - \eta_0}_{\infty}} %
\newcommand{\indepgauss}{\textsc{independent-gauss}}
\title{Empirical Bayes When Estimation Precision Predicts Parameters}
\author{Jiafeng Chen \\ Stanford University and SIEPR \\ 
\href{mailto:jiafeng@stanford.edu}
{jiafeng@stanford.edu}}
\date{\today. This paper is based on the second chapter of my Ph.D. thesis at Harvard
 University. It was previously titled ``Gaussian Heteroskedastic Empirical Bayes without
 Independence.'' I thank my doctoral advisors, Isaiah Andrews, Elie Tamer, Jesse Shapiro,
 and Edward Glaeser, for their guidance and generous support. I thank Patrick Kline for
 discussing this paper at ASSA 2025. For comments and discussion, I thank two anonymous
 referees, Harvey Barnhard, Raj Chetty, Dominic Coey, Aureo de Paula, Bryan Graham,
 Jiaying Gu, Aditya Guntuboyina, Nathaniel Hendren, Keisuke Hirano, Peter Hull, Kenneth
 Hung, Lawrence Katz, Patrick Kline, Scott Duke Kominers, Soonwoo Kwon, Lihua Lei, Andrew
 Lo, Michael Luca, Anna Mikusheva, Joris Pinkse, Mikkel Plagborg-M\o {}ller, Azeem
 Shaikh, Suproteem Sarkar, Ashesh Rambachan, David Ritzwoller, Brad Ross, Jonathan Roth,
 Neil Shephard, Rahul Singh, Asher Spector, Harald Uhlig, Winnie van Dijk, Davide
 Viviano, Christopher Walker, Chris Walters,  and workshop and seminar participants at
 Brown, Harvard, Penn State, Philadelphia Fed, Rutgers, Princeton, Stanford, the
 University of Chicago, Berkeley, UCLA, Yale, USC, Paris Econometrics Seminar, California
 Econometrics Conference, UCSD, Cornell, University of Bonn, LMU Munich, the University
 of Mannheim, ASSA 2025, Amazon, and the Chamberlain Seminar. \Cref {asec:max_gauss} of
 this paper supersedes the preprint arXiv:2303.08653. An \textsf {R} implementation of
\close{} is found at \url{https://github.com/jiafengkevinchen/close}.  Replication files
are found at \url{https://github.com/jiafengkevinchen/close-replication}. I am responsible
 for any and all errors. }
\let\newtitle\@title
\def\paragraph{\@startsection{paragraph}{4}%
  \z@\z@{-\fontdimen2\font}%
  {\normalfont\bfseries}}
\newcommand\DoToC{%
  \startcontents
  \printcontents{}{0}{\textbf{Contents}\vskip1em\hrule\vskip1em}
  \vskip1em\hrule\vskip5pt
}
\numberwithin{equation}{section}
\patchcmd{\@settitle}{\uppercasenonmath\@title}{\large}{}{}
\patchcmd{\@setauthors}{\MakeUppercase}{\vspace{-1em}\normalsize}{}{}
\patchcmd{\section}{\scshape}{\bfseries}{}{}
\newcolumntype{Y}{>{\centering\arraybackslash}X}
\begin{document}

{\onehalfspacing
\begin{abstract}
Gaussian empirical Bayes methods usually maintain a \emph{precision independence}
assumption: The unknown parameters of interest are independent from the known standard errors of the
estimates.
This assumption is often theoretically questionable and empirically rejected. This paper
proposes to model the conditional distribution of the parameter given the standard errors
as a flexibly parametrized location-scale family of distributions, leading to a family of
methods that we call \close. The \close{} framework unifies and generalizes several
proposals under precision dependence. We argue that the most flexible member of
the \close{} family is a minimalist and computationally efficient default for accounting
for precision dependence. We analyze this method and show that it is competitive in terms
of the regret of subsequent decisions rules. 
 Empirically, using \close{} leads to sizable gains for selecting high-mobility Census
 tracts.

\vspace{1em}
\noindent \textsc{JEL codes.} C10, C11, C44

\noindent \textsc{Keywords.} Empirical Bayes, $g$-modeling, regret, heteroskedasticity,
nonparametric maximum likelihood, Opportunity Atlas, Creating Moves to Opportunity
\end{abstract}

\maketitle
}
\newpage
\onehalfspacing

\section{Introduction}

Applied economists often use empirical Bayes methods to shrink noisy parameter estimates,
in hopes of accounting for the imprecision in the estimates and improving subsequent
decisions. Many such settings\footnote{Empirical Bayes methods are applicable whenever
many parameters for heterogeneous populations are estimated in tandem. These settings
include value-added modeling \citep
{angrist2017leveraging,mountjoy2021returns,chandra2016health,doyle2017evaluating,hull2018estimating,einav2022producing,abaluck2021mortality},
place-based effects \citep
{chyn2021neighborhoods,finkelstein2021place,chetty2018opportunity,chetty2018impacts,diamond2021standard,baum2019microgeography,aloni2023one},
discrimination \citep
{kline2022systemic,kline2023discrimination,rambachan2021identifying,egan2022harry,arnold2022measuring,montiel2021empirical},
meta-analysis \citep
{azevedo2020b,meager2022aggregating,andrews2019identification,elliott2022detecting,wernerfelt2022estimating,dellavigna2022rcts,abadie2023estimating},
and correlated random effects in panel data \citep
{chamberlain1984panel,arellano2009robust,bonhomme2020much,bonhomme2015grouped,liu2020forecasting,giacomini2023robust,bonhommedenis}.
}  can be described by a heteroskedastic Gaussian sequence model with known variances.
That is, researchers obtain statistical estimates $Y_i$ and accompanying standard errors
$\sigma_i$ for parameters $\theta_i$ associated with units $i=1,\ldots, n$. Motivated by
the central limit theorem, we model $Y_i$ as unbiased Gaussian signals on $\theta_i$ with
known variances $\sigma_i^2$:
 \[Y_i
 \mid \theta_i,
\sigma_i \sim \Norm(\theta_i, \sigma_i^2) \quad i=1,\ldots, n. \numberthis 
\label{eq:gaussian_heteroskedastic_location}\]  Loosely speaking, empirical Bayes methods
improve decisions---e.g., estimating $\theta_i$ or identifying units with high
 $\theta_i$---by pooling strength across the many estimates $(Y_i, \sigma_i)_{i=1}^n$ and
 accounting for differing levels of noise $\sigma_i$.

Commonly used empirical Bayes methods often assume \emph{precision independence}---that
the known standard errors $\sigma_i$ do not predict the underlying parameters $\theta_i$
(i.e., $\sigma_i \indep \theta_i$). However, precision independence is economically
questionable and empirically rejected in many contexts. %
 Inappropriately imposing it can harm empirical Bayes decisions, possibly even making them
 underperform decisions without shrinkage. Motivated by these concerns, this paper
 introduces and analyzes empirical Bayes methods that allow for precision dependence.

To be concrete, our empirical application \citep{bergman2019creating} uses empirical Bayes
methods to shrink raw economic mobility estimates $(Y_i, \sigma_i)$ of low-income
children, curated by \citet{chetty2018opportunity}. Here, $\theta_i$ represents true
unobserved economic mobility of low-income children from Census tract $i$. In this
context, precision independence assumes that the standard errors of these estimates do not
predict true economic mobility. However, more upwardly mobile Census tracts tend to have
noisier estimates, in part because they contain fewer low-income households. Consequently,
the standard errors $\sigma_i$ and true mobility $\theta_i$ are positively correlated.

In this context, imposing precision independence can be costly for decision-making. 
\citet{bergman2019creating} select high-mobility Census tracts by choosing those with high
 empirical Bayes posterior means (i.e., shrinkage estimates). Under precision
 independence, empirical Bayes methods shrink all estimates to their \emph
 {unconditional} mean (i.e., $\E[\theta_i]$) and shrink noisier estimates more
 aggressively. If $\theta_i$ and $\sigma_i$ are positively correlated, such shrinkage
 tends to systematically
 \emph{underestimate} true mobility of high-$\sigma_i$ tracts. This can harm subsequent
  selection decisions, if we wish to target high-mobility---hence disproportionately
  high-$\sigma_i$---tracts.\footnote{For a few
  measures of economic mobility where precision independence is severely violated, we
  find that screening on conventional estimates selects \emph{less} economically mobile
  tracts, on average, than screening on the unshrunk estimates. Fortunately, for the
  measure of economic mobility (mean income rank pooling over all demographic groups
  whose parents are at the 25\th {} percentile of household income) used in \citet
  {bergman2019creating}, the violation of precision independence is sufficiently mild, so
  that screening on these empirical Bayes shrinkage estimates still outperforms screening
  on the raw estimates. %
} In contrast, screening on shrinkage estimates computed by our methods selects
substantially more mobile tracts.

To introduce empirical Bayes methods, let us return to the Gaussian model 
\eqref{eq:gaussian_heteroskedastic_location}. 
Under this setup, empirical Bayes methods are rationalized as approximations of unknown
optimal decisions. Assume that $(\theta_i, \sigma_i)$ are drawn randomly from some
distribution. Then the optimal, infeasible decisions take the form of Bayes decision
rules for an \emph{oracle Bayesian}, whose prior is the unknown distribution of $
(\theta_i,
\sigma_i)$. Empirical Bayes methods emulate these {oracle decisions} by estimating
 the {oracle's prior} from the data. For instance,
  \emph{shrinkage estimation}, discussed so far, corresponds to using the estimated
   posterior means of $\theta_i$ given $(Y_i, \sigma_i)$ as a decision rule for predicting
   $\theta_1,\ldots,\theta_n$. Under this backdrop, precision independence further
   simplifies the problem of estimating the oracle's prior, but introduces poor
   performance when it fails to hold. 

This paper has two contributions. First, we propose a flexible but tractable framework for
modeling precision dependence that nests various proposals in the literature. Our methods
are then natural estimation strategies under this framework. \Cref {sec:model} models
$\theta_i \mid \sigma_i$ as a conditional location-scale family,\footnote
{A location-scale family with shape $G$, indexed by location $m$ and scale $s$, is a set
of distributions with cumulative distribution functions (CDFs) $F_ {m,s} (t) = G
\pr{\frac{t-m}{s}}$ as $m$ and $s$ vary. For instance, the family $\Norm(m,s^2)$ is
 location-scale with shape $G(t) = \Phi(t)$, for $\Phi$ the standard Gaussian CDF.}
 controlled by $\sigma_i$-dependent {location} hyperparameter $m_0 (\sigma) = \E
 [\theta \mid \sigma]$ and scale hyperparameter $s_0^2(\sigma) = \var(\theta
\mid \sigma)$. Under this assumption, different values of $\sigma_i$ translate, compress,
 or dilate the distribution $\theta_i \mid \sigma_i$, but the underlying {shape} $G_0$ of
 this distribution  is constant over $\sigma_i$. This model subsumes precision
 independence as the special case where the location and scale parameters are constant
 functions of $\sigma_i$. 

This model naturally gives rise to a family of \underline{c}onditional \underline{lo}cation-\underline{s}cale 
\underline{e}mpirical Bayes methods---which we call \close---by  estimating the
 hyperparameters $(m_0 (\sigma), s_0(\sigma),G_0)$. The \close{} framework also makes
 estimating these objects highly tractable. The location and scale hyperparameters $m_0
 (\cdot), s_0(\cdot)$ can be written as conditional moments of $Y \mid \sigma$, reducing
 their estimation to learning conditional expectation functions. Subsequently, given $
 (m_0(\cdot), s_0(\cdot))$, it is possible to normalize the data $(Y_i, \sigma_i)$ so as
 to remove precision dependence. After normalization, one could then apply conventional
 empirical Bayes methods to estimate the remaining
 hyperparameter $G_0$.

 The \close{} framework unifies and generalizes several proposals in the literature \citep
 [among others,][]
 {kline2023discrimination,weinstein2018group,george2017mortality,ignatiadis2019covariate}.
 These proposals can be viewed as specific modeling and estimation choices for $ (m_0,
 s_0, G_0)$. Various subsets of these proposals emphasize a nonparametric perspective for
 modeling and estimating various components of $(m_0, s_0, G_0)$; thus, a natural way to
 generalize is to adopt a nonparametric perspective for all of them. In particular, we
 advocate for using nonparametric regression to estimate $(m_0(\cdot), s_0 (\cdot))$ and for
 using \emph{nonparametric maximum likelihood} (\npmle) to estimate $G_0$
 \citep{kiefer1956consistency,jiang2009general,koenker2014convex}. We refer to this
 variant as \closenpmle. We view \closenpmle{} as a flexible, minimalist, and
 computationally efficient default, in the absence of substantive knowledge that motivates
 further restrictions on $ (m_0, s_0, G_0)$.

The second contribution of the paper is a theoretical analysis of \closenpmle{} in
\cref{sec:regret}. Our main result (\cref{cor:maintext,thm:minimaxlower}) establishes
that, under the
\close{} assumptions, \closenpmle {}
emulates the oracle Bayesian as well as possible in terms of squared error loss. Specifically, we establish upper and lower bounds for the squared
error \emph{Bayes regret} for \closenpmle. These upper and lower bounds match up to
logarithmic factors in the number of observations, indicating that \closenpmle {} attains
a regret rate that is approximately minimax optimal. These results extend existing regret
guarantees for \npmle-based empirical Bayes to account for precision dependence \citep
{soloff2021multivariate,jiang2020general,jiang2009general,saha2020nonparametric}. The key
technical difficulty is accounting for estimation error in $m_0$ and $s_0$, which feed
into
\npmle{} estimation.

We enrich our main result in two additional ways. First, to assess robustness
of \closenpmle{} to the \close{} assumption, we study a population version of \closenpmle
{} under misspecification of the location-scale model. \Cref{thm:worstcaserisk} finds
that its worst-case risk---under arbitrarily different shapes of $\theta_i
\mid \sigma_i$ as a function of $\sigma_i$---is within a bounded multiple of the risk of a
 minimax procedure. Second, we also extend our guarantee for squared error regret to the
 Bayes regret for two ranking-related decision problems, including the problem of
 selecting high-mobility tracts in \citet{bergman2019creating}. \Cref
 {thm:mserelevance} shows that the Bayes regret in squared error dominates the Bayes
 regret for these other decision problems. Coupled with
\cref{cor:maintext}, this implies that \closenpmle{} has good performance for these
 ranking-related problems as well.

To illustrate our method, \cref{sec:empirical} applies \close{} to two empirical
exercises \citep{chetty2018opportunity,bergman2019creating}. The
first exercise is a simulation calibrated to the Opportunity Atlas, the
dataset published by \citet{chetty2018opportunity}. For all 15
measures of economic mobility that we consider,
\closenpmle{} improves over all alternative methods and captures over 90\% of possible
 mean-squared error (MSE) gains relative to no shrinkage, whereas conventional empirical
 Bayes methods capture only 70\% on average and as little as 50\% for some.

The second exercise evaluates the out-of-sample performance of various procedures for 
selecting high-mobility Census tracts \citep{bergman2019creating}, using an out-of-sample
validation procedure based on the coupled bootstrap that we introduce
\citep{oliveira2021unbiased}. \citet {bergman2019creating} use empirical Bayes procedures
to select high-mobility Census tracts in Seattle. In an exercise that mimics theirs, we
find that \closenpmle{} selects more economically mobile tracts than conventional methods.
Conventional methods, on the other hand, frequently select less mobile tracts than
screening based on the noisy estimates directly. The improvements of
\closenpmle{} over the standard method are on median 2.6 times the \emph{value of basic
empirical Bayes}---that is, the improvements the standard method delivers over screening
on the raw estimates $Y_i$ directly. Therefore, for this application, if one finds using
the standard empirical Bayes method a worthwhile methodological investment, then the
additional gain of using \close{} is likewise meaningful.

\section{Model and proposed method}
\label{sec:model}

\subsection{Empirical Bayes assumptions} We observe estimates $Y_i$ and
their standard errors $\sigma_i$ for parameters $\theta_i$, over populations $i \in
\br{1,\ldots,n}$. We maintain two assumptions that are standard in the empirical Bayes
literature
\citep{gilraine2020new,jiang2020general,soloff2021multivariate,gu2023invidious,guwalters_eb,eb_hole}.

First, we assume throughout that the estimates are conditionally Gaussian with known
variances equal to $\sigma_i^2$  and are 
independent across $i$ \eqref{eq:gaussian_heteroskedastic_location}. The Gaussian model \eqref
{eq:gaussian_heteroskedastic_location} is heuristically motivated by a central limit
theorem applied to the underlying micro-data. This assumption is not without loss: We
ignore the fact that the central limit theorem is only an approximation and treat the
Normality as exact.  \Copy{seremark}{As a concrete example \citep[cf. Example 2 in][]
{eb_hole}, suppose
$\theta_i =
 \E_ {Q_i}
 [Y_ {ij}]$ is the population mean of some variable $Y_{ij}
\sim Q_i$ drawn from population $Q_i$. A natural estimator $Y_i$ of $\theta_i$ is the
sample mean of $Y_{i1},\ldots, Y_{in_i}$. A natural estimate for the variance of $Y_i$ is
$\sigma_i^2 = n_i^{-2} \sum_ {j=1}^ {n_i} (Y_ {ij}-Y_i)^2$. By standard arguments, as $n_i
\to \infty$, $\smash{\sigma_i^{-1}(Y_i -
\theta_i) \dto \Norm(0,1)}$. This heuristically motivates 
\eqref{eq:gaussian_heteroskedastic_location} by replacing \smash{``$\dto$''} with
``$\sim$.''\footnote{Note
too that $Y_i - \theta_i = O_P(n_i^{-1/2})$ and $\sigma_{i} - n_i^{-1/2}\var_
{Q_i} (Y_{ij}) = O_P(n_i^{-1})$, and so the estimation error in $\sigma_i$ is negligible
compared to the estimation error in $Y_i$, thereby heuristically justifying treating the
estimated standard error $\sigma_i$ as the true variance of $Y_i$.}}

Second, we assume that $ (\theta_i, \sigma_i)$ are random and sampled i.i.d.  from some
distribution. Since empirical Bayes methods estimate the distribution of $
(\theta_i,\sigma_i)$, it is natural to think of $(\theta_i, \sigma_i)$ as random. For
minor technical reasons, throughout, we condition on $\sigma_ {1:n} = (\sigma_1,\ldots,
\sigma_n)$ and treat them as fixed. Thus, we think of $\theta_i$ as drawn
independently but not necessarily identically:
\[
\theta_{i} \mid \sigma_{i} \overset{\mathrm{i.n.i.d.}}{\sim} G_{(i)}. \numberthis
\label{eq:eb_sampling}
\]
Let $P_0 \equiv (G_ {
(1)},\ldots, G_ {(n)})$ denote the conditional distribution $\theta_
{1:n} \mid \sigma_{1:n}$.

\Copy{covariates}{Throughout, we focus on a setting without additional covariates $X_i$,
returning to accommodating for covariates in the empirical application
(\cref{sec:empirical}). Our methods generalize immediately to settings with
covariates $X_i$---as long as $Y_i \mid X_i, \theta_i, \sigma_i
 \sim \Norm (\theta_i,
 \sigma_i^2)$---by treating $X_i$ symmetrically as $\sigma_i$. We focus on $\sigma_i$ since
  it is always present in heteroskedastic empirical Bayes settings, and it enters the
  likelihood of $Y_i$ unlike other covariates.} Likewise, for simplicity, we focus on a
 setting where $(Y_i,
\theta_i, \sigma_i)$ are independently distributed: We briefly discuss dependence across
$i$ in \cref{rmk:independence}.

Under these assumptions, empirical Bayes methods are desirable for decision-making: They
approximate optimal but infeasible decision rules. To see this,  consider a decision
problem with loss function $L(\bdelta, \theta_{1:n})$, which evaluates an action $\bdelta$
at a vector of parameters $\theta_{1:n}$. The optimal decision---in terms of expected
loss $\E_{P_0}[L(\cdot, \theta_{1:n}) \mid \sigma_{1:n}]$ over $(Y_i,\theta_i) \mid
\sigma_i$---chooses actions that minimize the posterior expected loss under prior $P_0$: \[
\bdelta^\star(Y_{1:n}, \sigma_{1:n}; P_0) \in \argmin_{\bdelta} \E_{P_0}[L(\bdelta,
\theta_{1:n})
\mid Y_{1:n}, \sigma_{1:n}]. \numberthis \label{eq:oracle_bayes}
\]
For this reason, we refer to $\bdelta^\star$ as the oracle Bayes decision rule, and
think of it as the Bayes decision rule for an oracle whose prior is $P_0$. $\bdelta^\star$
is infeasible since we do not know
$P_0$. To remedy, empirical Bayes methods seek to approximate the oracle Bayes rule
$\bdelta^\star$%
. Naturally, one recipe is to plug an estimate $\hat P$ for $P_0$ into
\eqref{eq:oracle_bayes}:\footnote{To
emphasize the distinction between the true
expectation with respect to the data-generating process
\eqref{eq:eb_sampling} and a posterior mean taken with respect to some possibly estimated
measure $\hat P$, we shall use $\E$ to refer to the former and $\PE$ to refer to the
latter. Subscripts typically make the distinction clear as well.
} \[
\bdelta_{\EB} (Y_{1:n}, \sigma_{1:n}; \hat P) \in \argmin_{\bdelta} \PE_{\hat P}[L
(\bdelta,
\theta_{1:n})
\mid Y_{1:n}, \sigma_{1:n}]. \numberthis \label{eq:empirical_bayes_rule}
\]
For the decision problem where $L(\bdelta, \theta_{1:n}) = \frac{1}{n}\sum_
{i=1}^n (\delta_i - \theta_i)^2$ is mean-squared error, 
\eqref{eq:empirical_bayes_rule} generates empirical Bayes posterior means $\PE_{\hat P}
 [\theta_i \mid Y_i, \sigma_i]$, often referred to as shrinkage estimates 
 \citep{james1992estimation,efron1973stein}.

 To simplify the estimation of $P_0$, popular empirical
Bayes methods often assume \emph{precision independence}: $\theta_i \indep \sigma_i$, or,
equivalently, $G_{(1)} =
\cdots = G_{(n)}$ in \eqref{eq:eb_sampling} and equal to some distribution $G_{(0)}$. For
 instance, the standard parametric empirical Bayes method models $G_{(i)}$ as i.i.d.
 Gaussian, $G_{ (0)} \sim
\Norm(m_0, s_0^2)$ \citep{morris1983parametric}. State-of-the-art empirical Bayes methods
 relax the parametric assumptions on $G_{(0)}$ and estimate $G_{(0)}$ with
\emph{nonparametric maximum likelihood}, or \npmle{}
\citep{jiang2020general,gilraine2020new,soloff2021multivariate}. Henceforth, we refer to
these methods as \indepgauss{} and \indepnpmle{}, respectively. The
``\textsc{independent}'' here emphasizes precision independence.

\subsection{Precision independence and its violation}

Despite its convenience, precision independence may be economically implausible; imposing
it may cause empirical Bayes methods to underperform. We illustrate this with an
application to the Opportunity Atlas \citep{chetty2018opportunity}. There, one published
measure of economic mobility $\theta_i$ of tract $i$ defines it as the probability that a
Black individual becomes relatively high-income (i.e., having family income in the top 20
percentiles nationally) after growing up relatively poor in tract $i$ (i.e., with parents
at the 25\th {} percentile nationally).

Intuitively, Census tracts with more low-income Black households should have {more
precise} estimates of $\theta_i$, simply because there is a larger sample size to estimate
$\theta_i$. However, it is likely that these tracts are also on average poorer and are
thus less economically mobile. Thus, these Census tracts should have smaller $\sigma_i$
but also lower $\theta_i$, meaning that $(\sigma_i, \theta_i)$ are positively correlated.

\begin{figure}[!htb]
  \centering
  \includegraphics[width=0.9\textwidth]{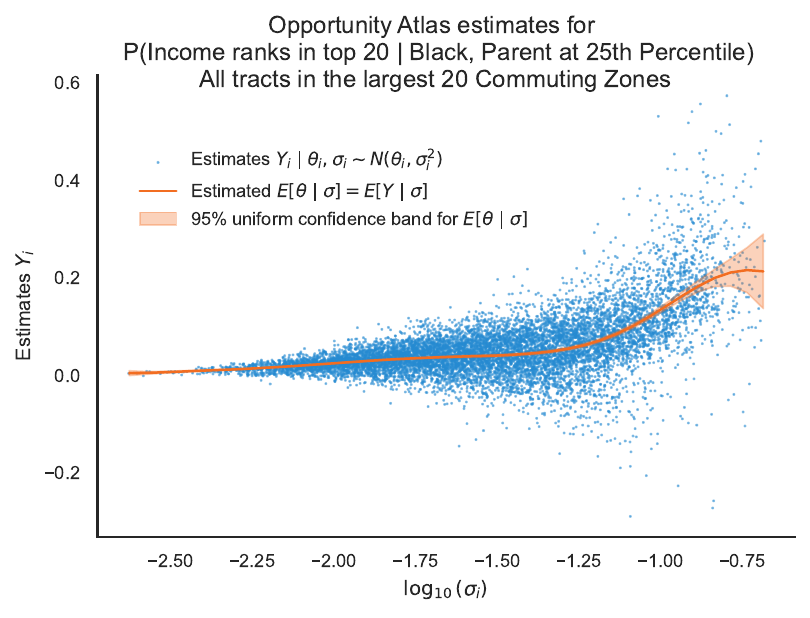}

  \begin{proof}[Notes]
  All tracts within the largest 20 Commuting Zones are shown. Due to the regression
   specification in \citet{chetty2018opportunity}, point estimates of $\theta_i \in [0,1]$
   do not always lie within $[0,1]$. The orange line plots nonparametric regression
   estimates of the conditional mean $\E[Y \mid
  \sigma] = \E[\theta \mid \sigma] \equiv m_0(\sigma)$, estimated via local linear
   regression implemented by \citet{calonico2019nprobust}. The orange shading shows a
   95\% uniform confidence band, constructed by the max-$t$ confidence set over 50
   equally spaced evaluation points. See \cref{sec:nuisance_estimation} for details on
   estimating conditional moments of $\theta_i$ given $\sigma_i$.
  \end{proof}

  \caption{Scatter plot of $Y_i$ against $\log_{10}(\sigma_i)$ in
  \citet{chetty2018opportunity}}

  \label{fig:kfr_top20_black_raw_data}
\end{figure}

As this economic intuition predicts, precision independence is readily rejected for this
measure of economic mobility. \Cref {fig:kfr_top20_black_raw_data} plots the estimates
$Y_i$ against their standard errors, overlaying an estimate of the conditional mean
function $m_0(\sigma_i)
\equiv
\E[\theta_i \mid \sigma_i] =
\E[Y_i \mid \sigma_i]$. If $\theta_i$ were independent of $\sigma_i$, then the
 true conditional mean function $m_0(\sigma_i)$ should be constant. \Cref
 {fig:kfr_top20_black_raw_data} shows the contrary---tracts with more imprecisely
 estimated $Y_i$ indeed tend to have higher $\theta_i$.

\begin{figure}[!htb]
  \centering
  \includegraphics[width=0.9\textwidth]
  {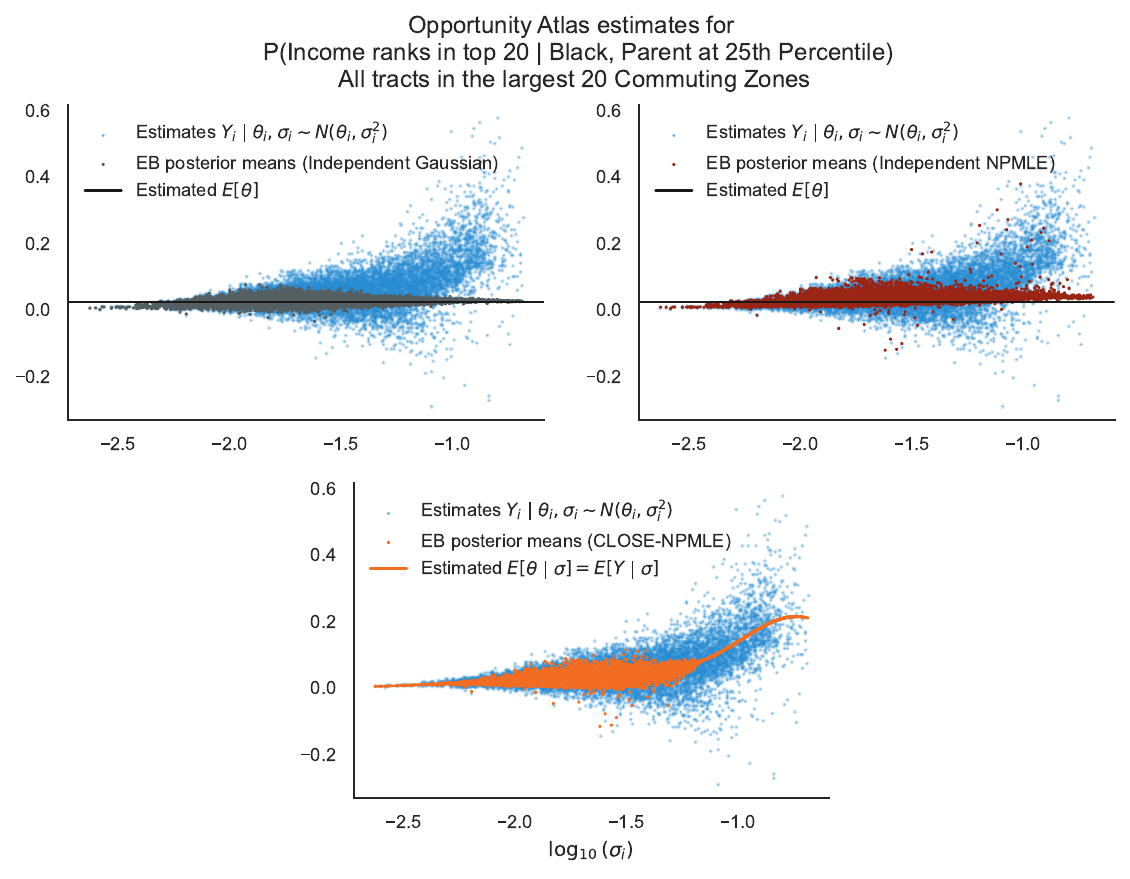}

  \begin{proof}[Notes] The top left panel shows posterior mean estimates with \indepgauss.
   The top right panel shows the same with \indepnpmle. The bottom panel displays
   posterior mean estimates from our preferred procedure, \closenpmle. In the top panels,
   the estimates for the unconditional mean and variance of $\theta_i$ are weighted by
   the precision $1/\sigma_i^2$, following \citet{bergman2019creating}.
  \end{proof}

  \caption{Posterior mean estimates under precision independence}
  \label{fig:kfr_top20_black_pooled_raw_grand_mean_shrink}
\end{figure}

What happens if we apply empirical Bayes methods that assume precision independence here?
\Cref{fig:kfr_top20_black_pooled_raw_grand_mean_shrink} overlays empirical Bayes posterior
means on the scatterplot. In the top left panel,
\indepgauss{} shrinks $Y_i$ towards a common estimated mean $\hat m_0$, depicted
as the black line. When $\sigma_i$ and $\theta_i$ are positively correlated, estimated
posterior means under \indepgauss{} {systematically undershoot}
$\theta_i$ for tracts with imprecise estimates. Similarly, the top right panel of
\cref{fig:kfr_top20_black_pooled_raw_grand_mean_shrink} shows that \indepnpmle{} suffers
from the same undershooting. In contrast, the bottom panel of
\cref{fig:kfr_top20_black_pooled_raw_grand_mean_shrink} previews our preferred procedure,
\closenpmle, which shrinks towards the conditional mean $\E[\theta_i \mid
\sigma_i]$, thus avoiding the undershooting.

\Copy{msevsranking}{Nonetheless, posterior means from \indepgauss{} or \indepnpmle{} may
still be better predictors, on average, for $\theta_i$ in mean-squared error than the
 noisy $Y_i$ \citep{james1992estimation,efron1975data}. However, the undershooting for
 large $\sigma_i$ is particularly problematic if one hopes instead to \emph{select}
 high-mobility Census tracts based on these posterior means, as do \citet
 {bergman2019creating}.} On average, high-mobility tracts are exactly those with high
 $\sigma_i$. Underestimating mobility for these tracts thus leads to suboptimal selections
 that may even underperform screening directly based on $Y_i$ \citep[see, also, ][]
 {mehta2019measuring}.

\begin{figure}[htb]
  \centering
  \includegraphics[width=0.9\textwidth]{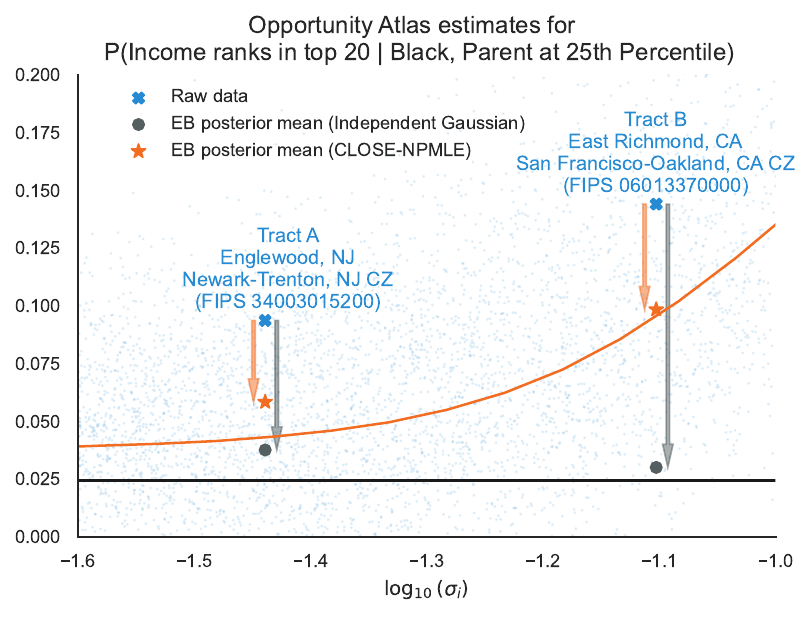}
  
  \begin{proof}[Notes] This plot shows a subregion of \cref
   {fig:kfr_top20_black_raw_data}  and highlights two Census tracts. The two tracts are
   those with $\log_{10}(\sigma_i) < -1.1$ with the highest $Y_i$, for which the selection
   decisions from \indepgauss{} and \closenpmle{} disagree. Like
   \citet{bergman2019creating}, the selection decisions aim to select 1/3 of Census tracts
   so as to maximize the average $\theta_i$ selected, by screening for the top 1/3 of
   empirical Bayes posterior means (formally, see \cref {ex:topm}).
  \end{proof}

  \caption{Ranking decisions for two Census tracts}
  \label{fig:example_shrink_ranking}

\end{figure}

To see this, \cref{fig:example_shrink_ranking} zooms into a subregion of
\cref{fig:kfr_top20_black_raw_data} and highlights two Census tracts, one in Englewood,
 NJ, and one in Richmond, CA---referring to them by tracts $A$ and $B$, respectively.
 Demographically, tract $A$ is 77\% nonwhite according to the 2010 Census, and tract $B$
 is 57\% nonwhite, contributing to different $\sigma_i$'s. Tract $A$ has a  lower raw
 estimate $Y_i$ than tract $B$ ($Y_A < Y_B$); and tracts with similar $\sigma_i$ to tract
 $A$, on average, also have lower estimates than those similar to tract $B$ (i.e., $\hatm
 (\sigma_A) < \hatm (\sigma_B)$). Either gap between the two tracts is
 substantial.\footnote{Both $Y_B-Y_A$ and $\hatm(\sigma_B) - \hatm(\sigma_A)$ are about
 five percentage points. For reference, an estimate of the unconditional standard
 deviation of $\theta_i$ is 3.7 percentage points.} These observations are compelling
 evidence in favor of $\theta_B > \theta_ {A}$: If one would like to select a Census
 tract to recommend, then, between $A$ and $B$, one is probably better off recommending
 tract $B$.

However, \indepgauss{} shrinks both to an estimate of the unconditional mean, which
results in a higher posterior mean estimate for tract $A$. In doing so---fooled by an
excessively low shrinkage target for tract $B$---\indepgauss{} recommends tract $A$ over
$B$ instead. In contrast, our preferred method (\closenpmle) computes posterior means that
preserve the more plausible ordering of the two tracts. We do so by modeling the
conditional distribution of $\theta_i
\mid \sigma_i$ more flexibly, which we turn to now.

\subsection{Conditional location-scale modeling of precision dependence}
\label{sub:close}
We propose the following \emph{conditional location-scale model} as a  relaxation: For a
distribution $G_0$ normalized to have zero mean and unit variance, $\theta_i$ has the
following
representation
\begin{align*}
\theta_i = m_0(\sigma_i) + s_0(\sigma_i) \tau_i &\quad\text{ where }\quad \tau_i \mid \sigma_i \iid
G_0 \quad \eta_0(\cdot) \equiv (m_0(\cdot), s_0(\cdot)) .
\numberthis 
\label{eq:location_scale}
\end{align*}
\eqref{eq:location_scale} states that the conditional distribution $\theta \mid
\sigma$ depends on $\sigma$ via $m_0(\sigma)$ and $s_0(\sigma)$. The function $m_0(\cdot)$
translates the \emph{location} of the distribution and the function $s_0(\cdot)$ controls
the \emph{scaling}. The underlying \emph{shape} of the distribution is governed by $\tau_i
\sim G_0$ and
is restricted by \eqref{eq:location_scale} to be invariant across different $\sigma_i$
values. By the normalization of $G_0$, we can think of $m_0 (\cdot)$ as the
conditional mean of $\theta_i \mid \sigma_i$ and $s_0^2(\cdot)$ as the conditional
variance.

Applying the empirical Bayes recipe \eqref{eq:empirical_bayes_rule} amounts to
estimating the
unknown hyperparameters $(\eta_0, G_0)$. Estimating $\eta_0 = (m_0(\cdot), s_0(\cdot))$ is
 straightforward, as $\eta_0$ can be written as conditional moments
of $Y_i \mid \sigma_i$: \[m_0 (\sigma) = \E[\theta
\mid \sigma] = \E[Y \mid
\sigma]  \quad\text{ and } \quad s_0^2 (\sigma) = \var (\theta \mid \sigma) =
\var(Y\mid \sigma) -
\sigma^2. \numberthis \label{eq:m_s_def}
\]
Estimating $\eta_0$ thus reduces to estimating conditional expectation functions.%

Estimating $G_0$ is more complicated. We do so by normalizing away the precision
dependence. Consider transforming $(Y_i, \sigma_i)$ into $(Z_i, \nu_i)$, defined by
$Z_i \equiv \frac{Y_i - m_0(\sigma_i)}{s_0(\sigma_i)}$ and $\nu_i
\equiv \frac{\sigma_i}{s_0(\sigma_i)}$. Note that 
\eqref{eq:location_scale} implies that
\[
Z_i \mid \tau_i, \nu_i^2 \sim \Norm(\tau_i, \nu_i^2) \quad \tau_i \mid \sigma_i, \nu_i
\iid G_0.
\numberthis
\label{eq:location_scale_tau_form} 
\]
\eqref{eq:location_scale_tau_form} makes clear that, first, the transformed triplet $(Z_i,
 \tau_i, \nu_i)$ obeys an analogue of the Gaussian model
\eqref{eq:gaussian_heteroskedastic_location}, where $Z_i$ is a noisy Gaussian
signal on $\tau_i$ with variance $\nu_i^2$. Second, precision independence holds in
\eqref{eq:location_scale_tau_form}, since $\tau_i
\mid \nu_i \iid G_0$.

This observation motivates the following strategy: First, estimate $m_0$ and $s_0$ with
$\hatm(\cdot)$ and $\hats(\cdot)$ so as to transform $ (Y_i, \sigma_i)$
into $ (\hat Z_i,
\hat \nu_i)$: \[
\hat Z_i = \frac{Y_i - \hatm(\sigma_i)}{\hats(\sigma_i)} \quad \text {and } \quad
\hat\nu_i = \frac{\sigma_i}
 {\hats(\sigma_i)}. \numberthis \label{eq:transformed_data}\] 
Second, apply empirical Bayes methods that assume precision independence on $(\hat Z_i,
\hat\nu_i)$ to
estimate $G_0$. This leads to a family of empirical Bayes strategies that we refer to as
conditional location-scale empirical Bayes, or \close:%

\begin{center}
\begin{minipage}{0.95\textwidth}

\begin{enumerate}[label=$\boxed{\textbf{\close--\textsc{step} {\small\arabic*}}}$,wide]
  \item \label{item:close1} Estimate $m_0(\sigma), s_0^2(\sigma)$ according to
  \eqref{eq:m_s_def}.  %

  \item \label{item:close2} With the estimates $\hateta = (\hatm, \hats)$, transform the
  data according to
  \eqref{eq:transformed_data}. Apply empirical Bayes methods under precision independence
  to
  estimate $G_0$ with some $\hat G_n$ on the transformed data $
   (\hat Z_i, \hat \nu_i)$.

  \item  \label{item:close3} Having estimated $(\hateta, \hat G_n)$ and hence having
  obtained $\hat P$, we then form empirical Bayes decision rules following \eqref
  {eq:empirical_bayes_rule}.
\end{enumerate}

\end{minipage}
\end{center}

This framework produces a family of empirical Bayes strategies,  since
\cref{item:close1,item:close2} can take different forms that practitioners can plug and
 play. \Copy{covariatesrec}{When there are additional covariates $X_i$ (independent of
 the noise $ \frac{Y_i -
\theta_i}{\sigma_i}$), researchers can choose instead to model $m_0(\sigma_i, X_i)$ and
 $s_0(\sigma_i, X_i)$ that include these covariates, and estimate $G_0$ after normalizing
 by $m_0(\sigma_i, X_i)$ and $s_0(\sigma_i, X_i)$.}

This paper focuses on a particular implementation which we call \closenpmle. It uses
nonparametric regression for \cref{item:close1} and \npmle{} for 
\cref{item:close2}. We recommend this method as a flexible default and primarily analyze
it in \cref{sec:regret}. We conclude this section with several
self-contained discussions on implementations of
these two steps, the rationale for \eqref{eq:location_scale} and
\closenpmle, and other miscellaneous issues.

\subsection{Discussions}
\label{sub:model_discussions}

\subsubsection{Implementation}
\label{sub:implementation}

For \cref{item:close1}, one can exploit \eqref{eq:m_s_def} by plugging in estimates of
conditional expectation functions. For $\hat \E[\cdot
\mid \sigma]$ an estimator of conditional means, we may let $\hatm (\sigma) = \hat\E [Y
\mid
\sigma]$ and $\hats^2(\sigma) = \hat\E[(Y - \hatm(\sigma))^2 \mid \sigma] - \sigma^2$. The
 estimator $\hat\E[\cdot \mid \sigma]$ itself may be nonparametric or based on
 judiciously chosen parametric models \citep[see][ for suggestions of the latter]
 {eb_hole}. \Copy
 {supportcomment}{The estimation of $\eta_0$ should also impose known support
 restrictions on $\eta_0$. For instance, the conditional variance estimate $\hat s_0$
 should be nonnegative (see \cref{rmk:practical}), and the conditional mean estimate
 should be within the support of $\theta_i$.} Our subsequent theoretical results simply
 assume that the estimators for $m_0(\cdot), s_0(\cdot)$ are well-behaved and are
 uniformly accurate.

For \cref{item:close2}, one could again model $G_0$ nonparametrically or parametrically.
As a flexible, performant, and minimalist default in the absence of stronger views on the
shape $G_0$, we focus on using \npmle{} to estimate $G_0$ \citep
{koenker2019comment}. Formally, the
\npmle{} $\hat G_n$ maximizes
the log-likelihood of $\hat Z_i$, whose marginal distribution is the convolution $G_0
\star
\Norm(0,
 \hat\nu_i^2)$: %
 For $\varphi(\cdot)$ the Gaussian probability density function and $\mathcal P(\R)$ the
 set of all distributions supported on $\R$, we maximize
\[
\hat G_n \in \argmax_{G \in \mathcal{P}(\R)} \frac{1}{n} \sum_{i=1}^n \log \int_{-\infty}^\infty \varphi\pr{\frac{\hat Z_i - \tau}{\hat \nu_i}} \frac{1}{\hat \nu_i} \, G(d\tau).
\numberthis \label{eq:npmle}
\]
In practice, we approximate $\mathcal P(\R)$ with finitely-supported distributions on a
grid in order to compute \eqref{eq:npmle}
\citep{koenker2014convex}.\footnote{\citet{koenker2017rebayes} provide an efficient
software implementation for \eqref{eq:npmle}, which we use throughout.

In terms of grid choice, theoretically, the only downside of a finer grid is computational
burden. Ideally, adjacent grid points should have a sufficiently small and economically
insignificant gap between them. In our empirical exercises, since the distribution $G_0$
of $\tau_i$ have zero mean and unit variance, we find that a fine grid within $[-6, 6]$
(e.g., 400 equally spaced grid points), with a coarse grid on $[\min_i \hat Z_i,
\max_i \hat Z_i] \setminus [-6, 6]$ (e.g., 100 equally spaced grid points), performs well.
Our subsequent theory accommodates an approximate maximizer of the likelihood, and thus
accommodates the discretization (\cref{as:npmle}). 
} 

\Copy{closegauss}{On the other hand, a default \emph{parametric} model for $G_0$ is to
 simply assume that $G_0 \sim \Norm(0,1)$, which we refer to as \closegauss. This
 approach amounts to using 
\indepgauss{} on the transformed estimates $(Z_i, \nu_i)$, with knowledge that
 the prior $G_0$ has zero mean and unit variance.} Under this model, the oracle Bayes
 posterior means are: \[
\delta_{\text{\closegauss}}^*(Y_i, \sigma_i) = \frac{\sigma_i^2}{s_0^2(\sigma_i) +
\sigma_i^2}
m_0
(\sigma_i) + \frac{s_0^2(\sigma_i)}{s_0^2(\sigma_i)
+ \sigma_i^2} Y_i. \numberthis \label{eq:gaussian_cond_b}
\]
Despite being rationalized under the assumption $\theta_i \mid\sigma_i
\sim \Norm(m_0(\sigma_i), s_0^2(\sigma_i))$, this oracle \eqref{eq:gaussian_cond_b}
enjoys strong robustness properties\footnote{\Cref{thm:worstcaserisk} shows that oracle versions of \closenpmle
 {} satisfy analogous but weaker robustness properties when the location-scale model
 fails. } even without the location-scale model 
\eqref{eq:location_scale} and the assumption that $G_0 \sim \Norm(0,1)$.
First,
\eqref{eq:gaussian_cond_b} is the optimal linear-in-$Y$ decision rule for estimating
$\theta_i$ in
 squared error 
\citep{weinstein2018group}; second,
\eqref{eq:gaussian_cond_b} is minimax in the sense that it minimizes the worst-case
 mean squared error over choices of $G_{(1)}, \ldots, G_{(n)}$ among all decision rules
 (see \cref{lemma:optimal_bayes_linear,lemma:minimax_close_gauss} for formal statements,
 respectively).  This method performs almost as well as \closenpmle{} in our empirical
 exercises.

\subsubsection{The location-scale assumption and \closenpmle}
\label{subsub:rationale}

\begin{table}[tb]
  \caption{Various existing methods fit into the \close{} framework}
  \label{tab:closetable}
  \centering
\scriptsize
  \begin{tabularx}{\textwidth}{m{0.24\textwidth}XX}
\toprule
 & Step 1 & Step 2 \\ \midrule
\citet{weinstein2018group} & Partition-based nonparametric estimator for $m_0, s_0^2$ &
$G_0 \sim \Norm
(0,1)$ \\ 
\citet{george2017mortality} & Parametric models for $m_0, s_0^2$ & $G_0 \sim \Norm(0,1)$
\\
\citet{chamberlain1984panel} & Parametric models for $m_0,
s_0^2$ & $G_0 \sim \Norm(0,1)$ \\ 
\citet{efron2016empirical} & Constant $m_0, s_0^2$ & $G_0$ nonparametric (log-spline sieves)
\\
\citet{kline2023discrimination} & $m_0 = c_1 s(\cdot)$, $s_0 = c_2 s(\cdot)$ for
 parametric $s(\cdot)$ & $G_0$ nonparametric (log-spline sieves) \\
\indepnpmle & Constant $m_0, s_0^2$ & $G_0$ nonparametric \\
\indepgauss & Constant $m_0, s_0^2$ & $G_0 \sim \Norm(0,1)$ \\ 
\citet{ignatiadis2019covariate} & Nonparametric $m_0$, constant $s_0^2$ & $G_0 \sim \Norm
(0,1)$ \\
\citet{jiang2010empirical} & Constant $m_0$, $s_0(\sigma) = \sigma$ 
(see \cref
 {rmk:alts_to_close}) & $G_0$ nonparametric
\\\bottomrule
  \end{tabularx}
\end{table}

We argue that the location-scale assumption provides a unifying framework for a number of
existing methods, and \closenpmle{} is a natural generalization of these methods within
this framework. We also briefly speculate how to generalize beyond \closenpmle.

Several existing methods can be thought of as implementations of
\close{} by making different choices in \cref{item:close1} and \cref{item:close2}. \Cref 
{tab:closetable} summarizes how these methods fit into the \close{} framework. Among these
methods, some choose nonparametric models for \cref{item:close1} and some choose
nonparametric models for \cref{item:close2}. For instance, \citet{weinstein2018group}
propose
\closegauss, with a partition-based nonparametric estimator for $m_0, s_0^2$. 
\citet{kline2023discrimination} consider a scale family $\theta_i = s_0(\sigma_i; \beta)
\tau_i$ for some
$\tau_i \mid \sigma_i \iid G_0$; they model $s_0(\sigma_i; \beta)$ parametrically, but
model $G_0$
flexibly using a log-spline sieve \citep{efron2016empirical}. 
\citet{george2017mortality} propose a fully Bayesian model whose components feature
parametric choices for $m_0, s_0$ with $G_0 \sim \Norm(0,1)$.

While the right modeling approach likely depends on the particular empirical context,
various subsets of these proposals emphasize being flexible in at least one of the two
steps. Thus, absent substantive knowledge that motivates more restrictive assumptions, a
natural default that unifies these approaches is to be flexible in both steps. Among
nonparametric methods,
\closenpmle{} may be particularly attractive due to its minimalism: The \npmle{} is free of
tuning parameters \citep{koenker2019comment}, and tuning parameter choices for
nonparametric regression are relatively well-understood
\citep{calonico2019nprobust,armstrong2018optimal}. That said, at a high level, when
precision
dependence is an issue, any approach that models and estimates $m_0, s_0, G_0$ well is
likely to perform well.

\Copy{transforms}{While \closenpmle{} naturally generalizes the existing methods in 
\cref{tab:closetable}, one might consider methods that do not impose
\eqref{eq:location_scale} and are even more flexible. These methods are potentially more
theoretically and computationally cumbersome: For instance, we can show that these
flexible methods can no longer transform $Y_i$ into some $Z_i = h(Y_i, \sigma_i)$ so as to
exploit precision independence on the transformed model $Z_i
\mid \tau (\theta_i,
\sigma_i), \sigma_i$.\footnote{This is because transforms that preserve
 linear exponential family structure are necessarily affine. Exponential family structure
is important for empirical Bayes because Tweedie's formula holds \citep
 {efron2011tweedie,efron2022exponential}. For an affine transform, the only way for $Z_i
 = a(\sigma_i) + b(\sigma_i) Y_i$ to satisfy precision independence is if
\eqref{eq:location_scale} holds. See \cref{lemma:transform} for a precise statement.}
In this sense, these methods must depart substantially from those that impose precision
independence.}

A natural approach is to estimate \npmle{} locally around $\sigma$ values, and we consider
these approaches important venues of future work. One might consider discretizing observed
$\sigma_i$ values into bins and apply
\indepnpmle{} within each bin.\footnote{Our Monte Carlo exercise in \cref{sec:empirical}
uses a similar approach to construct a Monte Carlo data-generating process. Thus, the
oracle performance in the Monte Carlo is the best-case scenario for the performance of
this procedure. There, we find \closenpmle{} performs well relative to the oracle and thus
to this procedure (\cref{fig:mse_table}).} A smoother---but more computationally
intensive---alternative is to estimate the posterior at some given $\sigma$ by considering
only observations with $\sigma_i \in [\sigma-h,
\sigma+h]$ and again use
\indepnpmle{} for these observations. For these methods, the number of bins and bandwidth
$h$ are tuning parameters. While we anticipate ad hoc choices of tuning parameters to
perform well, a proper theoretical analysis likely needs to link tuning choices to
smoothness in the conditional distribution $\sigma \mapsto f_{Y\mid \sigma}(\cdot \mid
\sigma)$ with respect to certain distributional distances. The corresponding regularity
conditions thus seem more complex than smoothness conditions for conditional expectations
required by \closenpmle.

\subsubsection{Additional remarks}

\begin{rmksq}[Negative $\hats^2$ estimates]
\label{rmk:practical}
Analogue estimators for $s_0^2(\sigma_i) =
\var(Y_i \mid \sigma_i) - \sigma_i^2$ may take negative values.\footnote{The negative
estimated variance phenomenon is in part caused by estimation noise in $\var(Y_i \mid
\sigma_i)$. However, in our empirical application, there is some evidence that
observations with large estimated $\sigma_i$'s are underdispersed for the measures of
economic mobility in the Opportunity Atlas (see \cref{asub:variance_right_tail}).
\citet{armstrong2022robust} propose a Bayesian estimator for the conditional variance. }
In our experience, truncating $\hats$ at zero does not seem to cause bad performance when
computing posterior means. Nevertheless, in \cref{sec:nuisance_estimation} and the
software implementation, we propose a heuristic but data-driven truncation rule that
produces strictly positive $\hats$, borrowing from a statistics literature on estimating
non-centrality parameters for non-central $\chi^2$ distributions
\citep{kubokawa1993estimation}.
\end{rmksq}

\begin{rmksq}[Other transformations]
\label{rmk:alts_to_close} We summarize and compare \close{} to two methodological
 alternatives, deferring a detailed discussion on these and on several others to
 \cref{sub:alt_methods}. First, \citet{jiang2010empirical} propose applying \npmle{} on
 the $t$-ratio $Z_i = Y_i / \sigma_i \sim \Norm(\theta_i/\sigma_i, 1)$; similar approaches
 are used in \citet{efron2016empirical,kline2022systemic}. For estimating $\theta_i$, one
 then uses $\hat\theta_i =
 \sigma_i \cdot \PE_{\hat G_n}[\theta_i/\sigma_i \mid Z_i]$. Interpreting $\hat\theta_i$
  as an estimated posterior mean $\E_{P_0}[\theta_i \mid Y_i, \sigma_i]$ requires that
  $\theta_i / \sigma_i \indep \sigma_i$---meaning that \eqref{eq:location_scale} holds
  with $s_0(\sigma_i) = \sigma_i$ and constant $m_0(\cdot)$. Thus this $t$-ratio approach
  can be viewed as a particular instance of \close, if we wish to imbue it with an
  empirical Bayesian interpretation.

Second, when $Y_i$ and $\theta_i$ are sample and population means of binary outcomes, the
estimated variance of $Y_i$ is mechanically correlated with $\theta_i$: $\sigma_i^2 = 
\frac{Y_i (1-Y_i)}
{n_i}.$ A variance-stabilizing transform, e.g. $Z_i = \arcsin\sqrt{Y_i}$
\citep{10.1214/07-AOAS138}, results in \emph{approximately} Gaussian $Z_i \sim \Norm(\arcsin{
\sqrt{\theta_i}}, \frac{1}{4n_i})$ without the mechanical dependence. However, it is still
 possible that $n_i$ predicts $\theta_i$, and when that happens, proper modeling of
 $\theta_i \mid n_i$---e.g., via an analogue of \eqref{eq:location_scale}---can continue
 to improve performance.
\end{rmksq}

\section{Theoretical results}
\label{sec:regret}

As a review, we observe $(Y_i, \sigma_i)_{i=1}^n$, where $(\theta_i,\sigma_i)$ satisfies
\eqref{eq:location_scale} and $(Y_i, \theta_i, \sigma_i)$ obeys
\eqref{eq:gaussian_heteroskedastic_location}. The procedure
\closenpmle{} transforms the data $(Y_i, \sigma_i)$ into $(\hat Z_i, \hat \nu_i)$, with
 estimated conditional moments $\hateta = (\hatm, \hats)$ for $\eta_0 = (m_0, s_0)$ in
 \cref{item:close1}. It then estimates $G_0$ via \npmle{}
\eqref{eq:npmle} on $(\hat Z_i, \hat \nu_i)_{i=1}^n$. This section introduces a few
 statistical guarantees on the performance of \closenpmle{} in terms of \emph{regret}. To
 unify presentation, we first review decision theory primitives and introduce regret. 

Let $\bdelta(Y_{1:n}, \sigma_{1:n})$ be
a \emph{decision rule} mapping the data $(Y_{1:n},
\sigma_{1:n})$ to \emph{actions}. Recall that $L(\bdelta,
\theta_{1:n})$ denotes a \emph{loss function} mapping actions and parameters to a scalar.
Let $\RB (\bdelta; P_0) = \E_{P_0}[L(\bdelta,
\theta_{1:n}) \mid
\sigma_{1:n}]$ be the \emph{Bayes risk} of $\bdelta$ under $P_0$. The oracle Bayes
decision rule $\bdelta^\star$ \eqref{eq:oracle_bayes} is optimal in the sense that it
minimizes $\RB$. Thus, a natural performance measure for the empirical Bayesian
\eqref{eq:empirical_bayes_rule} is the gap between the Bayes risks of $\bdelta_ {\EB}$ and
$\bdelta^\star$. We refer to this quantity as \emph{Bayes regret}:
\begin{align*}
\regret(\bdelta_{\EB}) &= \E_{P_0}
[L
(\bdelta_{\EB}, \theta_{1:n}) - L(\bdelta^\star, \theta_{1:n}) \mid \sigma_{1:n}],
\numberthis \label{eq:regret_def}
\end{align*}
where the right-hand side integrates over the randomness in $\theta_{1:n}, Y_{1:n}$, and,
 by extension, $\hat P$. %
If an empirical Bayes method achieves low Bayes regret, then it successfully imitates the
decisions of the oracle Bayesian, and its decisions are thus approximately
optimal. Our results show that Bayes regret for \closenpmle{} vanishes quickly as  a
function of $n$.

\begin{rmksq}[Fixed vs. random $\theta$]
\label{rmk:james-stein}
Our results consider asymptotic optimality, in terms of \eqref{eq:regret_def}, of the empirical Bayes
decision rule when $\theta_i \mid \sigma_i$ is randomly sampled from $P_0$, following a
recent literature on nonparametric empirical Bayes
\citep{jiang2020general,soloff2021multivariate}. A separate literature considers instead
the frequentist risk $\RF(\theta_{1:n}; \sigma_{1:n}) \equiv
\E\bk{
 L(\bdelta, \theta_{1:n}) \mid \theta_{1:n}, \sigma_{1:n} }$ under fixed $(\theta_{1:n},
 \sigma_{1:n})$ \citep{robbins1956}. For instance,
 \citet
  {james1992estimation,bock1975minimax,10.1214/07-AOAS138,weinstein2018group} consider
  shrinkage estimators that dominate $\delta_i = Y_i$ uniformly for all configurations of
  $\theta_{1:n}$. \citet{xie2012sure,kwon2021optimal} consider choosing decision rules
  within a restricted class that minimize an unbiased estimate of $\RF$. In
  particular, \citet{xie2012sure} can be thought of as implementing \indepgauss{} with
  different ways of estimating the hyperparameters in
  $\theta_i \mid\sigma_i \iid \Norm\pr{m_0, s_0^2}$, and \citet{weinstein2018group} can be
  thought of as implementing \closegauss.

 While these guarantees for $\RF$ are preserved even if we further average the frequentist
 risk over $\theta_{1:n} \mid
 \sigma_{1:n} \sim P_0$, they are distinct from upper bounding
 \eqref{eq:regret_def}.\footnote{For instance, the oracle Bayes rule for mean-squared error may not
  dominate $\delta_i = Y_i$ in $\RF$ uniformly for all $\theta_{1:n}$. Conversely,
  decisions that merely dominate $\delta_i = Y_i$ may still be quite far from the oracle
  Bayes rule.}  In particular, they may leave much on the table if $\RB$ is
  targeted. Moreover, these guarantees in $\RF$ are typically restricted to MSE. Our
  example in
 \cref{fig:example_shrink_ranking} shows that reasonable decisions for MSE may not be
  reasonable for subsequent selection decisions. As a simple example, \citet
  {bock1975minimax} considers spherical shrinkage rules of the form $\delta_{i} = c\pr
  {\sum_j Y_j^2} Y_i$ for some function $c(\cdot)$. However, despite dominating
  no-shrinkage in MSE, $\delta_i$ does not change the ranking of different units, and
  hence does not improve on ranks over $Y_i$.
\end{rmksq}

In what follows, we use the symbol $C$ to denote a generic positive and finite constant
which does not depend on $n$. We use the symbol $C_{x}$ to denote a generic positive and
finite constant that depends only on $x$, some parameter(s) that describe the problem.
Occurrences of the same symbol $C, C_x$ may not refer to the same constants. Since all
expectation or probability statements are with respect to the conditional distribution
$P_0$ of $\theta_{1:n} \mid \sigma_{1:n}$, going forward, we treat $\sigma_ {1:n}$ as
fixed and simply write $\E[\cdot], \P(\cdot)$ to denote the expectation and probability
over $\theta_{1:n} \mid \sigma_{1:n} \sim P_0$; we may omit the subscript $P_0$ or the
conditioning on $\sigma_{1:n}$.

\subsection{Regret rate in squared error}

Our main result concerns the canonical statistical problem of
estimating the parameters $\theta_{1:n}$ under MSE.

\begin{probsq}[Squared-error estimation of $\theta_{1:n}$]
\label{ex:mse}

The action $\bdelta =(\delta_1,\ldots,
  \delta_n)$ collects estimates $\delta_i$ for $\theta_i$, evaluated with MSE:
  $
  L(\bdelta, \theta_{1:n}) =
  \frac{1}{n}
  \sum_{i=1}^n (\delta_i - \theta_i)^2. $ The oracle Bayes decision rule  $\bdelta^\star
   = (\theta_1^*,\ldots, \theta_n^*)$ here is the posterior mean under $P_0$, where
   $\theta_i^*  \equiv
   \E_{P_0}\bk{\theta_i \mid Y_i, \sigma_i} $. The empirical Bayesian counterpart is $
   \hat\theta_{i, \hat P} = \PE_{\hat P}[\theta_i \mid Y_i, \sigma_i].$  %
\end{probsq}

\Copy{mseregretdef}{
For \cref{ex:mse}, define $\reg$ as the excess loss of the empirical Bayes posterior means
relative to that of the oracle Bayes posterior means: 
\begin{align*}
\reg(G, \eta) &\equiv \frac{1}{n} \sum_{i=1}^n (\hat\theta_{i, G, \eta} - \theta_i)^2 -
\frac{1}{n} \sum_{i=1}^n \pr{\theta_i^* - \theta_i}^2,
\end{align*}
where $\theta_i^*$ are the oracle posterior means and $\hat\theta_ {i,G,\eta}$ are the
posterior means under a prior parametrized by $(G, \eta)$. %
}
The corresponding Bayes regret \eqref{eq:regret_def} for \closenpmle{} in this decision
problem is then the $P_0$-expectation of $\reg$:
\[
\regret = \E\bk{ \reg(\hat G_n, \hateta)} = \E_{P_0}\bk{ \frac{1}{n} \sum_{i=1}^n
  (\theta_i^* - \hat\theta_{i, \hat G_n, \hateta})^2 \numberthis
  \label{eq:mse_regret_and_mse}
}.
\] Equation \eqref{eq:mse_regret_and_mse} additionally notes that expected $\reg$ is equal
 to the expected mean-squared difference between the empirical Bayesian posterior means
 $\hat\theta_{i, \hat G_n, \hateta}$ and their oracle counterparts $\theta_i^*$.
 Our subsequent results (\cref{cor:maintext,thm:minimaxlower}) state upper and lower
 bounds for $\regret$, over a class of data generating processes $\mathcal P_0\ni P_0$. We
 now introduce and discuss the assumptions on $\mathcal P_0$.

\subsubsection{Assumptions for regret upper bound}

We first assume that $\hat G_n$ is an approximate maximizer of the log-likelihood on the
transformed data $(\hat Z_i, \hat\nu_i)$ satisfying some support restrictions. This is
not  restrictive, as the actual maximizers of the log-likelihood function
satisfy it (Proposition 4, \citet{soloff2021multivariate}). This assumption also
accommodates for the fact that the \npmle{} is approximated by a discrete distribution on
a grid.
\begin{restatable}{as}{asnpmle}
\label{as:npmle}
Let $\psi_i(Z_i,
\hateta, G) \equiv \log\pr{\int_{-\infty}^\infty
\varphi
\pr{\frac{\hat Z_i -\tau}{\hat \nu_i}} G(d\tau)}$ be the objective function in
\eqref{eq:npmle}, ignoring the factor $1/\hat\nu_i$ that does not involve $G$. We
assume that $\hat G_n$ satisfies \[
\frac{1}{n} \sum_{i=1}^n \psi_i(Z_i, \hateta, \hat G_n) \ge \sup_{H \in \mathcal P(\R)}
\frac{1}{n} \sum_{i=1}^n \psi_i(Z_i, \hateta, H) - \kappa_n
\numberthis \label{eq:approx_mle}
\]
for tolerance $\kappa_n = \frac{2}{n} \log ({\frac{n}{
\sqrt{2\pi} e}})$. Moreover, we
require that $\hat G_n$ has support points within $[\min_i\hat Z_i,
\max_i \hat
Z_i]$. To ensure that $\kappa_n$ is positive, we assume that $n \ge 7 = \lceil \sqrt{2\pi}
e \rceil$.\footnote{The constants $\kappa_n \rateeq \frac{1}{n}\log(n)$ also feature in
\citet{jiang2020general} to ensure that the fitted likelihood is bounded away from zero.
The particular constants in $\kappa_n$ simplify expressions and are not
material to the result.}
\end{restatable}

We now state further assumptions on $\mathcal P_0$ beyond
\eqref{eq:location_scale}. First, we assume that $G_0$ is sufficiently thin-tailed such
 that its moments grow slowly.\footnote{An equivalent statement to \cref
 {as:moments} is that there exists $a_1, a_2 > 0$ such that $\P_{G_0}(|\tau| > t) \le
 a_1\exp\pr{-a_2 t^\alpha}$ for all $t > 0$. Note that when $\alpha = 2$, $G_0$ is
 subgaussian, and when $\alpha = 1$, $G_0$ is subexponential
\citep[see the definitions in][]{vershynin2018high}. \Cref{as:moments} is slightly stronger than requiring that
 all moments exist for $G_0$, and weaker than requiring $G_0$ to have a moment-generating
 function. Similar tail assumptions feature in the theoretical literature on empirical
 Bayes \citep{soloff2021multivariate,jiang2009general,jiang2020general}. } \Copy{alpha}
 {The thickness
  of its tail is parametrized by $\alpha \in (0,2]$, which subsequently affects the
  log factors in \cref{cor:maintext}.}

\begin{restatable}{as}{moments}
\label{as:moments}

The distribution $G_0$ has zero mean, unit variance, and admits simultaneous moment
control: For some $\alpha \in (0,2]$ and $A_0 > 0$ such
that for all $p > 0$, $\pr{\E_{\tau \sim G_0}[|\tau|^p]}^{1/p} \le A_0 p^
{1/\alpha}.
$
\end{restatable}

Next, \cref{as:variance_bounds} imposes that members of $\mathcal P_0$ have various
variance parameters uniformly bounded away from zero and $\infty$. This is a standard
assumption in the literature, maintained likewise by \citet{jiang2020general} and
\citet{soloff2021multivariate}.

\begin{restatable}{as}{variancebounds}
\label{as:variance_bounds} 
The variances $(\sigma_{1:n}, s_0)$ admit lower and upper
 bounds: There are positive reals $\hyperparams >0$ such that, for all $i$ and all
 $\sigma \in (\sigl, \sigu)$, $\sigl < \sigma_i < \sigu$ and $s_{0\ell} < s_0(\sigma) < s_
 {0u}$.
\end{restatable}

Lastly, we require that $m_0(\cdot)$ and $s_0(\cdot)$ satisfy some smoothness
restrictions. We also require that $\hatm(\cdot)$ and $\hats(\cdot)$ satisfy some
corresponding regularity conditions. Let $C_{A_1}^p([\sigl,\sigu])$ denote the H\"older
class of order $p \ge 1$ with maximal H\"older norm $A_1 > 0$ supported on $
[\sigl,\sigu]$ \citep[Section 2.7.1,][]{vaart1996weak}.

\begin{restatable}{as}{holder}
\label{as:holder}  Assume that
\begin{enumerate}
  \item  The true conditional moments are H\"older-smooth: $m_0, s_0 \in C_{A_1}^p([\sigl,\sigu])$.
\end{enumerate}

Additionally, let $\beta_0 > 0$ be a constant.   Assume
that the  estimators for $m_{0}$ and $s_0$, $\hateta = (\hatm, \hats)$, satisfy:
\begin{enumerate}[resume]
    \item For all
    sufficiently large $C_{1,\H} > 0$ and all $n$, \[
\P\pr{\norm{
  \hateta - \eta_0
}_\infty > C_{1,\H} n^{-
\frac{p}
    {2p+1}} (\log n)^{\beta_0}} < \frac{1}{n^2}
    \] where $\norm{\eta}_\infty \equiv \max(\norm{m}_\infty, \norm{s}_\infty)$ for $\eta
     =(m,s)$.
    \item  $\hateta$ takes values in $\mathcal V$ almost surely: $\P
    \pr{\hatm\in
    \mathcal V, \hats \in \mathcal V} = 1$, where $\mathcal{V}$ is a set of
functions supported on $[\sigl, \sigu]$ that (i) is uniformly bounded $\sup_{f \in
\mathcal V}
\norm{f}_\infty \le C_{A_1}$ and (ii) admits the metric entropy bound $\log N (\epsilon,
\mathcal V, \norm{\cdot}_\infty) \le C_{A_1,p, \sigl,\sigu} (1/\epsilon)^ {1/p}$.
    \item The conditional variance estimator respects the conditional variance bounds in
    \cref{as:variance_bounds}: $\P\pr{\frac{s_{0\ell}}{2} < \hat s < 2s_{0u}} = 1$.
\end{enumerate}
\end{restatable}

\Cref{as:holder} is a H\"older smoothness assumption on the conditional moments $m_0$ and
$s_0$, which is a standard regularity condition for nonparametric regression. Moreover, it
is also a high-level assumption on the quality of the estimation procedure for $(\hatm,
\hats)$. It expects that $\hatm$ and $\hats$ are accurate in $\norm{\cdot}_\infty$, belong
to a class with manageable metric entropy, and obey the bounds for $s_{0}$.\footnote{
  \label{fn:a4}\Cref{as:holder}(2) is slightly stronger than an
 estimation rate
 requirement $\norm{\hat\eta - \eta_0}_\infty = O_P\pr {n^{-p/(2p+1)}(\log n)^
 {\beta_0}}$, in the sense that the probability of large deviations are additionally
 controlled. Local polynomial smoothing estimators can attain the desired estimation rate
 of $n^{-p/(2p+1)}(\log n)^{\beta_0}$ in $\norm{\cdot}_\infty$
 \citep{tsybakov2008introduction,stone1980optimal}. Since the data is assumed to be thin-tailed in
 \cref{as:moments}, such estimators also attain the stronger requirement in
 \cref{as:holder}(2).

  For \cref{as:holder}(3), if the estimators $\hatm$ and $\hats$ are $p$-H\"older smooth
  almost surely,
  we can simply take $\mathcal V = C_{A_1'}^p([\sigl,\sigu])$ for some potentially
  different $A_1'$. This can be achieved in practice by, say, projecting estimated
  parameters $\tilde \eta$ to $C_{A_1}( [\sigl, \sigu])$ in $\norm
  {\cdot}_\infty$. 

  Finally,
 \cref{as:holder}(4) also expects the conditional moment estimates $\hateta$ to respect
 the boundedness constraints for $s_0$. This is mainly so that our results are easier to
 state.
 
 We show in \cref{sec:nuisance_estimation} that a local linear regression estimator (with
 $\hats$ suitably truncated) satisfies weaker conditions than \cref{as:holder}(2)--(4)
 that are nonetheless sufficient for the conclusion of \cref{cor:maintext}. 
}

\Cref{as:moments,as:holder,as:variance_bounds} specify a class of distributions $\mathcal
P_0$ and estimators $\hateta = (\hatm(\cdot), \hats(\cdot))$ regulated by a set of
hyperparameters $\Hyperparams = (\sigl,
\sigu, s_\ell, s_u, A_0, A_1, \alpha, \beta_0, p).$ Our subsequent theoretical results are
 uniform over $\mathcal P_0$ for a fixed $\H$. %

\subsubsection{MSE regret results}

Our main result is a non-asymptotic upper bound for \eqref{eq:mse_regret_and_mse}: The MSE
regret of 
\closenpmle{}
converges to zero no slower than 
$n^{-\frac{2p}{2p+1}}(\log n)^{C}$.

\begin{restatable}{theorem}{cormaintext}
\label{cor:maintext}
Under \cref{as:holder,as:moments,as:variance_bounds,as:npmle}, there exists a
constant $C_{0, \Hyperparams} > 0$ such that the following upper bound holds:\[
\regret = \E\bk{
    \reg(\hat G_n, \hateta) } \le C_{0, \Hyperparams} n^{-\frac{2p}{2p+1}} (\log n)^{\frac{2+\alpha}{\alpha} + 3 + 2\beta_0}.
    \numberthis \label{eq:regret_rate_final}
\]
\end{restatable}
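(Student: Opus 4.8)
The plan is to establish the sharper ``conditional'' bound behind \cref{cor:maintext}, namely $\regret \leH (\log n)^{\beta}\max\pr{\E\norm{\hateta - \eta_0}_\infty^2, n^{-1}}$, and then substitute the nonparametric rate from \cref{as:holder}(2). Since \eqref{eq:mse_regret_and_mse} identifies the regret with $\E\bk{\frac1n\sum_i(\theta_i^* - \hat\theta_{i,\hat G_n,\hateta})^2}$, I would first pass to the transformed coordinates of \eqref{eq:location_scale_tau_form}, writing each posterior mean as $\hat\theta_{i,G,\eta} = m(\sigma_i) + s(\sigma_i)\,\hat\tau_{i,G}$, where $\hat\tau_{i,G}$ is the Tweedie posterior mean of $\tau_i$ under prior $G$ given $(\hat Z_i,\hat\nu_i)$. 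This splits the integrand into a \emph{shape error} comparing $\hat G_n$ to $G_0$ with the transformation held fixed, and a \emph{nuisance error} from transforming with $\hateta$ rather than $\eta_0$. I would then decompose over the good event $\A_n(C_{1,\Hyperparams})$ and its complement: on $\A_n^c$, which has probability at most $n^{-2}$ by hypothesis, the support restriction of \cref{as:npmle} keeps $\hat\tau_{i,\hat G_n}\in[\min_i\hat Z_i,\max_i\hat Z_i]$, so the moment control of \cref{as:moments} bounds the loss up to a polylogarithmic factor, rendering that contribution a negligible $O(n^{-2})$ term.

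The heart of the argument is the shape-error term, which I would control by adapting the \npmle{} regret machinery of \citet{jiang2020general} and \citet{soloff2021multivariate} to estimated inputs. The chain has two links. First, the approximate-\npmle{} property \eqref{eq:approx_mle} with tolerance $\kappa_n$ \eqref{eq:kappa_def}, combined with a bracketing-entropy bound for Gaussian location mixtures on a bounded support, yields a squared-Hellinger accuracy of order $(\log n)^{c}/n$ for the fitted marginal of the transformed data. Second, the Tweedie identity $\hat\tau_{G}(Z)=Z+\nu^2\,\partial_Z\log f_G(Z)$ converts this Hellinger accuracy into a mean-squared posterior-mean discrepancy of the same order, since differences of posterior means are differences of score functions of the fitted versus true marginals. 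The tail parameter $\alpha$ of \cref{as:moments} enters through the truncation radius needed to bound the score on the bulk of the support, producing the $(\log n)^{(2+\alpha)/\alpha}$ factor in \eqref{eq:regret_rate_final}.

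For the nuisance-error term I would show that the posterior-mean map is suitably Lipschitz in $\eta=(m,s)$ uniformly over the bulk, so that perturbing $(m_0,s_0)$ to $(\hatm,\hats)$ changes each $\hat\theta_i$ by $O(\norm{\hateta - \eta_0}_\infty)$ up to moment-dependent polylogarithmic factors; squaring and averaging yields the $\norm{\hateta - \eta_0}_\infty^2$ dependence. The subtlety distinguishing this from the prior-independence case is that $\hat Z_i$ and $\hat\nu_i$ themselves carry the estimation error, so the \npmle{} objective is evaluated at random, data-dependent inputs; the empirical-process and Hellinger bounds above must therefore hold \emph{uniformly} over all admissible nuisance realizations. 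This is exactly what \cref{as:holder}(2)--(3) buy: $\hateta$ lives in a class $\mathcal V$ whose $\norm{\cdot}_\infty$-entropy matches the H\"older class, so the supremum of the relevant empirical process over $\mathcal V$ is governed by the same entropy integral and the induced fluctuation is absorbed into the logarithmic factors.

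I expect the main obstacle to be precisely this coupling of \npmle{} estimation with estimated transformed data: propagating the $O(n^{-p/(2p+1)})$ nuisance error through the score/Tweedie conversion without the moment tails of $G_0$ inflating the Lipschitz constant, and doing so uniformly over $\mathcal V$. Once the conditional bound is in hand, \cref{cor:maintext} follows by taking expectations: \cref{as:holder}(2) together with $\P(\A_n(C_{1,\Hyperparams}))\ge 1-n^{-2}$ gives $\E\norm{\hateta-\eta_0}_\infty^2 \leH n^{-2p/(2p+1)}(\log n)^{2\beta_0}$, which dominates $n^{-1}$, so multiplying by the polylogarithmic shape-error factor produces the stated exponent $\frac{2+\alpha}{\alpha}+3+2\beta_0$.
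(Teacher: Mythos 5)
Your architecture matches the paper's: the same good-event decomposition (\cref{lemma:reg_on_anc}), the same split of the posterior-mean discrepancy into a nuisance part (perturbing $\eta_0$ to $\hateta$ with $\hat G_n$ held fixed) and a shape part (comparing $\hat G_n$ to $G_0$ at $\eta_0$), a Taylor/Lipschitz-in-$\eta$ bound for the former, and a Hellinger-to-posterior-mean conversion for the latter (the terms $\xi_1$ and $\xi_3$ in the proof of \cref{thm:regret_on_An}). However, there is a genuine gap at exactly the step you flag as the main obstacle, and the mechanism you propose there does not resolve it. The difficulty is that $\hat G_n$ maximizes the likelihood evaluated at the estimated data $(\hat Z_i,\hat\nu_i)$, whereas the Hellinger machinery of \citet{jiang2020general} and \citet{soloff2021multivariate} applies to approximate maximizers of the likelihood on the true transformed data $(Z_i,\nu_i)$; one must therefore bound the true-likelihood suboptimality $\sub_n(\hat G_n)$ of \eqref{eq:likelihood_suboptimality}. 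A first-order Taylor expansion in $\eta$ shows that the contamination between the two objectives is of order $\Dinfty \asymp \Delta_n = \tilde O(n^{-p/(2p+1)})$, and this cannot be ``absorbed into the logarithmic factors'' by uniformity over $\mathcal V$: your entropy argument correctly controls the centered empirical-process terms (the paper's $U_{2k}, U_{3k}$), but the dominant contamination is the uncentered population term $\frac{1}{n}\sum_i \bar D_{k,i}(\hat G_n,\eta_0,\rho_n)\Delta_{ki}$, a bias that no uniform deviation bound shrinks. Feeding $\sub_n(\hat G_n)=\tilde O(\Delta_n)$ into the Hellinger bound gives only $\barh^2(\fs{\hat G_n},\fs{G_0}) = \tilde O(\Delta_n)$, hence a shape error and regret of order $\tilde O(n^{-p/(2p+1)})$ --- the square root of the target rate. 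Your intermediate claim that the fitted marginal attains squared-Hellinger accuracy $(\log n)^{c}/n$ is unattainable here: even in the paper, the achievable rate is $\barh^2 = \tilde O(n^{-2p/(2p+1)})$ (\cref{cor:hellinger_large_dev}), polynomially worse than $1/n$.

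The missing idea is a Neyman-orthogonality property of the nuisance score: $\E\bk{\partial_\eta \psi_i(Z,\eta,G_0)|_{\eta=\eta_0}}=0$, so the population first-order term is not $O(\Delta_n)$ but $\tilde O\pr{\Delta_n \cdot \barh(\fs{\hat G_n},\fs{G_0})}$ --- small exactly when $\hat G_n$ is close to $G_0$. This yields the paper's oracle inequality for the likelihood (\cref{thm:suboptimality,cor:suboptimality}), $\sub_n(\hat G_n) = \tilde O(\Delta_n\barh + \Delta_n^2)$, and because this bound depends on $\barh$ itself, the Hellinger large-deviation step must additionally be finessed by a peeling argument over the shells $B_k$ in \cref{thm:large-deviation} to solve the implied fixed-point inequality $\barh^2 \lesssim \Delta_n\barh + \Delta_n^2$, giving $\barh = \tilde O(\Delta_n)$. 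Only then is the shape error $\tilde O(\Delta_n^2)$, matching your nuisance term and delivering the exponent in \eqref{eq:regret_rate_final}; without the orthogonality-plus-peeling step, your argument closes only at rate $n^{-p/(2p+1)}$.
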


Second, we give a corresponding minimax lower bound on the regret, which shows that
\cref{cor:maintext} cannot be improved by more than logarithmic factors.

\begin{restatable}{theorem}{thmminimaxlower}
\label{thm:minimaxlower}

Fix a set of valid hyperparameters $\Hyperparams$. Let $\mathcal P (\Hyperparams, \sigma_{1:n})$ be the set of
distributions $P_0$ on support points $\sigma_{1:n}$ which satisfy
\eqref{eq:location_scale}
and  \cref{as:moments,as:holder,as:variance_bounds} corresponding to
$\Hyperparams$.\footnote{This result additionally takes the supremum over the
support points $\sigma_{1:n}$. This is because the nonparametric regression problem would
be ``too easy'' for certain configurations of $\sigma_{1:n}$. For instance, when
$\sigma_{1:n}$ only takes $m \ll n$ unique values, nonparametric regression is possible at
rate $\sqrt{m/n}$. For the proof, it suffices to consider $\sigma_{1:n}$ being equally
spaced in $
[\sigl, \sigu]$. } For a given $P_0$, let
$\theta_i^* = \E_{P_0}[\theta_i \mid Y_i,
\sigma_i]$ denote the oracle posterior means. Then there exists a constant $c_
 {\Hyperparams} > 0$ such that
\[
\inf_{\hat\theta_{1:n}} \sup_{\substack{\sigma_{1:n} \in (\sigl, \sigu)\\ P_0 \in \mathcal
P(\mathcal
H, \sigma_{1:n})}} \E_{P_0} \bk{
    \frac{1}{n} \sum_{i=1}^n (\hat\theta_{i} - \theta_i)^2 - \frac{1}{n} \sum_{i=1}^n 
    (\theta_{i}^* - \theta_i)^2
} \ge c_\Hyperparams n^{-\frac{2p}{2p+1}},
\]
where the infimum is taken over all (possibly randomized) estimators of $\theta_{1:n}$.
\end{restatable}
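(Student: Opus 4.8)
The plan is to prove the lower bound by a reduction: I will show that the Bayes regret is no smaller than a constant multiple of the squared error of estimating the conditional mean function $m_0$ at the design points, and then invoke the classical nonparametric-regression minimax rate over the H\"older class. The first step is to rewrite the regret as the mean-squared error of estimating the oracle posterior means. Because $(Y_i,\theta_i,\sigma_i)$ are independent across $i$, the oracle posterior mean satisfies $\theta_i^* = \E[\theta_i\mid Y_i,\sigma_i] = \E[\theta_i\mid Y_{1:n},\sigma_{1:n}]$. For any possibly randomized estimator $\hat\theta_i$ that is measurable with respect to the data (plus external randomization independent of everything), conditioning on $(Y_{1:n},\sigma_{1:n})$ and the randomization gives $\E[(\theta_i^*-\theta_i)(\hat\theta_i-\theta_i^*)]=0$, so the Pythagorean identity yields
\[
\E\Big[\tfrac1n\textstyle\sum_i (\hat\theta_i-\theta_i)^2-(\theta_i^*-\theta_i)^2\Big]=\E\Big[\tfrac1n\textstyle\sum_i(\hat\theta_i-\theta_i^*)^2\Big].
\]
Thus the minimax regret equals the minimax risk for estimating $(\theta_i^*)_{i=1}^n$.

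\textbf{Restriction to a least-favorable subfamily.}
Since enlarging the adversary's class only increases the supremum, I may lower bound the sup over $\mathcal P(\Hyperparams,\sigma_{1:n})$ by the sup over the subfamily in which $G_0=\Norm(0,1)$ (which obeys \cref{as:moments} with $\alpha=2$), $s_0(\cdot)\equiv s$ is a fixed constant in $(s_\ell,s_u)$, the design $\sigma_{1:n}$ is equally spaced on $[\sigl,\sigu]$, and only $m_0$ ranges over the H\"older ball $C_{A_1}^p([\sigl,\sigu])$. Under this subfamily the marginal law of $Y_i$ given $\sigma_i$ is $\Norm(m_0(\sigma_i),\,s^2+\sigma_i^2)$, so the data are exactly a heteroskedastic Gaussian nonparametric regression of $m_0$ with noise variances bounded within $[\sigl^2,\,s_u^2+\sigu^2]$. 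This is where only perturbing $m_0$ (holding $G_0,s_0$ fixed) makes the argument ``without loss.''

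\textbf{Converting a $\theta^*$-estimator into an $m_0$-estimator.}
With Gaussian $G_0$ the oracle posterior mean is the explicit linear shrinkage \eqref{eq:gaussian_cond_b}, $\theta_i^* = \frac{\sigma_i^2}{s^2+\sigma_i^2}m_0(\sigma_i)+\frac{s^2}{s^2+\sigma_i^2}Y_i$. Given $\hat\theta_i$ I define the bona fide data-driven estimator $\hatm(\sigma_i)=\frac{s^2+\sigma_i^2}{\sigma_i^2}\big(\hat\theta_i-\frac{s^2}{s^2+\sigma_i^2}Y_i\big)$, which satisfies $\hat\theta_i-\theta_i^*=\frac{\sigma_i^2}{s^2+\sigma_i^2}\big(\hatm(\sigma_i)-m_0(\sigma_i)\big)$. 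By \cref{as:variance_bounds}, the factor $\frac{\sigma_i^2}{s^2+\sigma_i^2}$ is bounded below by a constant $c_\Hyperparams>0$ uniformly in the design, so $(\hat\theta_i-\theta_i^*)^2\ge c_\Hyperparams\,(\hatm(\sigma_i)-m_0(\sigma_i))^2$. Averaging, taking the sup over $m_0$, and infimizing over $\hat\theta$ gives
\[
\inf_{\hat\theta}\sup_{m_0}\E\Big[\tfrac1n\textstyle\sum_i(\hat\theta_i-\theta_i^*)^2\Big]\ge c_\Hyperparams\inf_{\tilde m}\sup_{m_0}\E\Big[\tfrac1n\textstyle\sum_i(\tilde m(\sigma_i)-m_0(\sigma_i))^2\Big].
\]

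\textbf{Classical regression lower bound and main obstacle.}
For the heteroskedastic Gaussian regression above, with bounded variances and equally spaced design, estimating a $p$-H\"older $m_0$ in empirical squared $L^2$ loss has minimax rate $n^{-2p/(2p+1)}$ \citep{stone1980optimal,tsybakov2008introduction}; the lower bound is obtained by Assouad's lemma, perturbing $m_0$ by $\asymp n^{1/(2p+1)}$ disjoint bumps of height $\asymp n^{-p/(2p+1)}$, which stay within $C_{A_1}^p([\sigl,\sigu])$ and hence keep $P_0\in\mathcal P(\Hyperparams,\sigma_{1:n})$. Chaining the three displays yields $\inf_{\hat\theta}\sup_{P_0}\regret\ge c_\Hyperparams\,n^{-2p/(2p+1)}$. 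The part requiring the most care is the passage from the abstract regret to a concrete, non-degenerate estimation problem: verifying the Pythagorean reduction for arbitrary randomized rules that use all coordinates, and ensuring the affine change of variables keeps every constant bounded away from $0$ and $\infty$ uniformly in the design (precisely the role of \cref{as:variance_bounds}). The regression lower bound itself is standard, so the genuine work is in setting up the reduction cleanly; a similar device appears in \citet{ignatiadis2019covariate} for a related setting.
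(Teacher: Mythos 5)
Your proposal is correct and follows essentially the same route as the paper's proof: fix $G_0=\Norm(0,1)$, a constant $s_0$, and an equally spaced design, use the explicit linear-shrinkage form of $\theta_i^*$ to convert any posterior-mean estimator into an estimator of $m_0$ via the same affine change of variables, and then invoke the classical $n^{-2p/(2p+1)}$ minimax rate for H\"older nonparametric regression. The only cosmetic difference is in the final step, where you cite Assouad's lemma directly for the empirical-norm lower bound, while the paper (in \cref{lemma:minimax}) reduces the heteroskedastic problem to a homoskedastic one and passes through Tsybakov's integrated-MSE bound before relating it back to the design points.
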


\Cref{cor:maintext} continues a recent statistics literature on empirical Bayes methods
via \npmle, by characterizing the effect of an estimated first-step parameter $\hat\eta$.
Our theory hews closely to---and extends---the results in \citet{jiang2020general} and
\citet{soloff2021multivariate}, which themselves extend earlier results in the
homoskedastic setting \citep{jiang2009general,saha2020nonparametric}. In particular,
\citet{soloff2021multivariate} show that the MSE regret rate is of the form $C (\log n)^
{\beta} \frac{1}{n}$ under precision independence and assumptions similar to ours. In this
context, we show that first-step estimation error degrades this regret rate gracefully,
and we link the corresponding regret rate to the smoothness of $\eta_0$. The proof of
\cref{cor:maintext} is deferred to the Online Appendix, but its main ideas are outlined in
\cref{asec:proof_main}.

\Cref{thm:minimaxlower} shows that the rate \eqref{eq:regret_rate_final} is optimal up to
 logarithmic factors. These logarithmic factors partly reflect inefficiencies in the proof
 of \cref{cor:maintext}, but in any case the gap is not large. We prove
 \cref{thm:minimaxlower} by showing that any good posterior mean estimate $\hat\theta_i$
 implies a good estimate $\hatm(\sigma_i)$ for $m_0$ for some particular choice of $G_0,
 \sigma_
 {1:n}, s_0^2(\cdot)$. Minimax lower bounds for estimation of $m_0$ \citep
 {tsybakov2008introduction} then imply lower bounds for estimation of the oracle
 posterior means $\theta_i^*$
\citep[see][ for a similar argument in a related setting] {ignatiadis2019covariate}.

We additionally note that these regret upper bounds readily extend to the case where
covariates are present and the location-scale assumption \eqref{eq:location_scale} is
specified with respect to the additional covariates $X_i$:
\[\theta_i \mid \sigma_i, X_i \sim G_0\pr{\frac{\cdot - m_0(X_i, \sigma_i)}{s_0(X_i,
\sigma_i)}},
\numberthis \label{eq:location-scale-covariates}
\] under smoothness assumptions on $(m_0, s_0, \hatm, \hats)$ analogous to
\cref{as:holder}. The resulting convergence rate would reflect the
 dimensionality of the covariates, and the term $n^ {- \frac{2p}{2p+1}}$ would be
 replaced with $n^ {-
\frac{2p}{2p+1+d}}$, where $d$ is the dimension of $X$.

Taken together, \cref{cor:maintext,thm:minimaxlower} are statistical optimality guarantees
for \closenpmle{} in terms of \cref{ex:mse}. That is, the worst-case MSE performance gap
of
\closenpmle{} relative to the oracle contracts at the best possible rate, meaning that 
\closenpmle{} mimics the oracle as well
as possible.

\subsection{Robustness to the location-scale assumption \eqref{eq:location_scale}}

We prove \cref{cor:maintext,thm:minimaxlower} imposing the location-scale model
\eqref{eq:location_scale}.  This is an optimistic assessment of the performance of
\closenpmle. While \eqref{eq:location_scale} nests precision independence, it may still be
misspecified. This subsection explores the worst-case behavior of \closenpmle{} without
\eqref{eq:location_scale}. 

We  do so by considering an idealized version of \closenpmle. So long as $\theta_i
\mid \sigma_i$ has two moments, $\eta_0(\cdot) = (m_0(\cdot), s_0(\cdot))$ are well-defined
as conditional moments. We will assume that $m_0, s_0$ are known. Without
\eqref{eq:location_scale}, $G_0$ is ill-defined, but we assume that we obtain some
pseudo-true value $G_0^*$ that has zero mean and unit variance.  Thus, for estimating $\tau_i
= \frac{\theta_i - m_0(\sigma_i)}{s_0(\sigma_i)}$, whose distribution is $\tau_i \mid
\sigma_i \sim G_i$, this idealized procedure uses some misspecified prior $G_0^* \neq
G_i$, where $G_0^*$ agrees with $G_i$ in the first two moments. The worst-case performance
of the procedure that uses $G_0^*$ depends on how far posterior means under $G_0^*$
differs from posterior means under $G_i$. 

\Copy{closegaussconstant}{We show in \cref{asec:max_gauss} that this difference is bounded
 uniformly for all $G_0^*$ satisfying an additional tail assumption. This result implies
 that the maximum risk of this procedure is at most a constant multiple of the minimax
 risk; here, the minimaxity is defined with respect to a game between an analyst and an
 adversary, where the analyst knows $m_0, s_0$ and hopes to estimate $\theta_{1:n}$, and
 the adversary chooses the shape of the distribution $\tau_i
\mid \sigma_i$. In this game, the oracle version of \closegauss{} \eqref
 {eq:gaussian_cond_b} is a minimax procedure (\cref{lemma:minimax_close_gauss}). }

Specifically, let $\mathcal P(m_0, s_0)$ denote
the set of distributions of $\theta_{1:n} \mid \sigma_{1:n}$ where $\E[\theta_i \mid
\sigma_i] = m_0(\sigma_i)$ and $\var(\theta_i \mid \sigma_i) = s_0^2(\sigma_i)$. Let
 \[\mathcal G_0(\lambda, \epsilon) \equiv \br{G_0^*: \E_{G_0^*}[\tau]=0, \var_{G_0^*}
 (\tau) = 1, G_0^* (-z) \vee (1-G_0^*(z))
\le \lambda z^{-2-\epsilon} \text{ for all $z>0$}} \]be the set of mean-zero, variance-one
 distributions satisfying an additional tail condition indexed by $\lambda > 0, \epsilon >
 0$.\footnote{By Markov's inequality, this condition is satisfied if $G_0^*$ has its $
 (2+\epsilon)$\th {} moment bounded by $\lambda$. A previous version of this paper stated
 \cref{thm:worstcaserisk} without this additional tail condition, regrettably due to a
 technical error that is corrected in this version. See \cref{asec:max_gauss}.}
\begin{restatable}{theorem}{worstcaserisk}
\label{thm:worstcaserisk}
  Under the preceding setup and \eqref{eq:eb_sampling}, but not \eqref{eq:location_scale},
  let $\hat \theta_ {i, G_0^*, \eta_0}$ denote the posterior mean for $\theta_i$ under a
  prior $G_0^*$
  for $\tau$. Let $\bar\rho =
  \max_i s_0^2 (\sigma_i) / \sigma_i^2 < \infty$ be the maximal conditional
   signal-to-noise ratio. Then, for some $0 < C_{\bar\rho, \lambda,
  \epsilon}
  <
  \infty$ that solely depends on $\bar\rho, \lambda, \epsilon$, \[
  \frac{\sup_{G_0^* \in \mathcal G_0(\lambda, \epsilon)}
\sup_{P_0 \in \mathcal P(m_0, s_0)} \E_{P_0}\bk{\frac{1}{n} \sum_{i=1}^n (\hat\theta_{i,
G_0^*, \eta_0} - \theta_i)^2}}{\inf_{\hat\theta_{1:n}} \sup_{P_0 \in \mathcal P(m_0, s_0)} \E_{P_0} \bk{\frac{1}{n}
\sum_{i=1}^n (\hat\theta_i - \theta_i)^2}} \le C_
 {\bar\rho, \lambda, \epsilon}, \numberthis 
\label{eq:multiple_of}
  \]
  where the infimum in the denominator is over all (possibly randomized)
  estimators of $\theta_i$ given $(Y_i, \sigma_i)_{i=1}^n$ and $\eta_0(\cdot)$.
\end{restatable}

\Cref{thm:worstcaserisk} shows that the worst-case behavior of an idealized version of
\closenpmle{} comes within a factor of the minimax risk. Thus, \closenpmle{} is not
arbitrarily unreasonable, even under misspecification. We caution that
\eqref{eq:multiple_of} is a fairly weak guarantee, in that the decision rule that simply
outputs the prior conditional mean ($\delta_i = m_0 (\sigma_i)$) also satisfies it.
Nevertheless, even so,
\eqref{eq:multiple_of}
{does not} hold for an idealized version of \indepgauss.\footnote{\Copy
{indepgaussconstant} {That is, it does not hold for the implementation of \indepgauss{}
that plugs in known unconditional moments $m_0 =
\frac{1}{n} \sum_{i=1}^n m_0 (\sigma_i)$ and $s_0^2 = \frac{1}{n} \sum_{i=1}^n (m_0
(\sigma_i) - m_0)^2 + s_0^2 (\sigma_i)$. 
To wit, take $s_0
(\sigma_i) \approx 0$.
Then, the minimax risk as a function of $(s_0 (\cdot), m_0(\cdot))$ is approximately zero,
but $m_0(\cdot)$ can be chosen such that the risk of \indepgauss{} is bounded away from
zero.
See \cref{lemma:indepgauss_not_bounded} for a formal statement.}
}

\subsection{Other decision objectives and relation to squared-error loss}
\label{sub:other_decisions}

So far, our regret guarantees are only about estimation in MSE (\cref{ex:mse}). We now
turn to two decision problems that involve ranking or selection and show similar
guarantees for \closenpmle{} in terms of regret for these decision problems. These
decision problems are likely more economically relevant for, e.g., replacing low
value-added teachers, recommending high-mobility tracts, or treatment choice
\citep{gilraine2020new,bergman2019creating,manski2004statistical,stoye2009minimax,kitagawa2018should,athey2021policy}.

\begin{probsq}[\utilmax]
\label{ex:utilmax}
  Suppose $\bdelta = (\delta_1,\ldots,\delta_n)$ consists of binary selection decisions
  $\delta_i \in \br{0,1}$. For each population, selecting that population has net benefit
  $\theta_i$. The decision maker wishes to maximize utility (i.e., negative loss): $
  -L(\bdelta, \theta_{1:n}) =
  \frac{1}{n} \sum_{i=1}^n \delta_i \theta_i. $ The oracle Bayes rule selects all whose
  posterior mean net benefit $\theta_i$ is nonnegative: $
\delta_i^\star = \one\pr{\theta_{i, P_0}^* \ge 0}.
  $
  One natural empirical Bayes decision rule replaces $\theta_{i, P_0}^*$ with $\theta_{i,
  \hat P}^*$, following \eqref{eq:empirical_bayes_rule}.
\end{probsq}

\begin{probsq}[\topm]
\label{ex:topm}

 Similar to
\utilmax, suppose $\bdelta$ consists of binary selection decisions, with the additional
constraint that exactly $m$ populations are chosen: $\sum_i \delta_i = m$. The decision
maker's utility is the average $\theta_i$ of the
selected set: \[ -L(\bdelta, \theta_{1:n}) = \frac{1}{m} \sum_{i=1}^n \delta_i \theta_i.
\numberthis
\label{eq:topm}
\]
The oracle Bayesian selects the populations corresponding to the $m$ largest posterior
means
$\theta_{i, P_0}^*$: $
\delta_i^\star = \one\pr{
  \theta_{i, P_0}^* \text{ is among the top-$m$ of $\theta_{1:n, P_0}^*$}
}.
$
Again, the empirical Bayes recipe \eqref{eq:empirical_bayes_rule} replaces $P_0$ with the
estimate $\hat P$.
\end{probsq}

\begin{rmksq}
\label{rmk:mover}
The utility function \eqref{eq:topm} rationalizes the widespread practice of screening
based on empirical Bayes posterior means
\citep{gilraine2020new,chetty2014measuring,kane2008estimating,hanushek2011economic,bergman2019creating}.
In \citet{bergman2019creating}, for instance,  where housing voucher holders are
incentivized to move to Census tracts selected according to economic mobility,
\eqref{eq:topm} represents the expected economic mobility of a mover were they to move
randomly to one of the selected tracts. Our theoretical results can accommodate slightly
less restrictive mover behavior (\cref{rmk:nonuniform}).%
\end{rmksq}

The oracle Bayes decision rules $\bdelta^\star$ in \cref {ex:utilmax,ex:topm}
depend solely on the vector of oracle Bayes posterior means $\theta_ {1:n}^*$. %
Therefore, for these problems, the natural empirical Bayes decision rules simply replace
oracle Bayes posterior means ($\theta_i^*$) with empirical Bayes ones ($\hat\theta_i$). It
stands to reason that as $\hat\theta_i$ is close to $\theta_i^*$ in squared error, even
when $\hat\theta_i$ implies the wrong selection decision, this decision is not too costly
for the empirical Bayesian. We formalize this intuition in the following theorem, showing that if $\hat\theta_i$ are close to $\theta_i^*$ in MSE,
then decisions plugging in $\hat\theta_i$ are also close to their oracle counterparts in
terms of Bayes risk.

To specialize, let $\utilmaxreg$ denote $\regret$ for the loss function in 
\cref{ex:utilmax} and let $\topmreg$
denote $\regret$ for \cref{ex:topm}. 

\begin{restatable}{theorem}{mserelevance}
\label{thm:mserelevance} Suppose  \eqref{eq:eb_sampling} holds but \eqref
 {eq:location_scale} does not necessarily hold. %
Let $\hat\delta_i$ be the plug-in decisions with any vector of estimates $\hat \theta_i$.
Then,
\begin{enumerate}
  \item For \utilmax,
  \[
\E[\utilmaxreg(\hat\bdelta)] \le\pr { \E\bk{\frac{1}{n} \sum_{i=1}^n (\hat\theta_i - \theta_i^*)^2}}^
{1/2} . \numberthis \label{eq:utilmax_bound}
  \]
  \item For \topm, \[
\E[\topmreg(\hat\bdelta)] \le 2\sqrt{\frac{n}{m}} \pr{\E\bk{\frac{1}{n} \sum_{i=1}^n (\hat\theta_i -
\theta_i^*)^2}}^{1/2}. \numberthis \label{eq:topm_regret_bound}
  \]
\end{enumerate}
\end{restatable}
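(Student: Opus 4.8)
The plan is to prove both parts by the same two-move strategy: first reduce each Bayes regret to an expression involving only the oracle posterior means $\theta_i^*$ (rather than the random $\theta_i$), and then bound that expression pointwise by the estimation error $\absauto{\hat\theta_i - \theta_i^*}$, finishing with Cauchy--Schwarz and Jensen. For the reduction, note that in both problems the loss is linear in $\theta_{1:n}$, so $\RB(\bdelta;P_0) = -\tfrac{1}{n}\sum_i \E[\delta_i\theta_i \mid \sigma_{1:n}]$ (with a $\tfrac1m$ normalization for \topm{}). Since both $\delta_i^\star$ and the plug-in rule $\hat\delta_i$ are measurable functions of $(Y_{1:n},\sigma_{1:n})$, I would condition on $(Y_{1:n},\sigma_{1:n})$ and use the tower property; under \eqref{eq:eb_sampling} the pairs $(\theta_i,Y_i)$ are independent across $i$ given $\sigma_{1:n}$, so $\E[\theta_i \mid Y_{1:n},\sigma_{1:n}] = \E[\theta_i \mid Y_i,\sigma_i] = \theta_i^*$. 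Hence $\utilmaxreg = \tfrac1n\sum_i \E[(\delta_i^\star - \hat\delta_i)\theta_i^*]$ and $\topmreg = \tfrac1m\sum_i \E[(\delta_i^\star - \hat\delta_i)\theta_i^*]$; this is the only place the i.i.d.\ structure is used.

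For \utilmax{}, the oracle rule $\delta_i^\star = \one(\theta_i^* \ge 0)$ is the pointwise maximizer, and a short case check gives $(\delta_i^\star - \hat\delta_i)\theta_i^* = \absauto{\theta_i^*}\,\one(\hat\delta_i \ne \delta_i^\star) \ge 0$. On the disagreement event, $\hat\theta_i$ and $\theta_i^*$ lie on opposite sides of $0$, which forces $\absauto{\theta_i^*} \le \absauto{\hat\theta_i - \theta_i^*}$. Summing yields $\utilmaxreg \le \E[\tfrac1n\sum_i \absauto{\hat\theta_i - \theta_i^*}]$, and Cauchy--Schwarz in $i$ together with Jensen (in the form $\E[\sqrt{\cdot}] \le \sqrt{\E[\cdot]}$) gives \eqref{eq:utilmax_bound}.

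For \topm{} the pointwise step is combinatorial and is where I expect the real work. Write $S^\star,\hat S$ for the two selected sets, each of size $m$, and set $A = S^\star \setminus \hat S$, $B = \hat S \setminus S^\star$, so $\absauto{A} = \absauto{B}$ and the integrand equals $\sum_{i\in A}\theta_i^* - \sum_{j\in B}\theta_j^*$. Fixing any bijection $\pi\colon A \to B$, the defining properties of the two top-$m$ sets give, for $i\in A$ and $j=\pi(i)\in B$, both $\theta_i^* \ge \theta_j^*$ (oracle ranking) and $\hat\theta_i \le \hat\theta_j$ (plug-in ranking), so $\theta_i^* - \theta_j^* \le (\theta_i^* - \hat\theta_i) + (\hat\theta_j - \theta_j^*) \le \absauto{\hat\theta_i - \theta_i^*} + \absauto{\hat\theta_j - \theta_j^*}$. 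Telescoping over $\pi$ bounds the integrand by $\sum_{i \in A\cup B}\absauto{\hat\theta_i - \theta_i^*}$. The crucial refinement---and the reason the final bound scales like $\sqrt{n/m}$ rather than the naive $n/m$---is that $\absauto{A\cup B} \le 2m$, so Cauchy--Schwarz must be applied over $A\cup B$, not over all of $\{1,\dots,n\}$, giving $\sum_{i\in A\cup B}\absauto{\hat\theta_i - \theta_i^*} \le \sqrt{2m}\,(\sum_{i=1}^n (\hat\theta_i - \theta_i^*)^2)^{1/2}$. Dividing by $m$, rewriting the sum as $n$ times the per-coordinate MSE, and applying Jensen yields $\topmreg \le \sqrt{2}\,\sqrt{n/m}\,(\E[\tfrac1n\sum_i(\hat\theta_i - \theta_i^*)^2])^{1/2}$, which is \eqref{eq:topm_regret_bound} with room to spare since $\sqrt{2} \le 2$.

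The main obstacle is the \topm{} pairing argument, on two counts. First, one must produce a pointwise bound that respects both rankings \emph{simultaneously}, which is exactly what the bijection plus the two-sided inequalities accomplish; handling ``top-$m$'' with ties broken arbitrarily requires only the property that every selected index weakly dominates every unselected one, which is what makes $\theta_i^* \ge \theta_j^*$ and $\hat\theta_i \le \hat\theta_j$ valid. Second, one must apply Cauchy--Schwarz on the symmetric difference rather than on the full index set; a careless application over all $n$ coordinates loses a factor of $\sqrt{mn}/m$. Everything else---the tower-property reduction and the $\sqrt{\cdot}$ manipulations---is routine.
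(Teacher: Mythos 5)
Your proof is correct. Part 1 (\utilmax{}) is essentially the paper's own argument: the same tower-property reduction replacing $\theta_i$ by $\theta_i^*$, the same observation that a disagreement between $\one(\theta_i^*\ge 0)$ and $\one(\hat\theta_i\ge 0)$ forces $\absauto{\theta_i^*}\le\absauto{\hat\theta_i-\theta_i^*}$, then Cauchy--Schwarz and Jensen. (Incidentally, the paper's display writes $|\theta_i^*-\theta_i|$ where $|\theta_i^*-\hat\theta_i|$ is meant; your version is the correct reading.) Part 2 (\topm{}) reaches the same conclusion by a genuinely different route. The paper proves a standalone result, \cref{prop:rearrangement}: writing the regret as a difference of $w$-weighted sums of sorted posterior means with $w\in\{0,1\}^n$ and $\norm{w}_2=\sqrt{m}$, it inserts $\hat\theta_{\sigma(i)}$ via the triangle inequality, applies Cauchy--Schwarz to each piece, and invokes the rearrangement inequality to show $\sum_i(\theta_{(i)}^*-\hat\theta_{\sigma(i)})^2\le\sum_i(\hat\theta_i-\theta_i^*)^2$. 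You instead pair the symmetric difference $A=S^\star\setminus\hat S$ with $B=\hat S\setminus S^\star$ through a bijection, exploit the two one-sided ranking inequalities ($\theta_i^*\ge\theta_j^*$ and $\hat\theta_i\le\hat\theta_j$ for $i\in A$, $j\in B$), and apply Cauchy--Schwarz only over $A\cup B$, whose cardinality is at most $2m$. Your argument is more elementary (no sorting, no rearrangement inequality) and yields the sharper constant $\sqrt{2}$ in place of $2$. What the paper's formulation buys is generality in the weight vector $w$: it is reused in \cref{rmk:nonuniform} to handle rank-dependent mover probabilities $\pi_1\ge\pi_2\ge\cdots$, a setting that compares full rankings rather than a single selected set, where your set-difference pairing does not directly apply.
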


\Cref{thm:mserelevance} shows a sense in which \cref{ex:utilmax,ex:topm} are easier than
\cref{ex:mse}: The regret of the latter dominates those of the former. As a result, if we
 use \closenpmle{} under \eqref{eq:location_scale}, our convergence rates from 
\cref{cor:maintext} also
upper bound regret rates for these two decision problems. In particular, for $m/n \to c
\in (0,1)$, both regret rates
 \eqref{eq:utilmax_bound} and \eqref{eq:topm_regret_bound} are of the form $n^{-p/(2p+1)}
 (\log n)^{C} = o(1)$ under \cref{cor:maintext}. Thus, the performance  of the empirical
 Bayes decision rule approximates that of the oracle at least as fast as
 $O(n^{-p/(2p+1)})$, up to log factors.

\begin{rmksq}[Tightness of \cref{thm:mserelevance}]

 We suspect that the actual performance of
 \closenpmle{} for \cref{ex:utilmax,ex:topm} may be better than predicted by
 \cref{thm:mserelevance}. The proof of \cref{thm:mserelevance} exploits the fact that when
 the empirical Bayesian makes a selection mistake, the size of the mistake is not large if
 the square-error regret is low. It does not exploit the fact that if squared error regret
 is low, then the empirical Bayesian may be unlikely to make mistakes in the first place.%
 \footnote{Upper and lower bounds are derived in related but distinct settings by
 \citet{audibert2007fast,bonvini2023minimax}; some upper bounds, under possibly stronger
 assumptions, appear better than implied by \cref{thm:mserelevance}. 
We speculate that the bound for $\utilmax$ can be tightened by verifying a margin
 condition, using Proposition 2 in \citet{bonvini2023minimax}.
 Relatedly,
 \citet{liang2000empirical} shows upper and lower bounds for \cref{ex:utilmax} of the form
 $O ((\log n)^{1.5} /n)$ in a homoskedastic setting, assuming the oracle posterior means
 fall on both sides of zero. 
} Nevertheless, \cref{thm:mserelevance} is
 competitive with recent results. For instance, in nonparametric settings, the rate in
 \cref{thm:mserelevance} is more favorable than the upper bound derived in
 \citet{coeyhung}, who also study \cref{ex:topm}.%
\end{rmksq}

\subsection{Validating performance by coupled bootstrap}
\label{sub:coupled_bootstrap}

We close this section with a procedure that provides unbiased estimates of the loss of
\emph{arbitrary} decision rules for \cref{ex:mse,ex:topm,ex:utilmax}. Practitioners can
use this procedure to evaluate the gain of \closenpmle{} relative to other
alternatives---we do so extensively in \cref{sec:empirical}. The validity of this
validation depends only on the Gaussianity
\eqref{eq:gaussian_heteroskedastic_location}---without assuming $(\theta_i,
\sigma_i)$ are random nor assuming the location-scale model \eqref{eq:location_scale}.

For some $\omega > 0$ and an independent Gaussian noise $W_i \sim
\Norm(0,1)$, consider adding to $Y_i$ and subtracting from $Y_i$ some scaled version
of $W_i$: \[
Y_{i}^{(1)} = Y_i + \sqrt{\omega} \sigma_i W_i \quad Y_{i}^{(2)} = Y_i - \frac{1}{
\sqrt{\omega}}
\sigma_i W_i.
\]
\citet{oliveira2021unbiased} call $(Y_{i}^{(1)}, Y_{i}^{(2)})$ the \emph{coupled
bootstrap} draws. Observe that the two draws are conditionally independent under
\eqref{eq:gaussian_heteroskedastic_location}: \[
\colvecb{2}{Y_{i}^{(1)}}{Y_{i}^{(2)}} \mid \theta_i, \sigma_i^2 \sim \Norm\pr{
  \colvecb{2}{\theta_{i}}{\theta_i}, \begin{bmatrix}
   (1 + \omega) \sigma_i^2 &  0\\ 0 & (1+\omega^{-1}) \sigma_i^2
  \end{bmatrix}
}. \numberthis \label{eq:coupled_bootstrap}
\]
The conditional independence allows us to use $Y_{i}^{(2)}$ as an out-of-sample validation
for decision rules computed based on $Y_i^{(1)}$. We denote their variances by $\sigma_{i,
(1)}^2$ and $\sigma_{i, (2)}^2$.

The coupled bootstrap can be thought of as approximating sample-splitting the micro-data
without needing access. We could imagine splitting the micro-data into training and
testing sets, and think of $Y_i^{(1)}$ as training-set estimates  and $Y_{i}^{ (2)}$ as
testing-set estimates. We might compute decisions based on $Y_i^ {(1)}$ and evaluate them
honestly with fresh data $Y_i^{(2)}$. The coupled bootstrap precisely emulates this
sample-splitting procedure.\footnote{To see this, suppose $Y_{i} =
\frac{1}{n_i}
\sum_{j=1}^ {n_i} Y_ {ij}$ is a sample mean of i.i.d. micro-data $Y_{ij}: j = 1,\ldots,
 n_i$. Suppose we split $Y_{ij}$ into two sets, with proportions $\frac{1} {\omega + 1}$
 and $\frac{\omega}{\omega + 1}$, respectively. Let $Y_i^{(1)}$ and $Y_i^{(2)}$ be the
 sample means on each respective set. Then the central limit theorem motivates that,
 approximately, \eqref{eq:coupled_bootstrap} holds for $Y_i^{ (1)}$ and $Y_i^{(2)}$. For
 instance, coupled bootstrap with a value of $\omega = 1/9$ is statistically equivalent
 to splitting the micro-data with a 90-10 train-test split.}

The following proposition formalizes how to use coupled bootstrap to provide unbiased
estimators for the loss of a generic decision rule.\footnote{\citet
{oliveira2021unbiased} state the unbiased estimation result for the mean-squared error
estimation problem. They connect the coupled bootstrap estimator to Stein's unbiased risk
estimate. Our calculation for other loss functions extends their unbiased estimation
result. \Cref{prop:unbiased} can also be easily generalized to other loss functions that
admit unbiased estimators \citep[Effectively, the loss is a function of a Gaussian
location $\theta_i$. For unbiased estimation of functions of Gaussian parameters, see
Table A1 in][]{voinov2012unbiased}.}

\begin{table}[htb]
  \caption{Unbiased estimators for loss of decision rules and associated conditional
  variance expressions (\cref{prop:unbiased})}
  \label{tab:unbiased_estimation}
  \centering

\scriptsize
\begin{tabularx}{\textwidth}{m{0.2\textwidth}YY}
\toprule
Problem & Unbiased estimator of loss, $T\pr{Y_{1:n}^{(2)}, \bdelta}$ & $\var\pr{T\pr{Y_
{1:n}^{(2)}, \bdelta} \mid \mathcal F}$
\\ \midrule
\Cref{ex:mse} & $\frac{1}{n} \sum_{i=1}^n \pr{Y^{(2)}_i - \delta_i(Y_{1:n}^{(1)})}^2 -
\sigma_{i, (2)}^2$ & $\frac{1}{n^2}\sum_{i=1}^n \var\pr{(Y_i^{(2)} - \delta_i(Y_{1:n}^{
(1)}))^2
\mid
\mathcal F}$ \\
\Cref{ex:utilmax} & $-\frac{1}{n}\sum_{i=1}^n \delta_i(Y_{1:n}^{(1)}) Y^{(2)}_i $ & $
\frac{1}{n^2}
\sum_{i=1}^n \delta_i(Y_{1:n}^{(1)}) \sigma_{i, (2)}^2$ \\
\Cref{ex:topm} & $-\frac{1}{m}\sum_{i=1}^n \delta_i(Y_{1:n}^{(1)}) Y^{(2)}_i $ & $
\frac{1}{m^2}
\sum_{i=1}^n \delta_i(Y_{1:n}^{(1)}) \sigma_{i, (2)}^2$  \\ \bottomrule
\end{tabularx}
\end{table}

\begin{restatable}{prop}{unbiased}
\label{prop:unbiased}
Suppose $(Y_i,\sigma_i)$ obey \eqref{eq:gaussian_heteroskedastic_location}. Fix some
$\omega > 0$ and let $Y_{1:n}^{(1)}, Y_{1:n}^{(2)}$ be the coupled bootstrap draws. For
some decision problem, let $\bdelta(Y_{1:n}^{(1)})$ be some decision rule using only data
$\pr{Y_ {i}^{(1)},
\sigma_{i, (1)}^2}_{i=1}^n$. Let $\mathcal F = \pr{
\theta_{1:n}, Y_{1:n}^{(1)}, \sigma_{1:n, (1)}, \sigma_{1:n, (2)}}$, for 
\cref{ex:mse,ex:utilmax,ex:topm}, the estimators $T(Y_{1:n}^{(2)}, \bdelta)$ displayed
in \cref{tab:unbiased_estimation} are unbiased for the corresponding loss: \[
\E \bk{T(Y_{1:n}^{(2)}, \bdelta(Y_{1:n}^{(1)})) \mid \mathcal F } = L\pr{\bdelta(Y_{1:n}^{
(1)}), \theta_{1:n}}.
\]
Moreover, their conditional variances are equal to those  displayed in
\cref{tab:unbiased_estimation}.
\end{restatable}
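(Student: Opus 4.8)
The plan is to reduce everything to a single structural fact about the coupled bootstrap and then read off each line of \cref{tab:unbiased_estimation} by an elementary Gaussian moment computation. The structural fact is that, conditional on $\mathcal F = (\theta_{1:n}, Y_{1:n}^{(1)}, \sigma_{1:n,(1)}, \sigma_{1:n,(2)})$, the testing draws $Y_i^{(2)}$ are independent across $i$ with $Y_i^{(2)} \mid \mathcal F \sim \Norm(\theta_i, \sigma_{i,(2)}^2)$. This follows directly from \eqref{eq:coupled_bootstrap}: there $Y_i^{(1)}$ and $Y_i^{(2)}$ are conditionally independent given $(\theta_i, \sigma_i)$, so additionally conditioning on $Y_{1:n}^{(1)}$ does not alter the conditional law of $Y_i^{(2)}$, while independence across $i$ is inherited from the independence of $(Y_i, W_i)$ across $i$. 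The payoff is that, given $\mathcal F$, the decision rule $\bdelta(Y_{1:n}^{(1)})$ is a fixed nonrandom vector, so every estimator $T$ in the table becomes a simple function of independent Gaussians whose conditional mean and variance I can compute coordinatewise.

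First I would establish this conditional law carefully, being explicit that $\mathcal F$ contains both $\theta_{1:n}$ and $Y_{1:n}^{(1)}$, so the only randomness left in $T$ comes from $Y_{1:n}^{(2)}$. Writing $\delta_i \equiv \delta_i(Y_{1:n}^{(1)})$, which is $\mathcal F$-measurable and hence treated as constant, I would verify unbiasedness line by line. For \cref{ex:mse}, the bias-variance decomposition of a Gaussian gives $\E[(Y_i^{(2)} - \delta_i)^2 \mid \mathcal F] = \sigma_{i,(2)}^2 + (\theta_i - \delta_i)^2$, so subtracting $\sigma_{i,(2)}^2$ and averaging over $i$ returns exactly $L(\bdelta, \theta_{1:n}) = \frac{1}{n}\sum_i (\delta_i - \theta_i)^2$. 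For \cref{ex:utilmax,ex:topm}, unbiasedness is even simpler: by linearity $\E[\delta_i Y_i^{(2)} \mid \mathcal F] = \delta_i \theta_i$, and summing with the appropriate normalization ($1/n$ or $1/m$) and sign recovers the respective loss.

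The conditional variances then follow from conditional independence across $i$. Since the $Y_i^{(2)}$ are independent given $\mathcal F$ and the summands in each $T$ depend on disjoint coordinates, $\var(T \mid \mathcal F)$ is the sum of the per-coordinate variances. For \cref{ex:mse} this leaves $\frac{1}{n^2}\sum_i \var((Y_i^{(2)} - \delta_i)^2 \mid \mathcal F)$ as reported, since the $\sigma_{i,(2)}^2$ correction is a constant and does not affect the variance. For the two selection problems, $\var(\delta_i Y_i^{(2)} \mid \mathcal F) = \delta_i^2 \sigma_{i,(2)}^2$, and using $\delta_i \in \{0,1\}$ so that $\delta_i^2 = \delta_i$ yields the displayed expressions $\frac{1}{n^2}\sum_i \delta_i \sigma_{i,(2)}^2$ and $\frac{1}{m^2}\sum_i \delta_i \sigma_{i,(2)}^2$.

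There is no genuine analytic difficulty here; the entire content is the conditional-independence observation, so the step I would be most careful about is the conditioning argument itself. Specifically, I want to be sure that passing to the richer information set $\mathcal F$ (which includes the training data and $\theta_{1:n}$) does not inadvertently reintroduce dependence between $Y_i^{(2)}$ and $Y_{1:n}^{(1)}$; this is exactly where the orthogonality built into the construction $Y_i^{(2)} = Y_i - \omega^{-1/2}\sigma_i W_i$ under \eqref{eq:gaussian_heteroskedastic_location} is used, and it is the only place the specific form of the coupled bootstrap enters the argument.
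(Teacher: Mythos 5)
Your proposal is correct and takes essentially the same approach as the paper: both reduce the claim to the fact that, conditional on $\mathcal F$, each $Y_i^{(2)}$ is an independent $\Norm(\theta_i, \sigma_{i,(2)}^2)$ draw with $\bdelta(Y_{1:n}^{(1)})$ held fixed, and then read off the means and variances coordinatewise. The paper's proof simply states these Gaussian moment calculations without spelling out the conditional-independence justification that you make explicit.
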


\Cref{prop:unbiased} allows for an out-of-sample evaluation of decision rules, as well as
uncertainty quantification around the estimate of loss, solely imposing the Gaussian
model. %
This is a useful property in practice for comparing different empirical Bayes methods,
especially if one is worried about the misspecification of \eqref{eq:location_scale} or
if one is unwilling to evaluate risk integrating over random $\theta_i$.

\section{Empirical illustration}
\label{sec:empirical}

How does \closenpmle{} perform in the field? We now consider two empirical exercises
related to \citet{chetty2018opportunity} and \citet{bergman2019creating}. Using
 Census micro-data, \citet {chetty2018opportunity} estimate a suite of
tract-level children's outcomes in adulthood and publish an ``Opportunity Atlas'' of the
estimates and the corresponding
standard errors.\footnote{
\label{fn:correlation}Like prior work that uses this data
\citep [see, e.g., footnote 28 in] [] {andrews2021inference}, we do not have access to the
variance-covariance matrix of these estimates. Correlations across estimates are due to
small proportion of movers between tracts and are anticipated to be small. } Taking these
estimates, \citet{bergman2019creating} conducted a program
called {Creating Moves to Opportunity}. 
\citet{bergman2019creating} provided assistance to treated low-income individuals to move
 to Census tracts with estimated posterior means in the top third. We view \citet
 {bergman2019creating}'s objectives as \topm, for $m$ equal to one third of the number of
 tracts in Seattle and King County, WA.

The Opportunity Atlas published by \citet{chetty2018opportunity}  also includes
tract-level covariates, a complication that we have so far abstracted away from. In the
ensuing empirical exercises, following \citet{bergman2019creating}, the estimates are
residualized against the covariates as a preprocessing
step  \citep{fay1979estimates}.\footnote{\label{fn:covariate_additive}Alternatively, \cref{asub:covariate_additive}
shows that flexibly modeling $\E[\theta_i \mid \sigma_i, X_i] = m_0(\sigma_i, X_i)$ and
$\var(\theta_i \mid \sigma_i, X_i) = s_0^2(\sigma_i, X_i)$, as in
\eqref{eq:location-scale-covariates}, induces substantial additional benefits, relative to
simply projecting out the covariates linearly. Here, including $\sigma_i$ in the modeling
remains important---modeling $\theta_i \mid X_i$ flexibly does not fully capture these
benefits.} We now let $\tilde Y_i$ denote the raw Opportunity
Atlas estimates for a
pre-residualized parameter $\vartheta_i$ and let $(Y_i, \theta_i)$ be their residualized
counterparts against a vector of tract-level covariates $X_i$, with regression coefficient
$\beta$.\footnote{Precisely speaking, let $X_i$ be a vector of tract-level covariates. Let
$(\tilde Y_i, \sigma_i)$ be the raw Opportunity Atlas estimates of a parameter
$\vartheta_i$. Let $\beta$ be some vector of coefficients, typically estimated by weighted
least-squares of $Y_i$ on $X_i$. Let $Y_i = \tilde Y_i - X_i'\beta$ and $\theta_i =
\vartheta_i - X_i'\beta$ be the residuals. Since $\beta$ is precisely estimated, we ignore
its estimation noise. Then, the residualized objects $(Y_i, \theta_i)$ obey the Gaussian
sequence model $Y_i
\mid \theta_i, \sigma_i
\sim \Norm(\theta_i,
\sigma_i^2).$  
}  We can apply the empirical Bayes procedures in this paper to
 $ (Y_i,
\sigma_i^2)$ and obtain an estimated posterior for $\theta_i$. This estimated posterior for
the residualized parameter $\theta_i$ then implies an estimated posterior for the original
parameter $\vartheta_i = \theta_i + X_i' \beta$, by adding back the fitted values
$X_i'\beta$. When there are no covariates,
$\vartheta_i = \theta_i$ and $Y_i = \tilde Y_i$.

 The covariates we use are included in the publicly available data from
 \citet{chetty2018opportunity} and cross-referenced with their labels in
 \cref{fn:covariates}. They include: poverty rate in 2010, share of Black individuals in
 2010, mean household income in 2000, log wage growth for high school graduates, fraction
 with college or post-graduate degrees in 2010, mean parent family income rank, mean
 parent family income rank for Black individuals, number of all and Black children under
 18 with parents whose household income is below median in 2000 (in both levels and logs).

We consider 15 measures of economic mobility $\vartheta_i$. Each $\vartheta_i$ is the
population mean of \emph{some} outcome for individuals of \emph{some} demographic
subgroup growing up in tract $i$, whose parents are at the 25\th{} income
percentile.\footnote{\label{fn:alpha}\Copy{fnalpha}{Since all measures of economic
mobility have bounded support, as
either percentile ranks or percentage rates, \cref{as:moments} is automatically satisfied
for $\theta_i$ with $\alpha = 2$, at least when there are no covariates.}} We will
consider three types of outcomes:
\begin{enumerate*}[label=(\roman*)]
  \item percentile rank of adult income (\meanrank),
  \item an indicator for whether the individual has incomes in the top 20
  percentiles (\topprob), and
  \item an indicator for whether the individual is incarcerated (\incarceration)
\end{enumerate*}
for the following five demographic subgroups: all individuals (\pooled), white
individuals, white men, Black individuals, and Black men. Under these shorthands, the
outcome in \cref{sec:model} is
\topprob{} (Black), while \citet{bergman2019creating} consider \meanrank{}
\pooled.

The remainder of this section compares several methods on two exercises. In the first
exercise, a calibrated simulation, we compare MSE performance of various methods to
that of the oracle posterior. The second exercise is an empirical application to a
scale-up of the exercise in \citet{bergman2019creating}. It uses the coupled bootstrap
(\cref{sub:coupled_bootstrap}) to evaluate whether
\closenpmle{} selects more economically mobile tracts than alternatives.

\subsection{Calibrated simulation}
\label{sub:calibrated}
\Copy{simdgp}{ We draw from a data-generating process estimated from the data.
This data-generating process does not impose the location-scale assumption.
On the
 data $(Y_i, \sigma_i)$, we estimate $\hat m(\cdot), \hat s^2(\cdot)$ via local linear
 regression. We
 then transform to obtain $\hat Z_i = \frac{Y_i - \hatm(\sigma_i)}{\hats(\sigma_i)}$ and
 $\hat\nu_i = \frac{\sigma_i}{\hats(\sigma_i)}$. We partition $\sigma_i$ into vingtiles.
 For the data $(\hat Z_i, \hat\nu_i)$ whose $\sigma_i$ falls in a given vingtile $v
\in \br{1,2,3,4,5}$,
we estimate a vingtile-specific $\hat G_{n,v}$ via \npmle. We then normalize this
estimated \npmle{} to have mean zero and variance one, by affinely transforming the
estimated distribution. Finally, to generate synthetic data, for a $\sigma_i$
corresponding to the $v(\sigma_i)$\th{} vingtile, we draw $\tau_i^* \mid
\sigma_i \sim \hat G_{n,v(\sigma_i)}^{\text{normalized}}$, and set $\theta_i^* = \tau_i^*\hats
(\sigma_i) + \hatm(\sigma_i)$, $Y_i^* \mid \theta_i^*, \sigma_i \sim \Norm(\theta^*_i,
\sigma_i^2)$ and $\tilde Y_i^* = Y_i^* + X_i'\beta$. Additional details for the sampling
 process and simulation setup are documented in \cref{asub:sim_setup}. 
 }

On the simulated data, we then implement various empirical Bayes strategies. We consider
the feasible procedures: \naive,
\indepgauss, \indepnpmle, \closegauss{} (parametric),
\closegauss, and
\closenpmle, as well as the infeasible \oracle.
Here,
\begin{itemize}
  \item \naive{} sets $\hat\theta_i = Y_i$.

\item \indepgauss{} weighs the estimation of the hyperparameters $(m_0, s_0)$ with
  $1/\sigma_i^2$, following \citet{bergman2019creating}.

  \item \closegauss{} (parametric) implements
\closegauss, where \cref{item:close1} models the conditional moments parametrically as
$m_0 (\sigma_i; a) = a_1 + a_2 \log \sigma_i$
and $s_0^2(\sigma_i; b) = \exp(b_1 + b_2 \log \sigma_i)$, and estimates $m_0, s_0$ via
least-squares.\footnote{That is, we fit $a_1, a_2$ via minimizing $\sum_i (Y_i - a_1 - a_2
\log \sigma_i)^2$. We then fit $b_1, b_2$ via minimizing $
\sum_{i} \br{(Y_i - \hatm(\sigma_i))^2 - \sigma_i^2 - \exp(b_1 + b_2 \log (\sigma_i))}^2
$.
We thank an anonymous referee for this suggestion.
}

\item The conditional moments $\eta_0 = (m_0(\cdot), s_0(\cdot))$ in \closegauss{} and 
\closenpmle{} are estimated via local linear
regression, where bandwidth is selected via plug-in IMSE-optimal bandwidth, as
implemented in 
\citet{calonico2019nprobust}.\footnote{\label{fn:implementation}Specifically, $\hatm = \hat \E
[Y_i \mid \log\sigma_i]$ and  $\hats^2(\sigma_i) = \max(
\hat \E[(Y_i - \hatm(\sigma_i))^2 \mid
\log
\sigma_i] - \sigma_i^2, \tilde{s}^2(\sigma_i)),$ where $\hat \E[\cdot \mid \log
\sigma_i]$ implements local linear regression and $\tilde{s}(\sigma_i)$ implements a
 data-driven truncation of $\hats^2$, detailed in \cref
 {sec:nuisance_estimation}. Replacing the truncation point $\tilde s(\sigma_i)$ with zero
 (that is, we exclude the observations with $\hats(\sigma_i) = 0$ from estimating $\hat
 G_n$, and treat these observations as having empirical Bayes posterior degenerate at
 $\hatm(\sigma_i)$) does not appear to qualitatively affect our results. }

\item Since we know
 the ground truth data-generating process, we can also compute the
\oracle{} procedure that uses posterior means under the true $P_0$.

  \item None of the feasible procedures have access to $\beta$, which they must estimate
   in the same way using weighted least squares with weight $1/\sigma_i^2$, following
  \citet{bergman2019creating}.

\end{itemize} 

\begin{figure}[htb]
  \centering
  \includegraphics[width=\textwidth]{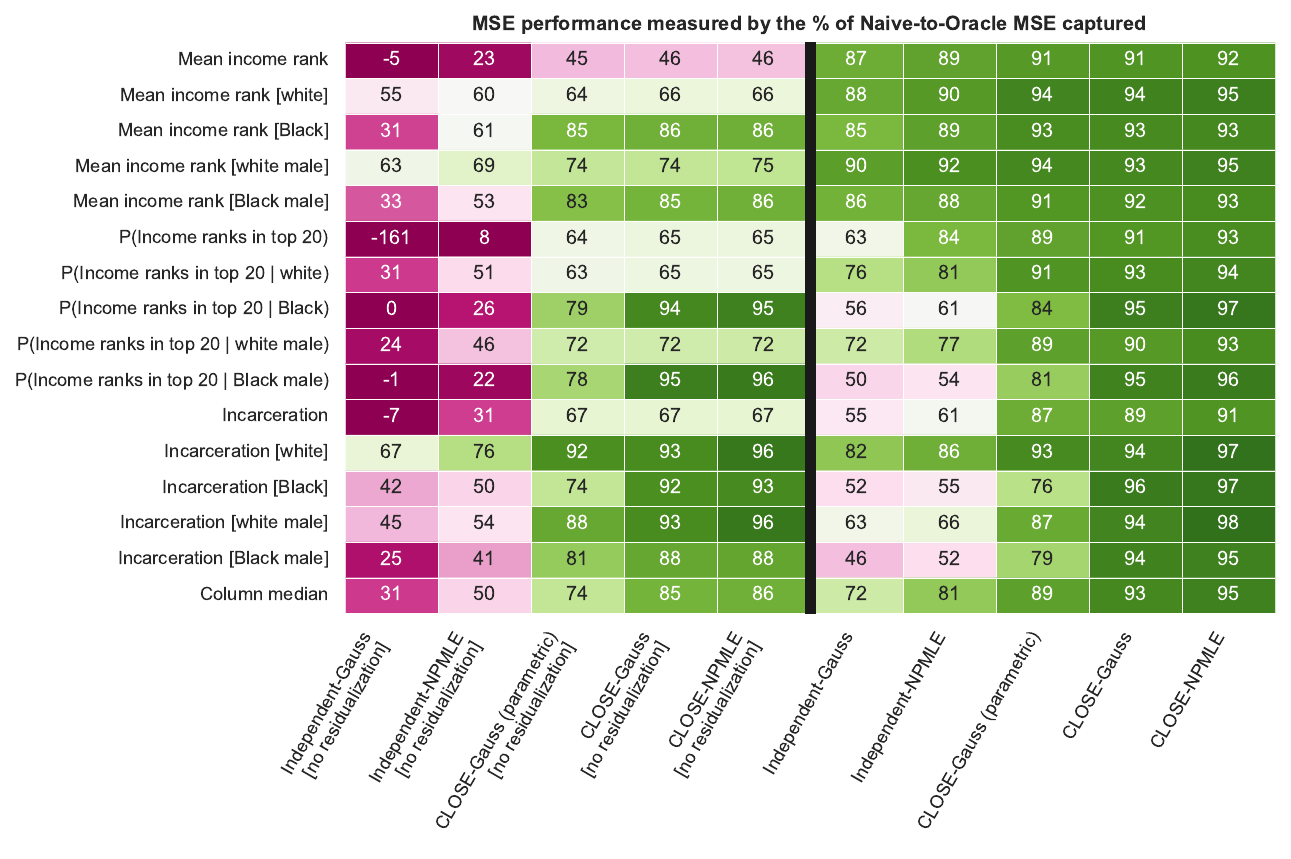}

  \begin{proof}[Notes]
  Each column is an empirical Bayes strategy that we consider, and each row is a different
  definition of $\vartheta_i$. The table shows relative performance, defined as the
  squared error improvement over \naive, normalized as a percentage of the improvement of
  \oracle{} over \naive. The last row shows the column median. Results are averaged over
  1,000 Monte Carlo draws.
  \end{proof}
  \caption{Relative squared error Bayes risk for various empirical Bayes posterior means}
  \label{fig:mse_table}
\end{figure}

\Cref{fig:mse_table} plots the results from this calibrated simulation, focusing on MSE
 performance. For each method and each target variable, we display a relative measure of
 MSE gain. For each method, we calculate its MSE gain over \naive{}, normalized by the
 MSE gain of \oracle{} over \naive. If we think of the
\oracle--\naive{} difference as the total size of the ``statistical pie,'' then
\cref{fig:mse_table} shows how much of this pie each method captures.  

The first five columns show the relative mean-squared error performance {without}
residualizing against covariates, applying empirical Bayes methods directly on $ (\tilde
Y_i, \sigma_i)$. We see that methods which assume precision independence perform worse
than
methods based on
\close.\footnote{\label
{foot:weighted_means}It may be surprising that \indepgauss{} can perform worse than
\naive{} even on MSE, since Gaussian empirical Bayes can be thought of as optimizing
among a class of linear shrinkage estimators that include
\naive. We note that, as in
\citet{bergman2019creating}, when we estimate the prior mean and prior variance, we
\emph{weight} the data with precision weights proportional to $1/\sigma_i^2$. When the
independence between $\theta$ and $\sigma$ holds, these precision weights typically
improve efficiency. However, the weighting does mean that the resulting posterior means
are no longer optimal, even asymptotically, among the class of linear shrinkage rules
under precision dependence. To take an extreme example, if a particular observation
has $\sigma_i \approx 0$, then that observation is highly influential for the prior mean
estimate. If $\E[\theta_i \mid \sigma_i]$ is very different for that observation than the
other observations, then the estimated prior mean is a bad target for shrinkage. } Across
the 15 variables, the median proportion of possible gains captured by
\indepgauss{} is only 31\%. This value is 50\% for \indepnpmle{}, and 86\% for
\closenpmle. Among the first five columns, \closenpmle{} uniformly dominates all three
 other methods. This indicates that the standard error $\sigma_i$ is highly predictive of
 $\theta_i$, and using that information can be very helpful in the absence of additional
 covariates.

The next five columns show performance when the methods do have access to covariate
information. For \meanrank, after covariate residualization, the dependence between
$\theta_i$ and $\sigma_i$ does not appear to substantially affect shrinkage decisions. 
\indepnpmle{} and
\close-methods perform similarly, capturing almost all of the available gains. For the other two outcome variables, \topprob {} and
 \incarceration, the dependence between $\theta_i $ and $
\sigma_i$ is stronger, and \close-based methods display substantial improvements over
 methods that assume precision independence. Among \close-methods, those that are more
 flexible appear to reap a small benefit, though simple parametric models for $(m_0, s_0,
 G_0)$ remain competitive and significantly improve upon methods that assume precision
 independence. The most flexible method,
\closenpmle, achieves near-oracle performance across the different definitions of
 $\theta_i$ and again uniformly dominates all other feasible methods.\footnote{\Cref
 {asub:weibull} contains an alternative data-generating process in which the $\theta_i
 \mid \sigma_i$ distribution is Weibull, which has thicker tails and higher skewness.
 Under such a scenario, \npmle-based methods more substantially outperform methods
 assuming Gaussian priors.}

\subsection{Validation exercise via coupled bootstrap}

\begin{figure}[!htb]

  \centering

  \includegraphics[width=\textwidth]{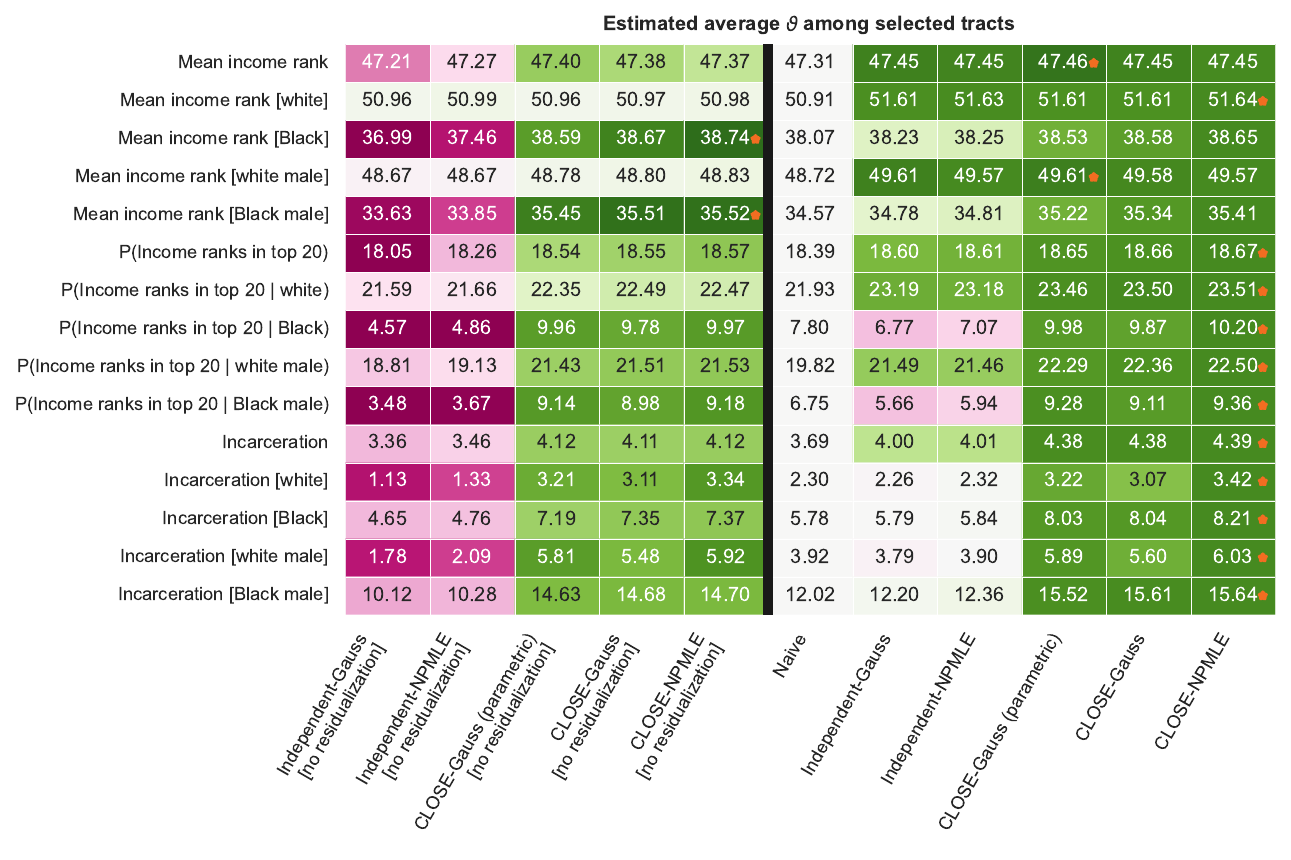}

  \begin{proof}[Notes] 
Each column is an empirical Bayes strategy that we consider, and each row is a different
  definition of $\vartheta_i$. The table shows coupled-bootstrap estimates of average
  $\vartheta_i$ among the Census tracts selected by each method---in terms of either
  percentage points or percentile ranks---over 1,000 draws of coupled bootstrap. All
  decision rules are estimated separately within CZs and select the top third of Census
  tracts within each CZ. The color scheme within each row treats the performance of
  \naive{} as zero (grey) and \closenpmle{} as one (dark green), and is hence not
  comparable across rows. On this scale, a method that overperforms \naive{} is colored
  green; otherwise it is colored magenta. The best performer for each row is additionally
  marked with an orange star.  %
  \end{proof}

  \caption{Performance of decision rules in top-$m$ selection exercise}
  \label{tab:selection_validation_within_cz}
\end{figure}

Our second empirical exercise uses the coupled bootstrap described in
\cref{sub:coupled_bootstrap} for the policy problem in \citet {bergman2019creating}.
Viewing the policy problem in \citet{bergman2019creating} as \topm, can \closenpmle{} make
better selections? 

Specifically, we imagine scaling up \citet{bergman2019creating}'s exercise and perform
empirical Bayes procedures for all Census tracts in the largest 20 Commuting Zones (CZs).
We
then select the top third of tracts \emph{within} each CZ, according to
empirical Bayesian posterior means for $\vartheta_i$. Additionally, to faithfully mimic
\citet{bergman2019creating}, here we perform all empirical Bayes procedures \emph
 {within CZ}.
Throughout, we choose $\omega$ to emulate a 90-10 train-test split on the micro-data. See
\cref{asub:sim_setup} for details on the policy exercise setup.

\Cref{tab:selection_validation_within_cz} shows the estimated performance of various
methods.
According to these estimates, \closenpmle {} generally improves over
\indepgauss.\footnote{\closenpmle{} is worse by an estimated
$0.006$ percentile ranks for \meanrank{} \pooled and worse by $0.04$
percentile ranks for
\meanrank {} for white men. In either case, the estimated disimprovement is small.}
Strikingly, \indepgauss{} with covariates underperforms \naive{} for four of the 15
variables, and
 \indepgauss{} without covariates underperforms for nearly all variables.

For the \meanrank{} variables, using \closenpmle{} generates substantial gains for
mobility measures for Black individuals (0.63 percentile ranks for Black men and 0.43
percentile ranks for Black individuals). To put these gains in dollar terms, at the income
level for experiment participants in \citet{bergman2019creating}, an incremental
percentile rank amounts to about \$1,000 per annum. Thus, the estimated gain in
terms of mean income rank is roughly \$400--600.
For the other two outcomes, \topprob{} and \incarceration,  the gains are even more
sizable. %
These gains are as high as 2--3 percentage points on average. Among \close-methods, we
again find that \closenpmle{} generally performs the best, though by small
margins.\footnote{Interestingly, the best performing method for \meanrank{} (\pooled) and 
\meanrank{} (white men) is \closegauss{} (parametric), and the best performing method
 for \meanrank{} (Black) and \meanrank{} (Black men) is \closenpmle, but without
 residualizing against covariates.} \Copy{empirical}{While \closenpmle{} is a simple
 default that works
  uniformly well, in this case, simple parametric models that allow for dependence also
  appear competitive.}

 We can think of the performance gap between \indepgauss{} and \naive {} as the
 \emph{value of basic empirical Bayes}. If practitioners find using the standard empirical
 Bayes method a worthwhile investment over screening on the raw estimates directly,
 perhaps they reveal that the value of basic empirical Bayes is economically significant.
 Across the 15 measures, the improvement of \closenpmle{} over
 \indepgauss{} is on median 260\% of the value of basic empirical Bayes, where the median
 is attained by
 \meanrank{} for Black individuals. Thus, the additional gain of \closenpmle
  {} over \indepgauss{} is substantial compared to the value of basic empirical Bayes. If
  the latter is economically significant, then it is similarly worthwhile to use
  \closenpmle{} instead.

\section{Conclusion}
\label{sec:conclusion}

This paper studies empirical Bayes methods in the heteroskedastic Gaussian location model.
We argue that precision independence---the assumption that the precision of estimates does
not
predict the true parameter---is often empirically rejected. Empirical Bayes 
methods that rely on precision independence can generate worse posterior mean estimates.
Screening decisions based on these estimates can suffer as a result. They may even be
worse than the selection decisions made with the unshrunk estimates directly.

Instead of treating $\theta_i$ as independent from $\sigma_i$, we model its conditional
distribution as a location-scale family in $\sigma$-dependent location and scale
parameters. This assumption leads naturally to a family of empirical Bayes strategies
that we call \close. The \close-framework naturally subsumes and generalizes several
existing proposals for accommodating precision dependence. We prove that
\closenpmle{} attains minimax-optimal rates in Bayes regret, extending previous
theoretical results. That is, it approximates infeasible oracle Bayes posterior means as
competently as statistically possible. Additionally, we show that an idealized version of
\closenpmle{} is robust, with finite worst-case Bayes risk. Finally, we further connect
our main theoretical
results   to ranking-type decision problems in
\citet{bergman2019creating}. 

Simulation and validation exercises demonstrate that \closenpmle{} generates sizable gains
relative to the standard parametric empirical Bayes shrinkage method. Across calibrated
simulations, \closenpmle{} attains close-to-oracle mean-squared error performance. In a
hypothetical, scaled-up version of \citet{bergman2019creating}, across a wide range of
economic mobility measures, \closenpmle{} consistently selects more mobile tracts than
does the standard empirical Bayes method. The gains in the average economic mobility among
selected tracts, relative to the standard empirical Bayes procedure, are often comparable
to---or even multiples of---the value of basic empirical Bayes.

\appendix

\numberwithin{lemma}{section}
\numberwithin{theorem}{section}
\numberwithin{cor}{section}
\numberwithin{prop}{section}
\numberwithin{as}{section}
\numberwithin{rmk}{section}
\numberwithin{figure}{section}

\begin{appendices}

\section{Proof outline for \cref{cor:maintext}}
\label{asec:proof_main}

The proof of \cref{cor:maintext} depends on numerous results deferred to the Online
Appendix. An outline is stated here. For constants $\Delta_n, M_n, C$ to be chosen, define
the following events: For $\Dinfty \equiv \max(
\norm{\hatm -
m_0}_\infty, \norm{\hats-s_0}_\infty)$, 
\begin{align}
A_n &\equiv \br{\Dinfty \le \Delta_n,\, \bar Z_n \equiv \max_{i \in [n]}
(|Z_i| \vee 1) \le
M_n}  \label{eq:barzn_def} \\
\A_n(C) &\equiv
\br{\norm{\hateta - \eta_0}_\infty  \le C
n^{-\frac{p}{2p+1}} (\log n)^{\beta_0}}.
 \label{eq:bold_A_def}
\end{align}
With the choice $\Delta_n = C_{1} n^{-\frac{p}{2p+1}} 
(\log n)^{\beta_0},$ we have that $
A_n = \A_n(C_1) \cap \br{\bar Z_n \le M_n}. 
$
The event $\A_n(C)$ indicates that the first-step estimates $\hateta$ are accurate. By
\cref{as:holder}(2), there is some sufficiently large constant $C$ such that this event
occurs with high probability: $\P [\A_n
(C)]
> 1-
\frac{1} {n^2}$. The event $A_n$ also occurs with high probability since the additional
 requirement $\bar Z_n \le M_n$ can be made probable, by choosing some $M_n$ logarithmic
 in $n$, thanks to \cref{as:moments}.

To prove \cref{cor:maintext}, we consider the events $A_n, A_n^\comp$ separately. On
$A_n^\comp$, we use the fact that the empirical Bayes posterior means $\hat\theta_i$ and
the oracle posterior means $\theta_i^*$ are no farther than the range of the data $\max
Y_i - \min Y_i$, which is logarithmic in $n$ under \cref{as:moments}
(\cref{lemma:reg_on_anc}). Since $A_n^\comp$ is assumed to be unlikely, regret on
$A_n^\comp$ is sufficiently small.

On the event $A_n$, the first-step estimates $\hateta$ are accurate, and the data $Z_i$
are not too large. The bulk of the argument thus controls regret on $A_n$, stated
separately in the following theorem, whose proof is deferred to \cref
{sec:appendix_setup}.

\begin{rmksq}[Notation]
 For $A_n, B_n \ge 0$, we use $A_n \lesssim B_n$ to mean that some universal $C$ exists
such that $A_n \le C B_n$ for all $n$, and we use $A_n \lesssim_x B_n$ to mean that some
universal $C_x$ exists such that $A_n \le C_x B_n$ for all $n$.\footnote{In logical
statements, appearances of $\lesssim$ implicitly prepend ``there exists a universal
constant'' to the statement. For instance, statements like ``under certain assumptions,
$\P(A_n \lesssim B_n) \ge c_0$'' should be read as ``under certain assumptions, there
exists a constant $C>0$ such that for all $n$, $\P (A_n \le CB_n) \ge c_0$.''} 
\end{rmksq}

\begin{restatable}{theorem}{thmregretonAn}
\label{thm:regret_on_An}
Suppose \cref{as:npmle,as:holder,as:moments,as:variance_bounds} hold. Fix some $\beta > 0,
C_{1} > 0$, there exists choices of a constant $C_{\H,2}$ such that, for $\Delta_n = C_
{1} n^{-p/ (2p+1)} (\log n)^\beta$, $M_n = C_{\H, 2} (\log n)^{1/\alpha}$, and
corresponding $A_n$, %
\[
\E\bk{
    \reg(\hat G_n, \hateta) \one(A_n)
} \leH  n^{-\frac{2p}{2p+1}} (\log n)^{\frac{2+\alpha}{\alpha} + 3 + 2\beta}.
\]
\end{restatable}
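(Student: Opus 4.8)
The plan is to start from the variance-decomposition identity \eqref{eq:mse_regret_and_mse} and reduce the regret on $A_n$ to a mean-squared distance between the feasible and oracle posterior means, and then to control that distance through a four-term decomposition that isolates the contributions of the nuisance error and of estimating $G_0$. First I would check that the identity behind \eqref{eq:mse_regret_and_mse} survives multiplication by $\one(A_n)$. Writing $(\hat\theta_i-\theta_i)^2-(\theta_i^*-\theta_i)^2 = (\hat\theta_i-\theta_i^*)^2 + 2(\hat\theta_i-\theta_i^*)(\theta_i^*-\theta_i)$, the cross term has conditional mean zero given $Y_{1:n}$: the quantities $\one(A_n)$, $\hat\theta_{i,\hat G_n,\hateta}$, and $\theta_i^*$ are all $Y_{1:n}$-measurable, while $\E[\theta_i^*-\theta_i\mid Y_{1:n}]=0$ because \eqref{eq:eb_sampling} and \eqref{eq:gaussian_heteroskedastic_location} make $(\theta_i,Y_i)$ independent across $i$, so that $\theta_i^*=\E[\theta_i\mid Y_{1:n}]$. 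Hence
\[
\E[\reg(\hat G_n,\hateta)\,\one(A_n)] = \E\Bigl[\one(A_n)\,\tfrac{1}{n}\sum_{i=1}^n(\hat\theta_{i,\hat G_n,\hateta}-\theta_i^*)^2\Bigr],
\]
and it suffices to bound the right-hand side.

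Next I would write the posterior mean as a Tweedie map $\mathsf{t}_{G,\nu}(z)\equiv\int\tau\,\varphi((z-\tau)/\nu)\,G(d\tau)\big/\int\varphi((z-\tau)/\nu)\,G(d\tau)$, so that $\theta_i^*=m_0(\sigma_i)+s_0(\sigma_i)\mathsf{t}_{G_0,\nu_i}(Z_i)$ and $\hat\theta_{i,\hat G_n,\hateta}=\hatm(\sigma_i)+\hats(\sigma_i)\mathsf{t}_{\hat G_n,\hat\nu_i}(\hat Z_i)$. Inserting the intermediate term $\mathsf{t}_{G_0,\hat\nu_i}(\hat Z_i)$ yields
\begin{align*}
\theta_i^* - \hat\theta_{i,\hat G_n,\hateta}
&= \underbrace{(m_0-\hatm)(\sigma_i)}_{\text{location}}
+ \underbrace{(s_0-\hats)(\sigma_i)\,\mathsf{t}_{G_0,\nu_i}(Z_i)}_{\text{scale}} \\
&\quad + \underbrace{\hats(\sigma_i)\bigl[\mathsf{t}_{G_0,\nu_i}(Z_i)-\mathsf{t}_{G_0,\hat\nu_i}(\hat Z_i)\bigr]}_{\text{data perturbation}}
+ \underbrace{\hats(\sigma_i)\bigl[\mathsf{t}_{G_0,\hat\nu_i}(\hat Z_i)-\mathsf{t}_{\hat G_n,\hat\nu_i}(\hat Z_i)\bigr]}_{\text{prior fit}}.
\end{align*}
Using $(a+b+c+d)^2\le 4(a^2+b^2+c^2+d^2)$, I would bound the averaged square of each term on $A_n$. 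The location term is immediately $\le\Delta_n^2$. The scale term needs a uniform bound $|\mathsf{t}_{G_0,\nu_i}(Z_i)|\leH M_n$ on $A_n$ (from $|Z_i|\le M_n$, the variance bounds of \cref{as:variance_bounds}, and the tail control of $G_0$ in \cref{as:moments}, which forces the posterior to concentrate near the data), giving $\leH\Delta_n^2 M_n^2$. The data-perturbation term uses Lipschitz continuity of $(z,\nu)\mapsto\mathsf{t}_{G_0,\nu}(z)$ on the bounded-data event—its $z$-derivative is the posterior variance over $\nu^2$, and both partial derivatives are $\leH\mathrm{poly}(M_n)$ there, as smoothness properties of Gaussian mixtures—together with the elementary bounds $|\hat Z_i-Z_i|\leH\Delta_n M_n$ and $|\hat\nu_i-\nu_i|\leH\Delta_n$ that follow from the definitions \eqref{eq:transformed_data} and \cref{as:variance_bounds}; this gives $\leH\Delta_n^2\,\mathrm{poly}(M_n)$.

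The main obstacle is the prior-fit term, the \npmle{} regret $\tfrac1n\sum_i\hats(\sigma_i)^2\bigl(\mathsf{t}_{G_0,\hat\nu_i}(\hat Z_i)-\mathsf{t}_{\hat G_n,\hat\nu_i}(\hat Z_i)\bigr)^2$, now evaluated on the \emph{contaminated} data $(\hat Z_i,\hat\nu_i)$ with estimated noise levels. Here I would extend the oracle inequality of \citet{jiang2020general,soloff2021multivariate}: taking $H=G_0$ in \eqref{eq:approx_mle}, the approximate-maximum-likelihood and support properties of \cref{as:npmle} bound the fitted log-likelihood gap between $\hat G_n$ and $G_0$ by $\kappa_n$ plus an empirical-process fluctuation, and a Hellinger-to-posterior-mean argument converts this into a bound on the averaged squared difference of posterior means. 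The metric entropy of the Gaussian-mixture class—restricted to supports in $[\min_i\hat Z_i,\max_i\hat Z_i]\subseteq[-M_n-o(1),M_n+o(1)]$ on $A_n$—supplies the $(\log n)^\beta$ factor and the classical $1/n$ rate. The delicate point is that the $\hat Z_i$ are \emph{not} drawn from $G_0\star\Norm(0,\hat\nu_i^2)$; the discrepancy must be propagated as an additional contamination term of order $\Delta_n^2\,\mathrm{poly}(M_n)$, which requires uniform control of the mixture likelihood against perturbations of size $|\hat Z_i-Z_i|$. Since $2p/(2p+1)<1$, the classical $(\log n)^\beta/n$ piece is dominated by $\Delta_n^2$, so the prior-fit term is $\leH\Delta_n^2\,\mathrm{poly}(M_n)+(\log n)^\beta/n$.

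Finally I would collect the four bounds, substitute $\Delta_n^2=C_1^2 n^{-2p/(2p+1)}(\log n)^{2\beta}$ and $M_n^2=C_{\H,2}^2(\log n)^{2/\alpha}$, and track the exponents of $\log n$ so that the accumulated polynomial factors in $M_n$ (from the posterior-mean tail bound, the Lipschitz constants, and the entropy step) match $(\log n)^{\frac{2+\alpha}{\alpha}+3+2\beta}$; the constant $C_{\H,2}$ enters only through these hyperparameter-dependent constants. The dominant contribution is the nuisance error, of order $\Delta_n^2\,\mathrm{poly}(M_n)=n^{-2p/(2p+1)}(\log n)^{2\beta+O(1/\alpha)}$, consistent with the stated bound.
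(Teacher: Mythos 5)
Your skeleton tracks the paper's proof up to a re-arrangement: the cross-term elimination is exactly the paper's first step (cf.\ \eqref{eq:regdecompose}), and your four-term split (location, scale, data perturbation, prior fit) is a cousin of the paper's $\xi_1,\ldots,\xi_4$ decomposition. The difference is that the paper compares the two priors at the \emph{true} nuisance $\eta_0$ --- i.e.\ on the clean data $(Z_i,\nu_i)$ --- and absorbs all nuisance contamination into a single Taylor-expansion term $\xi_1$, whereas you compare priors on the contaminated data $(\hat Z_i,\hat\nu_i)$. That re-arrangement is repairable by one more Lipschitz step. The genuine gap is in how you propose to control the prior-fit term.

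You assert that the contamination of the \npmle{} objective ``must be propagated as an additional contamination term of order $\Delta_n^2\,\mathrm{poly}(M_n)$,'' justified by ``uniform control of the mixture likelihood against perturbations of size $|\hat Z_i - Z_i|$.'' But per-observation uniform control gives only a \emph{first-order} bound: $|\log f_G(\hat Z_i) - \log f_G(Z_i)| \leH |\hat Z_i - Z_i|\cdot\sup_z |f'_{G}/f_{G}| \leH \Delta_n M_n \sqrt{\log n}$, i.e.\ a likelihood contamination of order $\Delta_n$, not $\Delta_n^2$. Feeding $\sub_n(\hat G_n) \leH \Delta_n\,\mathrm{poly}(\log n)$ into the Hellinger machinery yields only $\barh^2(\fs{\hat G_n},\fs{G_0}) \leH \Delta_n\,\mathrm{poly}(\log n)$, hence a regret rate of $n^{-p/(2p+1)}$ up to logs --- strictly weaker than the claimed $n^{-2p/(2p+1)}$. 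This is precisely the ``naive bound'' flagged in \cref{asec:proof_main}, and closing it is the paper's main technical contribution (\cref{thm:suboptimality}, \cref{cor:suboptimality}, \cref{cor:hellinger_large_dev}): one Taylor-expands the likelihood in $\eta$, notes that the first-order coefficient has mean zero at $(G_0,\eta_0)$ (the score identity $\E[\partial\psi_i/\partial\eta\,|_{\eta_0,G_0}]=0$), so its mean at $(\hat G_n,\eta_0)$ is bounded by $\barh(\fs{\hat G_n},\fs{G_0})$ times log factors (\cref{lemma:Dmi_final,lemma:Dsi_final}), while the empirical-process fluctuations are handled by chaining over the H\"older class and nets of priors (\cref{lemma:u2m,lemma:u2s,lemma:u3}). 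This produces the self-referential bound $\sub_n(\hat G_n) \leH \Delta_n \barh + \Delta_n^2$ (up to logs), and the Hellinger large-deviation argument closes the loop: $\barh^2 \leH \sub_n + \tilde O(1/n) \leH \Delta_n\barh + \Delta_n^2$, which forces $\barh \leH \Delta_n$ and hence $\barh^2 \leH \Delta_n^2$. Without the score-centering step and this self-bounding structure, your argument cannot reach the stated rate. A secondary gap: since $\hat G_n$ is random and the Hellinger-to-posterior-mean conversion integrates against the true data law, the comparison must be made on the clean data and made uniform over a net of priors with a concentration inequality (the paper's $\zeta_1$--$\zeta_4$ decomposition and \cref{lemma:norm_concentration}); this is another reason the paper evaluates the prior comparison at $\eta_0$ rather than at $\hateta$ as you do.
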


We now outline how to prove \cref{thm:regret_on_An} and provide a proof for 
\cref{cor:maintext} given \cref{thm:regret_on_An}.

\subsection{Step 1: convert regret on $\theta_i$ to regret on $\tau_i$} To prove 
\cref{thm:regret_on_An}, note that the empirical Bayes posterior means are of the
form\[
\hat \theta_{i, \hat G_n, \hateta} = \hatm(\sigma_i) + \hats(\sigma_i) \cdot \hat\tau_{i,
\hat
G_n,
\hateta},
\]
where $\hat\tau_{i, \hat
G_n,
\hateta}$ denotes the posterior mean of $\tau_i \mid \hat Z_i, \hat\nu_i$, where $\tau_i
\sim \hat G_n$ and $\hat Z_i \mid \tau_i, \hat\nu_i \sim \Norm(\tau_i, \hat\nu_i^2)$. On
the event $A_n$, $\hatm,\hats$ are close to $m_0, s_0$, and thus controlling $\reg$
amounts to controlling MSE on $\tau$'s: $
\E\bk{
  (\tau_i^* - \hat\tau_{i, \hat
G_n,
\hateta})^2
},
$
where $\tau_i^* = \hat\tau_{i, G_0, \eta_0}$ is the oracle posterior mean for $\tau_i$.

To do so, we adapt the argument in \citet{soloff2021multivariate} and \citet
{jiang2020general}. To introduce this argument, recall that $\psi_i$ denotes
the log-likelihood in \cref{as:npmle} and define \[
\sub_n(G) = \pr{\frac{1}{n} \sum_{i=1}^n \psi_i(Z_i, \eta_0, G) -  \frac{1}{n} \sum_
{i=1}^n \psi_i(Z_i, \eta_0, G_0)}_+
\numberthis \label{eq:likelihood_suboptimality}
\] as the log-likelihood suboptimality of $G$ against the true distribution $G_0$,
 evaluated on $Z_i, \nu_i$, which depend on the true conditional moments $\eta_0$. For
 generic $G$ and $\nu > 0$, define \[ f_{G, \nu}(z) = \int_{-\infty}^{\infty} \varphi\pr
 {
\frac{z-\tau}{\nu}} \frac{1}{\nu} \,G(d\tau).
\numberthis \label{eq:density_convolved}
\]
as the  density of some mixed Gaussian variable $Z \sim \Norm(0,\nu^2) \star G$. 
Let the average squared Hellinger distance be \[
\barh^2(\fs{G_1}, \fs{G_2}) = \frac{1}{n} \sum_{i=1}^n h^2\pr{
    f_{G_1, \nu_i}, f_{G_2, \nu_i}
} \quad h^2(f,g) \equiv \frac{1}{2}\int_{-\infty}^{\infty} (\sqrt{f(x)} - \sqrt{g(x)})^2
\,dx.\numberthis \label{eq:avg_hellinger}
\]

Loosely speaking, \citet{soloff2021multivariate}, following \citet{jiang2009general}, show
that
\begin{enumerate}[wide]
  \item With high probability, all approximate maximizers of the likelihood have low
  average Hellinger distance: \[
\P\bk{
  \text{There exists $G$ where } \sub_n(G) < C_1 \delta_n^2 \text{ but } \barh^2 (f_{G,
  \cdot}, f_ {G_0,
  \cdot}) > C_2 \delta_n^2
} < \frac{1}{n} \numberthis \label{eq:hellinger_result_loose}
  \] for some rate function $\delta_n^2 \lesssim \frac{1}{n} (\log n)^C$ (Theorem 7 in
  \citet{soloff2021multivariate}).
  \item For a given $G$, $
\E[(\tau_i^* - \hat \tau_{i, G, \eta_0})^2] \lesssim (\log n)^C \pr{\barh^2(f_{G,
  \cdot}, f_ {G_0,
  \cdot})}
  $  (Theorem 9 in \citet{soloff2021multivariate}).
\end{enumerate}
Therefore, an approximate maximizer $\hat G_n^*$ of the likelihood $\sub_n(G)$ should have
low average Hellinger distance to $G_0$ and thus should output similar posterior
means.

\subsection{Step 2: show $\hat G_n$ is an approximate maximizer of the true likelihood}

To use this argument for \cref{thm:regret_on_An}, a key challenge is that $\hat G_n$ only
maximizes the \emph{approximate} likelihood $\frac{1}{n}\sum_i \psi_i (Z_i, \hateta, G)$,
which only has $\hateta \approx \eta_0$ on $A_n$, but $\hateta \neq \eta_0$. A key result
is an oracle inequality for the likelihood (\cref{cor:suboptimality}), where, loosely
speaking, \[
\P\bk{A_n, \sub_n(\hat G_n) \gtrsim_{\H} \varepsilon_n} = O(1/n) 
\numberthis 
\label{eq:cor_suboptimality_loose}
\] for some $\varepsilon_n \lesssim (\log n)^C \pr{n^{-2p/(2p+1)} + n^{-p/(2p+1)} \barh(f_
 {\hat G_n,
\cdot}, f_{G_0,
\cdot})}$. 
This result states that the likelihood suboptimality of the feasible \npmle{}
$\hat G_n$ cannot be much higher than its average Hellinger distance to $G_0$. 

The bound \eqref{eq:cor_suboptimality_loose} is a refinement of a simple linearization
argument applied to $\eta \mapsto \frac{1}{n} \sum_{i=1}^n \psi_i(Z_i, \eta, \hat G_n)$.
Heuristically speaking, a first-order Taylor expansion yields \[
\frac{1}{n} \sum_{i=1}^n \psi_i
(Z_i, \hateta, \hat G_n) - \frac{1}{n} \sum_{i=1}^n \diff{\psi_i}{\eta}\evalbar_
{\eta=\eta_0} (\hat\eta_i -
\eta_{0i})  \approx \frac{1}{n} \sum_{i=1}^n \psi_i(Z_i, \eta_0, \hat G_n).
\]
Here, $\frac{1}{n} \sum_{i=1}^n \psi_i
(Z_i, \hateta, \hat G_n)$ is large by definition of $\hat G_n$. Thus, the right-hand side
would be large following a bound on the first-order term\[
\absauto{\frac{1}{n} \sum_{i=1}^n \diff{\psi_i}{\eta}\evalbar_
{\eta=\eta_0} (\hat\eta_i -
\eta_{0i})}.
\]
A naive bound on this term, using only the fact that $|\hat\eta_i -
\eta_{0i}| \le \norm{\hateta - \eta_0}_\infty$, would lead to a suboptimal regret rate of
 $O(n^{-p/(2p+1)} (\log n)^C)$. Our more refined analysis additionally leverages the fact
 that
\[
\E\bk{\diff{\psi_i(Z, \eta, G_0)}{\eta}\evalbar_
{\eta=\eta_0}} = -\E\bigg[
  \frac{Z-\tau}{\nu} \underbrace{\diff{\br{(Z(\eta)-\tau)/\nu(\eta)}}{\eta}}_{
  [-1/\sigma, -\tau/\sigma]'}
\bigg] = 0,
\]
and thus the derivative $\diff{\psi_i}{\eta}$ is sufficiently small if $\hat G_n \approx
G_0$ in Hellinger distance.

\subsection{Step 3: adapt Hellinger distance bound}

\cref{cor:suboptimality} makes sure that $\hat G_n$ probably achieves high likelihood, but
the bound depends on
$\barh^2$. Since \eqref{eq:hellinger_result_loose} uses a likelihood bound for $G$ to
control $\barh^2$, we need to additionally finesse \eqref{eq:hellinger_result_loose} to
accommodate the fact that the likelihood bound depends on $\barh^2$.

Second, we adapt \eqref{eq:hellinger_result_loose} to show that, loosely speaking, with high probability $\hat G_n$ has low average
Hellinger distance to $G_0$ 
(\cref{cor:hellinger_large_dev}): \[
\P\bk{
  A_n, \barh^2(f_{\hat G_n, \cdot}, f_{G_0, \cdot}) \gtrsim_\H n^{-p/(2p+1)} (\log n)^{C}
} = O\pr{\frac{(\log n)^C}{n} }.
\] 
Thus, this allows us to show that $\E[(\tau_i^* - \hat \tau_{i, G, \eta_0})^2 \one
(A_n)]$ is small, after additional empirical process arguments in
\cref{sec:appendix_setup}.

This section concludes with a proof for \cref{cor:maintext} given these results.

\ifrepeatthm \cormaintext*\fi

\begin{proof}[Proof of \cref{cor:maintext}]

Let $\Delta_n = C_
{1,\Hyperparams} n^{-\frac{p}{2p+1}} (\log n)^{\beta_0}$, where $C_{1,\H}$ is the
constant in \cref{as:holder}(2), and $M_n = C (\log n)^ {1/\alpha}$
for some $C$ chosen by our application of \cref{thm:regret_on_An}.
Decompose \begin{align*}
&\E[\reg(\hat G_n, \hateta)] \\
&=  \E[\reg(\hat G_n, \hateta) \one(A_n)] + \E[\reg(\hat
G_n, \hateta) \one(\A_n^\comp \cup \br{\bar Z_n > M_n})]  \\
&\le \E[\reg(\hat G_n, \hateta) \one(A_n)] + \E[\reg(\hat
G_n, \hateta) \one(\A_n^\comp)] \\
&\quad+ \E[\reg(\hat G_n, \hateta) \one(\bar Z_n > M_n)]
\\
&\leH n^{-\frac{2p}{2p+1}} (\log n)^{\frac{2+\alpha}{\alpha} + 3 + 2\beta_0} + \frac{2}{n} (\log n)^
{2/\alpha} \tag{\cref{thm:regret_on_An,lemma:reg_on_anc}} \\
&\leH n^{-\frac{2p}{2p+1}} (\log n)^{\frac{2+\alpha}{\alpha} + 3 + 2\beta_0}.
\end{align*}  
The application of \cref{lemma:reg_on_anc} uses the implication of \cref{as:holder}(2)
 that \[\P (\A_n(C_{1,\Hyperparams})^\comp) = \P (\Dinfty > \Delta_n) \le \frac{1}
 {n^2}.\qedhere\]
\end{proof}

\section{Proofs of other results stated in the main text}

\ifrepeatthm
  \thmminimaxlower*
\fi

\begin{proof}[Proof of \cref{thm:minimaxlower}]
We consider a specific choice of $G_0, \sigma_{1:n},$ and $s_0$. Namely, suppose $G_0
\sim \Norm(0,1)$, $\sigma_{1:n}$ are equally spaced in $[\sigl, \sigu]$, and $s_0(\sigma)
= (s_\ell + s_u) / 2 \equiv s_0$ is constant.  With these choices, the oracle posterior
means $\theta_i^*$ are equal to \[
\theta_i^* = \frac{s_0^2}{s_0^2 + \sigma_i^2} Y_i + \frac{\sigma_i^2}{s_0^2 + \sigma_i^2}
m_0(\sigma_i).
\]
For a given vector of estimates $\tilde \theta_{1:n}$, we can form $
\hatm(\sigma_i) = \frac{s_0^2 + \sigma_i^2}{\sigma_i^2} \pr{\tilde \theta_i -
\frac{s_0^2}{s_0^2 + \sigma_i^2} Y_i}.$ Note that, for this choice, \[
\E\bk{\frac{1}{n} \sum_{i=1}^n (\tilde \theta_i - \theta_i^*)^2} 
\gtrsim_{\sigl, s_u} \E
\bk{\frac{1}{n} \sum_{i=1}^n (\hat m(\sigma_i) - m_0(\sigma_i))^2}.
\]
Therefore, the minimax rate must be lower bounded by the minimax rate of estimating $m_0$
at $\sigma_{1:n}$, where the right-hand side takes the infimum over all estimators of
$m_0$ with data $(Y_i, \sigma_i)$: \[
\inf_{\hat\theta_{1:n}} \sup_{\sigma_{1:n}, P_0} \E\bk{
    \frac{1}{n} \sum_{i=1}^n (\hat\theta_i - \theta_i)^2 - (\theta_i^* - \theta_i)^2
} \gtrsim_{\sigl, s_u} \inf_{\hatm} \sup_{m_0}  \E\bk{\frac{1}{n} \sum_{i=1}^n (\hat
m(\sigma_i) - m_0 (\sigma_i))^2}.
\]
Using classical minimax results, \cref{lemma:minimax} shows that the right-hand side is
lower bounded by $n^ {-2p/ (2p+1)}$, which completes the proof.
\end{proof}

\ifrepeatthm\worstcaserisk*\fi

\begin{proof}[Proof of \cref{thm:worstcaserisk}]
Note that $\hat\theta_{i, G_0^*, \eta_0} =
s_0(\sigma_i)
\hat\tau_{i, G_0^*, \eta_0} + m_0(\sigma_i),$ where $\tau^*_{i,G,\eta}$ is the posterior
mean for $\tau_i$ under $(G,\eta)$, and $
\theta_i = s_0(\sigma_i) \tau_i + m_0(\sigma_i).
$ Thus, \[
\frac{1}{n}\sum_{i=1}^n (\hat\theta_i - \theta_i)^2 = \frac{1}{n}\sum_{i=1}^n s_0^2
(\sigma_i) (\hat\tau_{i, G_0^*,
\eta_0} - \tau_i)^2.
\]
\Cref{thm:max_risk} shows that $
\E_{G_i} \bk{
  (\hat\tau_{i, G_0^*, \eta_0} - \tau_i)^2
} \le C_{\lambda, \epsilon}$ for all $G_0^* \in \mathcal G_0$.  
Taking the expected value with respect to $P_0 \in \mathcal P(m_0, s_0)$ and apply the
bound $C_{\lambda, \epsilon}$, we have that \[
\E\bk{\frac{1}{n}\sum_{i=1}^n (\hat\theta_i - \theta_i)^2} \le 
C_{\lambda, \epsilon}
\frac{1}{n}\sum_{i=1}^n s_0^2
(\sigma_i).
\]

By \cref{lemma:minimax_close_gauss}, we have that  \[
\frac{1}{n} \sum_{i=1}^n \frac{\sigma_i^2}{\sigma_i^2 + s_0^2(\sigma_i)} s_0^2(\sigma_i)
=  \inf_{\hat\theta_{1:n}} \sup_{P_0 \in \mathcal P(m_0, s_0)} \E_{P_0} \bk{\frac{1}{n}
\sum_{i=1}^n (\hat\theta_i - \theta_i)^2}.
\]
Note that, for some $c_{\bar\rho} > 0$, \[
\frac{1}{n} \sum_{i=1}^n \frac{\sigma_i^2}{\sigma_i^2 + s_0^2(\sigma_i)} s_0^2(\sigma_i)
 = \frac{1}{n} \sum_{i=1}^n \frac{1}{1 + s_0^2(\sigma_i)/\sigma_i^2} s_0^2
 (\sigma_i) \ge c_{\bar\rho} \frac{1}{n} \sum_{i=1}^n s_0^2(\sigma_i).
\]
Hence, \begin{align*}
\E\bk{\frac{1}{n}\sum_{i=1}^n (\hat\theta_i - \theta_i)^2} &\le \frac{C_{\lambda, \epsilon}}{c_
 {\bar\rho}} \frac{1}{n} \sum_{i=1}^n \frac{\sigma_i^2}{\sigma_i^2 + s_0^2
 (\sigma_i)} s_0^2(\sigma_i)  \\ &= C_{\bar\rho, \lambda, \epsilon} \inf_{\hat\theta_{1:n}} \sup_
 {P_0 \in
\mathcal P (m_0, s_0)} \E_{P_0} \bk{\frac{1}{n}
\sum_{i=1}^n (\hat\theta_i - \theta_i)^2}.
\end{align*}
This concludes the proof.
\end{proof}

\ifrepeatthm
  \mserelevance*
\fi 

\begin{proof}[Proof of \cref{thm:mserelevance}]

\begin{enumerate}[wide]
  \item 
By the law of iterated expectations, since $\hat\theta_i, \theta_i^*$ are both measurable
with respect to the data,\footnote{For a randomized decision rule $\hat\theta_i$ that is
additionally measurable with respect to some $U$ independent of $ (\theta_i, Y_i,
\sigma_i)_{i=1}^n$, this step continues to hold since $\E[\theta_i \mid U, Y_i, \sigma_i]
= \theta_i^*$.}
  \begin{align*}
\E[\utilmaxreg] &= \E \bk{\frac{1}{n} \sum_{i=1}^n \br{\one(\theta_i^* \ge 0) - \one
(\hat\theta_i \ge 0)} \theta_i^*}
  \end{align*}
  Note that, for $\one(\theta_i^* \ge 0) - \one
(\hat\theta_i \ge
0)$ to be nonzero, $0$ is between $\hat\theta_i$ and $\theta_i^*$. Hence,
$|\theta_i^*| \le |\theta_i^* - \theta_i|$ and thus by Jensen's inequality \[
\E[\utilmaxreg] \le \E\bk{\frac{1}{n} \sum_{i=1}^n |\theta_i^* - \theta_i|  } \le \pr{\E
\bk{
\frac{1}{n} \sum_{i=1}^n (\theta_i^* - \theta_i)^2  }}^{1/2}.
\]

\item Let $\mathcal J^*$ collect the indices of the top-$m$ entries of $\theta_i^*$ and
let $\hat {\mathcal J}$ collect the indices of the top-$m$ entries of $\hat\theta_i$.
Then,
by law of iterated expectations, \[
\frac{m}{n} \E [\topmreg] = \frac{1}{n}\sum_{i=1}^n \E \bk{\br{\one(i \in \mathcal J^*) -
\one(i \in
\hat{\mathcal J})} \theta_i^*}.
\]
Observe that this can be controlled by applying \cref{prop:rearrangement}, where $w_i = 0$
for all $i \le n-m$ and $w_i = 1$ for all $i > n-m$. In this case, $\norm{w} = \sqrt{m}$.
Hence, \[
\frac{m}{n}\E[\topmreg] \le 2\sqrt{\frac{m}{n}} \E\bk{\pr{\frac{1}{n} \sum_
{i=1}^n (\hat\theta_i - \theta^*_i)^2}^{1/2}} \le 2\sqrt{\frac{m}{n}} \pr{\E\bk{\frac{1}{n} \sum_
{i=1}^n (\hat\theta_i - \theta^*_i)^2}}^{1/2}.
\]
Divide both sides by $m/n$ to obtain the result.\qedhere
\end{enumerate}
\end{proof}

\begin{prop}
\label{prop:rearrangement}
Suppose $\sigma(\cdot)$ is a permutation such that $\hat\theta_
{\sigma(1)} \le \cdots \le \hat\theta_{\sigma(n)}
$. %
Then, for any $w_1,\ldots, w_n \in \R$,
\[
\frac{1}{n} \sum_{i=1}^n w_i \theta^*_{(i)} - \frac{1}{n} \sum_{i=1}^n w_i \theta^*_
{\sigma(i)} \le \frac{2\norm{w}_2}{\sqrt{n}} \sqrt{\frac{1}{n} \sum_
{i=1}^n (\hat\theta_i - \theta^*_i)^2}.
\]
where $\theta_{(1)}^* \le \theta_{(2)}^* \le \ldots \le \theta_{(n)}^*$ are the order
statistics for $\br{\theta_1^*,\ldots, \theta_n^*}$ and $ \norm{w}_2^2
= w_1^2 + \cdots + w_n^2$.
\end{prop}

\begin{proof}
We compute
\begin{align*}
\frac{1}{n} \sum_{i=1}^n w_i \theta^*_{(i)} - \frac{1}{n} \sum_{i=1}^n w_i \theta^*_
{\sigma(i)} &\le \absauto{\frac{1}{n}\sum_{i=1}^n w_i \theta_{(i)}^* - \frac{1}{n} \sum_
{i=1}^n w_i
\hat\theta_{\sigma
(i)}} + \absauto{\frac{1}{n} \sum_{i=1}^n w_i(\hat \theta_{\sigma(i)} - \theta^*_{\sigma
(i)}) }\\
& \le \frac{\norm{w}_2}{\sqrt{n}} \sqrt{\frac{1}{n} \sum_{i=1}^n (\theta_{(i)}^* -
\hat\theta_{\sigma(i)})^2} + \frac{\norm{w}_2}{\sqrt{n}} \sqrt{\frac{1}{n} \sum_
{i=1}^n (\hat\theta_i - \theta^*_i)^2} \tag{Cauchy--Schwarz}\\
& \le 2\frac{\norm{w}_2}{\sqrt{n}} \sqrt{\frac{1}{n} \sum_
{i=1}^n (\hat\theta_i - \theta^*_i)^2}.
\end{align*}
The last step follows from the observation that the sorted difference is dominated by the
unsorted difference, $
\sum_{i=1}^n (\theta_{(i)}^* -
\hat\theta_{\sigma(i)})^2 \le \sum_
{i=1}^n (\hat\theta_i - \theta^*_i)^2
$,
which is true by the rearrangement inequality.\footnote{For all real numbers $x_1 \le
\cdots \le x_n, y_1\le
\cdots \le y_n$, we have $\sum_i x_i y_{\pi(i)} \le \sum_i x_i y_i$ for any
permutation $\pi$.}
\end{proof}

\begin{rmksq}[Mover interpretation of \cref{thm:mserelevance}]
\label{rmk:nonuniform}
Recall that we can think of \topm{} as the decision problem in
\citet{bergman2019creating} (\cref{rmk:mover}). The utility function represents the
 expected mobility of a mover, assuming that the mover moves randomly into one of the
 high mobility Census tracts. Our proof of \cref{thm:mserelevance} allows for a slightly
 more general decision problem. Suppose the decision now is to provide a full ranking of
 Census tracts for potential movers and maximize the expected mobility for a mover.
 Suppose that the probability that a mover moves to a tract depends decreasingly and
 solely on the tract's rank. To be more concrete, suppose the mover has probability
 $\pi_1$ of moving to the highest-ranked tract, $\pi_2 \le \pi_1$ to the second-highest,
 and so
 forth. Then, with the same argument, the corresponding regret is dominated by $2\sqrt
 {n \sum_{i=1}^n \pi_i^2} \cdot 
\pr{\E\bk{\frac{1}{n} \sum_{i=1}^n (\hat\theta_i - \theta_i^*)^2}}^{1/2}$, which
generalizes \eqref{eq:topm_regret_bound}.
\end{rmksq}

\ifrepeatthm\unbiased*\fi

\begin{proof}[Proof of \cref{prop:unbiased}] 
These are straightforward calculations of the
expectation. 
Since every expectation and variance is conditional on $\theta_{1:n},
Y^{(1)}_{1:n}, \sigma_{1:n, (1)}, \sigma_{1:n, (2)}$, we write $\E[\cdot \mid \mathcal F]$
and $\var(\cdot \mid \mathcal F)$ without ambiguity.

\begin{enumerate}[wide]
  \item (\cref{ex:mse}) The unbiased estimation follows directly from the calculation
  \begin{align*}
  \E\bk{
(Y_i^{(2)} - \delta_i(Y_{1:n}^{(1)}))^2
  \mid \mathcal F} = (\theta_i^{(2)} -  \delta_i(Y_{1:n}^{(1)}))^2 + \sigma_{i, (2)}^2
  \end{align*}
  The conditional variance statement holds by definition.
  \item (\cref{ex:utilmax}) The unbiased estimation follows directly from the calculation
  \[
 \E\bk{
 \delta_i(Y_{1:n}^{(1)})Y_i^{(2)} 
  \mid \mathcal F} = \delta_i(Y_{1:n}^{(1)}) \theta_i.
  \]
  The conditional variance statement follows from \[
\var\bk{
 \delta_i(Y_{1:n}^{(1)})
Y_i^{(2)}
  \mid \mathcal F} = \delta_i(Y_{1:n}^{(1)}) \sigma_{1:n, (2)}^2.
  \]

  \item (\cref{ex:topm}) The loss function for \cref{ex:topm} is the same as that for 
  \cref{ex:utilmax} up to a factor of $n/m$. Since we condition on $Y_{1:n}^{(1)}$, the
  argument is thus analogous.
  \qedhere
\end{enumerate}
\end{proof}
\end{appendices}

\printbibliography

\newpage 

\begin{refsection}

\begin{appendices}

\makeatletter
\def\@seccntformat#1{\@ifundefined{#1@cntformat}%
   {\csname the#1\endcsname\space}%
   {\csname #1@cntformat\endcsname}}%
\newcommand\section@cntformat{\thesection.\space} %
\makeatother
\renewcommand{\thesection}{OA\arabic{section}}
\counterwithin{equation}{section}
\counterwithin{figure}{section}
\counterwithin{table}{section}

\begin{center}
\textbf{\large Online Appendix to \\ ``\newtitle''}

Jiafeng Chen

\today
\end{center}

\DoToC

\newpage

\part{Proof of \cref{cor:maintext}}

\section{Review of notation and proofs of \cref{lemma:reg_on_anc,thm:regret_on_An}}
\label{sec:appendix_setup}

We recall some notation in the main text, and introduce additional notation. Recall that
we assume $n\ge 7$. We observe $(Y_i, \sigma_i)_{i=1}^n,$ where $(Y_i, \sigma_i) \in
\R\times \R_{> 0}$ such that
\[
Y_i \mid (\theta_{i}, \sigma_{i}) \sim \Norm(\theta_i, \sigma_i^2)
\]
and $(Y_i, \theta_i, \sigma_i)$ are mutually independent. Assume that the joint
distribution for $(\theta_i, \sigma_i)$ takes the
location-scale form \eqref{eq:location_scale} \[
\theta_i \mid (\sigma_1,\ldots, \sigma_n) \sim G_{0}\pr{\frac{\theta_i - m_0(\sigma_i)}
{s_0
(\sigma_i)}}.
\]

Define shorthands $m_{0i} = m_0(\sigma_i)$ and $s_{0i} = s_0(\sigma_i)$. Define the
transformed parameter $\tau_i = \frac{\theta_i - m_{0i}}{s_{0i}}$, the transformed data
$Z_i = \frac{Y_i - m_{0i}}{s_{0i}}$, and the transformed variance $\nu_i^2 =
\frac{\sigma_i^2}{s_{0i}^2}$. By assumption, \[ Z_i \mid (\tau_i, \nu_i) \sim
\Norm(\tau_i, \nu_i^2) \quad \tau_i \mid \nu_1,\ldots, \nu_n
\iid G_0.
\]
Let $\hateta = (\hatm, \hats)$ denote estimates of $m_0$ and $s_0$. Likewise, let
$\hateta_i = (\hatm_i, \hats_i) = (\hatm(\sigma_i), \hats(\sigma_i))$. For a given
$\hateta$, define \[
\hat Z_i = \hat Z_i(\hateta) = \hat Z_i(Z_i, \hateta) = \frac{Y_i -
\hatm_i}{\hats_i} = \frac{s_{0i} Z_i + m_{0i}
- \hatm_i}{\hats_i} \quad \hat\nu_i^2 = \hat\nu_i^2(\hateta) = \frac{\sigma_i^2}
{\hats_i^2}.
\]
We will condition on $\sigma_{1:n}$ throughout, and hence we treat them as fixed. Let $\nu_\ell, \nu_u$ be the corresponding bounds on $\nu_i = \frac{\sigma_i}{s_0(\sigma_i)}$, implied by \cref{as:variance_bounds}.

For generic values $\eta = (m,s)$ and
distribution $G$,  define the log-likelihood function \[
\psi_i(z,\eta, G) = \log \int_{-\infty}^{\infty} \varphi\pr{
\frac{\hat Z_i(\eta) -
\tau}{\hat\nu_i(\eta)}} \,G(d\tau) = \log \pr{\hat\nu_i(\eta) \cdot f_{G, \hat\nu_i
(\eta)}(\hat Z_i(\eta))},
\]
where we recall $f_{G,\nu}$ from \eqref{eq:density_convolved}. As
a shorthand, we write $f_{i, G} = f_{G, \nu_i}(Z_i)$ and $f'_{i,G} = f_{G, \nu_i}' (Z_i)$.

Fix some generic $G$ and $\eta = (m,s)$. The empirical Bayes posterior mean ignores
the fact that $G, \eta$ are potentially estimated. The posterior mean for $
\theta_i = s_i \tau + m_i
$
is \[
\hat \theta_{i, G, \eta} \equiv m_i + s_i \PE_{G, \hat\nu_i(\eta)}[\tau \mid
\hat Z_i(\eta)].
\]
Here, we define $\PE_{G, \nu}\bk{h(\tau, Z) \mid z}$ as the function of
$z$ that
equals the posterior mean for $h(\tau, Z)$ under the data-generating model $\tau \sim G$
and $Z \mid \tau \sim \Norm(\tau, \nu)$. Explicitly, \[
\PE_{G, \nu}\bk{h(\tau, Z) \mid z} = \frac{1}{f_{G, \nu} (z)} \int h(\tau, z)
\varphi\pr{
    \frac{z-\tau}{\nu}
}\frac{1}{\nu}\, G(d\tau).
\]
Explicitly, by Tweedie's formula, \[
\PE_{G, \hat \nu_i(\eta)}[\tau_i \mid \hat Z_i(\eta)] = \hat Z_i(\eta) +
\hat\nu_i^2(\eta) \frac{ f'_
{G, \hat\nu_i(\eta)}(\hat Z_i(\eta))}{f_
{G, \hat\nu_i(\eta)}(\hat Z_i(\eta))}.
\]
Hence, since $\hat Z_i(\eta) = \frac{Y_i - m_i}{s_i}$, \[
\hat \theta_{i, G, \eta} = Y_i + s_i \hat\nu_i^2(\eta) \frac{ f'_
{G, \hat\nu_i(\eta)}(\hat Z_i(\eta))}{f_
{G, \hat\nu_i(\eta)}(\hat Z_i(\eta))}.
\] 

Define $\theta_i^* = \hat \theta_{i, G_0, \eta_0}$ as the oracle Bayesian's posterior
 mean. Fix some positive number $\rho > 0$, define a regularized posterior mean as \[
\hat\theta_{i, G, \eta, \rho} = Y_i + s_i \hat\nu_i^2(\eta) \frac{ f'_
{G, \hat\nu_i(\eta)}(\hat Z_i(\eta))}{f_
{G, \hat\nu_i(\eta)}(\hat Z_i(\eta)) \vee \frac{\rho}{\hat\nu_i(\eta)}} \numberthis
\label{eq:regularized_posterior_means}
\]
and define $\theta^*_{i, \rho} = \hat \theta_{i, G_0, \eta_0, \rho}$ correspondingly.
Similarly, we define \[
\hat\tau_{i, G, \eta, \rho} = \hat Z_i(\eta) + \hat\nu_i^2(\eta) \frac{ f'_
{G, \hat\nu_i(\eta)}(\hat Z_i(\eta))}{f_
{G, \hat\nu_i(\eta)}(\hat Z_i(\eta)) \vee \frac{\rho}{\hat\nu_i(\eta)}} \quad \tau^*_
{i, \rho} = \hat\tau_{i, G_0, \eta_0, \rho} \numberthis \label{eq:regularized_tau}
\]
We also define \begin{align}
\invphi(\rho) = \sqrt{\log \frac{1}{2\pi \rho^2}}
\quad \rho \in (0, (2\pi)^{-1/2}) \label{eq:invphidef}
\end{align}
so that $\varphi(\invphi(\rho))=\rho$. Observe that $\invphi(\rho) \lesssim
\sqrt{\log (1/\rho)}$.

Recall the event $A_n$ and the quantity $\bar Z_n$ in
\eqref{eq:barzn_def}. Many of the following statements are true for $A_n$ defined with
generic $\Delta_n, M_n$. However, to obtain our rate expression in the end, recall that we
set $\Delta_n, M_n$ to be of the following form:
\[\Delta_n = C_
{\Hyperparams} n^
{-
\frac{p}{2p+1}} (\log
n)^{\beta} \text{ and } M_n =
(C_\Hyperparams + 1) (C_{2,\Hyperparams}^{-1}\log n)^{1/\alpha}.\numberthis \label{eq:DMrate}\]
Here, $C_\Hyperparams$ is to be chosen, and $C_{2, \Hyperparams}$ is some constant
determined by \cref{thm:suboptimality}. Correspondingly, we also have a choice 
\[
\rho_n =
\frac{1}{n^3} e^{-C_{\Hyperparams,\rho} M_n^2 \Delta_n}\minwith \frac{1}{e\sqrt{2\pi}}
\numberthis
\label{eq:rhodef}
,\]
where the constant $C_{\Hyperparams,\rho}$ is chosen to satisfy the following result, proved in
\cref{sec:likelihood}.

\begin{restatable}{lemma}{lemmalb}
\label{thm:lb}
Suppose $|\bar Z_n| = \max_{i\in[n]} |Z_i| \vee 1 \le M_n$, $\norm{\hats - s_0}_\infty \le
\Delta_n$, and $\norm{\hatm-m_0}_\infty \le \Delta_n$.
Let $\hat G_n$ satisfy \cref{as:npmle} and $\hateta$ satisfy \cref{as:holder}. Then, under
\cref{as:Delta_M_rate}, \footnote{This assumption is satisfied with our choices in
\eqref{eq:DMrate}.}
\begin{enumerate}
    \item $|\hat Z_i \vee 1| \leH M_n$
    \item There exists $C_\Hyperparams$ such that with $\rho_n = \frac{1}{n^3}\exp
    \pr{-C_
    {\Hyperparams} M_n^2 \Delta_n} \minwith \frac{1}{e\sqrt{2\pi}}$, we have that \[
f_{\hat G_n, \nu_i}(Z_i) \ge \frac{\rho_n}{\nu_i}.
    \]
    \item The choice of $\rho_n$ satisfies $\log (1/\rho_n) \rateeq_{\Hyperparams} \log
    n$,
    $\invphi(\rho_n) \rateeq_{\Hyperparams} \sqrt{\log n}$, and $\rho_n \leH n^{-3}$.
\end{enumerate}
\end{restatable}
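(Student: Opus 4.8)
The plan is to establish the three claims in sequence, working throughout on the event in the hypotheses (so that $\max_i(|Z_i|\vee 1)\le M_n$ and $\Dinfty\le\Delta_n$) and freely using the variance bounds of \cref{as:variance_bounds,as:holder}; in particular $\hats_i$ is bounded above and below by positive constants depending only on $\mathcal H$. For claim (1), I would just expand $\hat Z_i = (s_{0i}Z_i + m_{0i}-\hatm_i)/\hats_i$. The numerator is at most $s_{0i}|Z_i| + |m_{0i}-\hatm_i|\leH M_n$, using that $s_0$ is bounded, $|Z_i|\le M_n$, $|m_{0i}-\hatm_i|\le\Delta_n$, and $\Delta_n\lesssim M_n$ from \cref{as:Delta_M_rate}; the denominator is bounded below, so $|\hat Z_i|\vee 1\leH M_n$. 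The same expansions give the two displacement bounds I will need later: writing $\hat Z_i - Z_i = ((s_{0i}-\hats_i)Z_i + (m_{0i}-\hatm_i))/\hats_i$ yields $|\hat Z_i - Z_i|\leH\Delta_n M_n$, and $\hat\nu_i-\nu_i = \sigma_i(s_{0i}-\hats_i)/(\hats_i s_{0i})$ yields $|\hat\nu_i-\nu_i|\leH\Delta_n$.

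For claim (2) I would proceed in two steps. The first is a pointwise lower bound on the fitted likelihood at its own design point, $d_i\equiv\hat\nu_i f_{\hat G_n,\hat\nu_i}(\hat Z_i)\gtrsim n^{-3}$, which is where the particular constant in $\kappa_n$ from \eqref{eq:kappa_def} is calibrated. I would obtain it by a mass-adding perturbation: compare $\hat G_n$ to the competitor $G_\epsilon = (1-\epsilon)\hat G_n + \epsilon\delta_{\hat Z_i}$ with $\epsilon = 1/n$. By \cref{as:npmle}, $\frac1n\sum_j(\psi_j(Z_j,\hateta,G_\epsilon)-\psi_j(Z_j,\hateta,\hat G_n))\le\kappa_n$. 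Each off-diagonal term ($j\ne i$) is bounded below by $\log(1-\epsilon)$, since adding mass scales the existing density by $(1-\epsilon)$; the $j=i$ term equals $\log((1-\epsilon) + \epsilon/(\sqrt{2\pi}\,d_i))$, because the new atom contributes $\varphi(0)/\hat\nu_i = 1/(\sqrt{2\pi}\,\hat\nu_i)$ to $f_{G_\epsilon,\hat\nu_i}(\hat Z_i)$. Solving the resulting inequality for $d_i$ and using $e^{n\kappa_n}=(n/(\sqrt{2\pi}\,e))^2$ from \eqref{eq:kappa_def} gives $d_i\ge\sqrt{2\pi}\,e\,n^{-3}\ge n^{-3}$.

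The second step — the genuinely new and hardest part — transfers this bound from the estimated pair $(\hat Z_i,\hat\nu_i)$ to the true pair $(Z_i,\nu_i)$. Since $\hat G_n$ is supported in $[\min_j\hat Z_j,\max_j\hat Z_j]$ (\cref{as:npmle}), claim (1) gives $|\tau|\leH M_n$ for every $\tau$ in its support. The key is a support-uniform control of the Gaussian-kernel ratio: I would show that $((Z_i-\tau)/\nu_i)^2 - ((\hat Z_i-\tau)/\hat\nu_i)^2\leH M_n^2\Delta_n$ for all such $\tau$, by writing the difference over the common denominator $\nu_i^2\hat\nu_i^2$ and splitting the numerator as $\hat\nu_i^2(a^2-\hat a^2) + \hat a^2(\hat\nu_i^2-\nu_i^2)$ with $a = Z_i-\tau$ and $\hat a=\hat Z_i-\tau$; the displacement bounds from claim (1), together with $|a|,|\hat a|\leH M_n$, make each piece $\leH M_n^2\Delta_n$. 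Exponentiating $\varphi$ and integrating against $\hat G_n$ (via \eqref{eq:density_convolved}) then gives $\nu_i f_{\hat G_n,\nu_i}(Z_i)\ge e^{-C_{\mathcal H}M_n^2\Delta_n}\,d_i\ge n^{-3}e^{-C_{\mathcal H}M_n^2\Delta_n}$, which is at least $\rho_n$ by definition; the $\minwith (e\sqrt{2\pi})^{-1}$ cap binds only for small $n$ and merely keeps $\rho_n$ inside the domain $(0,(2\pi)^{-1/2})$ of $\invphi$ from \eqref{eq:invphidef}. I expect this pointwise, support-uniform kernel comparison to be the main obstacle, precisely because it is the step that the standard \npmle{} analyses with known $\eta_0$ never encounter.

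Finally, for claim (3) I would substitute the rates $M_n\rateeq_{\mathcal H}(\log n)^{1/\alpha}$ and $\Delta_n\rateeq_{\mathcal H}n^{-p/(2p+1)}(\log n)^\beta$ from \eqref{eq:DMrate}. Then $M_n^2\Delta_n\rateeq_{\mathcal H}n^{-p/(2p+1)}(\log n)^{2/\alpha+\beta}\to 0$ under \cref{as:Delta_M_rate}, so $\log(1/\rho_n) = 3\log n + C_{\mathcal H}M_n^2\Delta_n\rateeq_{\mathcal H}\log n$. Since $\invphi(\rho_n) = \sqrt{2\log(1/\rho_n)-\log(2\pi)}$ by \eqref{eq:invphidef}, this gives $\invphi(\rho_n)\rateeq_{\mathcal H}\sqrt{\log n}$, and $\rho_n\le n^{-3}$ is immediate since the exponential factor never exceeds one.
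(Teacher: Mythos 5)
Your proposal is correct and follows essentially the same route as the paper: claim (1) by the same expansion of $\hat Z_i$, claim (2) by the same key transfer step — bounding the difference of Gaussian exponents uniformly over the support of $\hat G_n$ by a multiple of $M_n^2\Delta_n$ (your difference-of-squares split is equivalent to the paper's additive decomposition $\frac{\hat Z_i-\tau}{\hat\nu_i}=\frac{Z_i-\tau}{\nu_i}+\xi(\tau)$), then exponentiating and integrating — and claim (3) by the same arithmetic. The only substantive difference is that where the paper simply cites Theorem 5 of \citet{jiang2020general} for the bound $\hat\nu_i f_{\hat G_n,\hat\nu_i}(\hat Z_i)\ge n^{-3}$ guaranteed by the tolerance $\kappa_n$ in \eqref{eq:kappa_def}, you re-derive it from \cref{as:npmle} via the point-mass perturbation $G_\epsilon=(1-\epsilon)\hat G_n+\epsilon\delta_{\hat Z_i}$ with $\epsilon=1/n$; that argument is valid (it is in essence the proof of the cited theorem) and makes the lemma self-contained at no cost.
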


We now state and prove \cref{lemma:reg_on_anc,thm:regret_on_An}, which
are crucial claims in the proof of \cref{cor:maintext}. The first claim, 
\cref{lemma:reg_on_anc}, controls regret on the event $A_n^\comp$.

\begin{lemma}
\label{lemma:reg_on_anc}
Under \cref{as:npmle,as:holder,as:moments,as:variance_bounds}, for $\beta \ge 0$, suppose
$\Delta_n, M_n$ are of the form \eqref{eq:DMrate} such that $\P(\bar Z_n > M_n) \le n^
{-2}$, we can decompose \begin{align*}
\E[\reg(\hat G_n, \hateta) \one(\Dinfty > \Delta_n)] &\leH \P(\Dinfty >
\Delta_n)^{1/2} (\log n)^{2/\alpha} \\
\E[\reg(\hat G_n, \hateta) \one(\bar Z_n > M_n)] &\leH \frac{1}{n} (\log n)^
{2/\alpha}.
\end{align*}
\end{lemma}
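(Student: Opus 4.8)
The plan is to reduce both displays to a single second-moment estimate on $\reg$ via Cauchy--Schwarz, which cleanly separates the conditioning event from the analytic content. For any event $E$, since $\one(E)\ge 0$,
\[
\E[\reg(\hat G_n,\hateta)\one(E)] \le \E[|\reg(\hat G_n,\hateta)|\one(E)] \le \sqrt{\E[\reg(\hat G_n,\hateta)^2]}\cdot \P(E)^{1/2}.
\]
Taking $E=\{\Dinfty>\Delta_n\}$ gives the first bound immediately, and taking $E=\{\bar Z_n>M_n\}$ together with the hypothesis $\P(\bar Z_n>M_n)\le n^{-2}$ gives the second, provided we establish the event-free estimate
\[
\E[\reg(\hat G_n,\hateta)^2] \leH (\log n)^{4/\alpha}. \numberthis \label{eq:reg_second_moment}
\]
So the entire proof reduces to \eqref{eq:reg_second_moment}.

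To prove \eqref{eq:reg_second_moment}, I would expand $\reg$ into its two nonnegative averages and apply Jensen to the empirical average, using $(\tfrac1n\sum_i a_i-\tfrac1n\sum_i b_i)^2\le \tfrac2n\sum_i a_i^2+\tfrac2n\sum_i b_i^2$, to obtain
\[
\reg(\hat G_n,\hateta)^2 \le \frac2n\sum_{i=1}^n (\hat\theta_{i,\hat G_n,\hateta}-\theta_i)^4 + \frac2n\sum_{i=1}^n (\theta_i^*-\theta_i)^4.
\]
It then suffices to bound $\E[(\hat\theta_{i,\hat G_n,\hateta}-\theta_i)^4]$ and $\E[(\theta_i^*-\theta_i)^4]$ uniformly in $i$. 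The oracle term is the easy half: writing $\theta_i^*-\theta_i = s_0(\sigma_i)(\tau_i^*-\tau_i)$ with $\tau_i^*=\E_{G_0}[\tau_i\mid Z_i]$, conditional Jensen gives $\E[\tau_i^{*4}]\le\E[\tau_i^4]$, and \cref{as:moments} bounds $\E[\tau_i^4]$ by a constant depending only on $\Hyperparams$; with the upper bound on $s_0$ from \cref{as:variance_bounds} this yields $\E[(\theta_i^*-\theta_i)^4]\leH 1$.

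The feasible term carries the real work. The structural fact I would exploit is that, because $\hat G_n$ is supported in $[\min_j\hat Z_j,\max_j\hat Z_j]$ by \cref{as:npmle}, the posterior mean $\hat\tau_{i,\hat G_n,\hateta}$ is a weighted average of those support points and hence satisfies $|\hat\tau_{i,\hat G_n,\hateta}|\le \max_j|\hat Z_j|$. Writing $\hat Z_j=(s_0(\sigma_j)Z_j+m_0(\sigma_j)-\hat m_j)/\hat s_j$ and using the lower bound $\hat s_j\ge s_{0\ell}/2$ from \cref{as:holder}, the a.s.\ boundedness of $m_0$ and of $\hat m\in\mathcal V$, and the bound on $s_0$ from \cref{as:variance_bounds}, one gets $\max_j|\hat Z_j|\leH \bar Z_n$. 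Combined with $\hat\theta_{i,\hat G_n,\hateta}=\hat m_i+\hat s_i\hat\tau_{i,\hat G_n,\hateta}$ and $\theta_i=m_0(\sigma_i)+s_0(\sigma_i)\tau_i$, this produces the pointwise bound $|\hat\theta_{i,\hat G_n,\hateta}-\theta_i|\leH 1+\bar Z_n+|\tau_i|$, so $(\hat\theta_{i,\hat G_n,\hateta}-\theta_i)^4\leH 1+\bar Z_n^4+\tau_i^4$. Taking expectations, $\E[\tau_i^4]\leH 1$ as before, while $\E[\bar Z_n^4]\leH(\log n)^{4/\alpha}$ follows from a standard maximal inequality for the maximum of $n$ subweibull$(\alpha)$ variables $Z_i=\tau_i+\nu_i g_i$ (the subgaussian noise contributes only $(\log n)^{2}$, which is dominated since $\alpha\le 2$). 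This gives $\E[(\hat\theta_{i,\hat G_n,\hateta}-\theta_i)^4]\leH(\log n)^{4/\alpha}$ and hence \eqref{eq:reg_second_moment}.

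The main obstacle is the pointwise control of the feasible posterior mean \emph{without} any restriction on $\hateta$: on $\{\Dinfty>\Delta_n\}$ the nuisance estimates can be far from the truth, so the argument must never invoke $\Dinfty\le\Delta_n$ and must instead rest entirely on the almost-sure boundedness of $\hat\eta$ (\cref{as:holder}) and the support restriction on $\hat G_n$ (\cref{as:npmle}). The maximal inequality $\E[\bar Z_n^4]\leH(\log n)^{4/\alpha}$ is the only genuinely probabilistic input and is routine given the tail control in \cref{as:moments}.
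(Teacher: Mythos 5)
Your proof is correct and delivers the stated bounds at the same rates, but it organizes the argument differently from the paper. The paper never bounds $\E[\reg^2]$: it first notes that, because both events are functions of the data, the cross term in the decomposition $\reg(\hat G_n,\hateta)=\frac1n\sum_i(\hat\theta_{i,\hat G_n,\hateta}-\theta_i^*)^2+\frac2n\sum_i(\theta_i^*-\theta_i)(\hat\theta_{i,\hat G_n,\hateta}-\theta_i^*)$ integrates to zero against the indicator (since $\E[\theta_i^*-\theta_i\mid Y_{1:n}]=0$), so $\E[\reg\,\one(A)]=\E\bigl[\frac1n\sum_i(\hat\theta_{i,\hat G_n,\hateta}-\theta_i^*)^2\one(A)\bigr]$; Cauchy--Schwarz is then applied to this purely data-measurable quantity, which is bounded almost surely by a constant times $\bar Z_n^2$ (\cref{lemma:distance_bound_hard}), so the only probabilistic input is $\E[\bar Z_n^4]\leH(\log n)^{4/\alpha}$ (\cref{lemma:tail_bound_max}). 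You skip this orthogonality step and bound $\E[\reg^2]$ head-on; since $\theta_i$ then remains in the expressions, you need the additional (routine) input $\E[\tau_i^4]\leH 1$ from \cref{as:moments}, via conditional Jensen for the oracle term. Your structural steps are sound and are exactly the ones underlying \cref{lemma:distance_bound_hard}: the posterior mean under $\hat G_n$ lies in the convex hull of its support, which \cref{as:npmle} places inside $[\min_j\hat Z_j,\max_j\hat Z_j]$, and the almost-sure bounds on $\hateta$ in \cref{as:holder} give $\max_j|\hat Z_j|\leH\bar Z_n$ and hence $|\hat\theta_{i,\hat G_n,\hateta}|\leH\bar Z_n$ with no restriction on $\Dinfty$ --- precisely the point you flag as essential. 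The trade-off: your route is marginally more general, since it applies to arbitrary events $E$ rather than only data-measurable ones (though both events here are data-measurable, so nothing is gained in this application), while the paper's route is leaner, requiring a single moment bound on $\bar Z_n$ rather than separate fourth-moment bounds on $\bar Z_n$ and $\tau_i$.
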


\begin{proof}
Observe that, for an event $A$ on the data $Z_{1:n}$, \begin{align*}
\E\bk{\reg(\hat G_n, \hateta)\one(A)} &= \E\bk{\frac{1}{n} \sum_{i=1}^n (\hat\theta_
{i,
\hat
G, \hateta} - \theta_i^*)^2 \one(A)} \\
& \le  \E\bk{\pr{\frac{1}{n} \sum_{i=1}^n (\hat\theta_
{i,
\hat
G, \hateta} - \theta_i^*)^2}^{2}}^{1/2} \P(A)^{1/2}
\end{align*}
by Cauchy--Schwarz. Since $\norm{\hateta -\eta_0}_\infty \leH 1$,  a crude bound
(\cref{lemma:distance_bound_hard}) shows that \[
\pr{\frac{1}{n} \sum_{i=1}^n (\hat\theta_
 {i,
 \hat
 G, \hateta} - \theta_i^*)^2}^{2} \leH \bar Z_n^{4}.
\]
Apply \cref{lemma:tail_bound_max} to find that $\E[\bar Z_n^4] \leH (\log n)^{4/\alpha}$.
This proves both claims.
\end{proof}

The main theorem of this part in the Online Appendix is stated and proved in
the following section. It characterizes regret behavior on the event $A_n$,
for $\Delta_n, M_n$ chosen as in \eqref{eq:DMrate}.

\subsection{Proof of \cref{thm:regret_on_An}}

We first state a result that is key to our remaining arguments, which we verify in the
Supplementary Material (\cref{sec:hellinger}).

\begin{restatable}{cor}{corhellinger}
\label{cor:hellinger_large_dev}

Assume \cref{as:npmle,as:holder,as:moments,as:variance_bounds} hold and suppose $\Delta_n,
M_n$ take the form \eqref{eq:DMrate}. Define the rate sequence \[\delta_n = n^{-p/(2p+1)}
(\log n)^{\frac{2 + \alpha}{2 \alpha}
+
\beta}.\numberthis\label{eq:delta_rate_def}\]
Then, there exists some constant $B_{\Hyperparams}$, depending solely on
$C^*_\Hyperparams$ in \cref{cor:suboptimality}, $\beta$, and $p, \nu_\ell,
\nu_u$ such that
\[
\P\bk{
    A_n, \barh(\fs{\hat G_n}, \fs{G_0}) > B_\Hyperparams \delta_n } \le \pr{\frac{\log
    \log n}{\log 2} + 10}\frac{1}{n}.
\]
\end{restatable}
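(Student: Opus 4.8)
The plan is to combine an upper bound on the likelihood suboptimality of $\hat G_n$ with a uniform converse that reads off small average Hellinger distance from near-maximization of the likelihood, and to resolve the resulting self-referential inequality by a peeling argument over dyadic Hellinger scales. The first ingredient is \cref{cor:suboptimality}: on $A_n$, and up to a failure event of probability $O(1/n)$, the feasible estimator satisfies, after absorbing the logarithmic powers into $\delta_n$,
\[
\sub_n(\hat G_n) \leH \delta_n^2 + \delta_n \, \barh(\fs{\hat G_n}, \fs{G_0}).
\]
This is the crucial replacement for the exact optimality $\sub_n \ge 0$ that would be available if $\hat G_n$ maximized the \emph{true-data} likelihood: because $\hat G_n$ only maximizes the objective on the estimated data $(\hat Z_i, \hat\nu_i)$, its likelihood behavior on the true data $(Z_i,\nu_i)$ carries a slack from $\hateta \approx \eta_0$, and \cref{cor:suboptimality} shows that slack is controlled by the very Hellinger quantity we wish to bound.

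The second ingredient is a uniform converse, as in the loose statement \eqref{eq:hellinger_result_loose}: a localized empirical-process (basic-inequality) argument, in the spirit of \citet{jiang2009general} and \citet{soloff2021multivariate}, shows that a near-maximizer of the likelihood at average Hellinger scale $r$ can arise only when the empirical process of log-likelihood ratios fluctuates by at least $\sim r^2$, an atypical event. Since the suboptimality budget above grows linearly in $\barh(\fs{\hat G_n}, \fs{G_0})$, I cannot apply this at the single scale $\delta_n$. Instead I would peel: slice the admissible range into $\barh(\fs{\hat G_n}, \fs{G_0}) \in (2^j B_\Hyperparams\delta_n,\, 2^{j+1}B_\Hyperparams\delta_n]$ with $r_j = 2^j B_\Hyperparams\delta_n$, and on slice $j$ pit the budget $\leH 2^j\delta_n^2$ against the required fluctuation $\gtrsim r_j^2 = 4^j\delta_n^2$. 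For $B_\Hyperparams$ large the latter dominates the former on every slice, so $\{A_n,\ \barh(\fs{\hat G_n}, \fs{G_0}) > B_\Hyperparams\delta_n\}$ is contained in the union of the per-slice atypical-fluctuation events.

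The third step is bookkeeping on the failure probabilities. Each per-slice bound fails with probability $O(1/n)$, and the number of slices that contribute non-negligibly is $O(\log\log n)$: the deterministic density lower bound $f_{\hat G_n, \nu_i}(Z_i) \ge \rho_n/\nu_i$ of \cref{thm:lb}, together with the crude almost-sure bound $\barh \le 1$ and the truncation $\bar Z_n \le M_n$, confines the relevant scales to a polylogarithmic window above $\delta_n$, beyond which the per-slice probabilities decay faster than any power of $n$. This leaves roughly $\log_2(\mathrm{polylog}\, n) = O(\log\log n)$ effective slices, producing the factor $\tfrac{\log\log n}{\log 2}$. Adding the $O(1/n)$ events from \cref{cor:suboptimality}, from \cref{thm:lb}, and from the moment control of \cref{as:moments} bounding $\bar Z_n$ (via \cref{lemma:tail_bound_max})---all absorbed into the additive constant $10$---gives the stated probability. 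On the complement of all these events the two ingredients combine into the self-referential inequality
\[
\barh^2(\fs{\hat G_n}, \fs{G_0}) \leH \delta_n^2 + \delta_n \, \barh(\fs{\hat G_n}, \fs{G_0}),
\]
whose solution is $\barh(\fs{\hat G_n}, \fs{G_0}) \leH \delta_n$, i.e.\ $\barh \le B_\Hyperparams\delta_n$.

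The main obstacle is the uniform-in-$G$ empirical-process control underlying the converse: I need local entropy bounds for the mixture class $\{f_{G,\nu_i}\}$ and must tame the unbounded log-likelihood ratio $\log(f_G/f_{G_0})$, which is precisely why the regularization $\rho_n$ and the truncation $M_n$ of \cref{thm:lb} are indispensable---they keep the ratio and its fluctuations sub-exponential at each Hellinger scale. The conceptual crux, however, is the circular dependence between the suboptimality budget and $\barh$, which the dyadic peeling is designed to break; arranging the peeling to terminate after only $O(\log\log n)$ scales, rather than the naive $O(\log n)$ implied by $\barh \in [\delta_n, 1]$, is the delicate quantitative point.
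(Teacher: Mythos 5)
Your plan has the same overall architecture as the paper's proof, so the comparison is largely organizational. The paper also splits $\br{A_n,\ \barh(\fs{\hat G_n},\fs{G_0}) > B_\Hyperparams \delta_n}$ into $E_1 = \br{A_n,\ \sub_n(\hat G_n) > C_\Hyperparams^* \varepsilon_n}$, bounded by $9/n$ via \cref{cor:suboptimality}, and $E_2$, on which the linear-in-$\barh$ budget holds but $\barh$ is large; $E_2$ is then killed by a peeling large-deviation theorem for near-maximizers of the true-data likelihood (\cref{thm:large-deviation}). The structural difference is the slicing: the paper peels \emph{inside} \cref{thm:large-deviation} over exponent-halving shells $\barh \in \left(tB\lambda_n^{1-2^{-k}}, tB\lambda_n^{1-2^{-k+1}}\right]$, $k \le K = \lceil \log_2 \log n \rceil$ (taking $\epsilon = 1/\log n$, so that $\lambda_n^{-\epsilon} = O(1)$). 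This choice makes the ratio of Hellinger-squared to budget a constant $B/C^*$ on every shell, each shell costs $n^{-t^2}$, and the shell count $K+1$ is exactly the source of the $\frac{\log\log n}{\log 2}$ factor. Your dyadic shells can be made to work, but for a different reason than the one you give.

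That reason is where your write-up has a genuine gap. You assert per-slice failure probability $O(1/n)$ and then claim only $O(\log\log n)$ slices are ``effective,'' attributing this to \cref{thm:lb}, $\barh \le 1$, and $\bar Z_n \le M_n$. None of these confines the relevant scales: the density lower bound of \cref{thm:lb} is a statement about $\hat G_n$ evaluated at the observed data, not a uniform statement over all $H$ in a Hellinger shell, so it cannot enter a uniform-over-shell bound at all; and as literally written, your accounting --- roughly $\log_2(1/(B_\Hyperparams\delta_n)) = O(\log n)$ dyadic shells at $O(1/n)$ each --- gives $O(\log n / n)$, which overshoots the stated bound. What actually rescues the dyadic route is that in this application the per-shell probabilities are far smaller than $1/n$: the Chernoff-plus-covering argument gives roughly $\exp\pr{-c\, 4^j B_\Hyperparams^2\, n\delta_n^2 + C(\log n)^{3/2 + 1/\alpha}}$, and here $n\delta_n^2 = n^{1/(2p+1)}(\log n)^{\frac{2+\alpha}{\alpha} + 2\beta}$ is polynomial in $n$, so all $O(\log n)$ shells together contribute $o(1/n)$; the $\log\log n$ factor is an artifact of the paper's general theorem, which is proved under the minimal rate condition \eqref{eq:rate_lambda_gamma}, where per-shell bounds really are only $n^{-t^2}$. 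Relatedly, the device taming the unbounded log-likelihood ratios in the per-shell bound is not the $\rho_n$-regularization you invoke: the paper projects each shell member onto a sup-norm net of $\mathcal P(\R)$ over $[-M,M]$ (\cref{prop:covering_for_moments}), corrects outside $[-M,M]$ with an explicit function $v(z)$, applies Markov's inequality to square-root likelihood ratios so that Hellinger affinity appears, and handles the $\br{|Z_i| > M}$ factor separately via \cref{lemma:boundsecond}. That machinery --- the content of \cref{thm:large-deviation} --- is the bulk of the proof and remains a cited black box in your proposal.
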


\thmregretonAn*

\begin{proof}
We choose $M_n$ to be of the form \eqref{eq:DMrate}. Note that we can decompose
\begin{align*}
\reg(G,\eta)
&= \frac{1}{n}\sum_{i=1}^n (\hat\theta_{i, G, \eta} - \theta_i)^2 - \frac{1}
{n}\sum_{i=1}^n (\theta^*_i - \theta_i)^2
\\
&= \frac{1}{n} \sum_{i=1}^n (\hat\theta_{i, G, \eta} - \theta^*_i)^2 + \frac{2}{n} \sum_
{i=1}^n (\theta^*_i - \theta_i)(\hat\theta_{i, G, \eta} - \theta^*_i)
\numberthis\label{eq:regdecompose}
\end{align*} 
Note that the second term in the decomposition
\eqref{eq:regdecompose}, truncated to $A_n$, is mean zero: \[
\E\bk{\one(A_n) \frac{2}{n} \sum_
{i=1}^n (\theta^*_i - \theta_i)(\hat\theta_{i, \hat G_n, \hateta} - \theta^*_i)} = 0,
\]
 since $\E[(\theta^*_i - \theta_i) \mid Y_{1},\ldots, Y_n] = 0$. Thus, we can
focus on \[
\E[\reg(\hat G_n,\hateta)\one(A_n)] = \E\bk{\frac{\one(A_n)}{n} \sum_{i=1}^n (\hat\theta_
{i, \hat G_n, \hateta} - \theta^*_i)^2} \equiv
\frac{1}{n}\E[\one(A_n) \norm{\hat\theta_{\hat G_n, \hateta} - \theta^*}^2] \numberthis
\label{eq:regret_diff_term},
\]
where we let $\hat\theta_{\hat G_n, \hateta}$ denote the vector of estimated posterior
means and let $\theta^*$ denote the corresponding vector of oracle posterior means. Let
the subscript $\rho_n$ denote a vector of regularized posterior means as in
\eqref{eq:regularized_posterior_means}. Here, we set $\rho_n$ as in \eqref{eq:rhodef}.  Thus, we may
further decompose
by triangle
inequality: \begin{align*}
\norm{\hat\theta_{\hat G_n, \hateta} - \theta^*} &\le
\norm{\hat\theta_{\hat G_n, \hateta} - \hat\theta_{\hat G_n, \eta_0}}
+ \norm{\hat\theta_{\hat G_n, \eta_0} - \hat\theta_{\hat G_n, \eta_0, \rho_n}}
+ \norm{\hat\theta_{\hat G_n, \eta_0, \rho_n} - \theta^*_{\rho_n}}
+ \norm{\theta^*_{\rho_n} - \theta^*}.
\end{align*}

We denote each term in the decomposition of \eqref{eq:regret_diff_term} by $\xi_1,\ldots,
\xi_4$: \begin{align}
\xi_1 &= \frac{\one(A_n)}{n} \norm{\hat\theta_{\hat G_n, \hateta} - \hat\theta_{\hat G_n, \eta_0}}^2
\\
\xi_2 &= \frac{\one(A_n)}{n} \norm{\hat\theta_{\hat G_n, \eta_0} - \hat\theta_{\hat G_n, \eta_0, \rho_n}}^2
\\
\xi_3 &= \frac{\one(A_n)}{n}\norm{\hat\theta_{\hat G_n, \eta_0, \rho_n} - \theta^*_{\rho_n}}^2 \\
\xi_4 &= \frac{\one(A_n)}{n}\norm{\theta^*_{\rho_n} - \theta^*}^2.
\end{align}
We have that $
\eqref{eq:regret_diff_term} \le 4 (\E\xi_1 + \E\xi_2 + \E\xi_3 + \E\xi_4) = 4 (\E\xi_1 + \E\xi_3 + \E\xi_4).$

The individual $\xi_j$'s are bounded by the arguments in the remainder of this
section. The key term leading to the final rate is $\E[\xi_3]$:
\begin{itemize}[wide]
  \item We show in \cref{sub:xi1} that $\xi_1 \leH M_n^2 (\log n)^2 \Delta_n^2$, and thus $\E \xi_1 \leH M_n^2 (\log n)^2
\Delta_n^2$. 
\item \Cref{thm:lb}(2) implies that, given the choice $\rho_n$ in \eqref{eq:rhodef}, the
regularized posterior means and the unregularized posterior means are equal $\hat\theta_
{\hat G_n,
\eta_0, \rho_n} = \hat\theta_{\hat G_n,
\eta_0}$, since the truncation does not bind. Therefore, $\xi_2 = 0$.

\item We show in \cref{sub:xi3} that $\E\xi_3 \leH
(\log n)^3 \delta_n^2$. Here, $\delta_n$ is the rate in \eqref{eq:delta_rate_def}.

\item Finally, we show in \cref{sub:xi4} that $\E\xi_4 \leH 
\frac{1}{n}$. 
\end{itemize}

Lastly, we observe that by the definition of $\delta_n$ in \eqref{eq:delta_rate_def}, the
upper bound for $\E[\xi_3]$ is the dominating rate. Plugging the definition of
$\delta_n^2$ yields that \[
\eqref{eq:regret_diff_term} = \E[\reg(\hat G_n,\hateta)\one(A_n)] \leH n^{-\frac{2p}{2p+1}} (\log n)^{\frac{2 + \alpha}{\alpha} +
 3 + 2\beta}.
\qedhere
\]

\begin{rmksq}[Remainder of proof]
The proof for \cref{thm:regret_on_An} hinges on the key result in \cref{sub:xi3} for
bounding $\xi_3$. Effectively, the argument first relates $\xi_3$ to the corresponding
regret for the transformed parameters $\tau_i$ \eqref{eq:regularized_tau}: \[
\norm{\tau_{\hat G_n, \eta_0, \rho_n} - \tau_{\rho_n}^*}^2. 
\]
To prove a bound for this object, we truncate to the event where $\barh^2(\fs{\hat
G_n}, \fs{G_0})$ is small and use the fact that, loosely speaking, the $\norm{\tau_{\hat
G_n, \eta_0, \rho_n}
- \tau_{\rho_n}^*}^2$ can be bounded by $\barh^2(\fs{\hat G_n}, \fs
   {G_0})$. For this argument to work, the key is that the event where
   $\barh^2(\fs{\hat G_n}, \fs{G_0})$ is small has high probability, which is
   shown in 
\cref{cor:hellinger_large_dev}. Lastly, to prove \cref{cor:hellinger_large_dev}, we need
to first establish that $\hat G_n$---estimated off $(\hat Z_i, \hat\nu_i)$---does not have
high likelihood suboptimality $\sub_n(\hat G_n)$. This is the most laborious part of the
proof (\cref{cor:suboptimality}). 
\end{rmksq}

\end{proof}

\begin{lemma}
\label{sub:xi1}

Under the assumptions of \cref{thm:regret_on_An}, in the proof of \cref{thm:regret_on_An},
$\xi_1 \leH M_n^2 (\log n)^2 \Delta_n^2$.
\end{lemma}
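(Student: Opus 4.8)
The plan is to bound the per-coordinate difference $\hat\theta_{i, \hat G_n, \hateta} - \hat\theta_{i, \hat G_n, \eta_0}$ pointwise on $A_n$, then sum the squares and divide by $n$. Writing $\hat\theta_{i, \hat G_n, \eta} = m(\sigma_i) + s(\sigma_i)\, T_i(\eta)$, where $T_i(\eta) = \PE_{\hat G_n, \hat\nu_i(\eta)}[\tau \mid \hat Z_i(\eta)] = \hat\tau_{i, \hat G_n, \eta}$ is the posterior mean of $\tau$ under the \emph{fixed} estimated prior $\hat G_n$, I would view $F_i \colon (m,s) \mapsto \hat\theta_{i, \hat G_n, (m,s)}$ as a smooth scalar function of the two scalars $(m, s) = (m(\sigma_i), s(\sigma_i))$ and apply the mean value theorem along the segment from $\eta_0(\sigma_i)$ to $\hateta(\sigma_i)$. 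On $A_n$ this segment moves each coordinate by at most $\Delta_n$, so $|F_i(\hateta) - F_i(\eta_0)| \le (|\partial_m F_i| + |\partial_s F_i|)\,\Delta_n$ evaluated at an intermediate point.

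The core computation is the two partials. Using $\hat Z_i = (Y_i - m)/s$ and $\hat\nu_i = \sigma_i / s$ together with Tweedie's identity $\partial_z T = \var_{\hat G_n}(\tau \mid z)/\nu^2$, a direct calculation gives $\partial_m F_i = 1 - \partial_z T_i$ and $\partial_s F_i = T_i - \hat Z_i\,\partial_z T_i - \hat\nu_i\,\partial_\nu T_i$, where $\partial_\nu T_i$ is an explicit combination of posterior central moments of $\tau$ about $\hat Z_i$ up to third order, coming from $\partial_\nu \varphi_\nu(u) = \varphi_\nu(u)(u^2 - \nu^2)/\nu^3$.

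The crux, and the main obstacle, is bounding these posterior quantities on $A_n$ with only polylogarithmic-in-$n$ factors, rather than the crude $M_n^2$ one gets from the diameter of the support of $\hat G_n$. I would control the moments $\PE_{\hat G_n, \nu}[|\tau - z|^k \mid z]$ by a concentration argument: splitting the support into a near band $\{|\tau - z| \le R\}$ and its complement, the near part contributes at most $R^k$, while the far part is controlled by the Gaussian tail of the likelihood $\varphi_\nu$ divided by the density lower bound $f_{\hat G_n, \nu}(z) \ge \rho_n / \nu$ supplied by \cref{thm:lb}. Taking $R \asymp \nu \sqrt{\log(1/\rho_n)}$ and using $\log(1/\rho_n) \leH \log n$ from \cref{thm:lb} yields $\PE[|\tau - z|^k \mid z] \leH (\log n)^{k/2}$, hence $\partial_z T_i = \var/\nu^2 \leH \log n$ and $\partial_\nu T_i \leH (\log n)^{3/2}$, while the support restriction in \cref{as:npmle} together with $|\hat Z_j| \leH M_n$ (\cref{thm:lb}) still gives $|T_i| \leH M_n$ and $|\hat Z_i| \leH M_n$. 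Combining, $|\partial_m F_i| \leH \log n$ and $|\partial_s F_i| \leH M_n \log n$, where I use that $M_n$ is at least of order $(\log n)^{1/2}$ (since $M_n \asymp (\log n)^{1/\alpha}$ with $\alpha \le 2$) so that $M_n \log n$ absorbs the $(\log n)^{3/2}$ term. A technical point to verify carefully is that the density lower bound and the attendant concentration hold uniformly along the MVT segment, not merely at $(Z_i, \nu_i)$; since $A_n$ forces $|\hat Z_i(\hateta) - Z_i| \leH M_n \Delta_n$ and $|\hat\nu_i(\hateta) - \nu_i| \leH \Delta_n$, the density at the perturbed arguments stays comparable to $\rho_n/\nu$ and the bounds survive with adjusted constants.

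Putting these together yields $|\hat\theta_{i, \hat G_n, \hateta} - \hat\theta_{i, \hat G_n, \eta_0}| \leH M_n (\log n)\, \Delta_n$ on $A_n$, uniformly in $i$. Since $s_0$ and $\sigma_i$ are bounded by \cref{as:variance_bounds}, squaring and averaging over $i$ gives $\xi_1 = \frac{\one(A_n)}{n}\sum_{i=1}^n (\hat\theta_{i, \hat G_n, \hateta} - \hat\theta_{i, \hat G_n, \eta_0})^2 \leH M_n^2 (\log n)^2 \Delta_n^2$, which is the claimed bound.
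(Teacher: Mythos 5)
Your proposal is correct and is essentially the paper's own argument: since $\hat\theta_{i,\hat G_n,\eta}=Y_i-\sigma_i^2\,\partial\psi_i/\partial m_i$, your mean-value expansion of the posterior mean in $(m,s)$ is exactly the paper's Taylor expansion of $\partial\psi_i/\partial m_i$, and your partials $\partial_m F_i,\partial_s F_i$ are (up to the factor $-\sigma_i^2$) the second derivatives bounded in \cref{lemma:secondderivatives}. Your truncation argument for the posterior moment bounds and your check that the density lower bound survives along the segment (valid because $M_n^2\Delta_n\leH 1$) reproduce the content of \cref{thm:jianglemma2}, \cref{lemma:qbound}, and the argument via \eqref{eq:decompose_fake_z}, yielding the same $M_n(\log n)\Delta_n$ per-coordinate bound and hence the claim.
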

\begin{proof}
Note that, by an application of Taylor's theorem, \begin{align*}
\absauto{
    \hat\theta_{i, \hat G_n, \hateta} - \hat\theta_{i, \hat G_n, \eta_0}
} &= \sigma_i^2  \absauto{
    \frac{f'_{\hat G_n, \hat\nu_i}(\hat Z_i)}{\hats_i f_{\hat G_n, \hat\nu_i}(\hat Z_i)} -
    \frac{f'_{\hat G_n, \nu_i}(Z_i)}{s_{0i} f_{\hat G_n, \nu_i}(Z_i)}
}  \\
&= \sigma_i^2 \absauto{\pr{\diff{\psi_i}{m_i} \evalbar_{\hat G_n, \hat\eta} -
\diff{\psi_i}{m_i} \evalbar_{\hat G_n, \eta_0}}}
\tag{\cref{eq:dpsidm}}
\\
& = \sigma_i^2 \absauto{
    \diff{^2\psi_i}{m_i \partial s_i}\evalbar_{\hat G_n, \tilde \eta_i} (\hats_i - s_{0i})
    + \diff{^2\psi_i}{m_i^2}\evalbar_{\hat G_n, \tilde \eta_i} (\hatm_i - m_{0i})
},
\end{align*}
where we use $\tilde\eta_i$ to denote some intermediate value lying on the line segment
between $\hateta_i$ and $\eta_{0i}$. By \cref{lemma:secondderivatives}, we can bound the
two derivative terms and obtain \[
\one(A_n) \absauto{
    \hat\theta_{i, \hat G_n, \hateta} - \hat\theta_{i, \hat G_n, \eta_0}
} \leH M_n (\log n) \Delta_n.
\]
Hence, squaring both sides, we obtain $
\xi_1 \leH M_n^2 (\log n)^2 \Delta_n^2.
$
\end{proof}

\begin{lemma}
\label{sub:xi4}

Under the assumptions of \cref{thm:regret_on_An}, in the proof of \cref{thm:regret_on_An},
$\E\xi_4 \leH \frac{1}{n}.$
\end{lemma}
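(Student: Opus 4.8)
The plan is to use that the regularized oracle posterior mean $\theta^*_{i,\rho_n} = \hat\theta_{i,G_0,\eta_0,\rho_n}$ and the unregularized $\theta^*_i = \hat\theta_{i,G_0,\eta_0}$ differ only on the event $E_i \equiv \{f_{G_0,\nu_i}(Z_i) < \rho_n/\nu_i\}$ where the marginal density of $Z_i$ dips below the truncation floor, and that this event is rare because $\rho_n \leH n^{-3}$ by \cref{thm:lb}. Since $\one(A_n)\le 1$, I would first drop the indicator and reduce to the termwise bound $\E\xi_4 \le \frac1n\sum_{i=1}^n \E[(\theta^*_{i,\rho_n}-\theta^*_i)^2]$, so that it suffices to show each summand is $\leH n^{-1}$.

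On $E_i$, using the explicit forms in \eqref{eq:regularized_posterior_means} together with Tweedie's formula $\nu_i^2 f'_{G_0,\nu_i}(Z_i)/f_{G_0,\nu_i}(Z_i) = \tau^*_i - Z_i$, I would establish the pointwise bound
\[
\absauto{\theta^*_{i,\rho_n}-\theta^*_i} = s_{0i}\nu_i^2\absauto{f'_{G_0,\nu_i}(Z_i)}\pr{\tfrac{1}{f_{G_0,\nu_i}(Z_i)} - \tfrac{\nu_i}{\rho_n}} \le s_{0i}\,\absauto{\tau^*_i - Z_i},
\]
since on $E_i$ the bracket is nonnegative and at most $1/f_{G_0,\nu_i}(Z_i)$. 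Writing $\tau^*_i - Z_i = \PE_{G_0,\nu_i}[\tau - Z_i \mid Z_i]$, conditional Jensen and pulling the $Z_i$-measurable indicator inside give
\[
\E[(\theta^*_{i,\rho_n}-\theta^*_i)^2] \le s_{0i}^2\,\E[(\tau_i - Z_i)^2\one(E_i)] = s_{0i}^2\nu_i^2\,\E[\epsilon_i^2\one(E_i)],
\]
where $\epsilon_i = (Z_i-\tau_i)/\nu_i \sim \Norm(0,1)$. Cauchy--Schwarz then yields $\E[(\theta^*_{i,\rho_n}-\theta^*_i)^2] \leH \nu_i^2\sqrt{\P(E_i)}$, so everything reduces to the small-density estimate $\P(E_i)\leH n^{-2}$.

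The crux is that small-density bound, which I would prove by an area-times-height split at a radius $R$: for any $R>0$,
\[
\P(E_i) = \int_{\{f_{G_0,\nu_i}<\rho_n/\nu_i\}} f_{G_0,\nu_i}(z)\,dz \le \tfrac{\rho_n}{\nu_i}\,(2R) + \P(\absauto{Z_i} > R),
\]
bounding the integrand by the floor $\rho_n/\nu_i$ on $\{\absauto z\le R\}$ and by the tail otherwise. Choosing $R = C_{\H}(\log n)^{1/\alpha}$ (of the same order as $M_n$) makes the first term $\leH n^{-3}(\log n)^{1/\alpha}\leH n^{-2}$ because $\rho_n\leH n^{-3}$, and makes the second term $\leH \exp(-cR^\alpha)\leH n^{-2}$ by the tail bound for $Z_i$ implied by \cref{as:moments,as:variance_bounds} (the same control used for $\bar Z_n$ via $M_n$). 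Combining, $\E[(\theta^*_{i,\rho_n}-\theta^*_i)^2]\leH n^{-1}$ uniformly in $i$ using $s_{0i}\le s_u$ and $\nu_i\le\nu_u$, hence $\E\xi_4\leH n^{-1}$.

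The main obstacle I anticipate is precisely this tail split for $\alpha<2$. A naive Gaussian lower bound on $f_{G_0,\nu_i}$ would only force $E_i$ onto $\{\absauto{Z_i}\gtrsim\sqrt{\log n}\}$, where the heavy $\exp(-c t^\alpha)$ tail of $Z_i$ gives merely the sub-polynomial $\exp(-c(\log n)^{\alpha/2})$, which does not beat $n^{-2}$. The device that rescues the argument is to never translate $E_i$ into a level set of $\absauto{Z_i}$, but instead bound $\P(E_i)$ directly by the estimate above, whose free radius $R = C_\H(\log n)^{1/\alpha}$ can be taken large enough that $cR^\alpha \ge 2\log n$, uniformly over $\alpha\in(0,2]$. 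The remaining bookkeeping is to confirm that this constant $C_\H$ (depending only on $\H$) exists, which is the same calculation already used to calibrate $M_n$.
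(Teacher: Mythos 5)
Your proposal is correct and takes essentially the same route as the paper's proof: both use Tweedie's formula together with Jensen and Cauchy--Schwarz to reduce the problem to bounding $\P\bk{f_{G_0,\nu_i}(Z_i) < \rho_n/\nu_i} \leH n^{-2}$, and both obtain that bound from an area-times-height-plus-tail split, with $\rho_n \leH n^{-3}$ (from \cref{thm:lb}) supplying the polynomial smallness. The only difference is cosmetic: the paper packages the split as \cref{lemma:chebyshev}, using Chebyshev's inequality at the optimized radius to get $\P\bk{f_{G_0,\nu_i}(Z_i) < \rho_n/\nu_i} \lesssim \rho_n^{2/3}\var(Z_i)^{1/3}$, whereas you use the sub-Weibull tail bound of \cref{lemma:tail_bound} at radius $C_\Hyperparams(\log n)^{1/\alpha}$; both comfortably clear the $n^{-2}$ threshold.
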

\begin{proof}
Note that \begin{align*}
\E[(\theta^*_{i,\rho_n} - \theta^*_i)^2] &= s_{0i}^2 \int  \pr{ \nu_i^2 \frac{f'_{G_0,
\nu_i} (z)} {f_
{G_0,
\nu_i}
(z)}}^2 \pr{1-\frac{f_{G_0, \nu_i}}{f_{G_0, \nu_i} \vee \frac{\rho_n}{\nu_i}}}^2 f_{G_0,
\nu_i}
(z)\,dz \\
&\le s_{0i}^2 \E\bk{\pr{ \nu_i^2 \frac{f'_{G_0, \nu_i}(Z)}{f_
{G_0,
\nu_i}
(Z)}}^4}^{1/2} \cdot \P\bk{f_{G_0, \nu_i}(Z) < \rho_n /\nu_i}^{1/2} \tag{Cauchy--Schwarz}
\\
&\leH \E[(\tau-Z)^4]^{1/2} \cdot \rho_n^{1/3} \var(Z)^{1/6} \tag{Tweedie's formula,
Jensen's
inequality, and
\cref{lemma:chebyshev}}\\
&\leH \frac{1}{n}.
\end{align*}
In particular, the third line follows since by Tweedie's formula and Jensen's inequality 
\[
\E\bk{\pr{ \nu_i^2 \frac{f'_{G_0, \nu_i}(Z)}{f_
{G_0,
\nu_i}
(Z)}}^4} = \E\bk{\E_{G_0, \nu_i}[\tau - Z \mid Z]^4} \le \E[(\tau-Z)^4] \leH 1. 
\]
Therefore, $\E[\xi_4] \leH \frac{1}{n}.$
\end{proof}

\subsection{Controlling $\xi_3$}
\label{sub:xi3}

\begin{lemma}

Under the assumptions of \cref{thm:regret_on_An}, in the proof of \cref{thm:regret_on_An},
$\E\xi_3 \leH  (\log n)^3 \delta_n^2$, where $\delta_n$ is defined in \eqref{eq:delta_rate_def}.
\end{lemma}
\begin{proof}
Observe that $
\absauto{
    \hat\theta_{i, \hat G_n, \eta_0, \rho_n} - \theta^*_{i, \rho_n}
} = s_{0i} \absauto{
    \hat \tau_{i, \hat G_n, \eta_0, \rho_n} - \tau^*_{i, \rho_n}
}
$
where $\hat \tau_{i, \hat G_n, \eta_0, \rho_n}$ is the regularized posterior with prior
$\hat G_n$ at conditional moments $\eta_0$ and $\tau^*_{i, \rho_n} = \hat\tau_{i, G_0,
\eta_0, \rho_n}$, where we recall \eqref{eq:regularized_tau}.

Thus, we shall focus on controlling \[
\one(A_n) \norm{\hat \tau_{\hat G_n, \eta_0, \rho_n} - \tau^*_{\rho_n}}^2.
\]
Fix the rate function $\delta_n$ in \eqref{eq:delta_rate_def} and the constant
$B_\Hyperparams$ in \cref{cor:hellinger_large_dev} (which in turn depends on
$C^*_\Hyperparams$ in \cref{cor:suboptimality}). Let $B_{n} = \{\barh(\fs{\hat G_n},
\fs{G_0}) < B_\Hyperparams \delta_n\}$ be the event of a small average squared Hellinger
distance. Let $G_1,\ldots, G_N$ be a finite set of prior
distributions (chosen to be a net of $\br{G: \barh(\fs{G}, \fs{G_0}) \le \delta_n}$ in
some distance), and let $\tau^ {
(j)}_{\rho_n}$ be the posterior mean vector corresponding
to prior $G_j$ with conditional moments $\eta_0$ and regularization $\rho_n$.

Now, note that, for any $j$, \begin{align*}
&\one(A_n) \norm{\hat\tau_{\hat G_n, \eta_0, \rho_n} - \tau_{\rho_n}^*} \\
&\le 
\norm{\hat\tau_
{\hat G_n,
\eta_0, \rho_n} - \tau_{\rho_n}^*}\one(A_n \cap B_n^\comp) + \one(A_n \cap B_n) \pr{
\norm{\hat\tau_
{\hat G_n,
\eta_0, \rho_n} - \tau_{\rho_n}^*} - \norm{\tau_{\rho_n}^{(j)} - \tau_{\rho_n}^*}}_+ \\ 
&\quad+  \pr{ \norm{\tau_{\rho_n}^{(j)} - \tau_{\rho_n}^*} - \E[ \norm{\tau_{\rho_n}^{
(j)} - \tau_ {\rho_n}^*} ]}_+ +\E[ \norm{\tau_{\rho_n}^{(j)} - \tau_{\rho_n}^*} ].
\end{align*}
Then \[
\frac{\one(A_n)}{n} \norm{\hat \tau_{\hat G_n, \eta_0, \rho_n} - \tau^*_{\rho_n}}^2 \le
\frac{4}{n}\pr{\zeta_1^2 + \zeta_2^2 + \zeta_3^2 + \zeta_4^2}
\]
where \begin{align}
\zeta_1^2 &=  \norm{\hat \tau_{\hat G_n, \eta_0, \rho_n} - \tau^*_
{\rho_n}}^2 \one\pr{A_n \cap B_n^\comp}  \label{eq:zeta1}\\
\zeta_2^2 &= \pr{
    \norm{\hat\tau_{\hat G_n, \eta_0, \rho_n} - \tau^*_{\rho_n}} - \max_{j \in [N]}
    \norm{\tau_{\rho_n}^{(j)} - \tau_{\rho_n}^*}
}_+^2 \one(A_n \cap B_n)  \label{eq:zeta2}\\
\zeta_3^2 &= \max_{j\in [N]} \pr{
    \norm{\tau_{\rho_n}^{(j)} - \tau_{\rho_n}^*} - \E\bk{\norm{\tau_{\rho_n}^{(j)} - \tau_
    {\rho_n}^*}}
}^2_+  \label{eq:zeta3}\\
\zeta_4^2 &=  \max_{j\in [N]} \pr{\E\bk{\norm{\tau_{\rho_n}^{(j)} - \tau_
    {\rho_n}^*}}}^2. \label{eq:zeta4}
\end{align}
The decomposition $\zeta_1$ through $\zeta_4$ is exactly analogous to Section D.3 in the
supplementary materials to 
\citet{soloff2021multivariate} and to the proof of Theorem 1 in \citet{jiang2020general}. In
particular, $\zeta_1$ is the gap on the ``bad event'' where the average squared Hellinger
distance is large, which is manageable since $\one(A_n \cap B_n^\comp)$ has small
probability by \cref{cor:hellinger_large_dev}. $\zeta_2$ is the distance from the
posterior means at $\hat G_n$ to the closest posterior mean generated from the net
$G_1,\ldots, G_N$; $\zeta_2$ is small if we make the net $\br{G_1,\ldots, G_N}$ very fine.
$\zeta_3$ measures the
distance between $\norm{\tau_{\rho_n}^{(j)} - \tau_{\rho_n}^*}$ and its expectation;
$\zeta_3$ can be controlled by (i) a large-deviation inequality and (ii) controlling the
metric entropy of the net (\cref{prop:covering_ms}). Lastly, $\zeta_4$ measures the
expected distance between $\tau_{\rho_n}^{(j)} $ and $\tau_{\rho_n}^*$; it is small since
$G_j$ are fixed priors with small average squared Hellinger distance.

However, our argument for $\zeta_3$ is slightly different and avoids an argument in
\citet{jiang2009general} which appears to not apply in the heteroskedastic setting.
See \cref{rmk:gap}.

The subsequent subsections control $\zeta_1$ through $\zeta_4$, and find that $\zeta_4
\leH (\log n)^3 \delta_n^2$ is the dominating term.
\end{proof}

\subsubsection{Controlling $\zeta_1$}

First, we note that, \[
\pr{\hat\tau_{i, \hat G_n, \eta_0, \rho_n} - \tau^*_{i,\rho_n}}^2 \one(A_n \cap B_n^\comp) \leH
\log (1/\rho_n)  \one(A_n \cap B_n^\comp) \leH \log n \one(A_n \cap B_n^\comp).
\tag{\cref{thm:lb,thm:jianglemma2}}
\]
By \cref{cor:hellinger_large_dev}, $\P(A_n \cap B_n^\comp) \le \pr{\frac{\log\log n}{\log 2}
+ 9} \frac{1}{n}$, and hence $
\frac{1}{n}\E\zeta_1^2 \leH \frac{\log n  \log\log n}{n}.
$

\subsubsection{Controlling $\zeta_2$}

Choose $G_1, \ldots, G_N$ to be a minimal $\omega$-covering of $\br{G : \barh(\fs{G},
\fs{G_0}) \le \delta_n}$ under the pseudometric \[
d_{M_n, \rho_n}(H_1, H_2) = \max_{i\in [n]} \sup_{z : |z| \le M_n} \abs[\Bigg]
{
\frac{\nu_i^2 f'_{H_1, \nu_i}(z)}{f_{H_1, \nu_i}(z) \vee (\rho_n/\nu_i)}
    - \frac{\nu_i^2 f'_{H_2, \nu_i}(z)}{f_{H_2, \nu_i}(z) \vee (\rho_n/\nu_i)}
} \numberthis \label{eq:distance_bayesrule}
\]
where $N \le N\pr{\omega/2, \mathcal P(\R), d_{M_n, \rho_n}}$.\footnote{This is by the
monotonicity relation of covering numbers. See Exercise 4.2.10 in 
\citet{vershynin2018high}.} We note that
\eqref{eq:distance_bayesrule} and $d_{m,\infty, M}$ \eqref{eq:def_dk} are different only
by constant factors, in the sense that $d_{M_n, \rho_n}(H_1, H_2) \rateeq_{\Hyperparams}
 d_{m,\infty, M}(H_1, H_2)$ for all $H_1, H_2$. 
Therefore, \cref{prop:covering_ms} implies that \[
\log N\pr{\frac{\delta \log(1/\delta)}{\rho_n} \sqrt{\log(1/\rho_n)}, \mathcal P(\R), d_
{M_n,
\rho_n}} \leH \log(1/\delta)^2 \max\pr{1, \frac{M_n}{\sqrt{\log (1/\delta)}}} \numberthis
\label{eq:covering_number_bayesrule}.
\]
Then, \begin{align*}
\frac{1}{n} \zeta_2^2 &\le \frac{\one(A_n \cap B_n)}{n} \min_{j\in[N]} \pr{ 
\norm{\hat\tau_ {\hat G_n, \eta_0, \rho_n} - \tau^*_{\rho_n}} - 
    \norm{\tau_{\rho_n}^{(j)} - \tau_{\rho_n}^*} }^2 \tag{$(a-\max_j b_j)_+ \le
    \min_j |a-b_j|$} \\
&\le \one(A_n \cap B_n) \frac{1}{n} \min_{j \in [N]} 
\norm{\hat\tau_
{\hat G_n,
\eta_0, \rho_n} - \tau_{\rho_n}^{(j)}}^2\tag{Triangle inequality :
 $\abs{\norm{a-b} - \norm{b-c}} \le \norm{a-c}$}
 \\
 &= \one(A_n \cap B_n) \min_{j\in [N]} \frac{1}{n} \sum_{i=1}^n \one\pr{|Z_i| \le M_n}\pr
{
\frac{\nu_i^2 f'_{\hat G_n, \nu_i}(Z_i)}{f_{\hat G_n, \nu_i}(Z_i) \vee (\rho_n/\nu_i)}
    - \frac{\nu_i^2 f'_{G_j, \nu_i}(Z_i)}{f_{G_j, \nu_i}(Z_i) \vee (\rho_n/\nu_i)
    }
}^2 \\
&\le \omega^2 \leH \frac{\delta^2 \log(1/\delta)^2}{\rho_n^2} \log(1/\rho_n) \tag{Reparametrize
$\omega = 2\delta \log(1/\delta) \rho_n^{-1} \sqrt{\log(1/\rho_n)}$}.
\end{align*}

\subsubsection{Controlling $\zeta_3$}
\label{sub:zeta_3}

We first observe that $V_{ij} \equiv |\tau_{i,\rho_n}^{(j)} - \tau_{i, \rho_n}^*| \leH
\sqrt{\log n}$, by
\cref{thm:jianglemma2}. Let $V_j = (V_{1j},\ldots, V_{nj})'$, we have that $
\zeta_3 = \max_j (\norm{V_{j}} - \E\norm{V_j})_+. $ Let $K_n = C_\Hyperparams \sqrt
 {\log n} \ge \max_{ij} |V_{ij}|$. Since $G_j, G_0$ are both fixed, $V_{1j},\ldots, V_
 {nj}$ are mutually independent, and $V_{ij}/K_n \in [0,1]$.  Then, observe that by \cref
 {lemma:norm_concentration}, \begin{align*}
\P\pr{\norm{V_j} > \E[\norm{V_j}] + u} = \P\pr{
    \normauto{\frac{V_j}{K_n}} \ge \E\normauto{\frac{V_j}{K_n}} + \frac{u}{K_n}
}\le \exp\pr{-\frac{u^2}{2K_n^2}}.
\end{align*}
By union bound, \[
\P\pr{\zeta_3^2 > x} \le N \exp\pr{-\frac{x}{2K_n^2}}.
\]
Therefore, \begin{align*}
\E[\zeta_3^2] &= \int_0^\infty \P(\zeta_3^2>x) \,dx \le  \int_0^\infty \min\pr{1, N\exp
\pr{-
\frac{x}
{2K_n^2}}} \,dx \\ &= 2K_n^2 \log N + \int_
{2K_n^2 \log N}^{\infty} N\exp\pr{-
\frac{x}
{2K_n^2}}\,dx \\
& \leH \log n \log N.
\end{align*}
Now, if we take $\delta = \rho_n/n$, then $
\frac{1}{n} \E[\zeta_2^2 + \zeta_3^2] \leH \frac{(\log n)^{5/2} M_n}{n}
$.

\begin{rmksq}
\label{rmk:gap}

For the analogous term under homoskedasticity, \citet{jiang2009general}  observe that $
\norm{\tau_ {\rho_n}^ {(j)} -
\tau_{\rho_n}^*}$ is a Lipschitz function of the noise component $Z_i - \tau_i$. As a result, a Gaussian
isoperimetric inequality
(Theorem 5.6 in \citet{boucheron2013concentration}) bounds \[
\P\pr{
  \norm{\tau_{\rho_n}^{(j)} - \tau_{\rho_n}^*} \ge \E\bk{ \norm{\tau_{\rho_n}^{(j)} -
  \tau_
  {\rho_n}^*} \mid \tau_1,\ldots, \tau_n } + x
},
\] independently of $n$---a fact used in Proposition 4 of \citet
 {jiang2009general}. Note that the concentration of $\norm{\tau_{\rho_n}^{(j)}
- \tau_{\rho_n}^*}$ is towards its conditional mean \[ \E\bk{ \norm{\tau_{\rho_n}^{(j)} -
  \tau_
  {\rho_n}^*} \mid \tau_1,\ldots, \tau_n }.\] In the homoskedastic setting where $\nu_i =
  \nu$, \[
 \E\bk{ \norm{\tau_{\rho_n}^{(j)} -
  \tau_
  {\rho_n}^*} \mid \tau_1,\ldots, \tau_n } = \E_{G_{0,n}} \bk{ \norm{\tau_{\rho_n}^{(j)} -
  \tau_
  {\rho_n}^*} } \numberthis \label{eq:homosk_identity}
  \]
  where $G_{0,n} = \frac{1}{n} \sum_i \delta_{\tau_i}$ is the empirical distribution of
  the $\tau$'s. However, \eqref{eq:homosk_identity} no longer holds in the heteroskedastic
  setting, and to adapt this argument, we need to additionally control the difference
  between $\E\bk{\norm{\tau_{\rho_n}^{(j)} -
  \tau_
  {\rho_n}^*} \mid \tau_1,\ldots, \tau_n }$ and $\E\bk{ \norm{\tau_{\rho_n}^{(j)} -
  \tau_
  {\rho_n}^*} }$. The argument in \citet{jiang2020general} (p.2289) appears to use the
  Gaussian concentration of Lipschitz functions argument without this additional step.
  Instead, we establish control of $\zeta_3$ by observing that entries of $\tau_{\rho_n}^{
  (j)} - \tau_{\rho_n}^*$ are bounded and applying the convex Lipschitz concentration
  inequality. 
\end{rmksq}

\subsubsection{Controlling $\zeta_4$}
Consider a change of variables where we let $w_i = z/\nu_i$ and $\lambda_i = \tau /
\nu_i$. Let $G_{(i)}$ be the distribution of $\lambda_i$ under $G$, where 
$
G_{(i)}(d\lambda) = G(d\tau).
$ Then \[
f_{G,\nu_i}(z) = \int \frac{1}{\nu_i} \varphi\pr{w_i - \lambda_i} G(d\tau) = \frac{1}
{\nu_i} \int \varphi\pr{w_i - \lambda_i} G_{(i)}
(d \lambda_i) = \frac{1}
{\nu_i} f_{G_{(i)}, 1}(w_i)
\]
and $
f_{G, \nu_i}'(z) = \frac{1}{\nu_i^2} f'_{G_{(i)}, 1} (w_i)
$. Hence, \begin{align*}
\E\bk{(\tau_{i, \rho_n}^{(j)} - \tau_{i, \rho_n}^*)^2} &= \nu_i^2 \E\bk{\pr{
    \frac{f'_{G_{j,(i)}, 1}(w_i)}{f_{G_{j,(i)}, 1}(w_i) \vee \rho_n} - \frac{f'_{G_{0,(i)},
    1}
    (w_i)}{f_{G_{0,(i)}, 1}(w_i) \vee \rho_n}
}^2} \\
&\leH \max\pr{(\log 1/\rho_n)^3, \abs{\log h(f_{G_{j,(i)}, 1}, f_{G_{0,(i)}, 1})}}h^2(f_{G_
{j,(i)},
1},
f_{G_{0,(i)}, 1}) \tag{\cref{lemma:misspec_prior}} \\
&= \max\pr{(\log 1/\rho_n)^3, \abs{\log h(f_
{G_j, \nu_i}, f_{G_0, \nu_i})}}h^2(f_
{G_j, \nu_i}, f_{G_0, \nu_i}) \tag{Hellinger distance is invariant to
change of variables} \end{align*}
Let $h_i = h(f_
{G_j, \nu_i}, f_{G_0, \nu_i})$. Hence, \begin{align*}
\frac{1}{n} \E[\zeta_4^2] &\leH \frac{(\log n)^3}{n} \sum_{i : \abs{\log h_i} \le 
(\log 1/\rho_n)^3} h_i^2 +
\frac{1}{n} \sum_{i : \abs{\log h_i} > (\log 1/\rho_n)^3} \abs{\log h_i} h_i^2 \\
&\le (\log n)^3 \barh^2(\fs{G_j},
\fs{G_0}) + \frac{1}{n} \sum_{i : \abs{\log h_i} >(\log 1/\rho_n)^3} \frac{1}{e} h_i 
\tag{$x\abs{\log x}
\le e^{-1}$ for $x\in[0,1]$}
\end{align*}
Note that \[ \abs{\log h_i} > (\log 1/\rho_n)^3 \implies h_i < \exp\pr{-\log(1/\rho_n)^3}
<
\rho_n^{(\log 1/\rho_n)^2} \leH \rho_n^3 \leH n^{-1} \tag{\cref{as:Delta_M_rate}}. \]
Therefore the first term dominates, and thus $
\frac{1}{n} \E[\zeta_4^2] \leH (\log n)^3 \delta_n^2.
$

\subsection{Auxiliary lemmas}

\begin{lemma}
\label{lemma:distance_bound_hard}

Let $\hat \theta_{i, \hat G, \hateta}$ be the posterior mean at prior $\hat G$ and
conditional moments estimate at $\hateta$. Let $\theta_i^* = \hat\theta_{i, G_0, \eta_0}$
be the oracle posterior mean. Assume that $\hat G$ is supported within $[-\bar M_n, \bar
M_n]$ where $\bar M_n = \max_i |\hat Z_i(\hateta) \vee 1|$. Recall that $\Dinfty =
\max(\norm{\hatm-m_0}_\infty,
\norm{\hats - s_0}_\infty)$. Suppose
\begin{enumerate}
  \item $\Dinfty \leH 1$;
  \item \Cref{as:moments,as:variance_bounds} holds;
  \item $\hats \gtrsim_{\Hyperparams} s_{\ell n}$ for some fixed sequence $s_{\ell n} > 0$.
\end{enumerate}
Then, letting $\bar Z_n = \max_i |Z_i| \vee 1$, \[
\absauto{\hat \theta_{i, \hat G, \hateta} - \theta_i^*} \leH \bar s_{\ell n}^{-1} \bar Z_n
.\]
Moreover, the assumptions are satisfied by
\cref{as:npmle,as:holder,as:moments,as:variance_bounds}
with $s_{\ell n} = s_{0\ell} \rateeq 1$.
\end{lemma}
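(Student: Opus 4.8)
The plan is to bound the two posterior means separately and combine by the triangle inequality, working always conditionally on the realized $Z_{1:n}$. First I would write both estimators in location--scale form, $\hat\theta_{i,\hat G,\hateta}=\hatm_i+\hats_i\,\hat\tau_{i,\hat G,\hateta}$ and $\theta_i^*=m_{0i}+s_{0i}\tau_i^*$, where $\hat\tau_{i,\hat G,\hateta}=\PE_{\hat G,\hat\nu_i}[\tau\mid\hat Z_i]$ and $\tau_i^*=\PE_{G_0,\nu_i}[\tau\mid Z_i]$, and expand the difference as $\hat\theta_{i,\hat G,\hateta}-\theta_i^*=(\hatm_i-m_{0i})+\hats_i\hat\tau_{i,\hat G,\hateta}-s_{0i}\tau_i^*$. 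Expanding in this way is deliberate: the leading term is controlled by $\Dinfty\leH 1$, so $m_0$ itself never needs to be bounded, and everything reduces to bounding $|\hat\tau_{i,\hat G,\hateta}|$ and $|\tau_i^*|$ (recall $s_{0i}\in(s_\ell,s_u)\rateeq1$).

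The empirical-Bayes factor is the easy one. The key observation is that a Gaussian-location posterior mean is a $\hat G$-weighted average of the prior's support points, so $\hat\tau_{i,\hat G,\hateta}\in[-\bar M_n,\bar M_n]$ automatically whenever $\hat G$ is supported on $[-\bar M_n,\bar M_n]$---no lower bound on the marginal density is required. It then remains to relate $\bar M_n$ to $\bar Z_n$: from $\hat Z_i=(s_{0i}Z_i+m_{0i}-\hatm_i)/\hats_i$, the variance bounds of \cref{as:variance_bounds}, the hypothesis $\Dinfty\leH1$, and $\hats\gtrsim_{\Hyperparams}s_{\ell n}$, I would obtain $|\hat Z_i|\leH\bar Z_n/s_{\ell n}$ and hence $\bar M_n\leH\bar Z_n/s_{\ell n}$ (using $\bar Z_n\ge1$). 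Since $\hats_i$ is bounded above (it lies in $\mathcal V$), this gives $\hats_i|\hat\tau_{i,\hat G,\hateta}|\leH\bar Z_n/s_{\ell n}$.

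The hard part will be the oracle mean $\tau_i^*$, where the support trick is unavailable because $G_0$ is a general distribution on $\R$. Here I would bound $|\tau_i^*|\le\PE_{G_0,\nu_i}[|\tau|\mid Z_i]$ and split this posterior expectation at a threshold $K\bar Z_n$ for a constant $K=K_{\Hyperparams}$ to be chosen large. The bulk contribution from $|\tau|\le K\bar Z_n$ is at most $K\bar Z_n$. For the tail $|\tau|>K\bar Z_n$, I would lower-bound the marginal density $f_{G_0,\nu_i}(Z_i)$ using the mass $G_0([-2,2])\ge 3/4$ (Chebyshev, since $G_0$ has mean zero and unit variance by \cref{as:moments}), which enters at Gaussian-kernel scale $e^{-(\bar Z_n+2)^2/2\nu_\ell^2}$, whereas each tail point sits at distance exceeding $(K-1)\bar Z_n$ from $Z_i$ and thus enters at scale $e^{-((K-1)\bar Z_n)^2/2\nu_u^2}$. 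The subtle point---and the crux of the whole lemma---is that the truncation must scale with $\bar Z_n$ by a \emph{large} multiple rather than a fixed amount: once $(K-1)^2>2\nu_u^2/\nu_\ell^2$, the $\bar Z_n^2$ coefficient in the ratio of tail-to-bulk exponents turns nonpositive, so the tail contribution is $O_{\Hyperparams}(1)$ after integrating against the probability measure $G_0$ (keeping an auxiliary $e^{-\tau^2/16\nu_u^2}$ factor, finite by moment control, to tame large $\tau$). This yields $|\tau_i^*|\leH\bar Z_n$.

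Combining the three pieces gives $\absauto{\hat\theta_{i,\hat G,\hateta}-\theta_i^*}\leH 1+\bar Z_n/s_{\ell n}+\bar Z_n\leH s_{\ell n}^{-2}\bar Z_n$, the stated (weaker) power $s_{\ell n}^{-2}$ absorbing the sharper $s_{\ell n}^{-1}$ when $s_{\ell n}\le1$. Finally I would verify the hypotheses under the full assumption set: \cref{as:npmle} puts the support of $\hat G_n$ inside $[\min_i\hat Z_i,\max_i\hat Z_i]\subseteq[-\bar M_n,\bar M_n]$, \cref{as:holder}(4) gives $\hats>s_{0\ell}/2$ and \cref{as:holder}(2) the needed upper bound, so one may take $s_{\ell n}=s_{0\ell}\rateeq1$ and the factor $s_{\ell n}^{-2}$ is a constant.
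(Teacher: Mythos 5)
Your proposal is correct and takes essentially the same route as the paper's proof: both arguments rest on (i) bounding the empirical-Bayes posterior mean by $\bar M_n$ via the restricted support of $\hat G$, (ii) the relation $\bar M_n \leH s_{\ell n}^{-1} \bar Z_n$ from the transformation formula together with $\Dinfty \leH 1$, and (iii) bounding the oracle posterior mean by $\leH \bar Z_n$ — and your inline bulk/tail split with the Chebyshev lower bound $G_0([-2,2]) \ge 3/4$ on the marginal density is precisely the content of the paper's \cref{lemma:posterior_moments}, which the paper's proof simply cites. The remaining differences are cosmetic: the paper phrases the decomposition through Tweedie's formula rather than the location-scale form $m_i + s_i \PE[\tau \mid \cdot]$ (which is why it incurs the extra factor $s_{\ell n}^{-1}$ appearing in the stated bound, whereas your route gives the sharper $s_{\ell n}^{-1}\bar Z_n$), and your upper bound on $\hats_i$ and the a.s. bound $\Dinfty \leH 1$ should be attributed to the lemma's hypothesis (1) plus \cref{as:variance_bounds} (respectively, to \cref{as:holder}(3)--(4) in the ``moreover'' part) rather than to membership in $\mathcal V$ or to \cref{as:holder}(2), which is only a probabilistic rate statement.
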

\begin{proof}
Observe that \begin{align*}
\absauto{
    \hat\theta_{i, \hat G_n, \hateta} - \hat\theta_{i, G_0, \eta_0}
} &=  \absauto{
    \hat s_i \frac{\hat\nu_i^2 f'_{\hat G_n, \hat\nu_i}(\hat Z_i)}{ f_{\hat G_n,
    \hat\nu_i}(\hat Z_i)} - s_{0i}
    \frac{\nu_i^2 f'_{G_0 \nu_i}(Z_i)}{f_{G_0, \nu_i}(Z_i)}
} \\
&\leH  \bar M_n + \bar Z_n \leH \max_i \max(|\hat Z_i|, |Z_i|, 1)
\end{align*}
by the boundedness of $\hat G_n$ and 
\cref{lemma:posterior_moments}.
Note that $|\hat Z_i(\hateta)| = \absauto{\frac{s_{0i}}{\hats_i} Z_i + \frac{m_{0i} -
\hatm_i}
{\hats_i}} \leH s_{\ell n}^{-1} \max(|Z_i|, 1) = s_{\ell n}^{-1} \bar Z_n.$  Therefore, $
\abs{
    \hat\theta_{i, \hat G_n, \hateta} - \hat\theta_{i, G_0, \eta_0}
} \leH s_{\ell n}^{-1} \bar Z_n
$.
\end{proof}

\begin{lemma}
\label{lemma:tail_bound_max}
Let $\bar Z_n = \max_i |Z_i| \vee 1$. Under \cref{as:moments}, for $t > 1$ \[
\P(\bar Z_n > t) \le n\exp\pr{-C_{A_0, \alpha, \nu_u} t^\alpha}
\quad \text{ and } \quad
\E[\bar Z_n^p] \lesssim_{p, \Hyperparams} (\log n)^{p/\alpha}.
\]
Moreover, if $M_n = (C_\Hyperparams + 1) (C_{2,\Hyperparams}^{-1} \log n)^{1/\alpha}$ as
in
\eqref{eq:DMrate}, then for all sufficiently large choices of $C_\Hyperparams$, $\P(\bar
Z_n > M_n) \le n^{-2}.$
\end{lemma}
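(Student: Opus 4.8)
The plan is to reduce all three assertions to a single-coordinate sub-Weibull tail bound, and then assemble them by a union bound, a layer-cake integration, and a direct substitution. Since we condition on $\sigma_{1:n}$, each $\nu_i = \sigma_i/s_0(\sigma_i)$ is fixed and bounded by $\nu_u$, and I would write $Z_i = \tau_i + \nu_i \varepsilon_i$ with $\tau_i \sim G_0$ independent of $\varepsilon_i \sim \Norm(0,1)$. The first step is a uniform moment bound: by Minkowski's inequality, $\E[|Z_i|^q]^{1/q} \le \E_{G_0}[|\tau|^q]^{1/q} + \nu_u\,\E[|\varepsilon|^q]^{1/q}$, where \cref{as:moments} bounds the first summand by $A_0 q^{1/\alpha}$ and Gaussian moments bound the second by $C \nu_u \sqrt q$. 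Because $\alpha \le 2$ we have $\sqrt q \le q^{1/\alpha}$ for $q \ge 1$, so $\E[|Z_i|^q]^{1/q} \le A_2 q^{1/\alpha}$ for a constant $A_2 = A_2(A_0,\alpha,\nu_u)$ and all $q \ge 1$. Markov's inequality gives $\P(|Z_i| > t) \le (A_2 q^{1/\alpha}/t)^q$, and optimizing over $q$ (the optimum is $q \propto t^\alpha$) yields $\P(|Z_i| > t) \le \exp(-c\,t^\alpha)$ for $t$ above a threshold $t_0 = t_0(A_0,\alpha,\nu_u)$ and $c = c(A_0,\alpha,\nu_u)$.

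For the first displayed claim, a union bound gives $\P(\bar Z_n > t) \le \sum_{i=1}^n \P(|Z_i| > t)$ for $t > 1$, since $|Z_i| \vee 1 > t \iff |Z_i| > t$ on this range. For $t > t_0$ this is at most $n \exp(-c\,t^\alpha)$. For $1 < t \le t_0$ I would fall back on the trivial bound $\P(\bar Z_n > t) \le 1$ together with the standing assumption $n \ge 7$: choosing $C_{A_0,\alpha,\nu_u} = \min\pr{c,\ \log 7 / t_0^\alpha}$ ensures $n \exp(-C_{A_0,\alpha,\nu_u} t^\alpha) \ge 7\exp(-C_{A_0,\alpha,\nu_u} t_0^\alpha) \ge 1$ on the bounded range, so the bound $\P(\bar Z_n > t) \le n\exp(-C_{A_0,\alpha,\nu_u} t^\alpha)$ holds uniformly for all $t > 1$.

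For the moment bound I would integrate by layers: since $\bar Z_n \ge 1$, $\E[\bar Z_n^p] = 1 + \int_1^\infty p t^{p-1}\P(\bar Z_n > t)\,dt$, split at $t_* = (\log n / C_{A_0,\alpha,\nu_u})^{1/\alpha}$, the point where $n\exp(-C_{A_0,\alpha,\nu_u} t^\alpha) = 1$. On $[1,t_*]$, bounding $\P$ by $1$ contributes $t_*^p \lesssim_{p,\Hyperparams} (\log n)^{p/\alpha}$; on $(t_*,\infty)$, using the tail bound and an incomplete-gamma estimate shows the remaining integral is $O\pr{(\log n)^{p/\alpha - 1}}$, which is lower order. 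Hence $\E[\bar Z_n^p] \lesssim_{p,\Hyperparams} (\log n)^{p/\alpha}$. Finally, for $M_n$ as in \eqref{eq:DMrate} we have $M_n^\alpha = (C_\Hyperparams + 1)^\alpha C_{2,\Hyperparams}^{-1} \log n$, so the tail bound gives $\P(\bar Z_n > M_n) \le n\exp\pr{-C_{A_0,\alpha,\nu_u} M_n^\alpha} = n^{1 - C_{A_0,\alpha,\nu_u}(C_\Hyperparams+1)^\alpha / C_{2,\Hyperparams}}$; since $(C_\Hyperparams+1)^\alpha \to \infty$ as $C_\Hyperparams \to \infty$, this is $\le n^{-2}$ once $(C_\Hyperparams+1)^\alpha \ge 3 C_{2,\Hyperparams}/C_{A_0,\alpha,\nu_u}$, i.e. for all sufficiently large $C_\Hyperparams$.

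The main obstacle is the single-coordinate bound with clean constants: getting the tail exponent to be exactly $\alpha$ and the estimate into the prefactor-free form $\exp(-C t^\alpha)$. The moment-plus-Markov route handles the exponent automatically (the Gaussian contribution is absorbed precisely because $\alpha \le 2$), and the $n \ge 7$ trivial bound on the bounded range $1 < t \le t_0$ absorbs the residual multiplicative constant. The alternative of directly adding the sub-Weibull-$\alpha$ tail of $\tau_i$ and the subgaussian tail of $\nu_i\varepsilon_i$ also works but is messier in tracking the constants into the required form.
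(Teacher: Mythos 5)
Your proposal is correct, and its skeleton matches the paper's: a single-coordinate sub-Weibull bound $\P(|Z_i|>t)\le \exp(-C_{A_0,\alpha,\nu_u}t^\alpha)$ obtained exactly as in the paper's \cref{lemma:tail_bound} (Minkowski to get simultaneous moment control for $Z_i$, then Markov's inequality optimized over the moment order, with the Gaussian part absorbed because $\alpha\le 2$), followed by a union bound for the first display, and the same direct substitution of $M_n$ for the ``moreover'' part. The one genuine divergence is the moment bound $\E[\bar Z_n^p]\lesssim_{p,\Hyperparams}(\log n)^{p/\alpha}$: you integrate the tail bound by layers, splitting at $t_*=(\log n/C)^{1/\alpha}$ and controlling the remainder with an incomplete-gamma estimate, whereas the paper never touches the tail at all and instead uses the $L^c$-norm trick
\[
\E\bk{\max_i (|Z_i|\vee 1)^p} \le \pr{\sum_i \E\bk{(|Z_i|\vee 1)^{pc}}}^{1/c} \le n^{1/c} C_\Hyperparams^p (pc)^{p/\alpha},
\]
with $c=\log n$ so that $n^{1/c}=e$. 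Both arguments are standard and give the same rate; the paper's is shorter and stays entirely at the level of moments, while yours makes the dependence on the tail explicit and, as a side benefit, forces you to be honest about the regime $1<t\le t_0$ where the optimized Markov exponent drops below one --- a constant-chasing point the paper's \cref{lemma:tail_bound} silently absorbs into its constant, and which you resolve cleanly by noting the union-bound prefactor $n\ge 7$ makes the claimed bound trivial on that range.
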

\begin{proof}
The first claim is immediate under \cref{lemma:tail_bound} and a union bound. The second
claim follows from the observation that \[
\E[\max_i (|Z_i| \vee 1)^p] \le \pr{\sum_{i=1}^n \E[(|Z_i| \vee 1)^{pc}]}^{1/c} \le n^
{1/c}
C_\Hyperparams^{p} (pc)^{p/\alpha}.
\]
where the last inequality follows from simultaneous moment control. Choose $c = \log n$
with $n^{1/\log n} = e$ to finish the proof. For the ``moreover'' part, we have that \[
\P(Z_n > M_n) \le \exp\pr{
    \log n - C_{A_0, \alpha, \nu_u} (C_{\Hyperparams} + 1)^{\alpha} C_{2,\Hyperparams}^
    {-1} \log n
}
\]
and it suffices to choose $C_{\Hyperparams}$ such that $(C_{\Hyperparams} + 1)^{\alpha} >
\frac{3 C_{2,\Hyperparams} }{C_{A_0, \alpha, \nu_u}}$ so that $\P(Z_n > M_n) \le e^
{-2\log n} = n^{-2}$.
\end{proof}

\begin{lemma}
\label{lemma:norm_concentration}
Let $W = (W_1,\ldots, W_n)$ be a vector containing independent entries, where $W_i \in
[0,1]$. Let $\norm{\cdot}$ be the Euclidean norm. Then, for all $t > 0$ \[
\P\bk{\norm{W} > \E\norm{W} + t} \le e^{-t^2/2}.
\]
\end{lemma}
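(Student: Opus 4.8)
The plan is to recognize $W \mapsto \norm{W}$ as a convex, $1$-Lipschitz function of the independent bounded coordinates and then invoke Talagrand's concentration inequality for convex Lipschitz functions. First I would verify the two structural hypotheses. Convexity of $W \mapsto \norm{W}$ is immediate from the triangle inequality, since for any $\lambda \in [0,1]$ we have $\norm{\lambda W + (1-\lambda)W'} \le \lambda\norm{W} + (1-\lambda)\norm{W'}$. The $1$-Lipschitz property with respect to the Euclidean metric on $[0,1]^n$ follows from the reverse triangle inequality $\abs{\norm{W} - \norm{W'}} \le \norm{W - W'}$.

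Next, since $W_1,\ldots,W_n$ are independent and each supported on an interval of length one, I would apply the concentration inequality for convex $1$-Lipschitz functions of independent random variables taking values in a bounded product space. In the mean-centered form, this inequality states that for such an $f$ with Lipschitz constant $L$, one has $\P(f(W) \ge \E f(W) + t) \le \exp(-t^2/(2L^2))$; see, e.g., the consequences of Talagrand's convex distance inequality in \citet{boucheron2013concentration}. Specializing to $f = \norm{\cdot}$ and $L = 1$ yields exactly $\P(\norm{W} > \E\norm{W} + t) \le e^{-t^2/2}$, which is the claim. The variance proxy equals one precisely because each coordinate ranges over a unit-length interval and the map is $1$-Lipschitz.

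I do not expect a genuine obstacle, as the statement is a direct specialization of a standard theorem rather than something requiring new ideas. The only point requiring mild care is obtaining the sharp constant $1/2$ in the exponent: one must use the mean-centered convex-Lipschitz bound rather than the median-centered form of Talagrand's inequality (which would give the weaker factor $e^{-t^2/4}$), and one must avoid the bounded-differences (McDiarmid) route, which here would produce an $n$-dependent exponent $\exp(-2t^2/n)$ and hence fail to deliver the dimension-free bound we need in \cref{sub:zeta_3}.
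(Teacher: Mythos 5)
Your proof is correct and takes essentially the same route as the paper: both verify that $w \mapsto \norm{w}$ is convex and $1$-Lipschitz and then invoke the mean-centered concentration inequality for convex Lipschitz functions of independent $[0,1]$-valued variables (Theorem 6.10 in \citet{boucheron2013concentration}). Your closing caveat—that one must use the mean-centered convex-Lipschitz bound rather than the median-centered Talagrand form or McDiarmid's inequality to get the dimension-free exponent $e^{-t^2/2}$—is exactly the point of care the application in \cref{sub:zeta_3} requires.
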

\begin{proof}
We wish to use Theorem 6.10 of \citet{boucheron2013concentration}, which is a
dimension-free concentration inequality for convex Lipschitz functions of bounded random
variables. To do so, we observe that $w \mapsto \norm{w}$ is Lipschitz with respect to
$\norm{\cdot}$, since \[
\norm{w + a} \le \norm{w} + \norm{a} \quad \norm{w} = \norm{w+a-a} \le \norm{w+a} +
\norm{a} \implies |\norm{w+a} - \norm{w}| \le \norm{a}.
\]
Moreover, trivially $\norm{\lambda w + (1-\lambda) v} \le \lambda\norm{w} + (1-\lambda)
\norm{v}$ for $\lambda \in [0,1]$, and hence $w\mapsto \norm{w}$ is convex. Convexity
implies the convexity required in Theorem 6.10 of \citet{boucheron2013concentration}. This
checks all conditions and the claim follows by applying Theorem 6.10 of
\citet{boucheron2013concentration}.
\end{proof}

\begin{lemma}
\label{lemma:misspec_prior}
Let $f_{H} = f_{H,1}$. Then, for $0 < \rho_n \le \frac{1}{\sqrt{2\pi e^2}}$, \begin{align*}
\int \bk{ \frac{f'_{H_1}(x)}{f_{H_1}(x) \vee \rho_n} - \frac{f'_{H_0}
    (x)}{f_{H_0}(x) \vee \rho_n} }^2 f_{H_0}(x)\,dx 
    \lesssim \pr{(\log 1/\rho_n)^{3} \vee
    \abs{\log h \pr{f_{H_1}, f_{H_0}}}} h^2 \pr{f_{H_1}, f_{H_0}}
\end{align*}
where we define the right-hand side to be zero if $H_1 = H_0$.
\end{lemma}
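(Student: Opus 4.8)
The plan is to bound the $L^2(f_{H_0})$ distance between the two regularized scores by decomposing their difference and relating each piece to the Hellinger distance $h \equiv h(f_{H_1}, f_{H_0})$. First, if $H_1 = H_0$ the integrand vanishes identically and the bound holds by the stated convention, so I assume $H_1 \neq H_0$ and $h > 0$. Writing $b_j(x) = f'_{H_j}(x)/(f_{H_j}(x)\vee\rho_n)$, the natural decomposition isolates the numerator difference from the denominator difference:
\[
b_1 - b_0 = \frac{f'_{H_1} - f'_{H_0}}{f_{H_1}\vee\rho_n} + f'_{H_0}\left(\frac{1}{f_{H_1}\vee\rho_n} - \frac{1}{f_{H_0}\vee\rho_n}\right) \equiv I + II.
\]
Conceptually, away from the regularization floor the score difference is the derivative of the log-ratio $\log(f_{H_1}/f_{H_0})$, whose $L^2(f_{H_0})$ size should be controlled by $h^2$ up to a factor measuring the Gaussian smoothing scale; the role of $\rho_n$ is precisely to cap this derivative in the tails where the densities are exponentially small.

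The first ingredient is a uniform pointwise bound on the regularized scores. Since $f_{H_j} = H_j \star \varphi \le (2\pi)^{-1/2}$ and $\sup_u |u|\varphi(u) = \varphi(1)$, a tail-splitting argument at the threshold $\invphi(\rho_n)$ (where $\varphi(\invphi(\rho_n)) = \rho_n$) yields $|f'_{H_j}(x)| \lesssim \invphi(\rho_n)(f_{H_j}(x)\vee\rho_n) + \invphi(\rho_n)\rho_n$, hence $|b_j(x)| \lesssim \invphi(\rho_n) \lesssim \sqrt{\log(1/\rho_n)}$. This is exactly the control of regularized posterior means already recorded in \cref{thm:jianglemma2} and \cref{lemma:posterior_moments}, which I would invoke directly. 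Using $|f'_{H_0}| \lesssim \invphi(\rho_n)(f_{H_0}\vee\rho_n)$ together with the Lipschitz inequality $|a\vee\rho_n - b\vee\rho_n| \le |a-b|$, the term $II$ is bounded by $\invphi(\rho_n)\,|f_{H_1}-f_{H_0}|/(f_{H_1}\vee\rho_n)$, so that $\int II^2 f_{H_0}\,dx \lesssim \invphi(\rho_n)^2 \int (f_{H_1}-f_{H_0})^2 f_{H_0}/(f_{H_1}\vee\rho_n)^2\,dx$. Factoring $(f_{H_1}-f_{H_0})^2 = (\sqrt{f_{H_1}}-\sqrt{f_{H_0}})^2(\sqrt{f_{H_1}}+\sqrt{f_{H_0}})^2$ exposes the Hellinger integrand, and the remaining density-ratio factor is absorbed by restricting to a region where the densities are bounded below, which is where the polylogarithmic prefactor is generated.

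The heart of the argument, and the main obstacle, is the numerator term $I$, i.e. bounding $\int (f'_{H_1}-f'_{H_0})^2 f_{H_0}/(f_{H_1}\vee\rho_n)^2\,dx$ by $h^2$ up to logarithmic factors. Here I would exploit that $f'_{H_j}$ is the Gaussian convolution of $H_j$ against $\varphi'$, so the smoothing lets me control the difference of the derivative-densities by the difference of the densities themselves, at the cost of a factor set by the effective support scale $\invphi(\rho_n)$. The delicate step is the final $L^2$-to-Hellinger conversion: the pointwise ratio of the integrand to the Hellinger integrand $(\sqrt{f_{H_1}}-\sqrt{f_{H_0}})^2$ involves density ratios $f_{H_0}/f_{H_1}$ that blow up where one density dwarfs the other. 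One must therefore truncate to a region $\{|x| \lesssim L\}$ with $L$ chosen as the larger of $\invphi(\rho_n)$ and $\sqrt{|\log h|}$, show the complementary region contributes negligibly (the densities there being smaller than any fixed power of $\rho_n$ or of $h$), and absorb the ratio on the truncated region into a polynomial in $L^2$.

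Balancing these two regimes is exactly what produces the stated factor $\max\!\big((\log 1/\rho_n)^3, |\log h|\big)$: when $h$ is not too small relative to $\rho_n$, the smoothing scale $\invphi(\rho_n)$ dominates and contributes up to three logarithmic powers, whereas for extremely small $h$ the Hellinger-driven truncation scale $\sqrt{|\log h|}$ takes over. Combining the bounds on $I$ and $II$ then gives the claimed inequality. I expect the truncation bookkeeping in term $I$ — tracking how the density-ratio blow-up interacts with both the $\rho_n$-floor and the Hellinger smallness — to be the most laborious part, while the pointwise score bounds and the treatment of $II$ are routine given the auxiliary lemmas already established.
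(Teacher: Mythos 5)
Your proposal has a genuine, fatal gap, and it sits exactly where you flagged the work as ``routine'': the decomposition $b_1-b_0=I+II$ with the \emph{asymmetric} denominator $f_{H_1}\vee\rho_n$ is lossy by a factor polynomial in $1/\rho_n$, so bounding $I$ and $II$ separately cannot prove the lemma. Concretely, take $H_0=\delta_0$ and $H_1=\delta_M$ with $M$ large, so that $f_{H_1}\le\rho_n$ on $\{|x|\le 2\}$ while $f_{H_0}=\varphi$ there. Then on that set $II=f_{H_0}'\pr{\tfrac{1}{\rho_n}-\tfrac{1}{f_{H_0}}}$, giving $\int II^2 f_{H_0}\,dx\gtrsim \rho_n^{-2}$, and likewise $\int I^2 f_{H_0}\,dx\gtrsim\rho_n^{-2}$; but the right-hand side of the lemma is of order $(\log 1/\rho_n)^3$ here, since $h^2\le 2$ and $|\log h|$ is bounded. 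The two pieces are each of size $\rho_n^{-1}$ pointwise and cancel down to $O(\invphi(\rho_n))$, so the cancellation is essential and no treatment of $II$ alone --- in particular no restriction ``to a region where the densities are bounded below'' --- can succeed: the blow-up region is precisely where $f_{H_0}$ is of constant order and $f_{H_1}$ sits below the floor, and it carries constant $f_{H_0}$-mass. The actual argument (this lemma is quoted by the paper directly from the intermediate steps of Theorem 3 and Lemma 1 of \citet{jiang2009general}; a self-contained generalization appears in the Supplementary Material as \cref{thm:main_misspec_higher_moments}) avoids this by symmetrizing the weight: with $w_*=(f_{H_1}\vee\rho_n+f_{H_0}\vee\rho_n)^{-1}$ one writes $b_1-b_0$ as two ``denominator-mismatch'' terms of the form $b_j\bk{(f_{H_{1-j}}\vee\rho_n)-(f_{H_j}\vee\rho_n)}w_*$ plus $2(f_{H_1}'-f_{H_0}')w_*$; since $(f_{H_0}+f_{H_1})w_*\le 1$ and $f_{H_0}w_*\le 1$, the mismatch terms are bounded by $\invphi^2(\rho_n)\cdot 2h^2$ with no density-ratio blow-up.

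Your treatment of the numerator term is also insufficient, even granting the corrected decomposition. The Gaussian-smoothing idea is the right start --- it is the Plancherel bound $\int(f_{H_1}'-f_{H_0}')^2dx\lesssim a^2h^2+a e^{-a^2}$ --- but converting this \emph{unweighted} bound into the weighted quantity $\int (f_{H_1}'-f_{H_0}')^2 w_*\,dx$ costs a crude factor $\rho_n^{-1}$, and your proposed truncation at $|x|\lesssim L$ cannot remove it: on any bounded interval the ratio $f_{H_0}/(f_{H_1}\vee\rho_n)$ is only bounded by $(\sqrt{2\pi}\rho_n)^{-1}$, not by a polynomial in $L$ (the same two-point-mass example witnesses this). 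What actually kills the $\rho_n^{-1}$ is the integration-by-parts recursion $\Delta_k^2\le\Delta_{k-1}\Delta_{k+1}+2L(\rho_n)\Delta_{k-1}\Delta_k$ iterated across roughly $\log(1/\rho_n)$ derivative orders, so that the crude factor enters only through a $(1/\rho_n)^{1/(c_0+1)}$-type root with $c_0\asymp\log(1/\rho_n)$ and becomes $O(1)$; this amortization (the Case 1/Case 2 argument in \cref{lemma:inductive}, following Lemma 1 of \citet{jiang2009general}) is the heart of the result and is entirely absent from your sketch.
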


\begin{proof}
This claim is an intermediate step of Theorem 3 of \citet{jiang2009general}. In (3.10) in
\citet{jiang2009general}, the left-hand side of this claim is defined as $r^2(f_{H_1},
\rho_n)$. Their subsequent calculation, which involves Lemma 1 of \citet{jiang2009general},
proceeds to bound \[ r(f_{H_1}, \rho_n) \le 2\sqrt{2} e h(f_{H_1}, f_{H_0}) \max
\pr{\invphi^3(\rho_n), \sqrt{2}a} + 2\invphi(\rho_n) \sqrt{2} h(f_{H_1}, f_{H_0}),
\]
for $a^2 = \max\pr{
    \invphi^2(\rho_n) + 1, \abs{\log h^2\pr{f_{H_1}, f_{H_0}}} }$.
    Collecting
     the powers on $h, \log h$, squaring, and using $\invphi(\rho_n) \lesssim
\sqrt{\log (1/\rho_n)}$ proves the claim.
\end{proof}

\newpage

\part{Additional discussions and empirical results}

\section{Additional discussions}

\subsection{Correlated $Y_i$ and correlated $\theta_i$}
\label{rmk:independence}

The assumptions \eqref{eq:gaussian_heteroskedastic_location} and \eqref{eq:eb_sampling}
imply that $(Y_i, \theta_i, \sigma_i)$ are i.i.d. across $i$. In general, we may consider
a joint distribution on $\theta_{1:n} \mid \Sigma$, conditional on which the estimates may
also be correlated $Y_ {1:n} \mid \theta_ {1:n} \sim \Norm(\theta_{1:n}, \Sigma)$ \citep
[see, among others,] [for settings in which the non-independence appears natural] 
{bonhommedenis,muller2022spatial}.
In
principle, given a flexible model for the distribution $\theta_{1:n} \mid \Sigma$, we
would estimate this model from the data $(Y_{1:n}, \Sigma)$, and likewise compute
an estimated posterior. In particular, if $\theta_{1:n} \mid \Sigma$ is described by
\eqref{eq:location_scale} but $Y_ {1:n}$ may be correlated conditional on $\theta_{1:n},
\Sigma$, then \cref{item:close1,item:close2} continue to estimate $\theta_{1:n} \mid
\Sigma$, and one can adapt \cref{item:close3} accordingly to exploit the correlation
 between $Y_i$'s.

Modeling the joint distribution of $(Y_{1:n}, \theta_{1:n}, \Sigma)$ may be difficult.
Interestingly, modeling only the marginal distributions of each coordinate $ (Y_i,
\theta_i, \sigma_i)$---as we have done---turns out to be sufficient for optimal
decision-making, if we restrict the class of decision rules. 
For a compound decision
problem, where $L (\bdelta, \theta_{1:n}) = \frac {1}{n} \sum_{i=1}^n \ell(\delta_i,
\theta_i)$, consider restricting to \emph{separable} decision rules $\delta_i =
\delta_i(Y_i, \sigma_i)$, where we make decisions about $\theta_i$ using only $(Y_i,
\sigma_i)$. In that case, the best decision $\delta_i(\cdot, \cdot)$ minimizes the
posterior risk $\E[\ell(\delta(Y_i, \sigma_i),
\theta_i) \mid Y_i, \sigma_i]$. This result provides some reassurance that ignoring the
 dependence across coordinates is inefficient but may not be harmful, since naive
 decisions are often separable, and the optimal separable rule would dominate it.
 Restricting to separable decision rules can also be motivated by practical or fairness
 considerations: For instance, it may be unfair to base human resource decisions about a
 given teacher on the value-added estimate of another.

We formalize the above paragraph in \cref{lemma:correlation}. Suppose $(Y_ {1:n},
\theta_{1:n},
\Sigma)$ follow some joint distribution $Q_0$ under which
$Y_{1:n} \mid \theta_{1:n}, \Sigma \sim \Norm(\theta_{1:n}, \Sigma)$. We observe $(Y_
{1:n}, \Sigma)$. Consider a compound decision problem with a separable decision rule,
such that \[
L(\bdelta, \theta_{1:n}) = \frac{1}{n} \sum_{i=1}^n \ell(\delta_i(Y_i, \sigma_i),
\theta_i).
\]
Let $\sigma_i^2$ denote the $i$\th{} diagonal element of $\Sigma$. 

\begin{lemma}
\label{lemma:correlation}
Let $(Y_i, \theta_i, \sigma_i^2) \sim H_{0i}$ under $Q_0$. Assume $Q_0$ is such that the
posterior mean
$\E_{H_{0i}} \bk{
  \ell(a, \theta_i) \mid Y_i, \sigma_i 
}$ exists almost surely. 
Consider the decision rule $\bdelta$ that minimizes posterior risk under $\theta_i
\mid Y_i, \sigma_i$ over the action space\[
\delta_i(y, \sigma) \in \argmin_a \E_{H_{0i}} \bk{
  \ell(a, \theta_i) \mid Y_i = y, \sigma_i = \sigma
}.
\]
Then such a decision is optimal for Bayes risk under $Q_0$:\[
\E_{Q_0}[L(\bdelta, \theta_{1:n})] = \min_{\tilde {\bdelta} \text{ separable}} \E_{Q_0}[L
(\tilde {\bdelta}, \theta_{1:n})]. 
\]
\end{lemma}

\begin{proof}
By definition, for any separable $\tilde{\bdelta}$ \begin{align*}
\E_{Q_0}\bk{
  \ell(\delta_i(Y_i, \sigma_i), \theta_i)
} &= \E_{H_{0i}}\bk{
  \ell(\delta_i(Y_i, \sigma_i), \theta_i)
}  \\
&= \E\bk{
  \E\bk{\ell(\delta_i(Y_i, \sigma_i), \theta_i) \mid Y_i, \sigma_i}
}  \tag{Law of iterated expectations}\\
&\le \E\bk{
  \E\bk{\ell(\tilde\delta_i(Y_i, \sigma_i), \theta_i) \mid Y_i, \sigma_i}
} \tag{$\delta_i$ is Bayes rule} \\
&= \E_{Q_0}\bk{
  \ell(\tilde\delta_i(Y_i, \sigma_i), \theta_i)
}.
\end{align*}
Thus by linearity of expectation, \[
\E_{Q_0}[L(\bdelta, \theta_{1:n})] \le \E_{Q_0}[L(\tilde{\bdelta}, \theta_{1:n})].\qedhere
\]
\end{proof}

\subsection{Alternatives to \close}
\label{sub:alt_methods}

Let us turn to a few specific alternative methods that consider failure of precision
independence. We argue that they do not provide a free-lunch improvement over our
assumptions.

\begin{altsq}[Working with $t$-ratios]
We may consider normalizing $\sigma_i$ away by working with
$t$-ratios $
T_i \equiv \frac{Y_i}{\sigma_i} \mid (\sigma_i, \theta_i) \sim \Norm
\pr{\theta_i/\sigma_i, 1}.$ The resulting problem is homoskedastic by construction. It is
natural to consider performing empirical Bayes shrinkage assuming that
$\frac{\theta_i}{\sigma_i} \iid H_0$, and use, say, $\sigma_i \PE_{\hat H_n}\bk{\frac
{\theta_i}{\sigma_i} \mid T_i}$ as an estimator for the posterior mean of $\theta_i$
\citep{jiang2010empirical}. However, without imposing $\theta_i/\sigma_i \indep
\sigma_i$ (which we discuss in \cref{rmk:alts_to_close}), such an approach approximates
the optimal decision rule within a restricted class on a different objective.

Let us restrict decision rules to those of the form $\delta_{i, \text{t-stat}}(Y_i,
\sigma_i) = \sigma_i h (Y_i/\sigma_i)$. The oracle Bayes choice of $h$ is $h^\star(T_i)
=\frac{\E\bk{\sigma_i \theta_i \mid T_i}}{
\E[\sigma_i^2 \mid T_i]}$. However, $h^\star$ is {not} the posterior mean of
$\theta_i/\sigma_i$ given the $t$-ratio $T_i$, unless $\sigma_i^2 \indep
\theta_i/\sigma_i$.
On the other hand, the loss function that does rationalize the posterior mean $h(T_i) =
\E[\theta_i/\sigma_i \mid T_i]$ is the precision-weighted compound loss $L(\bdelta,
\theta_{1:n}) = \frac{1}{n} \sum_{i=1}^n
\sigma_i^{-2} (\delta_i - \theta_i)^2$. Thus, rescaling posterior means on $t$-ratios
achieves optimality for a weighted objective among a restricted class of decision rules
$\delta_{i, \text{t-stat}}$.
\end{altsq}

\begin{altsq}[Variance-stabilizing transforms]

Second, we may consider a variance-stabilizing transform when the underlying micro-data are
Bernoulli and $\theta_i$ is a Bernoulli mean
\citep{efron1975data,10.1214/07-AOAS138}. Specifically, we rely on the asymptotic
 approximation \[
\sqrt{n_i} (Y_i - \theta_i)\underset{n_i\to\infty}{\dto} \Norm(0, \theta_i (1-\theta_i)).
\]
A variance-stabilizing transform can disentangle the dependence: Let
$W_i = 2\arcsin(\sqrt{Y_i})$ and $\omega_i = 2\arcsin(\sqrt{\theta_i})$, and, by the delta
method, \[
\sqrt{n_i} \pr{W_i - \omega_i} \underset{n_i\to\infty}{\dto} \Norm(0, 1). \quad \text{
Thus, approximately, } W_i \mid \omega_i, n_i \sim \Norm\pr{\omega_i, \frac{1}{n_i}}.
\] One might consider an empirical Bayes approach on the resulting $W_i$. Note that $W_i$
 may still violate precision independence, since $\omega_i$ may not be independent of
 $n_i$.
 Moreover, squared error loss on estimating $\omega_i = 2\arcsin(\sqrt{\theta_i})$ is
 different from squared error loss on estimating $\theta_i$. We do not know of any
 guarantees for the loss function on $\theta_i$, $\frac{1}{n} \sum_ {i=1}^n
 (\delta_i - \sin^2 (\omega_i / 2))^2$, when we perform empirical Bayes analysis on
 $\omega_i$.
\end{altsq}

\begin{altsq}[Treating the standard error as estimated]
\label{alt:gukoenker}
Lastly, if the researcher has access to micro-data,
\citet{gu2017empirical} and \citet{fu2020nonparametric} propose empirical Bayes strategies
that treat $\sigma_i$ as noisy as well, in which we know the likelihood of $(Y_i,
\sigma_i)$. This approach allows for dependence between $\theta_i$ and $\sigma_i$ but
assumes independence between $(\theta_i, \sigma_i)$ and some other known 
 parameter. To describe their model, we introduce more notation. Let $Y_
 {ij}, j=1,\ldots,n_i,$ denote the micro-data for population $i$, where, for each $i$, we
 are interested in the mean of $Y_{ij}$. Let $Y_i$ denote their sample mean and $S_i^2$
 denote their sample variance, where $\sigma_i^2 = S_i^2/n_i$. Let $\sigma_{i0}^2$ denote
 the true variance of observations from population $i$.

Both papers work under Gaussian assumptions on the micro-data. This parametric
assumption\footnote{The parametric restriction on the micro-data $Y_{ij}$ can be relaxed
by appealing to the asymptotic distribution of $(Y_i, S_i^2)$---resulting in the Gaussian
likelihood $(Y_i, S_i^2) \mid \btheta_i, \Sigma_i \sim \Norm(\btheta_i, \Sigma_i)$. In
general, however, $\Sigma_i$ also depends on $n_i$ and higher moments of $Y_{ij}$, which
again may not be independent of $\btheta_i$.} on the micro-data---which is stronger than
we require---implies that $Y_i \indep S_i^2 \mid (\sigma_{i0}, \theta_i, n_i)$ with
marginal distributions:
\begin{align*}
Y_i \mid \sigma_{i0}, \theta_i, n_i  \sim \Norm
\pr{\theta_i, \frac{\sigma_{i0}^2}{n_i}
}
\qquad
S_{i}^2 \mid \sigma_{i0}, \theta_i, n_i \sim
\Gammadist\pr{
  \frac{n_i - 1}{2}, \frac{1}{2\sigma_{i0}^2}
}.
  \end{align*}
They then propose empirical Bayes methods treating $\bY_i \equiv(Y_i, S_i^2)$
 as noisy estimates for parameters $\btheta_i \equiv (\theta_i, \sigma_{i0}^2)$. This
 formulation allows $\btheta_i$ to have a flexible distribution, and thus allows for
 dependence between $\theta_i$ and $\sigma_{i0}^2$. However, since the known sample size
 $n_i$ enters the likelihood of $\bY_i$, this approach still assumes that
 $n_i \indep \btheta_i$.
\end{altsq}

\begin{altsq}[$f$-modeling]
\label{alt:f-modeling}
A final alternative is to exploit Tweedie's formula \citep{efron2022exponential}, which
  implies that an estimate of the conditional distribution $Y_i \mid
 \sigma_i$ is all one needs for computing the posterior means
 \citep{brown2009nonparametric,liu2020forecasting,luo2023empirical}. However, conditional
 density estimation is a challenging problem, and most available methods do not exploit
 the restriction that $Y_i
\mid
\sigma_i$ is a Gaussian convolution. 
\end{altsq}

This discussion is not to say that \close{} is necessarily preferable to these alternatives.
It highlights that the possible dependence between $\theta_i$ and $\sigma_i$ cannot be
easily resolved. Existing alternatives compromise
on optimality, use a different loss function, or implicitly assume $\theta_i$ is
independent from components of $\sigma_i^2$ (e.g., $n_i$). Of course, depending on the
empirical context, these may well be reasonable features.

\subsection{Transformation-based rationalization of the location-scale assumption
\eqref{eq:location_scale}}
\label{sub:model_free}

The following lemma shows that, essentially, only affine transforms preserve exponential
family structure on $Y_i$. Exponential family structure is important since
generalizations of Tweedie's formula holds for such distributions \citep
{efron2011tweedie}, and thus they connect posterior means to the marginal distribution of
the data. If some affine transform yields precision independence---so as to allow for
methods
that assume precision independence---then the location-scale assumption \eqref
{eq:location_scale} must hold. 

\begin{lemma}
\label{lemma:transform}
Let $Y_i \mid \theta_i, \sigma_i \sim \Norm(\theta_i, \sigma_i^2)$ and $(\theta_i,
\sigma_i)$ drawn i.i.d. Consider $h (Y_i,
\sigma_i)$ such that $h(\cdot, \sigma_i)$ is differentiable and strictly increasing. Let
$Z_i = h(Y_i, \sigma_i)$. Then the distribution of $Z_i$, parametrized by $\theta_i$, 
is an exponential family \[
p(z \mid \theta_i, \sigma_i) = \exp\pr{
  \eta(\theta_i, \sigma_i) z + A(\eta(\theta_i, \sigma_i), \sigma_i) 
} f_0(z, \sigma_i) \numberthis \label{eq:expofam_full}
\]
only if $h(Y_i, \sigma_i) = a(\sigma_i) + b(\sigma_i) Y_i$. 

Moreover, suppose $\theta_i \mid \sigma_i$ has finite first and second moments. When 
\eqref{eq:expofam_full} holds for some $h(Y_i, \sigma_i) = a(\sigma_i)+b(\sigma_i) Y_i$,
the distribution of $Z_i = h(Y_i, \sigma_i)$ is of the form \[
Z_i \mid \theta_i, \sigma_i \sim \Norm(\tau(\theta_i, \sigma_i), \nu^2(\sigma_i)). 
\]
The distribution of $\tau(\theta_i, \sigma_i) \mid \sigma_i$ does not depend on $\sigma_i$
only if \eqref{eq:location_scale} holds.
\end{lemma}

\begin{proof}
The first part of the statement follows immediately from \cref{lemma:transform2}. For the
second part, it is easy to see that $
\tau_i \equiv \tau(\theta_i, \sigma_i) = \theta_i b(\sigma_i) + a(\sigma_i)
$.
If $\tau(\theta_i, \sigma_i) \mid \sigma_i$ does not depend on $\sigma_i$, then $
\E[\tau \mid \sigma_i], \var(\tau\mid\sigma_i)
$
are constant in $\sigma_i$. This means that $b(\sigma_i) = 1/\sqrt{\var(\theta_i \mid
\sigma_i)}$ and $a(\sigma_i) = -\E[\theta_i \mid \sigma_i] b(\sigma_i)$. Rearrange, we
have \[
\theta_i = s_0(\sigma_i) \tau_i + m_0(\sigma_i) \quad \tau_i \mid \sigma_i \iid G_0,
\]
which is the condition \eqref{eq:location_scale}.
\end{proof}

\begin{lemma} 
\label{lemma:transform2}
Consider a family of densities $\br{\Norm
 (\theta, \sigma^2): \theta \in \R}$. Let $h(\cdot): \R \to \R$ be a strictly increasing
 and
 differentiable function. Let $Z = h(Y)$ where $Y
\sim \Norm
(\theta, \sigma^2)$. The corresponding family of densities for $Z$ is an exponential
family: \[
p_Z(z) = f_0(z; \sigma) \exp\pr{
  z\eta(\theta, \sigma) + A(\eta; \sigma)
} \numberthis \label{eq:expofam}
\]
for some canonical parameter $\eta(\theta; \sigma)$
if and only if $h(\cdot)$ is affine. 
\end{lemma}

\begin{proof}
The ``if'' part is immediate. We will focus on the ``only if'' part. By the change of
variables formula, the density of $Z$ is \[
p_Z(z) = \frac{1}{\sqrt{2\pi} \sigma}\exp\pr{
  -\frac{1}{2} \frac{h^{-1}(z)^2}{\sigma^2} + h^{-1}(z) \frac{\theta}{\sigma^2} - 
  \frac{\theta^2}
  {2 \sigma^2}
} \frac{dh^{-1}(z)}{dz}.
\]
The log-likelihood ratio of this family is \[
\log \frac{p_Z(z \mid \theta_1)}{p_Z(z \mid \theta_2)} = {
  h^{-1}(z) \frac{\theta_1 - \theta_2}{\sigma^2} - \frac{1}{2\sigma^2} (\theta_1^2 -
  \theta_2^2)
}.
\]
For an exponential family \eqref{eq:expofam}, the log-likelihood ratio is $
z \pr{\eta_1 - \eta_2} + A(\eta_1; \sigma) - A
(\eta_2;\sigma),
$
where $\eta_j = \eta(\theta_j; \sigma)$. Equating the two and differentiating in $z$, we
have that \[
\frac{dh^{-1}(z)}{dz} \frac{\theta_1 - \theta_2}{\sigma^2} = \eta_1 - \eta_2
\]
for all $\theta_1, \theta_2 \in \R$. Since the right-hand side is free of $z$, we conclude
that $\frac{dh^{-1}(z)}{dz}$ must be a constant. Thus $h^{-1}$---and hence $h$---is
affine.
\end{proof}

\section{Additional empirical exercises}

\subsection{Positivity of $s_0(\cdot)$ in the Opportunity Atlas data}
\label{asub:variance_right_tail}

\begin{figure}[htb]
  \centering
  \includegraphics[width=0.8\textwidth]{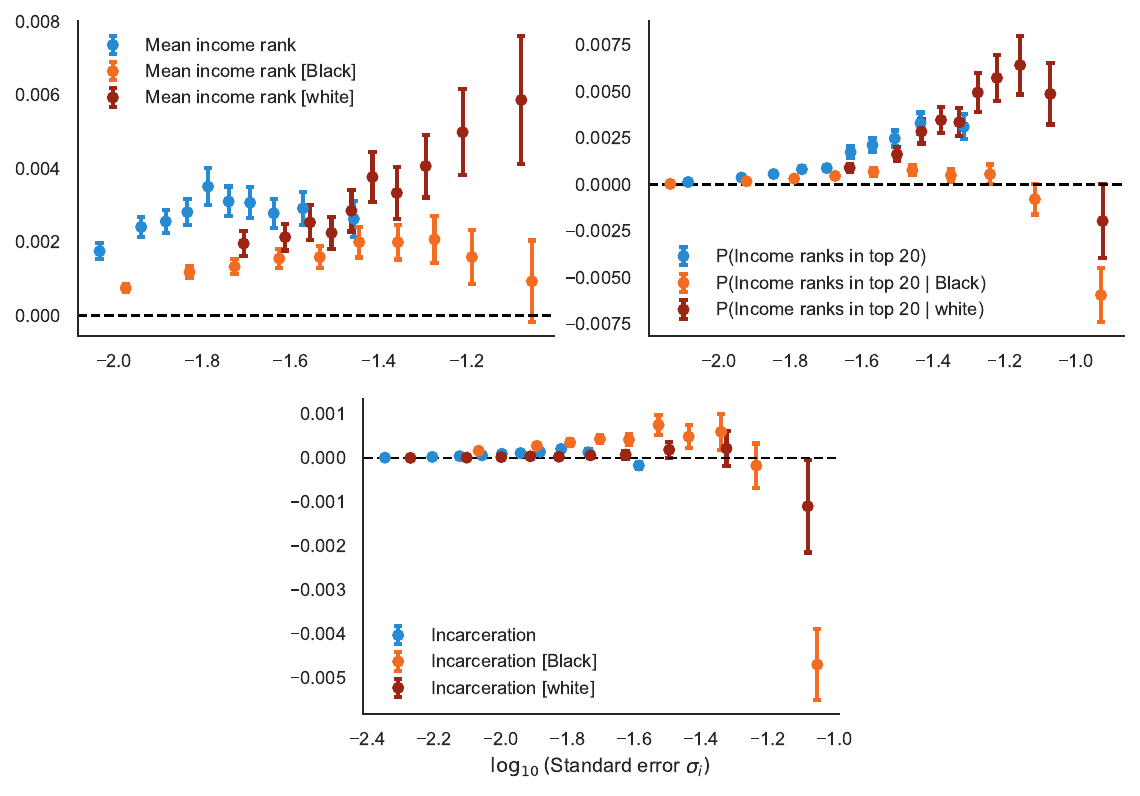}
  \caption{Estimated conditional variance $s_0^2(\sigma)$, binned into deciles, with 95\% uniform confidence intervals shown.}
  \label{fig:variance_right_tail}
\end{figure}

In the Opportunity Atlas data, we often observe that the estimated conditional variance is
negative: $\hat s_0^2 < 0$. To test if this is due to sampling variation or underdispersion
of the Opportunity Atlas estimates relative to the estimated standard error, we consider
the following upward-biased estimator of $s_0^2(\sigma_i)$. Without loss, let us sort the
$Y_i, \sigma_i$ by $\sigma_i$, where $\sigma_1 \le \cdots \le \sigma_n$. Let $S_i = \frac
{1}{2} \bk{(Y_{i+1} - Y_i)^2 -(\sigma_i^2 + \sigma_{i+1}^2)}$. Note that \begin{align*}
\E[S_i \mid \sigma_{1:n}] &= \frac{1}{2}\E[(\theta_{i+1} - \theta_{i})^2 \mid \sigma_
 {1:n}] \\
 &= \frac{s_0^2(\sigma_{i+1}) + s_0^2(\sigma_{i})}{2} + \frac{1}{2} (m_0(\sigma_
 {i+1}) - m_0(\sigma_i))^2 \ge \frac{s_0^2(\sigma_{i+1}) + s_0^2(\sigma_{i})}{2}.
\end{align*} Hence $S_i$ is an overestimate of the successive averages of $s_0
 (\sigma)$. \Cref{fig:variance_right_tail} plot the estimated conditional expectation of
 $S_i$ given $\sigma_i$, using a sample of $(S_1, S_3, S_5, \ldots)$ so that the $S_i$'s
 used are mutually independent. We see in \cref{fig:variance_right_tail} that for many
 measures of economic mobility, we can reject $\E[S_i \mid \sigma_i] \ge 0$, indicating
 some underdispersion in the data.

\subsection{Simulation exercises setup}

\label{asub:sim_setup}

This section describes the details of the simulation exercises in \cref{sec:empirical}. 

We first consider details on data preprocessing:
\begin{enumerate}[wide]
  \item The data used is the publicly available tract-level data from 
  \citet{chetty2018opportunity}.
  \item Limit to Census tracts in the 20 largest Commuting Zones, ranked by the number of tracts in the dataset. They are: Phoenix, San Francisco, Los
Angeles, Bridgeport, Washington DC, Miami, Tampa, Atlanta, Chicago, Boston, Detroit,
Minneapolis, Philadelphia, Newark, New York, Cleveland, Pittsburgh, Houston, Dallas, and 
Seattle.
  \item For a given outcome variable, we truncate to tracts with $\sigma_i^2$ in the
  bottom 99.5\% with all available covariates in \cref{fn:covariates}. \Cref
  {tab:sample_sizes} displays the sample sizes used.
\end{enumerate}

\begin{table}[tb]
  \caption{Number of tracts included for each outcome variable}
  \label{tab:sample_sizes}
  \centering

\footnotesize
  \begin{tabular}{lr}
\toprule
{} &  Sample size \\
\midrule
Mean income rank                       &        10056 \\
Mean income rank [white male]          &         7521 \\
Mean income rank [Black male]          &         7547 \\
Mean income rank [Black]               &        10056 \\
Mean income rank [white]               &         8138 \\
Incarceration [Black male]             &         6634 \\
Incarceration [white male]             &         7308 \\
Incarceration [Black]                  &         9205 \\
Incarceration [white]                  &         7968 \\
Incarceration                          &        10056 \\
P(Income ranks in top 20 $|$ Black male) &         7547 \\
P(Income ranks in top 20 $|$ white male) &         7521 \\
P(Income ranks in top 20 $|$ Black)      &        10056 \\
P(Income ranks in top 20 $|$ white)      &         8138 \\
P(Income ranks in top 20)              &        10056 \\
\bottomrule
\end{tabular}

\end{table}

The covariates used are listed in \cref{fn:covariates}. The ``number of children''
variables are included in both levels and logs. This set of covariates is not precisely the
same as what is used in \citet{bergman2019creating}. \citet{bergman2019creating}
additionally use economic mobility estimates for a later birth cohort, which are not
included in the publicly released version of the Opportunity Atlas. The ``number of
children'' variables are used by \citep{chetty2018opportunity} as a population weighting
variable; they contain some information on the implicit micro-data sample sizes $n_i$.

\begin{table}[tb]
  \caption{Covariates and corresponding variable labels from \citet{chetty2018opportunity}}
  \label{fn:covariates}
  \centering
\scriptsize
\begin{tabularx}{\textwidth}{YY}
\toprule
\textbf{Covariate description} & \textbf{Variable label} \\ \midrule
Poverty rate in 2010 & \texttt{poor\_share2010} \\ \hline
Share of Black individuals in 2010 & \texttt{share\_black2010} \\ \hline
Mean household income in 2000 & \texttt{hhinc\_mean2000} \\ \hline
Log wage growth for high school graduates & \texttt{ln\_wage\_growth\_hs\_grad} \\ \hline
Fraction with college or post-graduate degrees in 2010 & \texttt{frac\_coll\_plus2010} \\ \hline
Mean parent family income rank & \texttt{par\_rank\_pooled\_pooled\_mean} \\ \hline
Mean parent family income rank for Black individuals & \texttt{par\_rank\_black\_pooled\_mean} \\ \hline
Number of all children under 18 with parents whose household income is below median in
2000 & \texttt{kid\_pooled\_pooled\_blw\_p50\_n} \\ \hline
Number of Black children under 18 with parents whose household income is below median in
2000 & \texttt{kid\_black\_pooled\_blw\_p50\_n} \\ \bottomrule
\end{tabularx}

\begin{proof}[Notes]
This table links the covariates to their codebook labels in \citet{chetty2018opportunity}.
See their
\href{https://opportunityinsights.org/wp-content/uploads/2019/07/Codebook-for-Table-9.pdf}
 {Codebook for Table 9} and 
\href{https://opportunityinsights.org/wp-content/uploads/2019/07/Codebook-for-Table-4.pdf}
 {Codebook for Table 4} for the corresponding precise definitions of each covariate
 \citep{opportunity_insights_data}.
\end{proof}

\end{table}

We now describe the data-generating process for the calibrated simulation exercise. For a
given outcome variable,

\begin{enumerate}[label=(CS-\arabic*),wide]
  \item Let $\tilde Y_i$ denote the raw estimates. 
Residualize $\tilde Y_i$ against some covariates $X_i$ to obtain $\beta$ and residuals $Y_i$. 
\item Estimate the conditional moments $m_0, s_0$ on $(Y_i,\sigma_i)$ via local linear
  regression, described in \cref{sec:nuisance_estimation}.

  \item \label{item:npmle-within-bins} Partition $\sigma$ into vingtiles. Within each
   vingtile $j$, estimate an \npmle{} $G_j$ over the data in that vingtile \[\pr{\frac{Y_i
   - m_0
   (\sigma_i)}{s_0(\sigma_i)}, \frac{\sigma_i}{s_0(\sigma_i)}}.\]
   \item  Normalize $G_j$ to
   have zero mean and unit variance by shifting its support and dividing the support by
   its standard deviation. 
   
   \item \label{item:npmle-within-bins-final} Sample $\tau_i^* \mid \sigma_i \sim G_ {j}$
   if observation $i$
   falls within vingtile $j$.

  \item Let $\vartheta_i^* = s_0(\sigma_i) \tau_i^* + m_0(\sigma_i) + \beta ' X_i$ and
  let $\tilde Y_i^* \mid \vartheta_i^*, \sigma_i \sim \Norm(\vartheta_i^*, \sigma_i^2)$.
\end{enumerate}

  The estimated
  $\beta, m_0, s_0$ will serve as
  the basis for the true data-generating process in the simulation, and as a result we do
  not denote it with hats. 
\Cref {fig:example_sim} shows an overlay
of real and simulated data for one of the variables we consider. Visually, at least,  the
simulated data resemble the real estimates.

\begin{figure}[htb]
  \centering
  \includegraphics[width=0.8\textwidth]{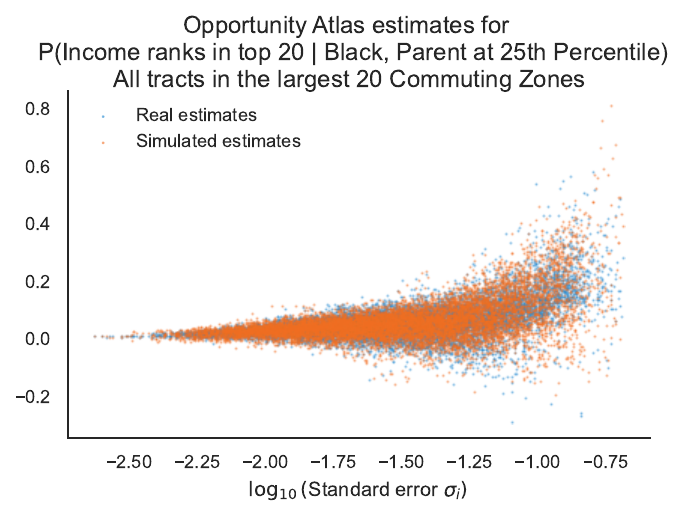}
  \caption{A draw of real vs. simulated data for estimates of \topprob{} for Black
  individuals}
  \label{fig:example_sim}
\end{figure}

Finally, we describe details for the policy exercise. Fix a given outcome variable:
\begin{enumerate}[wide,label=(CB-\arabic*)]
  \item Let $\tilde Y_i, \sigma_i$ denote the observed data. 
  \item Define $\tilde Y_{i,1} = Y_i + \frac{1}{3} \sigma_i W_i$ and $\tilde Y_{i,2} =
  Y_i - 3 \sigma_i W_i$ as the coupled bootstrap draws, for $\omega= 1/9$.
  Correspondingly, let $\sigma_{i,1} = \sqrt{10/9} \sigma_i$ and $\sigma_{i,2} = 
  \sqrt{10} \sigma_i$.
\end{enumerate}
For the policy exercise, we separate the units into Commuting Zones (CZs). For each CZ
separately, we treat the data as $(\tilde Y_{i,1}, \sigma_{i,1}, X_i)$ within that CZ. We
compute decision rules by only using data within the CZ (including the residualization by
covariates). For the selection exercise, we select by top third within each CZ.

\subsection{Different simulation setup}
\label{asub:weibull}

\begin{figure}[htb]
  \centering
  \includegraphics[width=\textwidth]{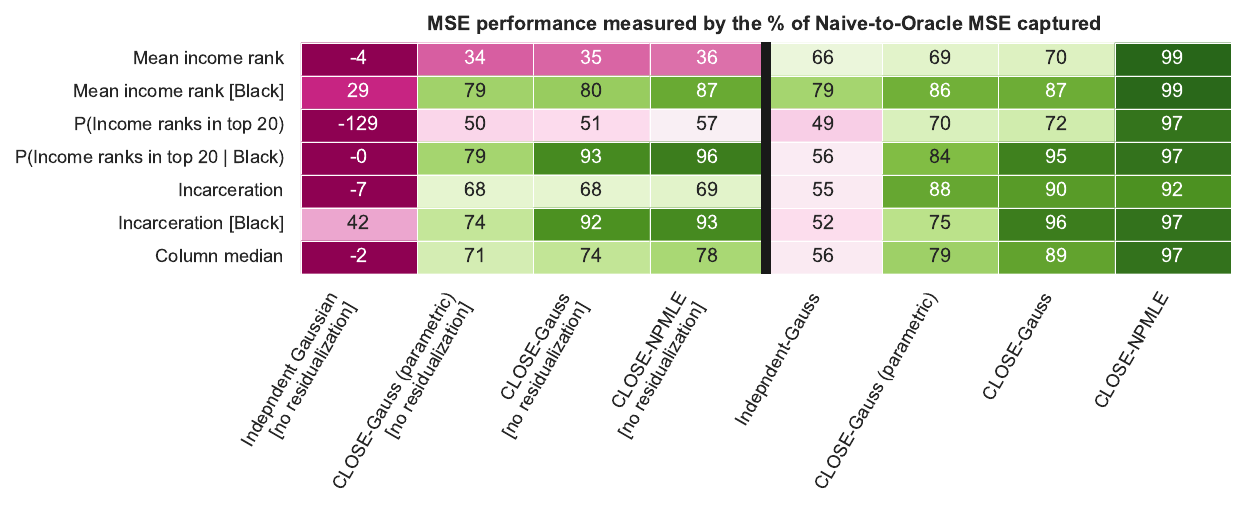}
  \caption{Analogue of \cref{fig:mse_table} for the data-generating process in \cref{asub:weibull}. Here the results average over 100 replications.}
  \label{fig:mse_calibrated_weibull}
\end{figure}

We have also conducted a Monte Carlo exercise where we replace
\cref{item:npmle-within-bins}--\cref{item:npmle-within-bins-final}
with the following step:
\begin{itemize}[wide]
  \item For each $\sigma_i$, let \[
\alpha_i = \frac{1}{2} + \frac{1}{2} \frac{m_0(\sigma_i) - \min_i m_0(\sigma_i)}{\max_i
m_0
(\sigma_i) - \min_i
(\sigma_i)}
\in [1/2,1]
  \]
  We sample $\tau_i^* \mid \sigma_i$ as a scaled and translated $\mathrm{Weibull}$
  distribution with shape $\alpha_i$. The scaling and translation ensures that $\tau_i
  \mid \sigma_i$ has mean zero and variance one. Because we choose the Weibull
  distribution, the shape parameter $\alpha_i$ corresponds exactly to $\alpha$ in
  \cref{as:moments}. Our choices of $\alpha_i$ implies that $\tau_i \mid
  \sigma_i$ has thicker tails than exponential and does not have a moment-generating
  function.
\end{itemize} The Weibull distribution has thicker tails and is skewed, and as
 a result, \npmle-based methods tend to greatly outperform methods based on
 assuming Gaussian priors. \Cref{fig:mse_calibrated_weibull} shows the
 analogue of \cref{fig:mse_table} for this data-generating process. Indeed,
 we see that \indepnpmle{} improves over \indepgauss{} considerably, and
 similarly for \closenpmle{} and \closegauss.

\subsection{Treating $\sigma_i$ symmetrically with covariates}
\label{asub:covariate_additive}

Here, we consider \close{} with covariates \eqref{eq:location-scale-covariates}. In
principle, we could model $m_0(\sigma_i, X_i), s_0(\sigma_i, X_i)$ fully
nonparametrically. However, such a model may be difficult to estimate given there are 9
additional covariates. Alternatively, we consider an additive model for the
covariates:\footnote{We thank the editor, Michal Koles\'ar, for this suggestion.} \[
m_0(\sigma_i, X_i) = g_0(\sigma_i) + \sum_k g_k (X_{ik}) \quad s_0^2(\sigma_i, X_i) = h_0
(\sigma_i) + \sum_k h_k (X_{ik}). \numberthis \label{eq:additive_model}
\]
For each covariate and for $\sigma$, the functions $g_k(\cdot), h_k(\cdot)$ are
approximated with cubic splines with knots at the 25\th{}, 50\th{}, and 75\th{}
percentiles of the covariate.

We estimate $m_0(\cdot)$ by least squares projection of $Y_i$ onto the basis functions in
$(\sigma_i, X_i)$. We estimate $s_0^2(\cdot)$ by least squares projection of $(Y_i -
\hat m(\sigma_i, X_i))^2 - \sigma_i^2$ onto the basis functions in $(\sigma_i, X_i)$. We
truncate fitted values for $s_0^2$ at zero. In practice, a substantial portion of them are
negative (cf. \cref{asub:variance_right_tail}). Additionally, we consider a similar
procedure, but one in which $g_0(\cdot), h_0(\cdot)$ are constant. We think of this
procedure as \indepgauss{} with flexible controls for the covariates.

\begin{figure}[tb]
  \centering
  \includegraphics[width=\textwidth]{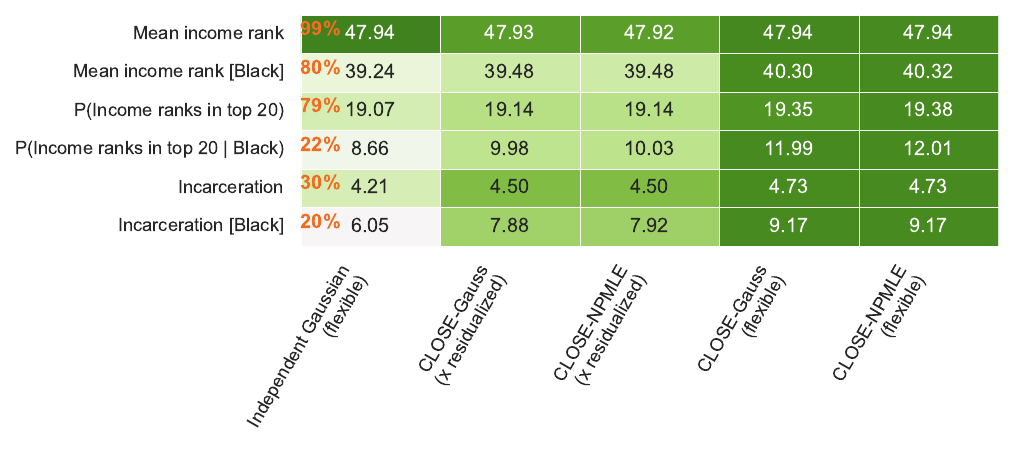}

  \begin{proof}[Notes] 

 ``Flexible'' denotes procedures that use the additive model \eqref{eq:additive_model} for
   $X_i$. ``$x$ residualized'' refers to procedures that, like those in the main text,
   residualize against the covariates in a pre-processing step.    The percentage in
   orange shows the proportion of units with strictly positive estimated conditional
   variance $\hat s^2(\sigma_i, X_i)$ for the flexible procedures.
  \end{proof}
  \caption{Analogue of \cref{tab:selection_validation_within_cz} for the setup in 
  \cref{asub:covariate_additive}.}
  \label{fig:rank_table_covariate_additive}
\end{figure}

We repeat the coupled bootstrap-based ranking exercise in
\cref{tab:selection_validation_within_cz}. However, here, due to considerably more
flexible procedures, we no longer consider a within-Commuting Zone version of the
exercise. Rather, we estimate everything on all the tracts, and select the top third over
all tracts. 

\Cref{fig:rank_table_covariate_additive} plots the results of this exercise, analogous to
\cref{tab:selection_validation_within_cz}. We find that, compared to the
residualize-then-shrink approach in the main text, modeling $X_i$'s additionally well can
have large benefits. We caution that, since much of the conditional signal variance
$s_0^2$ appears to be quite small once we use \eqref{eq:additive_model} for the
covariates, empirical Bayes posterior means are not very different from using $\hat
m_0(\sigma_i, X_i)$ directly. Importantly, for this application, including $\sigma_i$ in
modeling $m_0, s_0$ remains important. The approach that simply models $X_i$'s flexibly
for $m_0, s_0$ but omits $\sigma_i$ does not outperform the \close{} approaches in the
main text.

\printbibliography
\stopcontents
\end{appendices}
\end{refsection}

\FloatBarrier
\newpage

\small
\newgeometry{margin=1in}
\onehalfspacing
\begin{refsection}

\begin{appendices}
\makeatletter

\def\@seccntformat#1{\@ifundefined{#1@cntformat}%
   {\csname the#1\endcsname\space}%
   {\csname #1@cntformat\endcsname}}%
\newcommand\section@cntformat{\thesection.\space} %
\makeatother
\renewcommand{\thesection}{SM\arabic{section}}
\counterwithin{equation}{section}
\counterwithin{figure}{section}
\counterwithin{table}{section}

\begin{center}
\textbf{\large Supplementary Material to \\ ``\newtitle''}

Jiafeng Chen

\today
\end{center}

\DoToC

\newpage

\part{Important preliminary results for \cref{cor:maintext}}
\section{An oracle inequality for the likelihood}
\label{sec:likelihood}

Recall that for fixed sequences $\Delta_n, M_n$,  we define $A_n = \br
{\Dinfty \le \Delta_n, \bar Z_n \le M_n}$ in \eqref{eq:barzn_def}. This section bounds 
\[
\P\bk{A_n, \sub_n(\hat G_n) \gtrsim_{\Hyperparams} \epsilon_n},
\]
where we recall $\sub_n$ from \eqref{eq:likelihood_suboptimality}, for some rate function
$\epsilon_n$.  It is convenient to state a set of high-level assumptions on the rates
$\Delta_n, M_n$. These are satisfied for the choice \eqref{eq:DMrate}.

\begin{as}
\label{as:Delta_M_rate}
Assume that
\begin{enumerate*}
    \item $\frac{1}{\sqrt n} \leH \Delta_n \leH
    \frac{1}{M_n^3} \leH 1$, and
    \item $\sqrt{\log n} \leH M_n$.
\end{enumerate*}
\end{as}
Our main result in this section is the following oracle inequality.
\begin{theorem}
\label{thm:suboptimality}
Let $\Dinfty = \max(\norm{\hatm - m_0}_\infty, \norm{\hats - s_0}_{\infty})$ and
$\bar Z_n = \max_{i\in [n]} |Z_i| \vee 1$. Suppose $\hat G_n$ satisfies \cref{as:npmle}.
Under \cref{as:moments,as:holder,as:variance_bounds,as:Delta_M_rate}, there exists
constants $C_{1,\Hyperparams}, C_{2, \Hyperparams} > 0$ such that the following tail bound
holds: Let \begin{align*}
\epsilon_n &= M_n \sqrt{\log n} \Delta_n \frac{1}{n}\sum_{i=1}^n h
\pr{f_
{\hat G_n, \nu_i}, f_{G_0, \nu_i}}
 + \Delta_n M_n \sqrt{\log n} e^{-C_{2, \Hyperparams} M_n^\alpha} +
 \Delta_n^2 M_n^2 \log n + M_n^2
 \frac{\Delta_n^{1-\frac{1}{2p}}}{\sqrt{n}}. \numberthis
 \label{eq:general_rate_sub}
\end{align*}
Then, \[
\P\bk{
    \bar Z_n \le M_n, \Dinfty \le \Delta_n, \sub_n(\hat G_n) > C_{1, \Hyperparams}
    \epsilon_n
} \le \frac{9}{n}.
\]
\end{theorem}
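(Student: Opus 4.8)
The goal is to bound the signed excess log-likelihood $S \equiv \frac{1}{n}\sum_{i=1}^n \psi_i(Z_i,\eta_0,\hat G_n) - \frac{1}{n}\sum_{i=1}^n \psi_i(Z_i,\eta_0,G_0)$ from above on the event $\{\bar Z_n \le M_n,\ \Dinfty \le \Delta_n\}$, since $\sub_n(\hat G_n)=(S)_+$ and the claim is trivial where $S\le 0$. The only information we have about $\hat G_n$ is that it approximately maximizes the \emph{estimated} objective $\ell(\hateta;G)\equiv\frac{1}{n}\sum_i\psi_i(Z_i,\hateta,G)$ (\cref{as:npmle}), whereas $S=\ell(\eta_0;\hat G_n)-\ell(\eta_0;G_0)$ is evaluated at the true $\eta_0$, at which $G_0$ is optimal in conditional expectation. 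The plan is therefore to transfer optimality from $\hateta$ to $\eta_0$ through a Taylor expansion of $\eta\mapsto\ell(\eta;G)$ about $\eta_0$. Writing $\Delta^G \equiv \ell(\hateta;G)-\ell(\eta_0;G)$, the near-optimality $\ell(\hateta;\hat G_n)\ge \ell(\hateta;G_0)-\kappa_n$ yields the exact identity $S = \bk{\ell(\hateta;\hat G_n)-\ell(\hateta;G_0)} + \Delta^{G_0}-\Delta^{\hat G_n}$, so it suffices to bound the transfer discrepancy $\Delta^{G_0}-\Delta^{\hat G_n}$ two-sidedly and to control the excess likelihood of $\hat G_n$ over $G_0$ at $\hateta$.

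First I would expand $\Delta^G$ to first order: $\Delta^G = \frac{1}{n}\sum_i \frac{\partial \psi_i}{\partial\eta}(Z_i,\eta_0,G)\,(\hateta_i-\eta_{0i}) + \tfrac12\,(\text{remainder})$. The quadratic remainder is controlled on $\{\bar Z_n\le M_n\}$ using the second-derivative bounds (as in the bound for $\xi_1$ via \cref{lemma:secondderivatives}) together with $\norm{\hateta-\eta_0}_\infty\le\Delta_n$, producing the term $\Delta_n^2 M_n^2\log n$. For the first-order term, the essential structure is the score identity $\E\bk{\frac{\partial\psi_i}{\partial\eta}(Z,\eta_0,G_0)}=0$: subtracting the $G_0$ score from the $\hat G_n$ score and exploiting that the difference is small---quantitatively Lipschitz in Hellinger distance---splits the contribution into a cross term of order $\Delta_n\cdot\frac{1}{n}\sum_i h(f_{\hat G_n,\nu_i},f_{G_0,\nu_i})$ (the first term of $\epsilon_n$) and a centered fluctuation. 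The multiplicative factor $M_n\sqrt{\log n}$ arises because the score and its $G$-Lipschitz modulus scale with the data range $M_n$ and with $\invphi(\rho_n)\rateeq\sqrt{\log n}$ from \cref{thm:lb}.

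The centered fluctuation $\frac{1}{n}\sum_i \frac{\partial\psi_i}{\partial\eta}(Z_i,\eta_0,G_0)\,(\hateta_i-\eta_{0i})$ must be controlled uniformly, because $\hateta$ is data-dependent and ranges over the entropy-controlled class $\mathcal V$ of \cref{as:holder}(2). Here I would run a chaining and maximal-inequality argument: the sub-$\alpha$ tails of $Z_i$ (\cref{as:moments}, \cref{lemma:tail_bound_max}) bound the envelope and, after truncation at $M_n$, leave the exponentially small remainder $\Delta_n M_n\sqrt{\log n}\,e^{-C_{2}M_n^\alpha}$; balancing the $\epsilon^{-1/p}$ metric entropy of $\mathcal V$ against the $\Delta_n$-scale localization of $\hateta_i-\eta_{0i}$ produces the term $M_n^2\Delta_n^{1-1/(2p)}/\sqrt n$. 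The excess likelihood of $\hat G_n$ over $G_0$ at $\hateta$ is handled by the standard \npmle{} empirical-process bound applied to the estimated data, anchored so that the $O(\Delta_n)$ discrepancy between the estimated-data law and $f_{G_0,\boldsymbol\cdot}$ is absorbed into the same four terms. Each concentration and each empirical-process step is arranged to fail with probability at most $1/n$, and a union bound over the finitely many steps gives the stated $9/n$.

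The main obstacle is the joint treatment of the data-dependence of $\hateta$ and the extraction of the Hellinger factor in the cross term. One cannot treat $\hateta_i-\eta_{0i}$ as fixed weights, so the first-order score term needs a genuine empirical-process bound over $\mathcal V$; and a naive bound $\absauto{\hateta_i-\eta_{0i}}\le\Delta_n$ on the cross term would only deliver the suboptimal regret rate $n^{-p/(2p+1)}$. Obtaining the sharp rate requires the cross term to be proportional to the \emph{Hellinger} closeness $\frac1n\sum_i h(f_{\hat G_n,\nu_i},f_{G_0,\nu_i})$ rather than to $\Delta_n$ alone, so that it can later be absorbed against the population lower bound $\E[S]\le -c\,\barh^2(\fs{\hat G_n},\fs{G_0})$ used in \cref{cor:hellinger_large_dev}. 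Pinning down the exponent $1-1/(2p)$ from the $\epsilon^{-1/p}$ entropy, while simultaneously preserving the Hellinger factorization, is the delicate part of the computation.
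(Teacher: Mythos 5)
Your mechanics for the transfer terms coincide almost exactly with the paper's own proof: the exact identity through $\Delta^G$, the second-order Taylor expansion in $\eta$ yielding $\Delta_n^2 M_n^2\log n$, the score identity $\E[\partial\psi_i/\partial\eta\,|_{\eta_0,G_0}]=0$ combined with a Hellinger-Lipschitz bound on the \emph{mean} of the score to extract the cross term $\Delta_n M_n\sqrt{\log n}\cdot\frac1n\sum_i h(f_{\hat G_n,\nu_i},f_{G_0,\nu_i})$, the chaining over the entropy-$\epsilon^{-1/p}$ class giving $M_n^2\Delta_n^{1-1/(2p)}/\sqrt n$, the truncation at $M_n$ giving the $e^{-CM_n^\alpha}$ remainder, and the union bound over finitely many failure events producing $9/n$ — all of these appear in the paper as the terms $U_{1k},U_{2k},U_{3k},R_1,R_2$. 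The gap is in the \emph{direction} of the bound you set out to prove, and in the one extra term this forces on you. Writing $\ell(\eta;G)=\frac1n\sum_{i=1}^n\psi_i(Z_i,\eta,G)$, the paper's argument only ever needs the \emph{lower} bound $\ell(\eta_0;\hat G_n)-\ell(\eta_0;G_0)\ge -(\kappa_n+|\Delta^{G_0}|+|\Delta^{\hat G_n}|)$, i.e., it controls the likelihood \emph{deficiency} of $\hat G_n$ at $\eta_0$. That is also the only reading consistent with how $\sub_n$ is used downstream: in the proof of \cref{thm:large-deviation}, ``$\sub_n(H)$ small'' is converted into the likelihood-ratio \emph{lower} bound $\prod_i f_{H,\nu_i}(Z_i)/f_{G_0,\nu_i}(Z_i)\ge e^{-nC^*(\cdot)}$, which is valid only if $\sub_n(H)=(\ell(\eta_0;G_0)-\ell(\eta_0;H))_+$. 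The printed display \eqref{eq:likelihood_suboptimality} has the two terms transposed, and you took it at face value.

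Because you instead aim to \emph{upper}-bound $S=\ell(\eta_0;\hat G_n)-\ell(\eta_0;G_0)$, your identity leaves you with the term $\ell(\hateta;\hat G_n)-\ell(\hateta;G_0)$, for which \cref{as:npmle} supplies only the one-sided bound $\ge-\kappa_n$ and no upper bound: this is the overfitting of the \npmle{}, and nothing in the hypotheses of \cref{thm:suboptimality} controls it. Dispatching it with ``the standard \npmle{} empirical-process bound applied to the estimated data'' is the genuine gap: in this heteroskedastic, estimated-nuisance setting that bound is essentially the content of \cref{thm:large-deviation} (the Hellinger large-deviation machinery), which in the paper's architecture \emph{consumes} the present oracle inequality via \cref{cor:suboptimality} and \cref{cor:hellinger_large_dev} rather than precedes it — so the appeal is circular in spirit — and even taken as a black box it would need to be re-proved for data whose true conditional law is not $f_{G_0,\hat\nu_i}$ but an $\eta$-perturbation of it, which is exactly the ``anchoring'' step you wave off in one clause. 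The repair is to drop the upper bound on $S$ entirely: with your own two-sided transfer-discrepancy bounds, $S\ge-\kappa_n-|\Delta^{G_0}|-|\Delta^{\hat G_n}|$, which is the deficiency bound the theorem (correctly read) asserts, the overfitting term never arises, and the rest of your outline is the paper's proof.
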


The following corollary plugs in the rates \eqref{eq:DMrate} for $\Delta_n, M_n$ and
verifies that they satisfy \cref{as:Delta_M_rate}.
\begin{cor}
\label{cor:suboptimality}
For $\beta \ge 0$, suppose $\Delta_n, M_n$ are of the form \eqref{eq:DMrate}. Then there
exists a $C_\Hyperparams^*$ such that the following tail bound holds. Recall the average
Hellinger distance $\barh$ from
  \eqref{eq:avg_hellinger}. Suppose $\hat G_n$ satisfies \cref{as:npmle}. Under 
  \cref{as:holder,as:moments,as:variance_bounds}, define
  $\varepsilon_n$
  as:
\[
\varepsilon_n = n^{-\frac{p}{2p+1}} (\log n)^{\frac{2 + \alpha}{2 \alpha} +
\beta} \barh
\pr{\fs{\hat G_n}, \fs{G_0}} + n^{-\frac{2p}{2p+1}} (\log n)^{\frac{2+\alpha}{\alpha} + 2 \beta
},
\numberthis
 \label{eq:specific_rate_sub}
  \]
  we have that, $
\P\bk{
    A_n, \sub_n(\hat G_n) > C_{\Hyperparams}^*
    \varepsilon_n
} \le \frac{9}{n}.
$
The constants  in $\Delta_n, M_n$ affects the conclusion of the statement only through
affecting the constant $C_{\Hyperparams}^*$.
\end{cor}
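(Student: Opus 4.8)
The plan is to obtain \cref{cor:suboptimality} as a direct specialization of \cref{thm:suboptimality}. That theorem already delivers $\P[A_n,\ \sub_n(\hat G_n) > C_{1,\Hyperparams}\epsilon_n]\le 9/n$ with the general rate $\epsilon_n$ in \eqref{eq:general_rate_sub}, so the only work left is (i) to check that its hypotheses---namely \cref{as:Delta_M_rate}---hold for the specific choices \eqref{eq:DMrate}, and (ii) to show $\epsilon_n\leH\varepsilon_n$ for the claimed rate $\varepsilon_n$ in \eqref{eq:specific_rate_sub}. Given (ii) with implied constant $C'$ (so $\epsilon_n\le C'\varepsilon_n$), taking $C_\Hyperparams^*=C_{1,\Hyperparams}\cdot C'$ gives the inclusion $\{A_n,\ \sub_n > C_\Hyperparams^*\varepsilon_n\}\subseteq\{A_n,\ \sub_n > C_{1,\Hyperparams}\epsilon_n\}$, and the probability bound transfers verbatim.

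First I would verify \cref{as:Delta_M_rate}. With $\Delta_n=C_\Hyperparams n^{-p/(2p+1)}(\log n)^\beta$ and $M_n\rateeq(\log n)^{1/\alpha}$: the bound $n^{-1/2}\leH\Delta_n$ holds since $p/(2p+1)<1/2$ for all $p\ge 1$; the bound $\Delta_n\leH M_n^{-3}\leH 1$ holds since $\Delta_n M_n^3\rateeq n^{-p/(2p+1)}(\log n)^{\beta+3/\alpha}\to 0$; and $\sqrt{\log n}\leH M_n$ holds since $1/\alpha\ge 1/2$ when $\alpha\in(0,2]$. This makes \cref{thm:suboptimality} applicable.

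The core of the argument is the term-by-term reduction in (ii), where the four summands of $\epsilon_n$ collapse onto the two terms of $\varepsilon_n$. For the leading summand I use $\tfrac1n\sum_i h(f_{\hat G_n,\nu_i},f_{G_0,\nu_i})\le\barh(\fs{\hat G_n},\fs{G_0})$ by Cauchy--Schwarz, together with $M_n\sqrt{\log n}\,\Delta_n\rateeq n^{-p/(2p+1)}(\log n)^{1/\alpha+1/2+\beta}$; since $\tfrac1\alpha+\tfrac12=\tfrac{2+\alpha}{2\alpha}$, this reproduces the Hellinger term of $\varepsilon_n$ exactly. The third summand $\Delta_n^2 M_n^2\log n\rateeq n^{-2p/(2p+1)}(\log n)^{2\beta+2/\alpha+1}$ matches the second term of $\varepsilon_n$ via $\tfrac2\alpha+1=\tfrac{2+\alpha}{\alpha}$. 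For the fourth summand $M_n^2\Delta_n^{1-1/(2p)}n^{-1/2}$, a short computation gives $n$-exponent $-\tfrac{p}{2p+1}(1-\tfrac1{2p})-\tfrac12=-\tfrac{2p}{2p+1}$, and its log-power $2/\alpha+\beta(1-1/(2p))$ is dominated by $\tfrac{2+\alpha}{\alpha}+2\beta$ for every $\beta\ge 0$, so it too is absorbed into the second term of $\varepsilon_n$.

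The one point that leans on the design of $M_n$ is the exponentially small summand $\Delta_n M_n\sqrt{\log n}\,e^{-C_{2,\Hyperparams}M_n^\alpha}$. Here I would use that $M_n$ was calibrated with the factor $C_{2,\Hyperparams}^{-1}$ so that $C_{2,\Hyperparams}M_n^\alpha=(C_\Hyperparams+1)^\alpha\log n$, whence $e^{-C_{2,\Hyperparams}M_n^\alpha}=n^{-(C_\Hyperparams+1)^\alpha}$; since $(C_\Hyperparams+1)^\alpha\ge 1>p/(2p+1)$ for any admissible $C_\Hyperparams$, this summand decays strictly faster than $n^{-2p/(2p+1)}$ and is dominated by $\varepsilon_n$, with $C_\Hyperparams$ affecting only the implied constant---exactly as the statement records. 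Collecting the four bounds yields $\epsilon_n\leH\varepsilon_n$, completing (ii). I do not anticipate a genuine obstacle, since all the substance resides in \cref{thm:suboptimality}; the only care needed is the bookkeeping of logarithmic exponents and confirming that the three non-Hellinger summands each decay at least as fast as the $n^{-2p/(2p+1)}$ term of $\varepsilon_n$.
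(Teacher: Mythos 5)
Your proposal is correct and follows essentially the same route as the paper's own proof: verify that the choices \eqref{eq:DMrate} satisfy \cref{as:Delta_M_rate}, replace $\tfrac1n\sum_i h(f_{\hat G_n,\nu_i},f_{G_0,\nu_i})$ by $\barh(\fs{\hat G_n},\fs{G_0})$, and then check term by term that $\epsilon_n \leH \varepsilon_n$ so that \cref{thm:suboptimality} transfers with an enlarged constant. Your bookkeeping of the polynomial and logarithmic exponents (including absorbing the exponential summand via $e^{-C_{2,\Hyperparams}M_n^\alpha}=n^{-(C_\Hyperparams+1)^\alpha}\le n^{-1}$) matches the paper's computation, the only cosmetic difference being that you cite Cauchy--Schwarz where the paper cites Jensen for the Hellinger averaging step.
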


\begin{proof}
We first show that the specification of $\Delta_n$ and $M_n$ means that the requirements
of \cref{as:Delta_M_rate} are satisfied. Among the requirements of \cref{as:Delta_M_rate}:
\begin{enumerate}[wide]
    \item is satisfied since the polynomial part of $\Delta_n$ converges to zero slower
    than $n^{-1/2}$, but converges to zero faster than any logarithmic rate. $M_n$ is a
    logarithmic rate.
    \item is satisfied since $\alpha \le 2$.
\end{enumerate}

We also observe that by Jensen's inequality, \[
\frac{1}{n} \sum_{i=1}^n h(f_{\hat G_n, \nu_i}, f_{G_0, \nu_i}) \le \barh(\fs{\hat G_n},
\fs{G_0}),
\]
and so we can replace the corresponding factor in $\epsilon_n$ by $\barh$.
Now, we plug the rates $\Delta_n, M_n$ into $\epsilon_n$. We find that the second term in
$\epsilon_n$
is dominated: \[
\Delta_n M_n^2 e^{-C_{2,\Hyperparams} M_n^\alpha} = \Delta_n M_n^2 e^{
    -(C_\Hyperparams + 1)^\alpha (\log n)
} \le \Delta_n M_n^2 n^{-1} \leH
\Delta_n^2 M_n^2 \log n
\]
since $\log n > 1$ as $n > \sqrt{2\pi} e$ by \cref{as:npmle}. The fourth term is also
dominated by the third term. Thus, plugging in the rates for
the other terms, we find that $
\epsilon_n \leH \varepsilon_n. $ Therefore, \cref{cor:suboptimality} follows from
\cref{thm:suboptimality}.
\end{proof}

\subsection{Derivative computations}
\label{sub:derivatives} 
It is sometimes useful to relate the derivatives of $\psi_i$ to
 $\PE_{G, \eta}$. We compute the following derivatives. Since they are all evaluated at
 $G, \eta$, we let $\hat
\nu = \hat\nu_i(\eta)$ and $\hat z = \hat Z_i(\eta)$ as a shorthand.
\begin{align}
\diff{\psi_i}{m_i}\evalbar_{\eta, G} &= -\frac{1}{s_i} \frac{f'_{G, \hat\nu}(\hat z)}{f_
{G,
\hat\nu} (\hat z)} \\
&= \frac{s_i}{\sigma_i^2} \PE_{G, \hat\nu}[Z-\tau
\mid \hat z] \label{eq:dpsidm}
\\
\diff{\psi_i
}{s_i} \evalbar_{{\eta, G}} &= \frac{1}{\sigma_i \hat\nu_i(\eta) f_{G,
\hat\nu(\eta)}(\hat Z_i(\eta))} \underbrace{\int (\hat Z_i(\eta) - \tau) \tau \varphi\pr{
\frac{\hat
Z_i(\eta) - \tau}{\hat \nu_i(\eta)}} \frac{1}{\hat \nu_i(\eta)} \,G(d\tau)}_{Q_i(Z_i,
\eta, G)}\label{eq:dpsids} \\
&= \frac{1}{\sigma_i \hat \nu} \PE_{G, \hat\nu}[(Z-\tau)\tau \mid \hat z]
\\
\diff{^2\psi_i}{m_i^2} \evalbar_{\eta, G}
&= \frac{1}{s^2_i} \bk{\frac{f''_{G,\hat\nu}(\hat z)}{f_{G,\hat\nu}(\hat z)} -
\pr{\frac{f'_{G,\hat\nu}(\hat z)}{f_{G,\hat\nu}(\hat z)}}^2}
\numberthis \label{eq:d2psidm2} \\
&= \frac{1}{s_i^2} \bk{
    \frac{1}{\hat\nu^4} \PE_{G,\hat\nu}[(\tau-Z)^2 \mid \hat z] - \frac{1}{\hat\nu^2} -
    \frac{1}{\hat\nu^4} \pr{\PE_{G, \hat\nu} [(\tau-Z) \mid \hat z]}^2
}
\\
\diff{^2 \psi_i}{m_i \partial s_i} \evalbar_{\eta, G}
&= 
\frac{1}{\sigma^2  f_{G, \hat \nu(\eta)} } \int \br{\pr{\frac{\hat Z(\eta) - \tau}
{\hat\nu(\eta)}}^2 - 1 } \tau \varphi\pr{\frac{\hat Z(\eta) - \tau}
{\hat\nu(\eta)}} \frac{1}{\hat\nu(\eta)} G(d\tau) + \frac{1}{\sigma^2} \frac{Q_i(Z_i,
\eta, G)} {f_{G, \hat \nu(\eta)} } \cdot \frac{f'_{G, \hat \nu(\eta)}}{f_{G, \hat \nu(\eta)}}
\numberthis \label{eq:cross_psi_deriv}
\\
&=\frac{1}{\sigma^2} \PE_{G, \hat\nu} \bk{\br{\pr{\frac{Z-\tau}{\hat\nu}}^2 - 1} \tau \mid
 \hat z} +
\frac{1}{\sigma^2\hat \nu(\eta)^2} \PE_{G, \hat\nu}[(Z-\tau)\tau \mid \hat z] \PE_{G, \hat\nu}[\tau-Z \mid \hat
z]
\\
\diff{^2 \psi_i}{s_i^2} \evalbar_{\eta, G}
&= 
\frac{1}{\sigma^2 f_{G, \hat \nu(\eta)}} \int \br{\pr{\frac{\hat Z(\eta) - \tau}
{\hat\nu(\eta)}}^2 - 1 }
\tau^2 \varphi\pr{\frac{\hat Z(\eta) - \tau}
{\hat\nu(\eta)}} \frac{1}{\hat\nu(\eta)} G(d\tau) + \frac{s_i^2} {\sigma^4} \pr{\frac{Q_i
(Z_i, \eta, G)}{f_{G, \hat \nu(\eta)}}}^2
 \numberthis 
 \label{eq:d2psids2} \\
 &=\frac{1}{\sigma^2} \PE_{G, \hat\nu} \bk{
 \tau^2 \br{\pr{\frac{Z-\tau}{\hat\nu}}^2 - 1}
 } - \frac{s_i^2}{\sigma^4} \PE_{G, \hat\nu} [(Z -\tau)\tau \mid \hat z]^2.
\end{align}

It is also useful to note that
\begin{align}
\frac{f'_{G, \nu}(z)}{f_{G,\nu}(z)} &= \frac{1}{\nu^2} \PE_{G,\nu}[(\tau - Z) \mid z]
\label{eq:firstderivative}\\
\frac{f''_{G, \nu}(z)}{ f_{G, \nu}(z)} &= \frac{1}{\nu^4} \PE_{G,\nu} [(\tau - Z)^2 \mid z]
- \frac{1}{\nu^2}.
\end{align}

\subsection{Proof of \cref{thm:suboptimality}}

\subsubsection{Decomposition of $\sub_n(\hat G_n)$}
Observe that, by \eqref{eq:approx_mle} in \cref{as:npmle}, \[
\frac{1}{n} \sum_{i=1}^n \psi_i(Z_i, \hateta, \hat G_n) - \frac1n \sum_{i=1}^n \psi_i(Z_i,
\hateta, G_0) \ge \kappa_n
\]
For random variables $a_n, b_n$ such that when $\bar Z_n \le M_n, \Dinfty \le \Delta_n$, 
\begin{align*}
\abs[\bigg]{\frac{1}{n} \sum_{i=1}^n \psi_i(Z_i, \hateta, \hat G_n) - \psi_i(Z_i,
\eta_0, \hat G_n)} \le a_n
\quad 
\abs[\bigg]{\frac{1}{n} \sum_{i=1}^n \psi_i(Z_i, \hateta, G_0) - \psi_i(Z_i,
\eta_0, G_0)} \le b_n,
\end{align*}
on the event $\bar Z_n \le M_n, \Dinfty \le \Delta_n$, \[
\frac{1}{n} \sum_{i=1}^n \psi_i(Z_i, \eta_0, \hat G_n) - \frac1n \sum_{i=1}^n \psi_i(Z_i,
\eta_0, G_0) \ge -a_n - b_n - \kappa_n
\]
and therefore $
\sub_n(\hat G_n) \le a_n + b_n + \kappa_n
$. 
Therefore, it suffices to show large deviation results for $a_n$ and $b_n$, where $a_n$
is chosen to be \eqref{eq:an_defn} and $b_n$ is chosen to be \eqref{eq:bn_defn}: \begin{align*}
&\P\bk{
  \bar Z_n \le M_n, \Dinfty \le \Delta_n, \sub_n(\hat G_n) \gtrsim_\H \epsilon_n
} \\&\le \P\bk{
  \bar Z_n \le M_n, \Dinfty \le \Delta_n, a_n+b_n+\kappa_n \gtrsim_\H \epsilon_n
} \\&\le \P\bk{a_n+b_n+\kappa_n \gtrsim_\H \epsilon_n}. 
\end{align*}

\subsubsection{Taylor expansion of $\psi_i(Z_i, \hat\eta, \hat G_n) - \psi_i(Z_i, \eta_0,
\hat G_n)$}
Define
$\Delta_{mi} = \hatm_i - m_{0i}$, $\Delta_{si} = \hats_i - s_{0i}$, and $\Delta_i =
[\Delta_{mi}, \Delta_{si}]'$. Recall $\Dinfty = \max(
\norm{s-s_0}_\infty, \norm{m-m_0}_\infty)$ as in \eqref{eq:barzn_def}.
Since $\psi_i(Z_i, \eta, G)$ is smooth in $(m_i, s_i) \in \R \times \R_{> 0}$, we can take
a second-order Taylor expansion:
\begin{align}
\psi_i\pr{Z_i, \hateta, \hat G_n} - \psi_i\pr{Z_i, \eta_0, \hat G_n}  = \diff{\psi_i}
{m_i}\evalbar_{\eta_0, \hat G_n} \Delta_{mi} + \diff{\psi_i}{s_i}\evalbar_
{\eta_0, \hat G_n} \Delta_{si} + \underbrace{\frac{1}{2} \Delta_i' H_i(\tilde\eta_i,
\hat G_n)
\Delta_i}_{R_{1i}} \label{eq:decomposition_hatG}
\end{align}
where $H_i(\tilde \eta_i ,\hat G_n)$ is the Hessian matrix $\diff{^2\psi_i}
{\eta_i \partial
\eta_i'}$ evaluated at some intermediate value $\tilde \eta_i$ lying on the line segment
between $\hateta_i$ and $\eta_{0i}$.

We further decompose the first-order terms into an empirical process term and a
mean-component term. By \cref{thm:lb}, \eqref{eq:dpsidm}, and \eqref{eq:dpsids}, for the
choice $\rho_n$ in \eqref{eq:rhodef} we have that the denominators to the first
derivatives can be truncated at $\rho_n$, as $f_{i, \hat G_n} \ge \rho_n/\nu_i$ so that
the truncation does not bind:
\begin{align}
\diff{\psi_i}{m_i}\evalbar_{\eta_0, \hat G_n}
&= -\frac{1}{s_i} \frac{
f'_{i, \hat G_n}
}{f_{i, \hat G_n} \vee \frac{\rho_n}{\nu_i}} \equiv D_{m,i}(Z_i,
\hat G_n, \eta_0, \rho_n)
\label{eq:dm_defn}
\\
\diff{\psi_i}{s_i}\evalbar_{\eta_0, \hat G_n}
&= \frac{s_i}{\sigma_i^2} \frac{
Q_i(Z_i, \eta_0, \hat G_n)
}{ f_{i, \hat G_n} \vee \frac{\rho_n}{\nu_i}} \equiv D_{s,i}(Z_i,
\hat G_n, \eta_0, \rho_n).
\label{eq:ds_defn}
\end{align}
where we recall $Q_i$ from \eqref{eq:dpsids}.

Let \[
\bar D_{k,i} (\hat G_n, \eta_0, \rho_n) = \int_{-\infty}^{\infty} D_{k,i}(z,
\hat G_n, \eta_0, \rho_n) \, f_{G_0, \nu_i}(z) dz \quad \text{ for $k \in
\br{m,s}$} \numberthis \label{eq:D_bar_def}
\]
be the population mean of $D_{k,i}$.
Then, for $k \in
\br{m,s}$, we can decompose \[
\diff{\psi_i}{k_i}\evalbar_{\eta_0, \hat G_n} \Delta_{ki} =
\bk{
    D_{k,i} (Z_i, \hat G_n, \eta_0, \rho_n) - \bar D_{k,i} (\hat G_n,
    \eta_0, \rho_n)
} \Delta_{ki} + \bar D_{k,i} (\hat G_n,
    \eta_0, \rho_n) \Delta_{ki}
\]
Hence, we can decompose the first-order terms in \eqref{eq:decomposition_hatG}:
\begin{align*}
\frac{1}{n} \sum_{i=1}^n \diff{\psi_i}{k_i}\evalbar_{\eta_0, \hat G_n}
\Delta_{ki} & =  \frac1n \sum_{i=1}^n \bar D_{k,i} (\hat G_n,
    \eta_0, \rho_n) \Delta_{ki} \\
    &\quad\quad+ \frac1n \sum_{i=1}^n \bk{
    D_{k,i} (Z_i, \hat G_n, \eta_0, \rho_n) - \bar D_{k,i} (\hat G_n,
    \eta_0, \rho_n)
} \Delta_{ki} \\ &\equiv U_{1k} + U_{2k} .
\end{align*}
Let the second order term in \eqref{eq:decomposition_hatG} be denoted as $R_1 = \frac1n
\sum_i R_ {1i}$. We let \[a_n = |R_1| + \sum_{k
\in \br{m,s}} |U_{1k}| + |U_{2k}| .
\numberthis \label{eq:an_defn}
\]

\subsubsection{Taylor expansion of $\psi_i(Z_i, \hat\eta, G_0) - \psi_i(Z_i, \eta_0,
G_0)$}
Like \eqref{eq:decomposition_hatG}, we similarly decompose \begin{align}
\psi_i(Z_i, \hat \eta, G_0) - \psi_i(Z_i, \eta_0, G_0) &= \diff{\psi_i}{m_i}\evalbar_
{\eta_0, G_0} \Delta_{mi} + \diff{\psi_i}{s_i}\evalbar_
{\eta_0, G_0} \Delta_{si} + \underbrace{\frac{1}{2} \Delta_i' H_i(\tilde\eta_i,
G_0)
\Delta_i}_{R_{2i}} \\
&= \sum_{k\in \br{m,s}} D_{k,i}(Z_i, G_0, \eta_0, 0) \Delta_{ki} + R_{2i}  \equiv U_{3mi} + U_{3si} + R_{2i}.
\end{align}
Let $U_{3k} = \frac{1}{n}\sum_i U_{3ki}$ for $k \in \br{m,s}$ and let $R_2 = \frac{1}{n}
\sum_i R_{2i}$. We let \[b_n = |R_2| + \sum_{k \in \br{m,s}} |U_{3k}| + |U_{3k}|
\numberthis \label{eq:bn_defn}.\]

\subsubsection{Bounding each term individually}
\label{ss:proof_intuition_oracle_ineq}
By our decomposition, we can write \[
a_n + b_n + \kappa_n \le \kappa_n + |R_{1}| + |R_{2}| + \sum_{k \in \br{m,s}} |U_{1k}| +
|U_{2k}| +
|U_{3k}|.
\]
To summarize, we have that, for $k = m,s,$ \begin{align}
U_{1k} &= \frac{1}{n}\sum_{i=1}^n \bar D_{k,i}(\hat G_n, \eta_0, \rho_n) \Delta_{ki} \\ 
U_{2k} &= \frac{1}{n}\sum_{i=1}^n \bk{D_{k,i}(Z_i, \hat G_n, \eta_0, \rho_n) - \bar D_
{k,i}
(\hat
G_n, \eta_0, \rho_n)} \Delta_ {ki} \\
U_{3k} &= \frac{1}{n}\sum_{i=1}^n D_{k,i}(Z_i, G_0, \eta_0, 0) \Delta_{ki} \\
R_1 &= \frac{1}{2n}\sum_{i=1}^n \Delta_i'H_i(\tilde\eta_i, \hat G_n) \Delta_i \\ 
R_2 &= \frac{1}{2n}\sum_{i=1}^n \Delta_i'H_i(\tilde\eta_i, G_0) \Delta_i
\end{align}

The ensuing subsections bound each term individually. Here we give an overview of the
main ideas:
\begin{enumerate}[wide]
    \item We bound $\one(A_n) |U_{1m}|$ in \cref{lemma:Dmi_final} by observing that $|\bar
    D_{mi}(\hat G_n, \eta_0, \rho_n)|$ is small when $\hat G_n$ is close to $G_0$, since
    $\bar D_ {mi} (G_0, \eta_0,
    0) = 0$. To do so, we need to control the differences \begin{align*}
    \bar D_{mi}(\hat G_n, \eta_0, \rho_n) - \bar D_{mi}(G_0, \eta_0, \rho_n) \text{ and }
\bar D_{mi}(G_0, \eta_0, \rho_n) - \underbrace{ \bar D_{mi}(G_0, \eta_0, 0)}_{= 0} = \bar D_{mi}(G_0, \eta_0,
\rho_n).
    \end{align*}
    Controlling the first difference features the Hellinger distance. Controlling
    the second relies on the fact that $\P_{X \sim f(X)}(f(X) \le \rho)$ cannot be too
    large, by an argument in \cref{lemma:chebyshev}. Similarly, we bound $\one(A_n) |U_
    {1s}|$ in \cref{lemma:Dsi_final}.

    \item The empirical process terms $U_{2m}, U_{2s}$ are bounded with statements of the form\[
    \P(A_n, |U_{2k}| > r_n) \le 2/n.
    \] To do so, we upper bound $\one(A_n) U_{2k} \le \bar U_{2k}$. The upper bound is
     obtained by projecting $\hat G_n$ onto a $\omega$-net of $\mathcal P (\R)$ in terms
     of some pseudo-metric $d_{k,\infty, M_n}$, such that two distributions $G_1, G_2$
     has a small distance $d_{k,\infty, M_n}$ between them if they give similar $\bar D_ {k,i}$. The upper bound
     $\bar U_{2k}$ then takes the form (up to some other terms), for $\eta \in S$ over a
     H\"older space and $G_1,\ldots, G_N$ a $\omega$-net over $\mathcal P(\R)$ in $d_
     {k,\infty, M_n}$, \[
\omega \Delta_n + \max_{j \in [N]} \sup_{\eta \in S} \abs[\bigg]{\frac{1}{n} \sum_i (D_
{ki}(G_j, \eta, \rho_n) -
\bar D_{ki}(G_j, \eta, \rho_n)) (\eta_i -
\eta_{0i})} \quad N \le N(\omega, \mathcal P(\R), d_{k,\infty, M_n}).
    \]
    Large deviation of $\bar U_{2k}$ is further controlled by applying Dudley's tail bound
    \citep{vershynin2018high}, since the entropy integral over $S$ is well-behaved. The
    covering number $N$ is controlled via \cref{prop:covering_for_moments} and
    \cref{prop:covering_ms}, which are minor extensions to Lemma 4 and Theorem 7 in
    \citet{jiang2020general}. The covering number is of a manageable size since the
    induced distributions $f_{G, \nu_i}$ are very smooth.
    \item Since $\bar D_{k,i}(G_0, \eta_0,0) = 0$. $U_{3m}, U_{3s}$ are effectively also
empirical process terms, without the additional randomness in $\hat G_n$. Thus the
projection-to-$\omega$-net argument above is unnecessary for $U_{3m}, U_{3s}$, whereas the
bounding follows from the same Dudley's chaining argument. \Cref{lemma:u3} bounds
$U_{3k}$.
\item For the second derivative terms $R_{1}, R_2$, we observe that the second derivatives
 take the form of functions of posterior moments under either $\hat G_n$ or $G_0$. The
 posterior moments under prior $\hat G_n$ is bounded within constant factors of $M_n^q$
 since the support of $G_n$ is restricted. The posterior moments under prior $G_0$ is
 bounded by $|Z_i|^q$, for $\E|Z_i|^q \leH M_n^q$ as we show in 
 \cref{lemma:posterior_moments}, thanks to
 the simultaneous moment control for $G_0$. These second derivatives are bounded in \cref
 {lemma:r1,lemma:r2}.
\end{enumerate}

(1) and (4) above bounds $U_{1k}, R_1, R_2$ under $A_n$. (2) and (3) bounds $U_{2k},
U_{3k}$ probabilistically by bounding $\P[A_n, U_{jk} > t]$. By a union bound in
\cref{lemma:union_bound}, we can simply add the rates.

Doing so, we find that the first term in $\epsilon_n$
\eqref{eq:general_rate_sub} comes from $U_{1s}$, which dominates $U_{1m}$. The second term
comes from $U_{2k}$. The third term comes from $R_1$, which
dominates $R_2$; this term also dominates a term in the bound for $U_{2k}$. The fourth
term comes from $U_ {3s}$. The leading terms in $\epsilon_n$
dominate $\kappa_n$, recalling \cref{as:npmle}. This completes the proof.

\subsection{Bounding $U_{1m}$}

\begin{lemma}
\label{lemma:Dmi_final}
    Under \cref{as:npmle,as:moments,as:holder,as:variance_bounds}, assume additionally that
    $\Dinfty \le
    \Delta_n, \bar Z_n \le M_n$. Assume that the rates $\Delta_n, M_n$ satisfy 
    \cref{as:Delta_M_rate}.
    Then \[
    |U_{1m}| \equiv \absauto{\frac{1}{n} \sum_{i=1}^n \bar D_{mi} (\hat G_n, \eta_0, \rho_n) \Delta_{mi}} \leH \Delta_n \bk{
    \frac{\log n}{n} \sum_ {i=1}^n h(f_{G_0, \nu_i}, f_ {\hat G_n,\nu_I}) + \frac{M_n^
    {1/3}}
    {n}}
    \numberthis \label{eq:Dmi_final}.\]
\end{lemma}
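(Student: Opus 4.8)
The plan is to peel off the nuisance perturbation and then control the averaged magnitude of the population mean $\bar D_{mi}(\hat G_n,\eta_0,\rho_n)$ term by term. On the event in the hypotheses, $|\Delta_{mi}| = |\hatm_i - m_{0i}| \le \Dinfty \le \Delta_n$, so by the triangle inequality $|U_{1m}| \le \frac{\Delta_n}{n}\sum_{i=1}^n |\bar D_{mi}(\hat G_n,\eta_0,\rho_n)|$, and it suffices to bound each $|\bar D_{mi}(\hat G_n,\eta_0,\rho_n)|$. The organizing observation is that the \emph{untruncated} population mean under the true prior vanishes, $\bar D_{mi}(G_0,\eta_0,0) = -\frac{1}{s_{0i}}\int f'_{G_0,\nu_i}(z)\,dz = 0$, since $f_{G_0,\nu_i}$ is a density. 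I would therefore split
\[
\bar D_{mi}(\hat G_n,\eta_0,\rho_n) = \underbrace{\bar D_{mi}(\hat G_n,\eta_0,\rho_n) - \bar D_{mi}(G_0,\eta_0,\rho_n)}_{\text{(I): change of prior}} + \underbrace{\bar D_{mi}(G_0,\eta_0,\rho_n) - \bar D_{mi}(G_0,\eta_0,0)}_{\text{(II): truncation at }G_0},
\]
and bound the two pieces by $\log n\cdot h(f_{G_0,\nu_i},f_{\hat G_n,\nu_i})$ and $M_n^{1/3}/n$ respectively; summing, dividing by $n$, and multiplying by $\Delta_n$ then assembles the claim.

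For piece (I) I would recognize it as $\E_{Z\sim f_{G_0,\nu_i}}$ of the difference of the two regularized scores. Using $\int f'_{\hat G_n,\nu_i} = 0$, this can be rewritten (up to a small-density correction of the same type as piece (II)) as an integral of the $\hat G_n$-score against the \emph{signed} density difference $f_{G_0,\nu_i} - f_{\hat G_n,\nu_i}$. The key device is to factor $f_{G_0,\nu_i}-f_{\hat G_n,\nu_i} = (\sqrt{f_{G_0,\nu_i}}-\sqrt{f_{\hat G_n,\nu_i}})(\sqrt{f_{G_0,\nu_i}}+\sqrt{f_{\hat G_n,\nu_i}})$ and apply Cauchy–Schwarz: the first factor produces exactly the Hellinger distance $h(f_{G_0,\nu_i},f_{\hat G_n,\nu_i})$ (to the first power), and the second produces an $L^2$-norm of the truncated score. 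Because the denominator is truncated below at $\rho_n/\nu_i$, that score is bounded pointwise by a multiple of $\invphi(\rho_n)$, which is $\rateeq\sqrt{\log n}$ by \cref{thm:lb}; an estimate in the spirit of \cref{lemma:misspec_prior} then gives the per-term bound $|\text{(I)}|\leH \log n\cdot h(f_{G_0,\nu_i},f_{\hat G_n,\nu_i})$, and summing gives the first term of the bound.

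For piece (II) a direct computation shows
\[
\bar D_{mi}(G_0,\eta_0,\rho_n) = \frac{1}{s_{0i}}\int_{\{f_{G_0,\nu_i}<\rho_n/\nu_i\}} f'_{G_0,\nu_i}(z)\bigl(1-\tfrac{\nu_i}{\rho_n}f_{G_0,\nu_i}(z)\bigr)\,dz,
\]
with the parenthetical factor in $[0,1]$, so $|\text{(II)}|\le \frac{1}{s_{0i}}\int_{\{f_{G_0,\nu_i}<\rho_n/\nu_i\}}|f'_{G_0,\nu_i}|\,dz$, an integral over the region where the mixture density is tiny. Writing $|f'_{G_0,\nu_i}|/f_{G_0,\nu_i} = \nu_i^{-2}\,|\PE_{G_0,\nu_i}[\tau\mid Z]-Z|$ by Tweedie's formula and applying Cauchy–Schwarz separates a bounded posterior-moment factor (finite under \cref{as:moments,as:variance_bounds}) from $\P_{Z\sim f_{G_0,\nu_i}}[f_{G_0,\nu_i}(Z)<\rho_n/\nu_i]^{1/2}$. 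The latter is controlled by \cref{lemma:chebyshev} by a multiple of $\rho_n^{2/3}$ times a low moment of $Z$; since $\rho_n\leH n^{-3}$ (\cref{thm:lb}), this yields $|\text{(II)}|\leH M_n^{1/3}/n$ uniformly in $i$, hence $\frac1n\sum_i|\text{(II)}|\leH M_n^{1/3}/n$.

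I expect the main obstacle to be piece (I): converting the change-of-prior perturbation of the regularized score-integral into the Hellinger distance \emph{with only a tame logarithmic factor}. The naive route of bounding the $L^2$-norm of truncated score differences as in \cref{lemma:misspec_prior} carries a $(\log 1/\rho_n)^{3/2}$ penalty; the point of the symmetric factorization of $f_{G_0,\nu_i}-f_{\hat G_n,\nu_i}$ into Hellinger-integrand form, combined with the pointwise $\invphi(\rho_n)\rateeq\sqrt{\log n}$ bound on the truncated score, is to keep the Hellinger power at one and the logarithmic exponent at the stated order. The remainder is bookkeeping that combines the two per-term estimates and restores the prefactor $\Delta_n$.
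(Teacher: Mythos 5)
Your proposal is correct and follows essentially the same route as the paper's proof: the paper directly splits $f_{G_0,\nu_i} = (f_{G_0,\nu_i}-f_{\hat G_n,\nu_i}) + f_{\hat G_n,\nu_i}$ inside $\bar D_{mi}(\hat G_n,\eta_0,\rho_n)$, and your change-of-prior/truncation decomposition collapses to exactly that once you re-expand piece (I) using $\int f'_{\hat G_n,\nu_i}=0$, with the $G_0$-truncation corrections either cancelling or handled by the same small-density argument. The two key estimates also coincide with the paper's: the symmetric Hellinger factorization plus Cauchy--Schwarz with the pointwise bound $\absauto{f'_{\hat G_n,\nu_i}/(f_{\hat G_n,\nu_i}\vee\rho_n/\nu_i)} \leH \invphi(\rho_n) \leH \sqrt{\log n}$ (the paper's \cref{thm:jianglemma2} combined with \cref{thm:lb}) for the main term, and \cref{lemma:chebyshev} together with $\rho_n \leH n^{-3}$ for the truncation correction.
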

\begin{proof}
Note that \begin{align*}
|\bar D_{m,i}(\hat G_n,\eta_0,\rho_n)| = |\eqref{eq:D_bar_def}|
&\lesssim_{s_{0\ell}} \abs[\Bigg]{\int \frac{f'_{\hat G_n, \nu_i}
(z)}{
f_{\hat G_n, \nu_i}(z)
 \vee \frac{\rho_n}{\nu_i}}
\, f_{G_0, \nu_i}(z) dz}
\\
&= \abs[\Bigg]{\int \frac{f'_{\hat G_n, \nu_i}
(z)}{
f_{\hat G_n, \nu_i}(z)
 \vee \frac{\rho_n}{\nu_i}}
\, [f_{G_0, \nu_i}(z) - f_{\hat G_n, \nu_i}(z) + f_{\hat G_n, \nu_i}(z)] dz} \\
&\le \abs[\Bigg]{\int \frac{f'_{\hat G_n, \nu_i}
(z)}{
f_{\hat G_n, \nu_i}(z)
 \vee \frac{\rho_n}{\nu_i}}
\, [f_{G_0, \nu_i}(z) - f_{\hat G_n, \nu_i}(z)] \,dz}  \numberthis \label{eq:dmi_first} \\
&\quad\quad+ \abs[\Bigg]{ \int \frac{f'_{\hat G_n, \nu_i}
(z)}{
f_{\hat G_n, \nu_i}(z)
 \vee \frac{\rho_n}{\nu_i}} f_{\hat G_n, \nu_i}(z) dz }. \numberthis
 \label{eq:dmi_second}
\end{align*}
By the bounds for \eqref{eq:dmi_first} and \eqref{eq:dmi_second} below, we have that by \cref{as:Delta_M_rate} \[
|U_{1m}| \leH \Delta_n \bk{\frac{\sqrt{\log n}}{n} \sum_{i=1}^n h(f_{G_0, \nu_i}, f_{\hat
G_n,
\hat\nu_i}) +
\frac{M_n^{1/3}}{n}}.
\]
\end{proof}

\subsubsection{Bounding \eqref{eq:dmi_first}}

Consider the first term \eqref{eq:dmi_first}:
\begin{align*}
[\eqref{eq:dmi_first}]^2
 &= \bk{\int
\frac{f'_{\hat G_n, \nu_i}
(z)}{
f_{\hat G_n, \nu_i}(z)
 \vee \frac{\rho_n}{\nu_i}}
  \pr{\sqrt{f_{G_0,\nu_i}(z)}
-    \sqrt{f_{\hat G_n, \nu_i} ( z)}}
\pr{\sqrt{f_{G_0,\nu_i}(z)}
+   \sqrt{f_{\hat G_n, \nu_i} ( z)}}
 \,dz}^2
 \\
 &\le
 \underbrace{\int \pr{\sqrt{f_{G_0,\nu_i}(z)}
 -    \sqrt{f_{\hat G_n,  \nu_i} ( z)}}^2 \,dz }_{2 h^2 (f_{G_0,\nu_i},  f_{\hat G_n, 
 \nu_i} ) }
 \\&\quad\quad\times \int \pr{\frac{f'_{\hat G_n, \nu_i}
(z)}{
f_{\hat G_n, \nu_i}(z)
 \vee \frac{\rho_n}{\nu_i}}}^2 \pr{\sqrt{f_{G_0,\nu_i}(z)}
+   \sqrt{f_{\hat G_n,  \nu_i} ( z)}}^2 \,dz
 \tag{Cauchy--Schwarz}
 \\
 &\lesssim h^2(f_{G_0, \nu_i}, f_{\hat G_n, \nu_i}) 
 \int \pr{\frac{f'_{\hat G_n, \nu_i}
(z)}{
f_{\hat G_n, \nu_i}(z)
 \vee \frac{\rho_n}{\nu_i}}}^2 (f_{G_0,\nu_i}(z)
+   f_{\hat G_n,  \nu_i} ( z))\,dz
 \numberthis \label{eq:u1m_computation}
\end{align*}
where the last step uses $(a+b)^2 \le 2a^2 + 2b^2$. By \cref{thm:jianglemma2,thm:lb}, \[
\pr{\frac{f'_{\hat G_n, \nu_i}
(z)}{
f_{\hat G_n, \nu_i}(z)
 \vee \frac{\rho_n}{\nu_i}}}^2 \lesssim \frac{1}{
    \nu_i
} \log(1/\rho_n) \leH \log n.
\]
Hence, $
\eqref{eq:dmi_first} \leH  h(f_{G_0, \nu_i}, f_{\hat
G_n,
\nu_i}) \sqrt{\log n}.
$

\subsubsection{Bounding \eqref{eq:dmi_second}}

The second term   \eqref{eq:dmi_second} is
\begin{align*}
\eqref{eq:dmi_second} &= \abs[\Bigg]{
    \int \frac{f'_{\hat G_n, \nu_i}
(z)}{
f_{\hat G_n, \nu_i}(z)} \pr{\frac{f_{\hat G_n, \nu_i}(z)}{f_{\hat G_n, \nu_i}(z) \vee
\frac{\rho_n}{\nu_i}} - 1} f_{\hat G_n, \nu_i}(z)\,dz
} \\
&\le \int \abs[\Bigg]{
    \frac{f'_{\hat G_n, \nu_i}
(z)}{
f_{\hat G_n, \nu_i}(z)}} \one\pr{f_{\hat G_n, \nu_i}(z) \le \rho_n/\nu_i} f_{\hat G_n,
\nu_i}(z)\,dz \\
&\le \underbrace{\pr{\E_{Z\sim f_{\hat G_n, \nu_i}} \bk{
    \pr{\PE_{\hat G_n, \nu_i}\bk{ \frac{(\tau - Z)}{\nu_i^2} \mid Z}}^2
}}^{1/2}}_{ \le \E_{\tau \sim \hat G_n, Z \sim \Norm(\tau, \nu_i)}[(\tau - Z)^2 / \nu_i^4]^{1/2} = \nu_i^{-1}}
\cdot \sqrt{\P_{f_{\hat G_n, \nu_i}} [f_{\hat G_n, \nu_i}(Z) \le \rho_n/\nu_i]}
\tag{Cauchy--Schwarz and \eqref{eq:firstderivative}}.
\end{align*}
By Jensen's inequality and law of iterated expectations, the first term is bounded by
$\frac{1}{\nu_i}$. By \cref{lemma:chebyshev}, the second term is bounded by $\rho_n^
{1/3}\var_{Z\sim f_{\hat G_n, \nu_i}}(Z)^{1/6}$. Now, $
\var_{Z\sim f_{\hat G_n, \nu_i}}(Z) \le \nu_i^2 + \mu_2^2(\hat G_n) \leH M_n^2
$.
Hence, by \cref{thm:lb}, \[
\eqref{eq:dmi_second}
\leH M_n^{1/3} \rho_n^{1/3} \leH M_n^{1/3}n^{-1}. 
\]

\subsection{Bounding $U_{1s}$}

\begin{lemma}
\label{lemma:Dsi_final}
Under \cref{as:npmle,as:moments,as:holder,as:variance_bounds,as:Delta_M_rate}, if $\Dinfty
\le \Delta_n$,
$\bar Z_n \le M_n$, then
\[
|U_{1s}| \leH \Delta_n \bk{\frac{M_n \sqrt{\log n} }{n} \sum_{i=1}^n h(f_{\hat G_n, \nu_i}, f_{G_0,
\nu_i}) +  \frac{M_n^{4/3}}{n}}.
\numberthis \label{eq:dsi_final}
\]
\end{lemma}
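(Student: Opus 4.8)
The plan is to transcribe the argument of \cref{lemma:Dmi_final}, the bound for $U_{1m}$, into the scale direction, tracking one extra factor of $M_n$ that the scale-score introduces. Since $|\Delta_{si}| = |\hat s_i - s_{0i}| \le \Dinfty \le \Delta_n$ on the event under consideration, it suffices to show $\frac1n \sum_{i=1}^n |\bar D_{si}(\hat G_n, \eta_0, \rho_n)| \leH \frac{M_n \sqrt{\log n}}{n}\sum_i h(f_{\hat G_n, \nu_i}, f_{G_0, \nu_i}) + M_n^{4/3}/n$ and multiply through by $\Delta_n$. The conceptual anchor is the score identity $\bar D_{si}(G_0, \eta_0, 0) = 0$: because $D_{s,i}(\cdot, G_0, \eta_0, 0)$ is exactly $\diff{\psi_i}{s_i}$ of the correctly specified marginal $f_{G_0, \nu_i}$, its $f_{G_0, \nu_i}$-expectation vanishes, just as $\bar D_{mi}(G_0, \eta_0, 0) = 0$ anchored the $U_{1m}$ bound. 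Hence $\bar D_{si}(\hat G_n, \eta_0, \rho_n)$ is small whenever $\hat G_n$ is close to $G_0$ and $\rho_n$ is small.

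Recalling from \eqref{eq:ds_defn} that $D_{s,i} = \frac{s_i}{\sigma_i^2}\frac{Q_i(z, \eta_0, \hat G_n)}{f_{\hat G_n, \nu_i}(z) \vee \rho_n/\nu_i}$ with $Q_i$ as in \eqref{eq:d2psids2}, I would write $\bar D_{si}(\hat G_n, \eta_0, \rho_n) = \frac{s_{0i}}{\sigma_i^2}\int \frac{Q_i}{f_{\hat G_n, \nu_i}\vee \rho_n/\nu_i}\, f_{G_0, \nu_i}\,dz$, note $s_{0i}/\sigma_i^2 \leH 1$ by \cref{as:variance_bounds}, and add and subtract $f_{\hat G_n, \nu_i}$ inside the integrand exactly as in the $U_{1m}$ computation. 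This produces a \emph{Hellinger} term $\int \frac{Q_i}{f_{\hat G_n, \nu_i}\vee \rho_n/\nu_i}(f_{G_0, \nu_i} - f_{\hat G_n, \nu_i})\,dz$ and a \emph{truncation} term $\int \frac{Q_i}{f_{\hat G_n, \nu_i}\vee \rho_n/\nu_i}\, f_{\hat G_n, \nu_i}\,dz$, the latter supported on $\{f_{\hat G_n, \nu_i} \le \rho_n/\nu_i\}$ since the untruncated integral of the score vanishes.

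For the Hellinger term I would factor $f_{G_0, \nu_i} - f_{\hat G_n, \nu_i} = (\sqrt{f_{G_0, \nu_i}} - \sqrt{f_{\hat G_n, \nu_i}})(\sqrt{f_{G_0, \nu_i}} + \sqrt{f_{\hat G_n, \nu_i}})$ and apply Cauchy--Schwarz to isolate $2h^2(f_{\hat G_n, \nu_i}, f_{G_0, \nu_i})$ against $\int \bk{\frac{Q_i}{f_{\hat G_n, \nu_i}\vee \rho_n/\nu_i}}^2(f_{G_0, \nu_i} + f_{\hat G_n, \nu_i})\,dz$. For the truncation term I would use a second Cauchy--Schwarz to peel off $\sqrt{\P_{f_{\hat G_n, \nu_i}}[f_{\hat G_n, \nu_i}(Z) \le \rho_n/\nu_i]}$, which \cref{lemma:chebyshev} bounds by $\rho_n^{1/3}\var(Z)^{1/6} \leH M_n^{1/3}\rho_n^{1/3}$ (using $\var_{f_{\hat G_n, \nu_i}}(Z) \le \nu_i^2 + \mu_2^2(\hat G_n) \leH M_n^2$ from the support restriction in \cref{as:npmle}), times a second-moment factor now of order $M_n$; combining with $\rho_n \leH n^{-3}$ gives $M_n^{4/3}/n$ for the truncation term.

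The main obstacle — and the only place the argument genuinely departs from the $U_{1m}$ proof — is controlling $\int \bk{\frac{Q_i}{f_{\hat G_n, \nu_i}\vee \rho_n/\nu_i}}^2(f_{G_0, \nu_i} + f_{\hat G_n, \nu_i})\,dz$ and verifying that it is of order $M_n^2\log n$ rather than the order $\log n$ that appeared for $U_{1m}$. The scale-direction quantity $Q_i$ carries an extra factor of order $(z-\tau)/\nu_i$ relative to the location-direction numerator $f'_{\hat G_n, \nu_i}$; on the event $\bar Z_n \le M_n$ this extra factor is controlled by powers of $M_n$ through the posterior-moment bounds of \cref{lemma:posterior_moments} together with the $[\min_i \hat Z_i, \max_i \hat Z_i]$ support restriction in \cref{as:npmle}, while the underlying score ratios stay bounded by $\log(1/\rho_n)$, which is comparable to $\log n$, via \cref{thm:jianglemma2,thm:lb}. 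Pinning down that the scale-score contributes exactly one additional power of $M_n$ — so that the square root of the integral is of order $M_n\sqrt{\log n}$ — is the delicate bookkeeping; once established, the Hellinger term becomes of order $M_n\sqrt{\log n}\, h(f_{\hat G_n, \nu_i}, f_{G_0, \nu_i})$, and summing over $i$ and multiplying by $\Delta_n$ delivers \eqref{eq:dsi_final}.
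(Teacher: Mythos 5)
Your proposal is correct and follows essentially the same route as the paper's own proof: the identical add-and-subtract decomposition of $\bar D_{s,i}$ into a Hellinger term and a truncation term, the same Cauchy--Schwarz step isolating $h^2(f_{\hat G_n,\nu_i}, f_{G_0,\nu_i})$ against a weight integral of order $M_n^2\log n$ (controlled via \cref{lemma:qbound,thm:lb} and the second-moment/support bounds), and the same handling of the truncation term via the vanishing score integral, \cref{lemma:chebyshev}, and $\rho_n \leH n^{-3}$, yielding $M_n \cdot M_n^{1/3}\rho_n^{1/3} \leH M_n^{4/3}/n$. Nothing essential is missing.
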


\begin{proof}
Similar to our computation with $\bar D_{m,i}$, we
decompose \begin{align*}
|\bar D_{s,i}(\hat G_n, \eta_0, \rho_n)| &\leH \abs[\bigg]{\int \frac{Q_i(z, \eta_0,
\hat G_n)}{f_{\hat G_n,
\nu_i}(z) \vee (\rho_n / \nu_i)}
(f_{G_0, \nu_i}(z) - f_{\hat G_n, \nu_i}(z) )\,dz} \numberthis
\label{eq:dsi_first}
\\ &\quad+ \abs[\bigg]{\int \frac{Q_i(z, \eta_0, \hat G_n)}{f_{\hat G_n,
\nu_i}( z) \vee (\rho_n /
\nu_i)} f_{\hat
G_n, \nu_i}(z)  \,dz},  \numberthis
\label{eq:dsi_third}
\end{align*}
where we recall $Q_i$ from 
\eqref{eq:dpsids}.
We conclude the proof by plugging in  our subsequent calculations.
\end{proof}

\subsubsection{Bounding \eqref{eq:dsi_first}}
The first term \eqref{eq:dsi_first} is bounded by \begin{align*}
[\eqref{eq:dsi_first}]^2 &\lesssim h^2(f_{G_0,\nu_i} ,f_{\hat G_n, \nu_i}) \int \pr{
\frac{Q_i (z, \eta_0,
\hat G_n)}{f_{\hat G_n,
\nu_i}(z) \vee (\rho_n / \nu_i)} }^2   \bk{f_{G_0, \nu_i}(z) + f_{\hat G_n, \nu_i}(z)} \,dz,
\end{align*}
similar to the computation in \eqref{eq:u1m_computation}.

By \cref{lemma:qbound,thm:lb}, \[
\pr{\frac{Q_i(z, \eta_0,
\hat G_n)}{f_{\hat G_n,
\nu_i}(z) \vee (\rho_n / \nu_i)} }^2 \leH M_n^2 \log
n \implies \int \pr{\frac{Q(z, \nu_i)}{f_{\hat G_n, \hat
\nu_i}(z) \vee (\rho_n /
\nu_i) } }^2   \bk{f_{G_0, \nu_i}(z) + f_{\hat G_n, \nu_i}(z)} \,dz  \leH M_n^2 \log
n.
\]
Hence, \[
\eqref{eq:dsi_first} \leH M_n h(f_{G_0,\nu_i}, f_{\hat G_n, \nu_i}) \sqrt{\log n}.
\numberthis
\label{eq:dsi_first_final}
\]

\subsubsection{Bounding \eqref{eq:dsi_third}}
Observe that \[
\eqref{eq:dsi_third} = \abs[\Bigg]{
    \int \underbrace{\frac{Q_i(z, \eta_0,
\hat G_n)}{f_{\hat G_n,
\nu_i}(z)}}_{\PE_{\hat G_n, \nu_i}[(\hat Z-\tau) \tau \mid \hat Z=z]}  \underbrace{\pr{ 
\frac{f_ {\hat G_n,
\nu_i}(z) }{f_{\hat G_n,
\nu_i}(z) \vee (\rho_n / \nu_i)} - 1}}_{|\cdot| \le \one(\nu_i f_{\hat G_n, \nu_i} \le
\rho_n)} f_ {\hat G_n,
\nu_i}(z) \,dz
}. 
\]
Similar to our argument for \eqref{eq:dmi_second}, by Cauchy--Schwarz, \begin{align*}
\eqref{eq:dsi_third} &\le \pr{
    \E_{f_{\hat G_n,
\nu_i}(z)} \bk{
    (\PE_{\hat G_n, \nu_i}\bk{
            (Z-\tau)\tau \mid Z
        })^2
}
}^{1/2} \sqrt{\P_{f_{\hat G_n,
\nu_i}(z)}(f_{\hat G_n,
\nu_i}(z) \le \rho_n/\nu_i)} \\&\leH M_n \cdot \rho_n^{1/3} M_n^{1/3} \leH \frac{M_n^{4/3}}{n}.
\end{align*}

\subsection{Bounding $U_{2m}, U_{2s}$}

\begin{lemma}
\label{lemma:u2k}
Under \cref{as:npmle,as:moments,as:holder,as:variance_bounds,as:Delta_M_rate}, 
for $k \in \br{m,s}$, 
\[
\P\bk{\Dinfty \le \Delta_n, \bar Z_n \le M_n,
|U_{2k}| \gtrsim_{\Hyperparams} r_n
  } \le \frac{2}{n}
\]
for $r_n = \Delta_n e^{-C_\H M_n^\alpha}\log n  
  + \frac{M_n^{3/2} (\log n)^{5/4}}{\sqrt{n}} \Delta_n
  + \frac{M_n \sqrt{\log n}}{\sqrt{n}} \Delta_n^{1-\frac{1}{2p}}.$
\end{lemma}

\begin{proof}
Let $k \in \br{m,s}$. 
We first show that if $\Dinfty \le \Delta_n$ and $\bar Z_n \le
M_n$, then for some $\bar U_{2k}$ to be chosen, $
|U_{2k}|  \le \bar U_{2k}.
$
We choose $\bar U_{2k}$ in \eqref{eq:baru2k} such that $\P[\bar U_{2k} > t]$ is small. Then \[
\P\bk{\Dinfty \le \Delta_n, \bar Z_n \le M_n,
|U_{2k}| > t}  \le \P[\bar U_{2k} > t]. 
\]
Thus the bound for $\bar U_{2k}$ would suffice. 

Let \begin{align*}
D_{k,i,M_n}(Z_i, \hat G_n, \eta_0, \rho_n) &= D_{k,i} ( Z_i, \hat G_n, \eta_0, \rho_n )
\one(|Z_i| \le M_n) \\
\bar D_{k,i, M_n}(\hat G_n, \eta_0, \rho_n) &=
\int D_{k,i} ( z, \hat G_n, \eta_0, \rho_n ) \one(|z| \le M_n) f_{G_0, \nu_i}(z)\,dz.
\end{align*}
They are truncated versions of $D_{k,i}(z, \hat G_n, \eta_0,
\rho_n)$. 

Observe that on $\bar Z_n \le M_n$, $D_{k,i,M_n}(Z_i, \hat G_n, \eta_0, \rho_n) = D_{k,i}
(Z_i, \hat G_n, \eta_0, \rho_n)$ for all $i$. Thus, on $\bar Z_n \le M_n$, we may
decompose \begin{align}
|U_{2k}| &\le 
\absauto{
  \frac1n \sum_{i=1}^n \br{D_{k,i,M_n}(Z_i, \hat G_n, \eta_0, \rho_n) - \bar D_{k,i,M_n}
  (\hat G_n, \eta_0, \rho_n)} \Delta_{ki}
}
\label{eq:pre_project_emp_proc}
 \\&\quad + \absauto{
  \frac1n \sum_{i=1}^n \br{\bar D_{k,i}(Z_i, \hat G_n, \eta_0, \rho_n) - \bar D_
  {k,i,M_n}
  (\hat G_n, \eta_0, \rho_n)} \Delta_{ki}
} \label{eq:Dk_tail}.
\end{align}

By \cref{lemma:qbound,thm:jianglemma2}, uniformly over all $G$, \[ |D_{k,i}
(z, G, \eta_0, \rho_n)| \leH |z| \sqrt{\log n} + \log n. \numberthis \label{eq:Dki_bound_}
\]
Thus, \[
\eqref{eq:Dk_tail} \leH \Delta_n \bigg({\sqrt{\log n} \max_{i \in [n]} \underbrace{\int_
{|z|
> M_n} |z|
f_
{G_0,\nu_i} (z)\,dz}_{\le \sqrt{\E[Z_i^2] \P\pr{|Z_i| > M_n}}\leH \P\pr{|Z_i| > M_n}^
{1/2}} + \log n \max_{i\in [n]}\P_ {G_0, \nu_i} (|Z_i| >
M_n)}\bigg)
\]
By \cref{lemma:tail_bound}, $\P_{G_0, \nu_i}(|Z_i| > M_n) \le
\exp\pr{
    -C_{\alpha, A_0, \nu_u} M_n^{\alpha}
}.$ Hence $
\eqref{eq:Dk_tail} \leH \Delta_n e^{-C_\H M_n^\alpha} \log n .
$

Returning to \eqref{eq:pre_project_emp_proc}, let $G_1,\ldots, G_N \in \mathcal P(\R)$ be
 a $\omega$-net of $\mathcal P(\R)$ under the pseudometric \[
d_{k, \infty, M_n}(G_1, G_2) = \max_{i\in [n]} \sup_{|z| \le M_n} \abs{D_{k,i}(z, G_1,
\eta_0, \rho_n) - D_{m,i}(z, G_2, \eta_0, \rho_n)}.
\numberthis \label{eq:def_dk}
 \]
Thus, we can take $N = N(\omega, \mathcal P(\R), d_{k, \infty, M_n})$. By construction,
there exists a $G_{j^*}$ for $j^* \in [N]$ such that on $\bar Z_n \le M_n$,\begin{align*}
\abs{D_{k,i, M_n}(Z_i, \hat G_n, \eta_0, \rho_n) - D_{k,i, M_n}(Z_i, G_{j^*}, \eta_0, \rho_n)} &\le
\omega \\ 
\implies \abs{\bar D_{k,i, M_n}(\hat G_n, \eta_0, \rho_n) - \bar D_{k,i, M_n}(G_{j^*},
\eta_0,
\rho_n)} &\le \omega.
\end{align*}
The second line in the above display holds since the integrand is bounded by $\omega$.
Hence, projecting $\hat
G_n$ to the $\omega$-net, we have that \[
\eqref{eq:pre_project_emp_proc} \le 2\omega \Delta_n + \max_{j \in [N]}
\abs[\bigg]{\frac{1}{n} \sum_{i=1}^n \br{D_
{k,i,
M_n}(Z_i, G_j, \eta_0, \rho_n) - \bar D_{k,i, M_n }(G_j, \eta_0, \rho_n)}
\Delta_{ki}}
\]
Define \begin{align*}
v_{i,j}(\eta) &\equiv \br{D_
{k,i,
M_n}(Z_i, G_j, \eta_0, \rho_n) - \bar D_{k,i, M_n }(G_j, \eta_0, \rho_n)}
\Delta_{ki}(\eta) \quad \Delta_{ki}(\eta) \equiv k_i(\sigma_i) - k_0(\sigma_i) \quad k\in
\br{m,s} \\
V_{n,j}(\eta) &\equiv \frac{1}{n} \sum_{i=1}^n v_{i,j}(\eta).
\end{align*}
We have that \[
\eqref{eq:pre_project_emp_proc} \lesssim \omega \Delta_n + \max_{j\in[N]} \sup_{\eta\in S}
|V_{n,j}(\eta)|
\]
where  \[S = \br{(m, s) : \norm{m-m_0}_\infty \le \Delta_n,
\norm{s-s_0}_\infty
\le \Delta_n, (m, s) \in \mathcal V}\numberthis \label{eq:S_def}\] for $\mathcal V$ in
\cref{as:holder}.
As a result, for some $\omega$ to be chosen, let us take \[
\bar U_{2k} = C_\Hyperparams \br{
  \Delta_n (\log n) e^{-C_\H M_n^\alpha} + \omega \Delta_n + \max_{j\in[N]} \sup_{\eta\in S}
|V_{n,j}(\eta)|
} \numberthis \label{eq:baru2k}
\]
and analyze its tail behavior.

First, let us bound the empirical process $\max_{j\in[N]} \sup_{\eta\in S} |V_{n,j}
(\eta)|$. Note that for a fixed $\eta, \eta_1, \eta_2 \in S$, we have that, since $
\eqref{eq:Dki_bound_} \leH M_n \sqrt{\log n}$ on $|z| \le M_n$, \begin{align*}
\absauto{V_{n,j}(\eta)} &\leH \frac{ M_n \sqrt{\log n}}{\sqrt{n}} \Delta_n \\
\norm{V_{n,j}(\eta_1) - V_{n,j}(\eta_2)}_{\psi_2} &\leH \frac{ M_n \sqrt{\log n}}{\sqrt{n}}
\norm{\eta_1 - \eta_2}_\infty \tag{$v_{i,j}$ are independent across $i$ and bounded}.
\end{align*}
Since $\eta \mapsto V_{n, j}
(\eta)$ is a process with subgaussian increments under $\norm{\eta}_\infty$ (see (8.1)
in \citet{vershynin2018high}), by Theorem 8.1.6 in
\citet{vershynin2018high}, for all $u > 0$, \[
\sup_{\eta \in S} |V_{n,j}(\eta)| \leH \frac{ M_n \sqrt{\log n}}{
\sqrt{n}} \bk{(1+u)\Delta_n + 
\int_0^{\infty} \sqrt{\log N(\epsilon, S, \norm{\cdot}_\infty) } \,d \epsilon
},
\]
holds with probability at least $1-2e^{-u^2}$.

Since $
\sqrt{\log N(\epsilon, S, \norm{\cdot}_\infty) } \lesssim \sqrt{\log N(\epsilon/2,
\mathcal V,
\norm{\cdot}_\infty)} \leH \sqrt{\log C_\H + (1/\epsilon)^{-1/p}}$ by 
\cref{as:holder} and Exercise 4.2.10 in \citet{vershynin2018high}, \[
\int_0^{\infty} \sqrt{\log N(\epsilon, S, \norm{\cdot}_\infty) } \,d \epsilon = \int_0^{2
\Delta_n} \sqrt{\log N
(\epsilon, S, \norm{\cdot}_\infty) } \,d \epsilon \leH
\Delta_n^{1-\frac{1}{2p}}.
\]
Note that by a union bound, \begin{align*}
\P\br{
  \max_{j \in [N]} \sup_{\eta \in S} 
 |V_{n,j}(\eta)| \gtrsim_\H \frac{ M_n \sqrt{\log n}}{
\sqrt{n}} \bk{(1+u)\Delta_n + \Delta_n^{1-\frac{1}{2p}}
}
} \le 2N e^{-u^2}
\end{align*}
Choose $u = \sqrt{\log N} + \sqrt{\log n} \ge \sqrt{\log N + \log n}$ such that the right
hand side is bounded by $2/n$.

Next, choose $\omega = M_n \frac{\sqrt{\log(1/\rho_n)}}{\rho_n} \frac{\rho_n}{\sqrt{n}}
\log \pr{\frac{\sqrt{n}
}{\rho_n}} \ge \frac{\sqrt{\log(1/\rho_n)}}{\rho_n} \frac{\rho_n}{\sqrt{n}}
\log \pr{\frac{\sqrt{n}
}{\rho_n}}$. By \cref{prop:covering_ms}, \begin{align*}
\log N(\omega, \mathcal P(\R), d_{m, \infty, M_n}) 
&\le \log N\pr{
  \frac{\sqrt{\log(1/\rho_n)}}{\rho_n} \frac{\rho_n}{\sqrt{n}}
\log \pr{\frac{\sqrt{n}
}{\rho_n}} , \mathcal P(\R), d_{m,\infty, M_n}
} \\
&\leH (\log n)^2 \max\pr{1, \frac{M_n}{\sqrt{\log(n)}}} \\
&\leH (\log n)^{3/2} M_n
\\ 
\log N(\omega, \mathcal P(\R), d_{s, \infty, M_n}) &\leH (\log n)^2 \max\pr{1, \frac{M_n}{
\sqrt{\log(n)}}} \\
&\leH (\log n)^{3/2} M_n  \tag{$\delta = \rho_n/\sqrt{n}$}
\end{align*}
Note that this choice is such that  $\omega \leH \frac{1}{\sqrt{n}}(\log n)^{3/2} M_n$ and
$
(1+u) \leH (\log n)^{3/4} M_n^{1/2} + \sqrt{\log n} \leH (\log n)^{3/4} M_n^{1/2}. 
$

Returning to \eqref{eq:baru2k}, since $V_{n,j}(\eta)$ is the only random expression in
\eqref{eq:baru2k}, this shows that \[
\P\br{
  \bar U_{2k} \gtrsim_\H
  \Delta_n e^{-C_\H M_n^\alpha}\log n 
  + \frac{M_n^{3/2} (\log n)^{5/4}}{\sqrt{n}} \Delta_n
  + \frac{M_n \sqrt{\log n}}{\sqrt{n}} \Delta_n^{1-\frac{1}{2p}}
} \le \frac{2}{n}.
\]
Here, note that the term $\omega \Delta_n$ is dominated by $\frac{M_n^{3/2} (\log n)^
{5/4}} {\sqrt{n}} \Delta_n$ since $M_n \gtrsim_\H \sqrt{\log n}$. This concludes the
proof. \qedhere

\end{proof}

\subsection{Bounding $U_{3m}, U_{3s}$}

\begin{lemma}
\label{lemma:u3}
Under \cref{as:moments,as:holder,as:variance_bounds,as:Delta_M_rate}, 
for $k \in \br{m,s}$,
\begin{align*}
\P\bk{
    \Dinfty \le \Delta_n, \bar Z_n \le M_n, |U_{3k}| \gtrsim_{\Hyperparams} \Delta_n
    \br{ e^{-C_{\Hyperparams} M_n^\alpha} + \frac{M_n^2}{\sqrt{n}}
\pr{
\Delta_n^{-1/(2p)}
+ \sqrt{\log n}}
}
} &\le \frac{2}{n}.
\end{align*}
\end{lemma}

\begin{proof}
The proof structure follows that of \cref{lemma:u2k}. Recall that 
\begin{align*}
U_{3k} &= \frac{1}{n} \sum_{i=1}^n D_{k,i}(Z_i, G_0, \eta_0, 0) \Delta_{ki}.
\\&=\underbrace{\frac{1}{n} \sum_{i=1}^n \br{D_{k,i, M_n}(Z_i, G_0, \eta_0, 0) - \bar D_
{k,i,M_n} (Z_i,
G_0, \eta_0, 0)} \Delta_{ki}}_{V_n(\eta)} + \bar D_
{k,i,M_n}(G_0, \eta_0, 0)\Delta_{ki} \tag{on the event $\bar Z_n \le M_n$}.
\end{align*}

Now, observe that, by Cauchy--Schwarz, \begin{align*}
\absauto{\bar D_{k,i,M_n} (G_0, \eta_0, 0)} \leH \int_{|z| \le M_n} T_k(z, \eta_0, G_0)
f_{G_0, \nu_i}(z)\,dz \le \P(Z_i > M_n)^{1/2} \pr{\E \bk{T_k^2(Z_i, \eta_0, G_0)}}^{1/2}
\end{align*}
where $T_m = \frac{|f'_{G_0, \nu_i}(z)|}{f_{G_0,\nu_i}(z_i)}$ and $T_s = \frac{|Q_i(z),
\eta_0, G_0)|}{f_{G_0,\nu_i}(z_i)}$. Note that since both $T_k$ take the form of $|\E_
{G_0} [f(Z,\tau) \mid Z]|$, we can use Jensen's inequality to bound $
\E[T_k^2] \le \E[f^2(Z_i,\tau)] \leH 1
$.
Hence, \[
\absauto{\bar D_{k,i,M_n} (G_0, \eta_0, 0)} \leH  e^{-C_\H M_n^\alpha}.
\tag{\cref{lemma:tail_bound}}
\]

We likewise analyze $V_n(\eta)$. Note that \[
\abs{V_n(\eta_1) - V_n(\eta_2)} \le \norm{\eta_1 - \eta_2}_\infty \frac{1}{n} \sum_{i=1}^n
\absauto{
  D_{k,i, M_n}(Z_i, G_0, \eta_0, 0) - \bar D_
{k,i,M_n} (Z_i,
G_0, \eta_0, 0)
}.
\]
By \cref{lemma:posterior_moments}, since $D_{k,i,M_n}$ is a posterior moment under $G_0$
truncated to
$|z| \le M_n$, \[
\absauto{
  D_{k,i, M_n}(Z_i, G_0, \eta_0, 0) - \bar D_
{k,i,M_n} (Z_i,
G_0, \eta_0, 0)
} \leH M_n^2.
\]
As a result, \[
\norm[\bigg]{
  \frac{1}{n} \sum_{i=1}^n
\absauto{
  D_{k,i, M_n}(Z_i, G_0, \eta_0, 0) - \bar D_
{k,i,M_n} (Z_i,
G_0, \eta_0, 0)
}
}_{\psi_2} \leH \frac{M_n^2}{\sqrt{n}}
\]
and thus $\norm{V_n(\eta_1) - V_n(\eta_2)}_{\psi_2} \leH \frac{M_n^2}{\sqrt{n}} \norm{
  \eta_1 - \eta_2
}_\infty$ has subgaussian increments. For a fixed $\eta$, $|V_n(\eta)| \leH \Delta_n
M_n^2/\sqrt{n}$.

By the same chaining argument as in the proof of \cref{lemma:u2k}, recalling $S$ in \eqref{eq:S_def},
\[
\sup_{\eta \in S} |V_n(\eta)| \leH \frac{M_n^2}{\sqrt{n}} \bk{
  \sqrt{\log n} \Delta_n + \Delta_n^{1-\frac{1}{2p}}
}
\]
with probability at least $1-2/n$. Here we choose $u = \sqrt{\log n}$ since we do not have
to project $G_0$ to some covering. Thus, we can let \[
\bar U_{3k} = C_\H \pr{\sup_{\eta \in S} |V_n(\eta)| + \Delta_n e^{-C_\H M_n^\alpha} }. 
\]
Bounding $\bar U_{3k}$ using the bound for $\sup_{\eta \in S} |V_n(\eta)| $ concludes the
proof. \qedhere

\end{proof}

\subsection{Bounding $R_1, R_2$}

\begin{lemma}
\label{lemma:r1}
Recall $R_{1i}$ from \eqref{eq:decomposition_hatG}. Then, under
\cref{as:npmle,as:moments,as:holder,as:variance_bounds,as:Delta_M_rate},
if $\Dinfty \le \Delta_n$ and $\bar Z_n \le M_n$, then
$
R_{1i} \leH \Delta_n^2 M_n^2 \log n.
$
\end{lemma}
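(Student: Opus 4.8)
The plan is to bound the quadratic form $R_{1i} = \frac{1}{2}\Delta_i' H_i(\tilde\eta_i, \hat G_n)\Delta_i$ by the product of $\|\Delta_i\|^2$ and the size of the Hessian $H_i = \partial^2\psi_i/\partial\eta_i\partial\eta_i'$, and then to control that Hessian on the event $\{\Dinfty \le \Delta_n,\ \bar Z_n \le M_n\}$ using the explicit form of the second derivatives of $\psi_i$. First I would note that on $\{\Dinfty \le \Delta_n\}$ we have $|\Delta_{mi}| \le \Delta_n$ and $|\Delta_{si}| \le \Delta_n$, so $\|\Delta_i\|^2 \le 2\Delta_n^2$, and since $H_i$ is a $2\times 2$ symmetric matrix,
\[
|R_{1i}| \le 2\Delta_n^2 \max_{k,l \in \{m,s\}} \abs[\Bigg]{ \diff{^2\psi_i}{\eta_{ik}\partial\eta_{il}}\evalbar_{\tilde\eta_i, \hat G_n} }.
\]
It therefore suffices to show that each entry of the Hessian, evaluated at the intermediate nuisance $\tilde\eta_i$ on the segment between $\hateta_i$ and $\eta_{0i}$ and at the prior $\hat G_n$, is $\leH M_n^2 \log n$.

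Second, I would verify that the intermediate point $\tilde\eta_i = (\tilde m_i, \tilde s_i)$ keeps all relevant quantities in a controlled range. By \cref{as:variance_bounds} and \cref{as:holder}(4), both $s_{0i}$ and $\hats_i$ lie in a fixed interval bounded away from $0$ and $\infty$, so $\tilde s_i$ does too, whence $\hat\nu_i(\tilde\eta) = \sigma_i/\tilde s_i$ is bounded above and below. Writing $\hat Z_i(\tilde\eta) = (s_{0i}Z_i + m_{0i} - \tilde m_i)/\tilde s_i$ and using $|Z_i| \le M_n$ together with $\Dinfty \le \Delta_n \leH 1$ gives $|\hat Z_i(\tilde\eta)| \leH M_n$, exactly as in the argument for \cref{thm:lb}(1). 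Moreover, because $\hat G_n$ is supported within $[\min_i\hat Z_i, \max_i\hat Z_i]$ by \cref{as:npmle} and each $|\hat Z_i| \leH M_n$ on this event, the support of $\hat G_n$ lies in an interval of radius $\leH M_n$; consequently the posterior moments $\PE_{\hat G_n, \hat\nu_i(\tilde\eta)}[\tau^q \mid \hat Z_i(\tilde\eta)]$ are $\leH M_n^q$ for each fixed $q$, as in \cref{lemma:posterior_moments}.

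Third, I would invoke the explicit second-derivative computations, namely the same bounds \cref{lemma:secondderivatives} used to control $\xi_1$ in \cref{sub:xi1}. Each entry of $H_i$ is a rational expression in the posterior moments of $\tau$ under $\hat G_n$ at noise level $\hat\nu_i(\tilde\eta)$, in the truncated score $f'_{\hat G_n, \hat\nu_i}/(f_{\hat G_n,\hat\nu_i}\vee \rho_n/\hat\nu_i)$, and in the deterministic factors $1/\tilde s_i$, $\sigma_i$, $\hat\nu_i$. The score factors are $\leH \sqrt{\log n}$ by \cref{thm:jianglemma2} and \cref{thm:lb}, so squaring them contributes at most one factor of $\log n$, while the posterior moments and the powers of $\hat Z_i(\tilde\eta)$ contribute at most two factors of $M_n$; the dominant contribution comes from $\partial^2\psi_i/\partial s_i^2$, which carries two powers of $\hat Z_i(\tilde\eta)/\tilde s_i \leH M_n$. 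Collecting these, $\max_{k,l}\abs{\partial^2\psi_i/\partial\eta_{ik}\partial\eta_{il}} \leH M_n^2 \log n$, and substituting into the display above yields $R_{1i} \leH \Delta_n^2 M_n^2 \log n$.

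The main obstacle is the bookkeeping in the third step: one must write out the second derivatives of $\psi_i$ carefully (they are more involved than the first derivatives $D_{m,i}, D_{s,i}$ because both $\hat Z_i$ and $\hat\nu_i$ depend on $s_i$) and identify precisely which products of posterior moments and scores appear, so as to confirm that no entry exceeds the $M_n^2 \log n$ scaling and that the truncation at $\rho_n$ never binds, the latter guaranteed by \cref{thm:lb}(2). Since this computation is already isolated in \cref{lemma:secondderivatives}, the remaining work is simply to combine it with the support and moment bounds established above.
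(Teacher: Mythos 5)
Your proposal is correct and follows essentially the same route as the paper: bound the quadratic form $R_{1i}$ by $\Delta_n^2$ times the largest Hessian entry, then invoke \cref{lemma:secondderivatives} (whose internal argument your second and third steps essentially re-derive, including the control of $\hat Z_i(\tilde\eta)$, the support of $\hat G_n$, and the non-binding truncation via \cref{thm:lb}) to get the $M_n^2\log n$ bound, with the dominant entry being $\partial^2\psi_i/\partial s_i^2$.
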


\begin{proof}
Observe that $R_{1i} \leh \max\pr{\Delta_{mi}^2, \Delta_{si}^2} \cdot \norm{H_i(\tilde
\eta_i, \hat G_n)}_\infty$,
where $\norm{\cdot}_\infty$ takes the largest entry from a matrix by magnitude. By
assumption, the first term is bounded by $\Delta_n^2$. By \cref{lemma:secondderivatives},
the second derivatives are bounded by $M_n^2 \log n$. Hence $\norm{H_i(\tilde \eta_i, \hat
G_n)}_\infty \leH M_n^2 \log n$. This concludes the proof.
\end{proof}

\begin{lemma}
\label{lemma:r2}
Under \cref{as:moments,as:holder,as:variance_bounds,as:Delta_M_rate},  then \[
\P\pr{\Dinfty \le \Delta_n, \bar Z_n \le M_n, |R_2| \gtrsim_\Hyperparams
\Delta_n^2}
\le \frac{1}{n}.
\]
\end{lemma}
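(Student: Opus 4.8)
The plan is to mirror the treatment of $R_1$ in \cref{lemma:r1}, but to extract a sharper bound by exploiting that the Hessian in $R_2$ is evaluated at the \emph{true} prior $G_0$ rather than at $\hat G_n$. First I would note that on the event $\{\Dinfty \le \Delta_n\}$ the increments $\Delta_i = (\Delta_{mi}, \Delta_{si})'$ satisfy $\max(|\Delta_{mi}|, |\Delta_{si}|) \le \Delta_n$, so that
\[
|R_2| \le \frac{1}{2n}\sum_{i=1}^n |\Delta_i' H_i(\tilde\eta_i, G_0)\Delta_i| \lesssim \Delta_n^2 \cdot \frac{1}{n}\sum_{i=1}^n \norm{H_i(\tilde\eta_i, G_0)}_\infty,
\]
where $\norm{\cdot}_\infty$ takes the largest matrix entry by magnitude. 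The key contrast with \cref{lemma:r1} is that for $R_1$ the support of $\hat G_n$ lies in $[\min_i \hat Z_i, \max_i \hat Z_i]$, which forces the almost-sure factor $M_n^2 \log n$; here $G_0$ has unbounded support, so instead I would aim to show that $\frac1n \sum_i \norm{H_i(\tilde\eta_i, G_0)}_\infty$ is $O_\Hyperparams(1)$ with probability $1 - O(1/n)$, which is precisely why the bound for $R_2$ is probabilistic while that for $R_1$ is almost sure.

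Second, I would obtain a \emph{pointwise} bound on the Hessian entries. By \cref{lemma:secondderivatives}, the second derivatives of $\psi_i$ are functionals of the posterior moments of $\tau$ under the relevant prior and noise level, evaluated at the pseudo-observation $\hat Z_i(\tilde\eta_i)$. On $\{\Dinfty \le \Delta_n\}$ we have $\norm{\tilde\eta_i - \eta_{0i}}_\infty \le \Delta_n$; since $s_0 \ge s_{0\ell}$ and $\Delta_n \to 0$ under \cref{as:Delta_M_rate}, the intermediate scale $\tilde s_i$ is bounded away from $0$ and $\infty$, so $\hat\nu_i(\tilde\eta_i)$ lies in a fixed compact subset of $(0,\infty)$, while $|\hat Z_i(\tilde\eta_i)| \lesssim_\Hyperparams 1 + |Z_i|$. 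Crucially, these derivatives are taken at $G_0$ with regularization parameter $0$, so no $\log(1/\rho_n)$ factor enters; instead, \cref{lemma:posterior_moments} bounds the posterior moments under $G_0$ pointwise by polynomials in $|Z_i|$, thanks to the simultaneous moment control of \cref{as:moments}. This yields, on $\{\Dinfty \le \Delta_n\}$, a bound of the form $\norm{H_i(\tilde\eta_i, G_0)}_\infty \lesssim_\Hyperparams (1 + |Z_i|)^{q_0}$ for some fixed integer $q_0$, with no dependence on $M_n$.

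Finally, I would close the argument with a second-moment (Chebyshev) concentration step. Writing $S_n = \frac1n \sum_{i=1}^n (1 + |Z_i|)^{q_0}$, the $Z_i$ are independent and, because $Z_i = \tau_i + \nu_i \varepsilon_i$ with $\tau_i \sim G_0$ satisfying \cref{as:moments} and $\nu_i \in (\nu_\ell, \nu_u)$, every fixed moment of $(1 + |Z_i|)^{q_0}$ is bounded by a constant depending only on $\Hyperparams$. Hence $\E[S_n] \le \mu$ and $\var(S_n) \le M_\star/n$ for constants $\mu, M_\star$ depending only on $\Hyperparams$, and Chebyshev gives $\P(S_n > \mu + t) \le M_\star/(n t^2)$; taking $t = \sqrt{M_\star}$ makes this $\le 1/n$. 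Combining with the reduction above, on the intersection $\{\Dinfty \le \Delta_n\} \cap \{S_n \le \mu + \sqrt{M_\star}\}$ one has $|R_2| \lesssim_\Hyperparams \Delta_n^2$, and
\[
\P\pr{\Dinfty \le \Delta_n,\ \bar Z_n \le M_n,\ |R_2| \gtrsim_\Hyperparams \Delta_n^2} \le \P\pr{\Dinfty \le \Delta_n,\ |R_2| \gtrsim_\Hyperparams \Delta_n^2} \le \P\pr{S_n > \mu + \sqrt{M_\star}} \le \frac{1}{n},
\]
since the extra event $\{\bar Z_n \le M_n\}$ only shrinks the probability. The main obstacle is the second step: establishing the \emph{pointwise} (not merely in-expectation) polynomial control of the Hessian under $G_0$, since this is what removes the $M_n^2 \log n$ factor present in $R_1$ and allows a crude second-moment bound to deliver the $1/n$ tail.
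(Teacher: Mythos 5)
Your proposal is correct and follows essentially the same route as the paper's proof: the same H\"older-inequality reduction $|R_2|\lesssim \Delta_n^2\cdot\frac{1}{n}\sum_i\norm{H_i}_\infty$, the same pointwise polynomial bound on the Hessian via \cref{lemma:posterior_moments} (the paper pins the exponent to $(Z_i\vee 1)^4$ where you write a generic $(1+|Z_i|)^{q_0}$), and the same Chebyshev second-moment step to get the $1/n$ tail. Your observation that evaluating the Hessian at $G_0$ rather than $\hat G_n$ is what removes the $M_n^2\log n$ factor and makes the bound probabilistic rather than almost sure is exactly the distinction the paper's argument exploits.
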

\begin{proof}
Recall that $\one(A_n) = \one(\Dinfty \le \Delta_n, \bar Z_n \le M_n)$.
Note that \[
\one(A_n)|R_2| \leH \Delta_n^2 \frac{1}{n}\sum_{i=1}^n \one(A_n)\norm{H_i}_\infty.
\] by $(1,\infty)$-H\"older inequality. Moreover, note that the second derivatives(\eqref
 {eq:d2psids2},  \eqref{eq:cross_psi_deriv} , \eqref{eq:d2psids2}) that occur in entries
 of $H_i$ are functions of posterior moments under $G_0$, evaluated at $Z_i =
\hat Z_i(\hat\eta_i)$. By
\cref{lemma:posterior_moments},
under $G_0,$ these posterior moments are bounded above by corresponding moments of
$\hat Z_i (\tilde\eta_i).$  Hence,  \[
\one(A_n) \norm{H_i}_\infty \leH \one(A_n) (\hat Z_i
(\tilde\eta_i) \vee 1)^{4}  \leH (Z_i \vee 1)^4.
\numberthis \label{eq:Hbound}
\]
Hence, $
\one(A_n)|R_2| \leH \Delta_n^2 \frac{1}{n} \sum_{i=1}^n (Z_i \vee 1)^4.$
Chebyshev's inequality implies that there exists some choice $C_\H$ such that \[
\P\bk{
\frac{1}{n} \sum_{i=1}^n (Z_i \vee 1)^4 \ge C_\H
} \le \frac{1}{n},
\]
since $\var(\frac{1}{n} \sum_{i=1}^n (Z_i \vee 1)^4) \leH \frac{1}{n}$. 
Hence, $
\P\pr{\Dinfty \le \Delta_n, \bar Z_n \le M_n, |R_2| \gtrsim_\Hyperparams
\Delta_n^2}
\le \frac{1}{n}.$ \qedhere
\end{proof}

\subsection{Complexity of $\mathcal P(\R)$ under moment-based distance}
The following is a minor generalization of Lemma 4 and Theorem 7 in
\citet{jiang2020general}. In particular, \citet{jiang2020general}'s Lemma 4
reduces to the case $q=0$ below, and \citet{jiang2020general}'s Theorem 7
relies on the results below for $q=0,1$. The proof largely
follows the proofs of these two results of \citet{jiang2020general}.

We first state the following fact readily verified by differentiation.
\begin{lemma}
\label{lemma:lipschitz_smooth_varphi}
For all integer $m \ge 0$:
    \[
    \sup_{t\in \R} |t^m \varphi(t)| = m^{m/2} \varphi(\sqrt{m}).
    \]
As a corollary, there exists absolute $C_m > 0$ such that $t \mapsto t^m
\varphi(t)$ is $C_m$-Lipschitz.
\end{lemma}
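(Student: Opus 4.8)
The plan is to treat both claims as elementary one-variable calculus. For the identity, I would first observe that $|t^m\varphi(t)| = |t|^m\,\varphi(t)$ since $\varphi > 0$, and that this quantity is even in $t$; hence it suffices to maximize $h(t) = t^m\varphi(t)$ over $t \ge 0$. The case $m = 0$ is immediate, the maximum being $\varphi(0)$, which matches $m^{m/2}\varphi(\sqrt m)$ under the convention $0^0 = 1$, so I would then assume $m \ge 1$. On $(0,\infty)$ the cleanest route is the logarithmic derivative: using $\varphi'(t) = -t\varphi(t)$, one finds $h'(t)/h(t) = m/t - t$, which vanishes exactly at $t = \sqrt m$. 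Since $h(0) = 0$, $h(t) \to 0$ as $t \to \infty$, and $h > 0$ on $(0,\infty)$, this unique interior critical point is the global maximizer, and evaluating gives $h(\sqrt m) = (\sqrt m)^m\varphi(\sqrt m) = m^{m/2}\varphi(\sqrt m)$, as claimed.

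For the Lipschitz corollary I would bound the derivative of $g(t) = t^m\varphi(t)$ uniformly in $t$. Again using $\varphi'(t) = -t\varphi(t)$, one computes $g'(t) = m\,t^{m-1}\varphi(t) - t^{m+1}\varphi(t)$. Applying the first part of the lemma with exponents $m-1$ and $m+1$ bounds each summand by a finite constant, namely $\sup_t |t^{m-1}\varphi(t)| = (m-1)^{(m-1)/2}\varphi(\sqrt{m-1})$ and $\sup_t |t^{m+1}\varphi(t)| = (m+1)^{(m+1)/2}\varphi(\sqrt{m+1})$. Hence $\sup_t |g'(t)| \le C_m := m\,(m-1)^{(m-1)/2}\varphi(\sqrt{m-1}) + (m+1)^{(m+1)/2}\varphi(\sqrt{m+1}) < \infty$, and the mean value theorem delivers that $g$ is $C_m$-Lipschitz.

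There is no genuine obstacle here; the only points needing a little care are bookkeeping ones. First, I would justify explicitly that the single interior critical point of $h$ is the global maximum rather than merely a local extremum, which the boundary behavior of $h$ on $[0,\infty)$ settles at once. Second, I would flag the edge cases, $m = 0$ in the identity and the appearance of $(m-1)^{(m-1)/2}$ at $m = 1$, both resolved by the convention $0^0 = 1$, since these contribute no real difficulty.
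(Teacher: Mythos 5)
Your proposal is correct and follows exactly the route the paper intends: the paper states this lemma as a fact ``readily verified by differentiation'' without writing out a proof, and your argument---locating the critical point $t=\sqrt{m}$ via the logarithmic derivative, checking boundary behavior, and then bounding $g'(t)=m\,t^{m-1}\varphi(t)-t^{m+1}\varphi(t)$ by the first part plus the mean value theorem---is precisely that verification, with the $m=0$ and $m=1$ edge cases handled properly.
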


\begin{prop}
\label{prop:covering_for_moments}
Fix some $q \in \N \cup \br{0}$ and $M > 1$.
Consider the pseudometric \[d_{\infty, M}^{(q)}(G_1, G_2) = \max_{i\in[n]}
\underbrace{\max_{0 \le v
\le q}\sup_{|x|
\le M} \abs[\bigg]
{\int
 \frac{(u-x)^v}{\nu_i^v} \varphi\pr{\frac{x-u}{\nu_i}}  (G_1 - G_2)(du)
}
}_{d_{q,i,m}(G_1, G_2)}.\]
Let $\nu_\ell, \nu_u$ be the lower and upper bounds of $\nu_i$.
Then, for all $0 < \delta < \exp(-q/2) \minwith e^{-1}$,
\[
\log N(\delta \log^{q/2}(1/\delta), \mathcal P(\R), d_{\infty, M}^{(q)}) \lesssim_{q, \nu_u,
\nu_\ell}\log^2(1/\delta) \max\pr{\frac{M}{\sqrt{\log
(1/\delta)}}, 1}.
\]
\end{prop}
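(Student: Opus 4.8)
The plan is to adapt the covering-number argument behind Lemma 4 and Theorem 7 of \citet{jiang2020general} (which correspond to $v\in\br{0}$ and $v\in\br{0,1}$) and to track how the polynomial prefactor for larger $v$ propagates. Write $g_v(t)=t^v\varphi(t)$, so that up to a sign the $v$-th kernel is $u\mapsto g_v\pr{(x-u)/\nu_i}$ and $d_{q,i,m}(G_1,G_2)=\max_{0\le v\le q}\sup_{|x|\le M}\absauto{\int g_v\pr{(x-u)/\nu_i}(G_1-G_2)(du)}$; these are convolutions of $G_1-G_2$ against smooth, rapidly decaying kernels. Since $\nu_i\in[\nu_\ell,\nu_u]$ is bounded away from $0$ and $\infty$, every constant below depends only on $q,\nu_\ell,\nu_u$, and the net I construct will control the displayed quantity simultaneously for all admissible $\nu$, so a single net handles the maximum over $i$. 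First I would reduce to compactly supported priors: because $\sup_{|t|\ge s}|g_v(t)|\lesssim_q s^v e^{-s^2/2}$ decays faster than any power of $\delta$ once $s\gtrsim_{\nu_u}\sqrt{\log(1/\delta)}$, truncating any $G\in\mathcal P(\R)$ to $[-T,T]$ with $T\sim M+\nu_u\sqrt{\log(1/\delta)}$ (placing the escaping mass at $\pm T$) changes each kernel integral by at most $\delta\log^{q/2}(1/\delta)$. Hence it suffices to cover $\mathcal P([-T,T])$ in $d^{(q)}_{\infty,M}$ at radius comparable to $\delta\log^{q/2}(1/\delta)$.

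Next I would partition $[-T,T]$ into $N_{\mathrm{blk}}\sim \max\pr{M/\sqrt{\log(1/\delta)},\,1}$ consecutive blocks of width $\ell\sim \nu_\ell\sqrt{\log(1/\delta)}$ and, on each block $b$ centered at $c_b$, approximate $G$ by a measure matching the mass of $G$ on $b$ together with its local moments $\int_b (u-c_b)^j\,dG$ up to order $k-1$, for $k\sim_{q,\nu_\ell,\nu_u}\log(1/\delta)$ with a large enough constant. The core estimate is the moment-matching bound: expanding $g_v\pr{(x-u)/\nu}$ in a Taylor series in $u$ about $c_b$ and using the Leibniz rule together with $\varphi^{(j)}=(-1)^j\,\mathrm{He}_j\,\varphi$ and \cref{lemma:lipschitz_smooth_varphi} gives $\sup_t|g_v^{(j)}(t)|\lesssim_q j^{v/2}\sqrt{j!}$, whence for two measures agreeing through order $k-1$ on block $b$,
\[
\absauto{\int_b g_v\pr{\tfrac{x-u}{\nu}}(G_1-G_2)(du)}
\;\lesssim_q\; \sum_{j\ge k}\frac{j^{v/2}}{\sqrt{j!}}\pr{\frac{\ell}{2\nu_\ell}}^{j}
\;\lesssim_q\; k^{v/2}\pr{\frac{e\,\ell^2}{4\nu_\ell^2 k}}^{k/2}.
\]
With $\ell^2/\nu_\ell^2\sim\log(1/\delta)$ and $k$ a large multiple of $\log(1/\delta)$, the geometric factor is at most $\delta$ and the prefactor $k^{v/2}\lesssim \log^{v/2}(1/\delta)\le\log^{q/2}(1/\delta)$, which is exactly where the $\log^{q/2}$ in the stated radius originates.

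Finally I would count. The coefficients $g_v^{(j)}\pr{(x-c_b)/\nu}$ inherit the Gaussian decay of $g_v$, so for each fixed $x$ only $O_{\nu_u}(1)$ blocks lie within the effective window and the per-block errors sum, across all blocks and uniformly in $\nu_i$, to $\lesssim \delta\log^{q/2}(1/\delta)$; this is what lets the linear-in-$N_{\mathrm{blk}}$ cost appear only in the cardinality and not in the radius. Per block, the data to discretize is the mass and the $k-1$ local moments, a point in a bounded $k$-dimensional moment body; covering it to the relative precision $\sim\delta/k$ dictated by the expansion costs $\log$-cardinality $\lesssim k\log(1/\delta)\lesssim\log^2(1/\delta)$. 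Multiplying by $N_{\mathrm{blk}}$ yields $\log N\lesssim_{q,\nu_\ell,\nu_u}\log^2(1/\delta)\,\max\pr{M/\sqrt{\log(1/\delta)},1}$, as claimed, since the covering numbers of $\mathcal P([-T,T])$ and $\mathcal P(\R)$ agree after truncation and the uniformity over $\nu\in[\nu_\ell,\nu_u]$ removes the $\max_i$.

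I expect the main obstacle to be bookkeeping rather than conceptual: carrying the prefactor $j^{v/2}$ through the Leibniz expansion of $g_v^{(j)}$ so that it contributes exactly the advertised $\log^{q/2}(1/\delta)$ factor and no worse, and—more delicately—verifying that the per-block errors genuinely sum over the $\sim M/\sqrt{\log(1/\delta)}$ blocks without reintroducing a factor of $N_{\mathrm{blk}}$ into the radius. This requires using the decay of $g_v^{(j)}\pr{(x-c_b)/\nu}$ in $|x-c_b|$ rather than the crude global sup $\sup_t|g_v^{(j)}(t)|$; the global-sup route still works whenever $M\lesssim\mathrm{poly}(1/\delta)$ (as in our application, where $M=M_n\sim(\log n)^{1/\alpha}$), at the cost of absorbable logarithmic factors, and I would fall back on it if the decay accounting becomes cumbersome.
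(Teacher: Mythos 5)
Your proposal is correct, and its skeleton is the same as the paper's proof: decompose the line into intervals of width of order $\nu_\ell\sqrt{\log(1/\delta)}$, replace $G$ locally by a measure matching moments up to degree of order $\log(1/\delta)$, control the resulting error using the rapid decay of the Gaussian kernel (only boundedly many intervals near $x$, a number depending on $\nu_u/\nu_\ell$, contribute non-negligibly), and count at cost (number of intervals) $\times$ (moment order) $\times$ $\log(1/\mathrm{mesh})$, which is what produces $\log^2(1/\delta)\max\pr{M/\sqrt{\log(1/\delta)},1}$. The differences are in execution. For the approximation error, the paper expands $\varphi$ itself and exploits the alternating-sign Taylor remainder, handling the polynomial prefactor crudely by pulling out $(2a)^q$, while you differentiate $g_v(t)=t^v\varphi(t)$ and use $\sup_t|g_v^{(j)}(t)|\lesssim_q j^{v/2}\sqrt{j!}$; both routes work, and in both the factor $\log^{q/2}(1/\delta)$ in the radius is precisely the prefactor $t^v$ evaluated at scale $\sqrt{\log(1/\delta)}$. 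For the counting, the paper invokes Carath\'eodory to get a finitely supported $G_m$, grids its support points onto an $\omega$-lattice (via the Lipschitz bound of \cref{lemma:lipschitz_smooth_varphi}), and covers the weight simplex, whereas you cover the vector of local moments directly; the two counts agree up to constants. Two cautions. First, your per-block mesh claim of ``relative precision $\sim\delta/k$'' is too optimistic: converting moment discrepancies into kernel discrepancies costs the weighted coefficient sum $\sum_{j<k}j^{v/2}(\ell/2\nu_\ell)^j/\sqrt{j!}\approx e^{c\,\ell^2/\nu_\ell^2}=\delta^{-c'}$, so the mesh must be taken polynomial in $\delta$; this is harmless, since the log-cardinality is only logarithmic in the mesh, but it must be stated. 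Second, your fallback of bounding far blocks by the global sup and paying a factor $N_{\mathrm{blk}}$ in the radius does not prove the proposition as stated, because $M>1$ is arbitrary here; the Gaussian-decay accounting across blocks is not optional, and it is the analogue of the paper's split into the three intervals adjacent to $x$ versus their complement.
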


\begin{proof}
The proof strategy is as follows. 
First, we discretize $[-M, M]$ into a
union of small intervals $I_j$. Fix $G$. There exists a finitely supported distribution
$G_m$ that matches moments of $G$ on every $I_j$. It turns out that such a $G_m$ is close
to $G$ in terms of $d^{(q)}_{\infty, M}$. Next, we discipline $G_m$ by
approximating $G_m$ with $G_{m,\omega}$, a finitely supported distribution supported on the
fixed grid $\br{k\omega : k \in \Z} \cap [-M,M]$. 
This shows that there exists a $G_{m,\omega}$ with finite support on a grid that
approximates $G$ in $d_{\infty, M}^{(q)}$.
Finally, the set of all $G_{m,\omega}$'s may
be approximated by a finite set of distributions, and we count the size of this finite
set.

\subsubsection{Approximating $G$ with $G_m$}

First, let us fix some $\omega < \varphi(\sqrt{q}) \minwith \varphi(1)$ to be chosen.
Let $a = \frac{\nu_u}{\nu_\ell} \invphi(\omega) \ge 1$. Let $I_j = [-M +
(j-2)a\nu_\ell, -M + (j-1) a
\nu_\ell]$ be such that \[
I \equiv [-M - a\nu_\ell, +M + a\nu_\ell] \subset \bigcup_{j=1}^{j^*} I_j
\] where $I_j$ is a width $a\nu_\ell$ interval. Let $j^* = \lceil \frac{2M}
 {a\nu_\ell} + 2\rceil$ be the number of such intervals. 

For some $k^*$ to be chosen, there exists by Carath\'eodory's theorem a distribution $G_m$
with support on $I$ and no
more than \[m = (2k^* + q + 1) j^* + 1\] support points such that the moments up to $2k^*
+ q$ match \[
\int_{I_j} u^k dG(u) = \int_{I_j} u^k dG_m(u) \text{ for all $k = 0,\ldots, 2k^* + q$ and
$j = 1,\ldots, j^*$}.
\]

Then, by analyzing $x \in I_j \cap [-M ,M]$, we have that \begin{align}
d_{q,i,M}(G,G_m) &\le \max_{0 \le v \le q} \max_{j=1,\ldots,j^*}  \sup_{x \in I_j \cap 
[-M,M]} \Bigg [\abs [\bigg]{\int_
{(I_
{j-1}\cup I_j \cup I_
{j+1})^\comp} \pr{\frac{u-x}{\nu_i}}^v \varphi\pr{\frac{x-u}
{\nu_i}} (G(du) - G_m(du))}
\label{eq:outsideintervaldiff}
 \\&+ \abs[\bigg]{ \int_{I_
{j-1}\cup I_j \cup I_
{j+1}} \pr{\frac{u-x}{\nu_i}}^v \varphi\pr{\frac{x-u}
{\nu_i}}  (G(du) - G_m(du)) } \Bigg] \label{eq:intervaldiff}
\end{align}

Note that $
t \mapsto |t|^v \varphi(t)
$
is a decreasing function in $|t|$, as long as $|t| > \sqrt{v}$. Note that over $ u \not
\in I_{j-1} \cup I_j \cup I_{j+1}$ and $x \in I_j$, $
\frac{|u-x|}{\nu_i} \ge a\nu_\ell / \nu_u = \varphi_+(\omega) \ge \sqrt{q} \ge \sqrt{v}
$.
Hence, \[
\absauto{\pr{\frac{u-x}{\nu_i}}^v \varphi\pr{\frac{x-u}
{\nu_i}}} \le \varphi_+(\omega)^{q} \omega \implies  \eqref{eq:outsideintervaldiff} \le
2\varphi_+(\omega)^q \omega. 
\]

For \eqref{eq:intervaldiff}, note that by the Taylor series for $e^x$, \[
\varphi(t) = \sum_{k=0}^\infty \frac{(-t^2/2)^{k}}{\sqrt{2\pi} k!} = \sum_{k=0}^{k^*}
\frac{(-t^2/2)^{k}}{\sqrt{2\pi} k!} + R(t).
\]
Thus the second term \eqref{eq:intervaldiff} can bounded by the maximum-over-$v$ of the
absolute value of \[
\sum_{k=0}^{k^*}\int \frac{\pr{\frac{x-u}{\nu_i}}^{v+2k} (-1/2)^{k}}{
\sqrt{2\pi} k !}
[G(du) - G_m(du)] + \int R\pr{\frac{x-u}
{\nu_i}} \pr{
\frac{x-u}
{\nu_i}}^v [G(du) - G_m(du)]
\]
The first term in the line above is zero since the moments match up to $2k^*+q$.
Therefore,\[
\eqref{eq:intervaldiff} \le \max_{0\le v\le q} \abs[\bigg]{ \int_{I_{j-1}\cup I_j
\cup I_{j+1}} \pr{\frac{u-x}{\nu_i}}^v R\pr{\frac{x-u}{\nu_i}}  (G(du) - G_m(du))
} \numberthis \label{eq:integral_to_bound}.
\]

For $x \in I_j$, in \eqref{eq:intervaldiff}, $|(x-u)/\nu_i| \le 2a\nu_\ell / \nu_i \le
2a$. On $[-2a, 2a]$, if we choose $k^* > (2a)^2/2$, then $R(t)$ is an infinite series with
alternating signs and decreasing entries. Thus, $R(t)$
is bounded by the first term of truncation\[
|R(t)| \le \frac{(t^2/2)^{k^* + 1}}{\sqrt{2\pi} (k^* + 1)!} \quad |t| \le 2a. 
\]
Hence the integral
\eqref{eq:integral_to_bound} is upper bounded by \begin{align*}
\eqref{eq:intervaldiff} &
\le 2 \cdot (2a)^q \cdot \frac{ \pr{(2a)^2/2}^{k^*+1}}{
\sqrt{2\pi} (k^*+1)!} \tag{$(2a)^v \le (2a)^q$} \\
&\le
\frac{2(2a)^q}{(2\pi) \sqrt{k^*+1}}\pr{\frac{2a^2}{k^*+1} e}^{k^*+1}  \tag{Stirling's formula $
(k^*+1)! \ge \sqrt{2\pi( k^*+1)} \pr{\frac{k^*+1}{e}}^{k^*+1}
$
} \\&
\le \frac{(2a)^q}{
\pi \sqrt{k^*+1}}\pr{\frac{e}{3}}^{k^*+1} \tag{Choosing $k^*+1 \ge 6a^2 \ge 6$}
\\
& \le \frac{(2a)^q}{
\pi \sqrt{k^*+1}} \exp\pr{-\frac{1}{2} \frac{k^*+1}{6}} \tag{$(e/3)^6 \le
e^{-1/2}$} \\
& \le \frac{(2a)^q}{\sqrt{k^*+1} \sqrt{\pi/2}} \underbrace{\varphi(a\nu_\ell/\nu_u) }_
{\varphi(\varphi_+(\omega))}
\tag{$k^*+1 \ge
6a^2 \ge 6(a\nu_\ell/\nu_u)^2$} \le \frac{(2a)^q}{\sqrt{k^* + 1} \sqrt{\pi/2}} \omega \\
&\le \frac{2^q}{\sqrt{3\pi}}\pr{\frac{\nu_u}{\nu_\ell}}^{q-1} \invphi^{q-1}(\omega) \omega
\tag{$k^* + 1 \ge 6a^2$}
\end{align*}
These bound $\eqref{eq:outsideintervaldiff} + \eqref{eq:intervaldiff}$ since the bounds
do not depend on $j$ or $x$. Therefore, \[ d_{q,i,M}(G,G_m) \le \pr {2 + \frac{2^q}{
\sqrt{3\pi}} (\nu_u/\nu_\ell)^{q-1}} \cdot \invphi^q
 (\omega) \omega \lesssim_{q, \nu_u, \nu_\ell} \log^{q/2}(1/\omega)
\omega.
\]

\subsubsection{Disciplining $G_m$ onto a fixed grid}
Now, consider a gridding of $G_m$ via $G_{m,\omega}$. We construct $G_{m, \omega}$ to be the
following distribution. For a draw $\xi \sim G_m$, let $\tilde \xi = \omega \sgn
(\xi)\lfloor |\xi|/\omega\rfloor$. We let $G_{m,\omega}$ be the distribution of $\tilde
\xi$.  $G_{m,\omega}$ has at most $m = (2k^* + q + 1) j^* + 1$ support points since $G_m$
has at most that many, and all its support points are multiples of $\omega$.

Since \[
\int g(x, u) G_{m,\omega}(du) = \int g(x, \omega \sgn
(u)\lfloor |u|/\omega\rfloor) \, G_m(du)
\]
we have that \[
\abs[\bigg]{\int g(x, u) G_{m,\omega}(du) - \int g(x,u) G_m(du)} \le \int |g
(x, \omega \sgn
(u)\lfloor |u|/\omega\rfloor) - g(x, u)| \, G_m(du)
\]
In the case of $g(x, u) = \pr{(x-u)/\nu_i}^v \varphi((x-u)/\nu_i)$, this function is
Lipschitz by \cref{lemma:lipschitz_smooth_varphi}, we thus
have that, \[
d_{q,i,M}(G_m, G_{m,\omega}) \le \int C_q \frac{\omega}{\nu_i} G_m
(du)\lesssim_{\nu_\ell, q} \omega.
\]

So far, we have shown that there exists a distribution with at most $m$
support points, supported on the lattice points $\br{j\omega : j\in \Z,
|j\omega| \in I}$, that approximates $G$ up to \[
\omega^*\equiv C_{q, \nu_u, \nu_\ell}  \omega \log^{q/2} (1/\omega)
\]
in $d_{\infty, M}^{(q)}(\cdot, \cdot)$.

\subsubsection{Covering the set of $G_{m,\omega}$}
Let $\Delta^{m-1}$ be the $(m-1)$-simplex of probability vectors in $m$ dimensions.
Consider discrete distributions supported on the support points of $G_{m,\omega}$, which
can be identified with a subset of $\Delta^{m-1}$. Thus, there are at most $N(\omega,
\Delta^{m-1}, \norm{\cdot}_1)$ such distributions that form an $\omega$-net in
$\norm{\cdot}_1$. Now, consider a distribution $G_{m,\omega}'$ where \[
\norm{G_{m,\omega}' - G_{m,\omega}}_1 \le \omega.
\]
Since $t^q \varphi(t)$ is bounded, we have that \[
d_{q,i,M}(G_{m,\omega}' , G_{m,\omega}) \le \omega \max_{0 \le v\le q} v^
{v/2}\varphi (
\sqrt{v})
\lesssim_q \omega
\]
by \cref{lemma:lipschitz_smooth_varphi}.

There are at most \[
\binom{1 + 2 \lfloor (M + a \nu_\ell) / \omega \rfloor}{m}
\]
configurations of $m$ support points. Hence there are a collection of at
most  \[
\binom{1 + 2 \lfloor (M + a \nu_\ell) / \omega \rfloor}{m} N(\omega, \Delta^{m-1},
\norm{\cdot}_1)
\] distributions $\mathcal G$ where for all $G \in \mathcal P(\R)$, \[
\min_{H\in \mathcal G} d_{\infty, M}^{(q)}(G, H) \le \omega^*.
\]

\subsubsection{Putting together} 
We have shown that 
\begin{align*}
N(\omega^*, \mathcal P(\R), d^{(q)}_{\infty, M}) &\le \binom{1 + 2 \lfloor (M + a
\nu_\ell) / \omega \rfloor}{m} N(\omega, \Delta^{m-1},
\norm{\cdot}_1)
\\
 &\le
\pr{\frac{(\omega + 2) (\omega + 2(M + a \nu_\ell)) e}{m}}^m \omega^{-2m} (2\pi m)^
{-1/2} \tag{(6.24) in \citet{jiang2020general}}.
\end{align*}

Since $\omega < 1$ and $m \ge 2\frac{12a^2 + 3 + q}{a \nu_\ell} (M + a \nu_\ell)$ given
the choice $k^* +1 > 6a^2$, the first
term is bounded by a constant raised to $m$\th{} power: \[
\frac{(\omega + 2) (\omega + 2(M + a \nu_\ell)) e}{m} \le \frac{3e}{m} (1 + 2(M + a
\nu_\ell)) \lesssim \frac{a \nu_\ell}{12 a^2 + 3 + q} \lesssim \nu_\ell.\]
Therefore, \[\log N(\omega^*, \mathcal P(\R), d^{(q)}_{\infty, M}) \lesssim_
{\nu_\ell, \nu_u, q} m \cdot
|\log
(1/\omega)| + m \lesssim_{\nu_\ell, \nu_u, q} m \log (1/\omega).\]

Finally, since $m = (2k^* + q + 1) j^* + 1$, recall that we have required $k^* + 1\ge
6a^2$, and it suffices to pick $k^* = \lceil 6a^2 \rceil$. Then
 \[m \lesssim_{q, \nu_u, \nu_\ell}
\log(1/\omega) \max\pr{\frac{M}{\sqrt{\log
(1/\omega)}}, 1}.
\]
 Hence, \[
\log  N(\omega^*, \mathcal P(\R), d^{(q)}_{\infty, M}) \lesssim_{q, \nu_u, \nu_\ell} \log(1/\omega)^2 \max
\pr{\frac{M}{\sqrt{\log
(1/\omega)}}, 1}.
\]

Lastly, let $K$ equal the constant in $\omega^* = K \log (1/\omega)^{q/2} \omega$. Note that
we can take $K \ge 1$. For some $c > 1$ such that $\log (c K)^{q/2} < c $, we plug in
$
\omega = \frac{\delta}{cK}
$
such that whenever $\delta < cK(\varphi(1) \minwith \varphi(\sqrt{q})) \minwith e^{-q/2}$,
the
covering number bound holds for \[
\omega^* = \frac{\delta}{c} \log(cK/\delta)^{q/2} \le \delta \log(1/\delta)^{q/2}.
\]
In this case, \begin{align*}
 \log N\pr{
  \delta \log(1/\delta)^{q/2}, \mathcal P(\R), d^{(q)}_{\infty, M}
} &\le \log N\pr{
  \omega^* , \mathcal P(\R), d^{(q)}_{\infty, M}
} \\
&\lesssim_{q, \nu_u, \nu_\ell} \log(1/\omega)^2 \max
\pr{\frac{M}{\sqrt{\log
(1/\omega)}}, 1} \\
&\lesssim_{q, \nu_u, \nu_\ell} \log(1/\delta)^2 \max
\pr{\frac{M}{\sqrt{\log
(1/\delta)}}, 1}
\end{align*}
This bound holds for all sufficiently small $\delta$. Since $\delta \log(1/\delta)^{q/2}$
is increasing over $(0, e^{-q/2} \minwith e^{-1})$ and the right-hand side does not vanish
over the interval, we can absorb larger $\delta$'s into the constant.
\end{proof}

As a consequence, we can control the covering number in terms of $d_{k, \infty, M}$ for
$k \in \br{m,s}$.

\begin{prop}
\label{prop:covering_ms}
Consider $d_{\infty, M}^{(q)}$ in \cref{prop:covering_for_moments} and  $d_
{s,\infty, M}$ and $d_
{m,\infty, M}$ in \eqref{eq:def_dk} for some $M > 1$.
Suppose $d_{\infty, M}^{(2)} (H_1, H_2)\le \delta$. Then we have \begin{align*}
 d_
{m,\infty, M}(H_1, H_2) \leH \frac{\sqrt{\log
(1/\rho_n)}}{\rho_n} \delta.\\
 d_
{s,\infty, M}(H_1, H_2) \leH \frac{M\sqrt{\log
(1/\rho_n)} + \log(1/\rho_n)}{\rho_n} \delta.
\end{align*}

As a corollary, for all $\delta \in (0, 1/e)$,
\begin{align*}
\log N\pr{\frac{\delta \log(1/\delta)}{\rho_n} \sqrt{\log (1/\rho_n)}, \mathcal P(\R), d_
{m,
\infty, M}} &\leH \log(1/\delta)^2 \max\pr{1, \frac{M}{\sqrt{\log(1/\delta)}}} \\
\log N\pr{\frac{\delta \log(1/\delta)}{\rho_n} \pr{M\sqrt{\log (1/\rho_n)} + \log
(1/\rho_n)},
\mathcal P(\R),
d_
{s,
\infty, M}} &\leH \log(1/\delta)^2 \max\pr{1, \frac{M}{\sqrt{\log(1/\delta)}}}.
\end{align*}
\end{prop}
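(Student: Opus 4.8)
The plan is to reduce both implications to one algebraic identity for truncated ratios and then transport covers into \cref{prop:covering_for_moments}. First I would record what $d^{(2)}_{\infty,M}$ controls. Writing $T_v(z,G)=\int \frac{(u-z)^v}{\nu_i^v}\varphi\pr{\frac{z-u}{\nu_i}}G(du)$, one has $f_{G,\nu_i}(z)=\nu_i^{-1}T_0(z,G)$ and $f'_{G,\nu_i}(z)=\nu_i^{-2}T_1(z,G)$, while expanding $(z-u)u=z(z-u)-(z-u)^2$ gives $Q_i(z,\eta_0,G)=-z\,T_1(z,G)-\nu_i\,T_2(z,G)$. Hence $d^{(2)}_{\infty,M}(H_1,H_2)\le\delta$ yields, uniformly over $i$ and $|z|\le M$,
\[\abs{f_{H_1,\nu_i}(z)-f_{H_2,\nu_i}(z)}\leH\delta,\quad \abs{f'_{H_1,\nu_i}(z)-f'_{H_2,\nu_i}(z)}\leH\delta,\quad \abs{Q_i(z,\eta_0,H_1)-Q_i(z,\eta_0,H_2)}\leH M\delta,\]
using $\nu_i\in(\nu_\ell,\nu_u)$ and $M>1$.

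The key step is a single decomposition. For numerators $P_j\in\br{f'_{H_j,\nu_i}(z),\,Q_i(z,\eta_0,H_j)}$, denominators $g_j=f_{H_j,\nu_i}(z)$, and $\rho=\rho_n/\nu_i$, I would write
\[\frac{P_1}{g_1\vee\rho}-\frac{P_2}{g_2\vee\rho}=\frac{P_1-P_2}{g_1\vee\rho}+\frac{P_2}{g_2\vee\rho}\cdot\frac{(g_2\vee\rho)-(g_1\vee\rho)}{g_1\vee\rho}.\]
The first term is at most $\nu_i\rho_n^{-1}\abs{P_1-P_2}$, handled by the numerator bounds above. For the second term the point is to keep $P_2/(g_2\vee\rho)$ intact: it is the truncated score (resp. truncated $Q$-ratio), bounded by $\sqrt{\log(1/\rho_n)}$ (resp. $M\sqrt{\log(1/\rho_n)}+\log(1/\rho_n)$) via \cref{thm:jianglemma2,thm:lb} (resp. \cref{lemma:qbound}); the remaining factor is controlled by the $1$-Lipschitz property $\abs{(g_2\vee\rho)-(g_1\vee\rho)}\le\abs{g_1-g_2}$ together with $g_1\vee\rho\ge\rho_n/\nu_i$. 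After multiplying by the bounded prefactors $s_{0i}^{-1}$ and $s_{0i}/\sigma_i^2$ and taking $\max_i\sup_{|z|\le M}$, the second term dominates and produces exactly the asserted right-hand sides for $d_{m,\infty,M}$ and $d_{s,\infty,M}$.

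The main obstacle --- and the reason for writing the second term this way rather than clearing denominators --- is obtaining the sharp dependence $\rho_n^{-1}$ instead of $\rho_n^{-2}$. A naive bound $\abs{P_2}\,\abs{g_1-g_2}/\rho^2$ would only give $\delta/\rho_n^2$; absorbing one power of $g_2\vee\rho$ into the bounded score/$Q$-ratio leaves a single $\rho_n^{-1}$.

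Finally, the corollary follows by transporting covers. By the two implications, any $r$-net of $\mathcal P(\R)$ in $d^{(2)}_{\infty,M}$ is a $C\frac{\sqrt{\log(1/\rho_n)}}{\rho_n}r$-net in $d_{m,\infty,M}$ and a $C\frac{M\sqrt{\log(1/\rho_n)}+\log(1/\rho_n)}{\rho_n}r$-net in $d_{s,\infty,M}$, so the corresponding covering numbers are bounded by $N(r,\mathcal P(\R),d^{(2)}_{\infty,M})$. Taking $r=\delta\log(1/\delta)$ and applying \cref{prop:covering_for_moments} with $q=2$ (where $\log^{q/2}(1/\delta)=\log(1/\delta)$) bounds this by $\log^2(1/\delta)\max\pr{1,M/\sqrt{\log(1/\delta)}}$. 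The constant $C$ merely rescales $\delta$ by a constant factor; since $\log(1/\delta')$ and $\log(1/\delta)$ are of the same order, it is absorbed exactly as at the end of the proof of \cref{prop:covering_for_moments}, giving both displayed bounds.
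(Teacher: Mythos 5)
Your proposal is correct and follows essentially the same route as the paper's proof: the same decomposition of the difference of truncated ratios (keeping the second factor as the intact truncated score/$Q$-ratio, bounded via \cref{thm:jianglemma2,thm:lb} and \cref{lemma:qbound}, so that only one power of $\rho_n^{-1}$ appears), the same expansion $Q_i = -zT_1 - \nu_i T_2$ to control the numerator difference by $M\delta$, and the same transport of $d^{(2)}_{\infty,M}$-covers into $d_{m,\infty,M}$- and $d_{s,\infty,M}$-covers followed by \cref{prop:covering_for_moments} with $q=2$. Your handling of the constant-rescaling at the end is, if anything, slightly more careful than the paper's own.
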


\begin{proof}
Fix some $|z| \le M$. Let $k \in \br{m,s}$. Let $T_{mi} = f'_{G,\nu_i}(z)$ and $T_
{si} = Q_i(z, \eta_0, G)$. Observe that \begin{align*}
&|D_{k,i}(z, G_1, \eta_0, \rho_n) - D_{k,i}(z,G_2,\eta_0,\rho_n)| \\
&\leH \absauto{
  \frac{T_{ki}(z, \eta_0, G_1)}{f_{G_1, \nu_i}(z) \vee (\rho_n/\nu_i)}  - \frac{T_{ki}(z,
  \eta_0, G_2)}{f_{G_2, \nu_i}(z) \vee (\rho_n/\nu_i)} 
}\\
&\leH \absauto{
  \frac{T_{ki}(z, \eta_0, G_1)}{f_{G_1, \nu_i}(z) \vee (\rho_n/\nu_i)} - \frac{T_{ki}(z,
  \eta_0, G_2)}{f_{G_1, \nu_i}(z) \vee (\rho_n/\nu_i)} + \frac{T_{ki}(z, \eta_0, G_2)}{f_
  {G_1, \nu_i}(z) \vee (\rho_n/\nu_i)} - 
\frac{T_{ki}(z, \eta_0, G_2)}{f_{G_2, \nu_i}(z) \vee (\rho_n/\nu_i)}
}
\\
&\leH \frac{1}{\rho_n} \abs{
  T_{ki}(z, \eta_0, G_1) - T_{ki}(z, \eta_0, G_2)
} + \frac{|T_{ki}(z, \eta_0, G_2)| }{\rho_n (f_{G_2, \nu_i}(z) \vee (\rho_n/\nu_i))} |f_
{G_1,\nu_i} (z) - f_
{G_2,\nu_i} (z)|. 
\tag{$|f_1\vee \rho - f_2 \vee \rho| \le |f_1 - f_2|$ }
\end{align*}
Now, if $d^{(2)}_{\infty, M}(G_1, G_2) \le \delta$, then \begin{align*}
 |f_{G_1,\nu_i}(z) - f_{G_2,\nu_i}(z)| &\le \delta \\
\frac{\abs{T_{mi}(z, \eta_0, G_2)}}{f_{G_2,\nu_i}(z) \vee (\rho_n/\nu_i)} &\leH \sqrt{\log (1/\rho_n)} 
\tag{\cref{thm:jianglemma2}} \\
\frac{\abs{T_{si}(z, \eta_0, G_2)}}{f_{G_2,\nu_i}(z) \vee (\rho_n/\nu_i)} &\leH M\sqrt{\log (1/\rho_n)} +
\log(1/\rho_n) \tag{\cref{lemma:qbound}} \\ 
\abs{T_{mi}(z, \eta_0, G_1) - T_{mi}(z, \eta_0, G_2)} &= \absauto{
  \int \frac{u-z}{\nu_i} \varphi\pr{\frac{z-u}{\nu_i}} (G_1 - G_2)(du)
}  \le \delta \\
\abs{T_{si}(z, \eta_0, G_1) - T_{si}(z, \eta_0, G_2)} &\leH \abs[\bigg]{ \int 
  \frac{\overbrace{(z-\tau)\tau}^{-(z-\tau)^2 + z(z-\tau)}}{\nu_i^2} \varphi\pr{
  \frac{z-\tau} {\nu_i}} (G_1-G_2) (du)
}  \leH M \delta.
\end{align*}
As a result, \begin{align*}
|D_{m,i}(z, G_1, \eta_0, \rho_n) - D_{m,i}(z,G_2,\eta_0,\rho_n)| &\leH \frac{1}{\rho_n} 
\bk{
  \delta \sqrt{\log(1/\rho_n)}
} \\ 
|D_{s,i}(z, G_1, \eta_0, \rho_n) - D_{s,i}(z,G_2,\eta_0,\rho_n)| &\leH \frac{\delta}
{\rho_n} (M + M\sqrt{\log (1/\rho_n)} + \log (1/\rho_n)) \\&\leH \frac{\delta}{\rho_n} 
(M\sqrt{\log(1/\rho_n)} + \log(1/\rho_n)).
\end{align*}
This proves the first claim. 

For each $k \in \br{m,s}$, we can write \[
|D_{k,i}(z, G_1, \eta_0, \rho_n) - D_{k,i}(z,G_2,\eta_0,\rho_n)| \le C_\H r_{kn} \delta. 
\]
for some $r_{kn}$ and $C_\H > 1$. Fix some $\kappa > 0$ and let $\delta
= 2 C_\H \kappa$.
Then \[
\delta \log(1/\delta) = 2 C_\H  \kappa \log(1/\kappa) - 2 C_\H  \kappa \log (2 C_\H)
\]
For all sufficiently small $\kappa$ such that $\log(1/\kappa) > 2 \log (2C_\H)$, the above
is
bounded above by $C_\H \kappa \log (1/ \kappa)$. 
 This immediately shows that \begin{align*}
 \log N( r_{kn} \delta \log(1/\delta), \mathcal P(\R), d_{k,\infty, M}) &\le \log N
(r_{kn} C_\H \kappa \log(1/\kappa),  \mathcal P(\R), d_{k,\infty, M}) \\
&\le \log N
(\kappa \log(1/\kappa), 
\mathcal P(\R), d^
{(q)}_{\infty, M}) \\ 
&\leH \log(1/\kappa)^2 \max \pr{\frac{M}{\sqrt{\log(1/\kappa)}}, 1} \\
&\leH \log(1/\delta)^2 \max\pr{\frac{M}{\sqrt{\log(1/\delta)}}, 1}.
 \end{align*}
 This holds for all sufficiently small $\delta$. Bounds for larger $\delta$ can be
 absorbed into the constant. Plugging in $r_{kn}$ to the left-hand side concludes the
 result. 
\end{proof}

\subsection{Auxiliary lemmas}

\lemmalb*
\begin{proof}
For (1), observe that since $\hats,s_0$ are bounded away from 0 and $\infty$ under 
\cref{as:holder}, $|\hat Z_i|\vee 1 \leH (1 + \Delta_n) M_n +
\Delta_n \lesssim
(1+\Delta_n) M_n$. Hence by \cref{as:Delta_M_rate}, $|\hat
Z_i| \vee 1
\leH M_n$.

For (2), we note by Theorem 5 in \citet{jiang2020general}, \[
f_{\hat G_n, \hat\nu_i}(\hat Z_i) \ge \frac{1}{n^3 \hat\nu_i}
\] thanks to the choice $\kappa_n$ in \cref{as:npmle}. That is, \[
\int \varphi\pr{\frac{\hat Z_i - \tau}{\hat\nu_i}} \,\hat G_n(d\tau) \ge \frac{1}{n^3}.
\]
Now, note that \[
\frac{\hat Z_i - \tau}{\hat\nu_i} = 
\frac{Z_i - \tau}{\nu_i} + \frac{m_{0i} -
\hatm_{i}}{\sigma_i} + \frac{1}{\sigma_i}(\hats_i - s_{0i})\tau \equiv \frac{Z_i - \tau}
{\nu_i} +
\xi(\tau) \numberthis \label{eq:decompose_fake_z}
\]
where $|\xi(\tau)| \leH \Delta_n M_n$ over the support of $\tau$ under $\hat G_n$,  under
our assumptions.

Then, for all $Z_i$, since $|Z_i| \le M_n$ by assumption, \begin{align*}
\varphi\pr{
    \frac{\hat Z_i - \tau}{\hat\nu_i}
} &= \varphi\pr{ \frac{ Z_i - \tau}{\nu_i}} \exp\pr{
    -\frac{1}{2}\xi^2(\tau) - \xi(\tau)\frac{Z_i - \tau}{\nu_i}
}
\\
&\le \varphi\pr{ \frac{ Z_i - \tau}{\nu_i}}  \exp\pr{
    C_\Hyperparams \Delta_n M_n \absauto{\frac{ Z_i - \tau}{\nu_i}}
} \tag{$C_\Hyperparams$ is defined by optimizing over $|\xi(\tau)| \leH \Delta_n M_n$}
\\
&\le \varphi\pr{ \frac{Z_i - \tau}{\nu_i}} \exp\pr{
    C_{\Hyperparams} \Delta_n M_n^2
}. \tag{$\absauto{\frac{ Z_i - \tau}{\nu_i}} \leH M_n$}
\end{align*}

Therefore, \[
\int \varphi\pr{\frac{Z_i - \tau}{\nu_i}} \hat G_n(d\tau) \ge \frac{1}{n^3} e^
{-C_\Hyperparams \Delta_n M_n^2}.
\]
Dividing by $\nu_i$ on both sides finishes the proof of (2).
Claim (3) is immediate by calculating $\log(1/\rho_n) = \pr{3\log n + C_\Hyperparams M_n^2
\Delta_n} \vee \log(e\sqrt{2\pi}) \leH \log n$ and applying \cref{as:Delta_M_rate}(1) to
obtain that $\Delta_n
M_n^2 \leH 1$.
\end{proof}

\begin{lemma}[Lemma 2, \citet{jiang2020general}]
\label{thm:jianglemma2}
For all $x \in \R$ and all $\rho \in (0, 1/\sqrt{2\pi e})$,
\[
\abs[\bigg]{\frac{\nu^2f'_{H,\nu}(x)}{(\rho / \nu) \vee f_{H, \nu}(x)}} \le \nu
\invphi(\rho).
\]
Moreover, for all $x \in \R $ and all $\rho \in
(0, e^{-1} /
\sqrt{2\pi}),$
\[
\abs[\bigg]{\pr{\frac{
\nu^2 f''_{H, \nu}(x)}{f_{H, \nu}(z)} + 1} \pr{\frac{\nu f_{H, \nu}(x)}{(\nu f_{
G, \nu}(x)) \vee \rho}}} \le \invphi^2(\rho),
\]
where we recall $\invphi(\rho) = \sqrt{\log \frac{1}{2\pi \rho^2}}$ from 
\eqref{eq:invphidef}.
\end{lemma}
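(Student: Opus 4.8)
This is Lemma 2 of \citet{jiang2020general}; the plan is to reduce both bounds to elementary pointwise inequalities on the standard Gaussian density $\varphi$ and its first two derivatives, exploiting the calibration $\varphi(\invphi(\rho)) = \rho$ built into the definition \eqref{eq:invphidef}. Write $t_\tau = (x-\tau)/\nu$ and recall $\varphi'(u) = -u\varphi(u)$ and $\varphi''(u) = (u^2-1)\varphi(u)$. Differentiating under the integral sign in \eqref{eq:density_convolved} gives the Tweedie-type identities
\begin{align*}
\nu^2 f'_{H,\nu}(x) &= -\int t_\tau\,\varphi(t_\tau)\,H(d\tau), \\
\nu^2 f''_{H,\nu}(x) &= \tfrac{1}{\nu}\int (t_\tau^2-1)\,\varphi(t_\tau)\,H(d\tau),
\end{align*}
together with $\nu f_{H,\nu}(x) = \int \varphi(t_\tau)\,H(d\tau)$. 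In particular $\pr{\nu^2 f''_{H,\nu}(x)/f_{H,\nu}(x) + 1}\nu f_{H,\nu}(x) = \int t_\tau^2\varphi(t_\tau)\,H(d\tau)$, so both target quantities are ratios of an integral of $|t|^k\varphi(t)$ (for $k=1,2$) against $H$ to the regularized denominator $\rho\vee\nu f_{H,\nu}(x)$ (after using $\nu\bk{(\rho/\nu)\vee f_{H,\nu}(x)} = \rho\vee\nu f_{H,\nu}(x)$ to move the factor of $\nu$).

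The key step is the pair of pointwise inequalities
\[
|t|\,\varphi(t)\le\invphi(\rho)\,\bk{\varphi(t)\vee\rho}, \qquad t^2\varphi(t)\le\invphi^2(\rho)\,\bk{\varphi(t)\vee\rho},
\]
each checked on two regions. On $\br{|t|\le\invphi(\rho)}$ both are immediate from $|t|^k\le\invphi^k(\rho)$ and $\varphi(t)\le\varphi(t)\vee\rho$. On $\br{|t|>\invphi(\rho)}$ one has $\varphi(t)<\varphi(\invphi(\rho))=\rho$, so $\varphi(t)\vee\rho=\rho$, and it suffices that $|t|^k\varphi(t)$ be decreasing past the threshold $\invphi(\rho)$, giving $|t|^k\varphi(t)\le\invphi^k(\rho)\varphi(\invphi(\rho))=\invphi^k(\rho)\rho$. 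Since $\tfrac{d}{dt}\pr{t\varphi(t)}=(1-t^2)\varphi(t)$ and $\tfrac{d}{dt}\pr{t^2\varphi(t)}=t(2-t^2)\varphi(t)$, the relevant thresholds are $t=1$ and $t=\sqrt2$. This is exactly why the first bound requires $\invphi(\rho)>1$, i.e.\ $\rho<\varphi(1)=1/\sqrt{2\pi e}$, and the second requires $\invphi(\rho)>\sqrt2$, i.e.\ $\rho<\varphi(\sqrt2)=e^{-1}/\sqrt{2\pi}$, matching the two ranges in the statement.

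Integrating the pointwise bounds against $H$ and using $\int\varphi(t_\tau)\,H(d\tau)=\nu f_{H,\nu}(x)$ controls each numerator by $\invphi^k(\rho)\int\bk{\varphi(t_\tau)\vee\rho}\,H(d\tau)$, and the regularized denominator $\rho\vee\nu f_{H,\nu}(x)$ dominates $\int\bk{\varphi(t_\tau)\vee\rho}\,H(d\tau)$ (splitting the mass according to whether $\varphi(t_\tau)$ exceeds $\rho$); dividing through then yields the two displayed inequalities. The part requiring care is the bookkeeping tying the monotonicity thresholds $t=1$ and $t=\sqrt2$ to the admissible ranges of $\rho$ through $\invphi$, since this is precisely where the explicit constants $1/\sqrt{2\pi e}$ and $e^{-1}/\sqrt{2\pi}$ are forced; by contrast the integration and the verification of the Tweedie identities are routine once the pointwise inequalities are established, so no genuine analytic obstacle arises.
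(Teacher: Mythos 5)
Your Tweedie-type identities, the two pointwise inequalities, and the bookkeeping tying the monotonicity thresholds $t=1$ and $t=\sqrt 2$ to the ranges $\rho<\varphi(1)=1/\sqrt{2\pi e}$ and $\rho<\varphi(\sqrt 2)=e^{-1}/\sqrt{2\pi}$ are all correct. The gap is in your final integration step: the claimed domination
\[
\rho\vee\nu f_{H,\nu}(x)\;\ge\;\int\left[\varphi(t_\tau)\vee\rho\right]H(d\tau)
\]
runs in exactly the wrong direction. Since $\varphi\vee\rho\ge\varphi$ and $\varphi\vee\rho\ge\rho$ pointwise, one always has $\int[\varphi(t_\tau)\vee\rho]\,H(d\tau)\ge\rho\vee\nu f_{H,\nu}(x)$, with strict inequality in general: if $H$ puts half its mass at a point with $\varphi(t_\tau)=2\rho$ and half at a point with $\varphi(t_\tau)\approx 0$, then $\rho\vee\nu f_{H,\nu}(x)=\rho$ while $\int[\varphi(t_\tau)\vee\rho]\,H(d\tau)=\tfrac32\rho$. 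What your pointwise bounds legitimately yield is
\[
\int|t_\tau|^k\varphi(t_\tau)\,H(d\tau)\le\invphi^k(\rho)\int[\varphi(t_\tau)\vee\rho]\,H(d\tau)\le\invphi^k(\rho)\left(\nu f_{H,\nu}(x)+\rho\right)\le 2\,\invphi^k(\rho)\left[\rho\vee\nu f_{H,\nu}(x)\right],
\]
i.e., the lemma with an extra factor of $2$. That weaker version would actually suffice for every downstream use in this paper (all bounds are up to constants), but it does not prove the lemma as stated.

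The repair — which is essentially the route the paper takes, citing Lemma 1 of \citet{jiang2020general} plus a monotonicity argument — replaces the pointwise comparison against $\varphi\vee\rho$ with concavity. Since $\invphi(\varphi(t))=|t|$, each numerator can be written as $\int\psi_k(\varphi(t_\tau))\,H(d\tau)$ with $\psi_1(a)=a\sqrt{\log\tfrac{1}{2\pi a^2}}$ and $\psi_2(a)=a\log\tfrac{1}{2\pi a^2}$; a direct computation shows $\psi_1''<0$ and $\psi_2''=-2/a<0$ on $(0,(2\pi)^{-1/2})$, so Jensen's inequality with respect to the probability measure $H$ bounds the numerator by $\psi_k(\nu f_{H,\nu}(x))$. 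Then split into two cases on the denominator: if $\nu f_{H,\nu}(x)\ge\rho$, the ratio is at most $\psi_k(\nu f_{H,\nu}(x))/\nu f_{H,\nu}(x)=\invphi^k(\nu f_{H,\nu}(x))\le\invphi^k(\rho)$ because $\invphi$ is decreasing; if $\nu f_{H,\nu}(x)<\rho$, the ratio is at most $\psi_k(\nu f_{H,\nu}(x))/\rho\le\psi_k(\rho)/\rho=\invphi^k(\rho)$ because $\psi_1$ is increasing on $(0,\varphi(1))$ and $\psi_2$ on $(0,\varphi(\sqrt 2))$ — which is precisely where your two admissible ranges of $\rho$ enter, so your threshold analysis carries over verbatim. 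Only the integration step needs to be replaced; the rest of your argument slots directly into this scheme.
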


\begin{proof}
    The first claim is immediate from Lemma 2 in \citet{jiang2020general}.
    The second claim follows from parts of the proof, which we reproduce here. Lemma 1 in 
    \citet{jiang2020general}
    shows that \[
0 \le \frac{
\nu^2 f''_{H, \nu}(x)}{f_{H, \nu}(z)} + 1 \le {\log \frac{1}{2\pi \nu^2 f_
{H,\nu}(z)^2}} = {\invphi^2(\nu f_{H, \nu}(z))}.
    \]

We study two cases separetely depending on whether the truncation binds:
\begin{enumerate}[wide]
  \item $\nu f_
{H,\nu}(x) \le \rho < e^{-1}/\sqrt{2\pi}$: Observe that $t \log \frac{1}{2\pi t^2}$ is
increasing over $t\in (0, e^{-1} (2\pi)^
    {-1/2})$. Hence, \[
\pr{\frac{
\nu^2 f''_{H, \nu}(x)}{f_{H, \nu}(z)} + 1} \nu f_{H, \nu}(x)
\le \nu f_{H, \nu} \log \frac{1}{2\pi \nu^2 f_{H,\nu}(z)^2}
\le \rho \log \frac{1}{2\pi
\rho^2} = \rho \varphi_+^2(\rho).
    \]
    Dividing by $(\nu f) \vee \rho = \rho$ confirms the bound for $\nu f < \rho$.
    \item 
    $\nu f_
{H,\nu}(x) > \rho$:  Since $\log \frac{1}{2\pi t^2}$ is decreasing in $t$, we have
that \[
\abs[\bigg]{\pr{\frac{
\nu^2 f''_{H, \nu}(x)}{f_{H, \nu}(z)} + 1} \pr{\frac{\nu f_{H, \nu}(x)}{(\nu f_{
G, \nu}(x)) \vee \rho}}} =  {\frac{
\nu^2 f''_{H, \nu}(x)}{f_{H, \nu}(z)} + 1} \le \invphi^2(\nu f_{H,\nu}) \le
\log \frac{1}{2\pi \rho^2} = \varphi_+^2(\rho). \qedhere
    \]
\end{enumerate}
\end{proof}

\begin{lemma}[\citet{zhang1997empirical}, p.186]
\label{lemma:chebyshev}

Let $f$ be a density and let $\sigma(f)$ be the standard deviation of the corresponding
distribution, assumed to be finite. Then, for any $M, t > 0$,
\[
\int_{-\infty}^{\infty} \one (f(z) \le t) f(z) \,dz \le \frac{\sigma(f)^2}{M^2} + 2Mt.
\]
In particular, choosing $M = t^{-1/3}\sigma(f)^{2/3}$ gives \[
\int_{-\infty}^{\infty} \one (f(z) \le t) f(z) \,dz \le 3 t^{2/3}\sigma^{2/3}.
\]
\end{lemma}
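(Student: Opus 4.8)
The plan is to read the left-hand side as the $f$-probability that the density falls below the level $t$, namely $\int \one(f(z)\le t) f(z)\,dz = \P_{Z\sim f}(f(Z)\le t)$, and to control it by splitting the event $\{f(Z)\le t\}$ according to whether $Z$ is near or far from the mean $\mu$ of $f$. Fixing the radius $M>0$ appearing in the statement, I would write the integral as the sum of its restriction to $\{|z-\mu|>M\}$ and to $\{|z-\mu|\le M\}$, and bound each piece by discarding a different one of the two factors in the integrand.

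For the far region I would simply drop the indicator, using $\one(f(z)\le t)\le 1$, and apply Chebyshev's inequality to the density $f$ (which has mean $\mu$ and variance $\sigma(f)^2$), giving $\int_{|z-\mu|>M} f(z)\,dz \le \sigma(f)^2/M^2$. For the near region I would instead drop the density using the level bound $f(z)\le t$ enforced by the indicator, so that $\int_{|z-\mu|\le M}\one(f(z)\le t) f(z)\,dz \le \int_{\mu-M}^{\mu+M} t\,dz = 2Mt$, invoking only that the interval $[\mu-M,\mu+M]$ has length $2M$. Summing the two pieces yields the first displayed inequality $\sigma(f)^2/M^2 + 2Mt$, which holds for every $M>0$.

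The second display is then pure calculus: minimizing $M\mapsto \sigma(f)^2/M^2 + 2Mt$ over $M>0$ gives the first-order condition $M^3=\sigma(f)^2/t$, i.e. $M = t^{-1/3}\sigma(f)^{2/3}$, and substituting this value makes the tail term equal $t^{2/3}\sigma(f)^{2/3}$ and the center term equal $2 t^{2/3}\sigma(f)^{2/3}$, whose sum is $3 t^{2/3}\sigma(f)^{2/3}$. There is no real obstacle here; the only things to get right are the direction in which each factor of the integrand is discarded—Chebyshev consumes the indicator on the tail, while the level bound $f\le t$ consumes the density on the center—and to note that the argument never uses anything about $f$ beyond the existence of its first two moments.
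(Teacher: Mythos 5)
Your proof is correct and is essentially identical to the paper's: both split the integral at distance $M$ from the mean (the paper merely centers $f$ at zero without loss of generality), bound the center piece by $2Mt$ using the level constraint and the tail piece by $\sigma(f)^2/M^2$ via Chebyshev's inequality, and then plug in $M = t^{-1/3}\sigma(f)^{2/3}$. The only cosmetic difference is that you derive this choice of $M$ as the exact minimizer of the bound, whereas the paper simply substitutes it.
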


\begin{proof}
Since the value of the integral does not change if we shift $f(z)$ to $f(z-c)$, it is
without loss of generality to assume that $\E_f[Z] = 0$.
\begin{align*}
\int_{-\infty}^{\infty} \one (f(z) \le t) f(z) \,dz &\le \int_{-\infty}^{\infty} \one (f(z)
\le t, |z| < M) f(z) \,dz  + \int_{-\infty}^{\infty} \one (f(z) \le t, |z| > M) f(z)
\,dz
\\
&\le \int_{-M}^M t\, dz + \P(|Z| > M)
\\
&\le 2Mt + \frac{\sigma^2(f)}{M^2}. \tag{Chebyshev's inequality}
\end{align*}
\end{proof}

\begin{lemma}
\label{lemma:qbound} Recall that $Q_i(z, \eta, G) = \int (z-\tau)\tau \varphi\pr{\frac
 {z-\tau}{\nu_i(\eta)}} \frac{1}{\nu_i(\eta)} \, G(d\tau)$ in \eqref{eq:dpsids}. Then,
 for any $G, z$ and $\rho_n
\in (0,
e^{-1}/\sqrt{2\pi})$,
\begin{equation}
    \abs[\bigg]{ \frac{Q_i(z, \eta_0, G)}{f_{G, \nu_i}(z) \vee (\rho_n / \nu_i)} }
    \le
    \invphi(\rho_n) \nu_i \pr{|z| + \nu_i \invphi(\rho_n)}.
\label{eq:qbound}
\end{equation}
Under the choice \eqref{eq:rhodef} and on the event $\bar Z_n \le M_n$ such that 
\cref{as:Delta_M_rate} holds, \[
\abs[\bigg]{ \frac{Q_i(z, \eta_0, G)}{f_{G, \nu_i}(z) \vee (\rho_n / \nu_i)} } \leH M_n
\sqrt{\log n}.
\]
\end{lemma}
\begin{proof}
We can write \[
Q_i(z, \eta_0, G) = f_{G, \nu_i}(z) \br{z\PE_{G,
\nu_i}[(z-\tau) \mid z]  -\PE_{ G_n,
\nu_i} [(z-\tau)^2 \mid z]
}.
\]
From \cref{thm:jianglemma2}, \[
\absauto{\frac{f_{G, \nu_i}(z)}{f_{G, \nu_i}(z) \vee (\rho_n / \nu_i)}
\PE_{G, \nu_i}[(z-\tau) \mid z]} \le \nu_i \invphi(\rho_n)
\]
and \[
\frac{f_{ G, \nu_i }(z)}{f_{ G, \nu_i}(z) \vee (\rho_n / \nu_i)}
\PE_{ G, \nu_i} [(z-\tau)^2 \mid z]
= \nu_i^2 \pr{\frac{\nu_i^2 f_{i, G}''}{f_{i,
G}} + 1} \frac{f_{G, \nu_i}(z)}{f_{G, \nu_i}(z) \vee
(\rho_n / \nu_i)} \le \nu_i^2 \invphi^2(\rho_n).
\]
Therefore, \[
\abs[\bigg]{ \frac{Q_i(z, \eta_0, G)}{f_{G, \nu_i}(z) \vee (\rho_n /
\nu_i)} } \le \invphi(\rho_n) \nu_i \pr{ |z| + \nu_i \invphi(\rho_n)}. \qedhere
\]
\end{proof}

\begin{lemma}
\label{lemma:secondderivatives}
    Under the assumptions in \cref{thm:lb} and \cref{as:holder}, suppose $\tilde \eta_i$
    lies on the line segment between $\eta_0$ and $\hateta_i$ and define $\tilde
    \nu_i, \tilde m_i, \tilde s_i, \tilde Z_i$ accordingly. Then, the
    second
    derivatives \eqref{eq:d2psidm2}, \eqref{eq:cross_psi_deriv},
    \eqref{eq:d2psids2}, evaluated at $\tilde\eta_i, \hat G_n, \tilde Z_i$,
    satisfy
    \begin{align*}
    |\eqref{eq:d2psidm2}| \leH \log n  \quad 
    |\eqref{eq:cross_psi_deriv}| \leH M_n \log n \quad 
    |\eqref{eq:d2psids2}| \leH M_n^2 \log n.
    \end{align*}
\end{lemma}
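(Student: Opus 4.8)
The plan is to peel each of the three second derivatives \eqref{eq:d2psidm2}, \eqref{eq:cross_psi_deriv}, \eqref{eq:d2psids2} into two kinds of building blocks and bound each block separately: (i) the log-density ratios $f'_{\hat G_n,\tilde\nu_i}(\tilde Z_i)/f_{\hat G_n,\tilde\nu_i}(\tilde Z_i)$ and $f''_{\hat G_n,\tilde\nu_i}/f_{\hat G_n,\tilde\nu_i}$ --- equivalently, via \eqref{eq:firstderivative}, the central posterior moments $\PE_{\hat G_n,\tilde\nu_i}[(\tau-Z)\mid\tilde Z_i]$ and $\PE_{\hat G_n,\tilde\nu_i}[(\tau-Z)^2\mid\tilde Z_i]$ --- and (ii) the mixed posterior moments $\PE_{\hat G_n,\tilde\nu_i}[\tau^a(\tau-Z)^b\mid\tilde Z_i]$. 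Before that, the first order of business is to record that evaluation at the intermediate point $\tilde\eta_i$ is harmless: since $\tilde\eta_i$ lies on the segment between $\eta_{0i}$ and $\hateta_i$, it satisfies $\norm{\tilde\eta_i-\eta_{0i}}_\infty\le\Delta_n$, so re-running the short argument behind \cref{thm:lb}(2) (which uses only closeness to $\eta_0$ and the support restriction on $\hat G_n$) yields the density lower bound $f_{\hat G_n,\tilde\nu_i}(\tilde Z_i)\ge\rho_n/\tilde\nu_i$. Hence the $\rho_n$-truncation in \cref{thm:jianglemma2} never binds at the evaluation point and the raw ratios coincide with the truncated ones. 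Along the way I would also note that $\tilde s_i,\tilde\nu_i,\sigma_i$ and their reciprocals are all $\rateeq 1$ (by \cref{as:variance_bounds}, \cref{as:holder}, and \cref{lemma:convenience}) and that $\absauto{\tilde Z_i}\leH M_n$ (by \cref{lemma:convenience}(3) and \cref{thm:lb}(1)), so every ambient scalar prefactor is absorbed into the $\leH$ constants.

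For blocks of type (i), \cref{thm:jianglemma2} gives, since truncation does not bind, $\absauto{\tilde\nu_i^2 f'/f}\le\tilde\nu_i\invphi(\rho_n)$ and $0\le\tilde\nu_i^2 f''/f+1\le\invphi^2(\rho_n)$; with $\invphi(\rho_n)\rateeq_{\Hyperparams}\sqrt{\log n}$ from \cref{thm:lb}(3), these translate via \eqref{eq:firstderivative} into $\absauto{\PE[(\tau-Z)\mid\tilde Z_i]}\leH\sqrt{\log n}$ and $\PE[(\tau-Z)^2\mid\tilde Z_i]\leH\log n$, and likewise $\absauto{f'/f}\leH\sqrt{\log n}$, $\absauto{f''/f}\leH\log n$. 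For blocks of type (ii), the key is the support restriction of \cref{as:npmle}: under the posterior for $\tau\mid\tilde Z_i$ with prior $\hat G_n$, one has $\absauto{\tau}\le\bar M_n\leH M_n$ almost surely (as in \cref{lemma:posterior_moments}). Rather than bounding each factor of $\tau$ crudely by $M_n$, I would pull out only as many powers of $M_n$ as there are bare $\tau$'s and keep the $(\tau-Z)$ factors inside the expectation, so that $\absauto{\PE[(Z-\tau)\tau\mid\tilde Z_i]}\le\bar M_n\,\PE[(\tau-Z)^2\mid\tilde Z_i]^{1/2}\leH M_n\sqrt{\log n}$, $\absauto{\PE[(\tau-Z)^2\tau\mid\tilde Z_i]}\le\bar M_n\,\PE[(\tau-Z)^2\mid\tilde Z_i]\leH M_n\log n$, and the $\tau^2$-moment in \eqref{eq:d2psids2}, whose bracketed coefficient is a bounded multiple of $(Z-\tau)^2$ plus a constant, is $\le\bar M_n^2\,\PE[\,\cdot\,\mid\tilde Z_i]\leH M_n^2\log n$.

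Assembling the pieces gives the three rates directly. For \eqref{eq:d2psidm2}, $\absauto{f''/f}\leH\log n$ and $(f'/f)^2\leH\log n$ (times $\tilde s_i^{-2}\rateeq 1$) yield $\leH\log n$. For \eqref{eq:cross_psi_deriv}, the first summand is (bracket of size $\leH M_n\sqrt{\log n}$) times $\tilde\nu_i^{-2}\PE[(\tau-Z)\mid\tilde Z_i]\leH\sqrt{\log n}$, giving $M_n\log n$, while the second summand is $\tilde\nu_i^{-1}\sigma_i^{-1}\tilde s_i^{-1}\PE[(\tau-Z)^2\tau\mid\tilde Z_i]\leH M_n\log n$. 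For \eqref{eq:d2psids2}, both the $\tau^2$-term and the squared term $\tilde\nu_i^{-2}(\PE[(Z-\tau)\tau\mid\tilde Z_i])^2$ are $\leH M_n^2\log n$.

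The hard part will be step (ii): a naive bound $\absauto{\tau}\leH M_n$ applied to every factor would produce powers like $M_n^3$ in the cross and $s_i^2$ derivatives and overshoot the target rates, so the argument must keep the central $(\tau-Z)$-moments (which are only polylogarithmic) intact and extract exactly the bounded-support powers of $M_n$. The remaining bookkeeping --- verifying non-binding of the truncation at $\tilde\eta_i$ and that all reciprocal-variance prefactors are $O_{\Hyperparams}(1)$ --- is routine.
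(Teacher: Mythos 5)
Your proposal matches the paper's own proof essentially step for step: you extend the density lower bound of \cref{thm:lb}(2) to the intermediate point $(\tilde Z_i,\tilde\nu_i)$ by re-running the same Gaussian-comparison argument (the paper does this by writing $\hat Z_i = \frac{\tilde s_i \tilde Z_i + \tilde m_i - \hatm_i}{\hats_i}$ and comparing against the \npmle{} data points, giving $|\log(\tilde\nu_i f_{\hat G_n,\tilde\nu_i}(\tilde Z_i))| \leH \log n$), then convert this through \cref{thm:jianglemma2} into $\sqrt{\log n}$ and $\log n$ bounds on the central posterior moments, and finally use the bounded support of $\hat G_n$ to extract exactly one factor of $M_n$ per bare $\tau$ while keeping the $(\tau-Z)$ moments intact --- which is precisely the paper's assembly. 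The only cosmetic differences are that the paper avoids your ``truncation never binds'' bookkeeping by plugging $\rho = \tilde\nu_i f_{\hat G_n,\tilde\nu_i}(\tilde Z_i) \minwith \frac{1}{\sqrt{2\pi}e}$ directly into \cref{thm:jianglemma2} (so no constant in $\rho_n$ needs re-matching at the intermediate point), and bounds $\PE_{\hat G_n,\tilde\nu_i}[(\tau-Z)\tau \mid \tilde Z_i]$ via \cref{lemma:qbound} rather than your support-plus-Cauchy--Schwarz step; both yield the same $M_n\sqrt{\log n}$ rate.
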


\begin{proof}
    First, we show that \[
    |\log(f_{\hat G_n, \tilde \nu_i}(\tilde Z_i) \tilde\nu_i)| \leH \log n.
    \numberthis \label{eq:log_denom}
    \]
    Observe that we can write $
    \hat Z_i = \frac{\tilde s_i \tilde Z_i + \tilde m_i - \hatm_i}{\hats_i}
    $
    where $\norm{\tilde s- \hats}_\infty \le \Delta_n$ and $\norm{\tilde m
    - \hatm}_\infty \le \Delta_n$. This shows that $|\tilde Z_i| \leH M_n$ under the
      assumptions since $\hat s > s_\ell$. Having verified that $|\tilde Z_i| \leH M_n$,
      note that by the same argument in
    \eqref{eq:decompose_fake_z} in \cref{thm:lb},
    we have
    that, since both $\bar Z_n$ is bounded under our assumptions and $\tau$ is bounded
    under $\hat G_n$,
    \[
    \varphi\pr{\frac{\hat Z_i - \tau}{\hat \nu_i}} \le \varphi
    \pr{\frac{\tilde Z_i - \tau}{\tilde \nu_i}} e^{C_\Hyperparams
    \Delta_n M_n^2}
    \implies
    \tilde \nu_i f_{\hat G_n, \tilde \nu_i}(\tilde Z_i) \ge \frac{1}{n^3}e^
    {-C_{\Hyperparams} \Delta_n M_n^2}.
    \]
    This shows \eqref{eq:log_denom}.

    Now, observe that \begin{align*}
    \PE_{\hat G_n, \tilde \nu}[(\tau - Z)^2 \mid \tilde Z_i] \leH \log
    \pr{\frac{1}{\tilde \nu_i f_{\hat G_n, \tilde \nu_i}(\tilde Z_i)}} \leH \log n \\
    \PE_{\hat G_n, \tilde \nu}[|\tau - Z| \mid \tilde Z_i] \leH \sqrt{\log
    \pr{\frac{1}{\tilde \nu_i f_{\hat G_n, \tilde \nu_i}(\tilde Z_i)}}}
    \leH \sqrt{\log n}
    \end{align*}
    by \cref{thm:jianglemma2}, since we can choose $\rho = \tilde
    \nu_i f_{\hat G_n, \tilde \nu_i}(\tilde Z_i) \minwith \frac{1}{\sqrt{2\pi} e}$.
    Similarly, by
    \cref{lemma:qbound}, and plugging in $\rho = \tilde
    \nu_i f_{\hat G_n, \tilde \nu_i}(\tilde Z_i) \minwith \frac{1}{\sqrt{2\pi} e}$, \[
    \absauto{\PE_{\hat G_n, \tilde \nu}[(\tau - Z)Z \mid \tilde Z_i]} \leH
    \sqrt{\log n} |\tilde Z_i| + \log n \leH M_n \sqrt{\log n} .
    \]
    Observe that, since $|\tau| \leH M_n$ under the support of $\hat G_n$, \[
    \absauto{\PE_{\hat G_n, \tilde \nu_i}[(\tau-Z)^2 \tau \mid \tilde Z_i]} \leH M_n
    \PE_{\hat G_n, \tilde \nu_i}[(\tau-Z)^2 \mid \tilde Z_i] \leH M_n \log n.
    \]
    Similarly, \[
    \PE_{\hat G_n, \tilde \nu_i}[(Z-\tau)^2 \tau^2 \mid \tilde Z_i] \leH
    M_n^2 \log n \quad
    \PE_{\hat G_n, \tilde \nu_i}[\tau^2 \mid \tilde Z_i] \leH M_n^2.
    \]
    Plugging these intermediate results into \eqref{eq:d2psidm2}, \eqref{eq:cross_psi_deriv},
    \eqref{eq:d2psids2} proves the claim.
\end{proof}

\begin{lemma}
\label{lemma:tail_bound}
Suppose $Z$ has simultaneous moment control $\E[|Z|^p]^{1/p} \le A p^{1/\alpha}$. Then \[
\P(|Z| > M) \le \exp\pr{-C_{A,\alpha} M^\alpha}.
\]
As a corollary, suppose $ Z \sim f_{G_0, \nu_i}(\cdot)$ and $G_0$ obeys \cref{as:moments},
then \[\P(|Z| > M) \le \exp\pr{-C_{A_0, \alpha, \nu_u} M^\alpha}.\]
\end{lemma}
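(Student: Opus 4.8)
The plan is to derive the tail bound from the moment hypothesis by a standard Markov-plus-optimization argument, and then to verify that the convolved variable $Z \sim f_{G_0,\nu_i}$ inherits simultaneous moment control, so the corollary follows by applying the first part.

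For the first claim, I would apply Markov's inequality to $|Z|^p$: for every $p \ge 1$,
\[
\P(|Z| > M) = \P\pr{|Z|^p > M^p} \le \frac{\E[|Z|^p]}{M^p} \le \pr{\frac{A p^{1/\alpha}}{M}}^p,
\]
using the moment control $(\E|Z|^p)^{1/p} \le A p^{1/\alpha}$. I then minimize the right-hand side over $p$. Writing $g(p) = p\bk{\log(A/M) + \tfrac1\alpha \log p}$ and setting $g'(p)=0$ gives the optimizer $p^\star = M^\alpha/(A^\alpha e)$, at which $A (p^\star)^{1/\alpha}/M = e^{-1/\alpha}$, so the bound collapses to $\exp(-p^\star/\alpha) = \exp\pr{-\tfrac{1}{e\alpha A^\alpha} M^\alpha}$. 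This yields the claim with $C_{A,\alpha} = \tfrac{1}{e\alpha A^\alpha}$ whenever $p^\star \ge 1$, i.e. for $M \ge A e^{1/\alpha}$. For smaller $M$ the optimal exponent drops below the range in which the moment control is informative, and there I would simply enlarge the constant so that the trivial bound $\P(|Z|>M)\le 1$ is subsumed; the only regime relevant for the applications (\cref{lemma:tail_bound_max}) is $M = M_n \to \infty$, where $p^\star \ge 1$ holds automatically.

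For the corollary, I would write $Z = \tau + \nu_i g$ with $\tau \sim G_0$ and $g \sim \Norm(0,1)$ independent, since $f_{G_0,\nu_i}$ is the density of $G_0 \star \Norm(0,\nu_i^2)$. Minkowski's inequality in $L^p$ then gives
\[
\pr{\E|Z|^p}^{1/p} \le \pr{\E|\tau|^p}^{1/p} + \nu_i \pr{\E|g|^p}^{1/p} \le A_0 p^{1/\alpha} + C \nu_u \sqrt{p},
\]
where the first term invokes \cref{as:moments} and the second uses the standard Gaussian moment bound $(\E|g|^p)^{1/p} \le C\sqrt{p}$ (with $C$ universal) together with $\nu_i \le \nu_u$, finite by \cref{as:variance_bounds}. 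Because $\alpha \le 2$, we have $\sqrt p = p^{1/2} \le p^{1/\alpha}$ for all $p \ge 1$, so $(\E|Z|^p)^{1/p} \le (A_0 + C\nu_u)\, p^{1/\alpha}$. Thus $Z$ satisfies simultaneous moment control with parameter $A = A_0 + C\nu_u$, a function of $A_0,\alpha,\nu_u$ only, and applying the first claim with this $A$ produces $\P(|Z|>M) \le \exp(-C_{A_0,\alpha,\nu_u} M^\alpha)$, as desired.

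The argument is otherwise routine; the only points requiring care are the explicit optimization of the free exponent $p$ and the bookkeeping for small $M$, where $p^\star$ falls below $1$ and the clean exponential statement must be read up to an adjustment of the constant — equivalently, up to the harmless prefactor that appears in the tail formulation recorded in the footnote to \cref{as:moments}.
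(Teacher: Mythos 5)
Your proof follows the same route as the paper's: Markov's inequality applied to $|Z|^p$ with the exponent chosen proportional to $M^\alpha$ (the paper simply takes $p=(M/(eA))^{\alpha}$, giving constant $(eA)^{-\alpha}$, rather than your exact optimizer $p^\star=M^\alpha/(eA^\alpha)$ -- a cosmetic difference), and the corollary via Minkowski's inequality together with $\sqrt{p}\le p^{1/\alpha}$ for $\alpha\le 2$, a step the paper leaves implicit. One small caution on your closing remark: for fixed $C,M>0$ we have $\exp(-CM^\alpha)<1$, so the small-$M$ regime cannot be fixed by ``enlarging the constant'' alone -- it is covered either by the assumption's moment control holding for all $p>0$ (as \cref{as:moments} states, which is what the paper's choice of $p$ tacitly uses) or by the prefactor formulation you mention, and in any case only $M\to\infty$ matters for the applications.
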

\begin{proof}
Observe that
\begin{align*}
\P(|Z| > M) = \P(|Z|^p > M^p) \le \br{\frac{Ap^{1/\alpha}}{M}}^p \tag{Markov's
inequality}.
\end{align*}
Choose $p = (M/(eA))^{\alpha}$ such that \[
\br{\frac{Ap^{1/\alpha}}{M}}^p = \exp\pr{-p} = \exp\pr{- \pr{\frac{1}{eA}}^{\alpha} M^
{\alpha}}. \qedhere
\]
\end{proof}

\begin{lemma}
\label{lemma:union_bound}
Let $E$ be some event and assume that \[
\P(E, A > a) \le p_1 \quad \P(E, B>b) \le p_2
\]
Then $\P(E, A+B>a+b) \le p_1 + p_2$
\end{lemma}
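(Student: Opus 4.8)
The plan is to reduce the claim to an elementary event inclusion followed by a single application of subadditivity. The key observation is the contrapositive fact that whenever $A \le a$ and $B \le b$ simultaneously, we necessarily have $A + B \le a + b$. Equivalently, the only way to have $A + B > a + b$ is for at least one of $A > a$ or $B > b$ to hold.

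Concretely, I would first establish the set inclusion
\[
\br{A + B > a + b} \cap E \subseteq \pr{\br{A > a} \cap E} \cup \pr{\br{B > b} \cap E}.
\]
This follows because a point in the left-hand event satisfies both $A + B > a + b$ and $E$; if it belonged to neither set on the right, then $A \le a$ and $B \le b$ would both hold, forcing $A + B \le a + b$, a contradiction. Next, I would take probabilities of both sides, using monotonicity of $\P$ for the inclusion and finite subadditivity (the union bound) for the right-hand side, to obtain
\[
\P\pr{E, A + B > a + b} \le \P\pr{E, A > a} + \P\pr{E, B > b} \le p_1 + p_2,
\]
where the last step substitutes the two hypothesized bounds. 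This completes the argument.

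There is no substantive obstacle here: the lemma is a packaging of the union bound, and its only role in the larger proof is notational convenience, letting us add the failure probabilities of the almost-sure bounds on $U_{1k}, R_1, R_2$ and the probabilistic bounds on $U_{2k}, U_{3k}, R_2$ term by term (as invoked at the end of \cref{ss:proof_intuition_oracle_ineq}). The only point requiring any care is ensuring the inclusion is stated with the conditioning event $E$ intersected consistently on both sides, so that the bounds $\P(E, A>a) \le p_1$ and $\P(E, B>b) \le p_2$ apply verbatim rather than their unconditional counterparts.
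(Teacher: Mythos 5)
Your proposal is correct and matches the paper's own proof essentially verbatim: both use the contrapositive observation that $A+B>a+b$ forces $A>a$ or $B>b$, intersect with $E$, and finish with monotonicity plus the union bound. Nothing further is needed.
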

\begin{proof}
Note that $A+B > a+b$ implies that one of $A>a$ and $B>b$ occurs. Hence \[
\P(E, A+B>a+b) \le \P(\br{E, A>a} \cup \br{E, B>b}) \le p_1 + p_2
\]
by union bound.
\end{proof}

\begin{lemma}
\label{lemma:posterior_moments}
    Let $\tau \sim G_0$ where $G_0$ satisfies \cref{as:moments}. Let $Z
    \mid \tau \sim \Norm(\tau, \nu^2)$. Then the posterior moment is
    bounded by a power of $|z|$: \[
    \E[|\tau|^p \mid Z=z] \lesssim_{p,\alpha, A_0} (|z| \vee 1)^p.
    \]

\end{lemma}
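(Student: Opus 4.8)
The plan is to write the posterior expectation explicitly and reduce it to a prior-side computation via a monotone-rearrangement inequality. Recall that the posterior law of $\tau$ given $Z=z$ reweights the prior $G_0$ by the Gaussian likelihood: its density against $G_0$ is proportional to $\varphi((z-\tau)/\nu)$, the $1/\nu$ and $1/\sqrt{2\pi}$ factors cancelling between numerator and denominator. Hence, writing $w(\tau)\equiv e^{-(z-\tau)^2/(2\nu^2)}$, which is a strictly decreasing function of $|\tau-z|$,
\[
\E[|\tau|^p \mid Z=z] = \frac{\int |\tau|^p\, w(\tau)\,G_0(d\tau)}{\int w(\tau)\,G_0(d\tau)}.
\]
First I would peel off the center: by the elementary inequality $|\tau|^p \le C_p(|\tau-z|^p + |z|^p)$ with $C_p = \max(1,2^{p-1})$, it suffices to bound the posterior central moment $\E[|\tau-z|^p \mid Z=z]$, since the $|z|^p$ contribution is already at most $(|z|\vee 1)^p$.

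The crux of the argument, and the step that makes the bound uniform in $\nu$, is to control $\E[|\tau-z|^p \mid Z=z]$ by the corresponding prior moment. A naive estimate bounding numerator and denominator separately fails badly: when $z$ lies deep in the tail of $G_0$ and $\nu$ is small, the normalizing integral $\int w\,dG_0$ can be exponentially small, and no $\nu$-free lower bound on it is available without information on the local mass of $G_0$ near $z$. I would instead avoid the denominator entirely by exploiting monotonicity. Viewing everything as a function of the single random variable $X\equiv|\tau-z|$ under $\tau\sim G_0$, the map $X\mapsto X^p$ is nondecreasing while $X\mapsto e^{-X^2/(2\nu^2)}$ is nonincreasing. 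By Chebyshev's association inequality, $\E[f(X)g(X)]\le \E[f(X)]\,\E[g(X)]$ for $f$ nondecreasing and $g$ nonincreasing (proved by expanding $\E[(f(X)-f(X'))(g(X)-g(X'))]\le 0$ for an independent copy $X'$), the reweighting can only decrease the expectation of $X^p$:
\[
\E[|\tau-z|^p \mid Z=z] = \frac{\E_{G_0}\!\big[X^p\,e^{-X^2/(2\nu^2)}\big]}{\E_{G_0}\!\big[e^{-X^2/(2\nu^2)}\big]} \le \E_{G_0}[X^p] = \E_{\tau\sim G_0}[|\tau-z|^p].
\]
All dependence on $\nu$ has now disappeared, and the inequality is legitimate because $e^{-X^2/(2\nu^2)}\in(0,1]$ is bounded and $\E_{G_0}[X^p]<\infty$ under \cref{as:moments}.

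It then remains to bound the prior-side quantity, which is routine. Using $|\tau-z|^p \le C_p(|\tau|^p + |z|^p)$ together with the simultaneous moment control $\E_{\tau\sim G_0}[|\tau|^p]\le (A_0 p^{1/\alpha})^p$ of \cref{as:moments}, I obtain $\E_{G_0}[|\tau-z|^p]\le C_p\big((A_0 p^{1/\alpha})^p + |z|^p\big)$. Collecting the two displays, bounding $|z|^p\le(|z|\vee 1)^p$ and $(A_0 p^{1/\alpha})^p\le (A_0 p^{1/\alpha})^p(|z|\vee 1)^p$ (since $(|z|\vee 1)^p\ge 1$), yields $\E[|\tau|^p\mid Z=z]\le C_{p,\alpha,A_0}(|z|\vee 1)^p$ with a constant depending only on $p,\alpha,A_0$, as claimed. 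The only real subtlety is recognizing that the association inequality, rather than a direct ratio estimate, is what delivers the $\nu$-free constant; everything else is bookkeeping with $c_r$-type inequalities.
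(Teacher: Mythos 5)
Your proof is correct, and it takes a genuinely different route from the paper's. The paper works directly with the ratio $\int |\tau|^p \varphi((z-\tau)/\nu)\,G_0(d\tau)/f_{G_0,\nu}(z)$: it sets $M = |z|\vee 2$, splits the numerator at $|\tau| \le 3M$ versus $|\tau| > 3M$ (where $|z-\tau|\ge 2M$), lower-bounds the denominator by $\frac{1}{\nu}\varphi(2M/\nu)\,G_0([-M,M])$ (since $|z-\tau|\le 2M$ on $[-M,M]$), and lets the common factor $\varphi(2M/\nu)/\nu$ cancel between the tail term and the denominator — which is exactly how it sidesteps the exponentially small marginal density you correctly flag as the obstruction to a naive ratio estimate. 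Crucially, the paper then needs $G_0([-M,M]) \ge 3/4$, which it gets from Chebyshev's inequality using the mean-zero, unit-variance normalization of $G_0$. Your argument instead centers at $z$ and invokes Chebyshev's association inequality — reweighting $X = |\tau - z|$ by the nonincreasing function $e^{-X^2/(2\nu^2)}$ can only decrease $\E[X^p]$ — which eliminates the denominator in one stroke and reduces everything to the prior moment $\E_{G_0}[|\tau - z|^p]$, handled by the $c_r$-inequality and \cref{as:moments}. What your route buys: it is shorter, uniform in $\nu$ by construction, and does not use the mean/variance normalization of $G_0$ at all, so it proves a slightly more general statement (any prior with the stated moment control works). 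What the paper's route buys: the kernel-comparison-plus-mass-lower-bound computation is the same device reused elsewhere in its appendix (e.g., in the proof of \cref{thm:lb}), so it keeps the toolkit uniform. Both yield the claimed bound with constants depending only on $p,\alpha,A_0$.
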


\begin{proof}
Let $M = |z| \vee 2$. We write \[
    \E[|\tau|^p \mid Z=z] = \frac{1}{f_{G_0, \nu}(z)} \int |\tau|^p
    \varphi\pr{\frac{z-\tau}{\nu}} \frac{1}{\nu} G_0(d\tau).
    \]
    Note that we can decompose based on $|\tau| > 3M$: 
    \begin{align*}
    \int |\tau|^p
    \varphi\pr{\frac{z-\tau}{\nu}} \frac{1}{\nu} G_0(d\tau) &\le (3M)^pf_{G_0, \nu}(z) +
    \int \one(|\tau| > 3M) |\tau|^p \varphi\pr{\frac{z-\tau}{\nu}}\frac{1}
    {\nu} G_0(d\tau)
    \\
    &\le (3M)^pf_{G_0, \nu}(z) + \int_{|\tau | > 3M} |\tau|^p G_0(d\tau) \cdot  \frac{1}
    {\nu}\varphi\pr{|2M|/\nu} \tag{$|z-\tau| \ge 2M$ when $|\tau| > 3M$}
    \end{align*}
    Also note that, since $|z| \le M$, \begin{align*}
    f_{G_0, \nu}(z) = \int \varphi\pr{\frac{z-\tau}{\nu}}\frac{1}
    {\nu} G_0(d\tau) 
\ge \int_{-M}^M \varphi\pr{\frac{z-\tau}{\nu}}\frac{1}
    {\nu} G_0(d\tau)
    \ge \frac{1}{\nu} \varphi\pr{|2M| / \nu} G_0([-M, M])
    \tag{$|z-\tau|
        \le 2M$ if $\tau \in [-M, M]$}
    \end{align*}
    Hence, \[
    \E[|\tau|^p \mid Z=z] \le (3M)^p + \frac{\int |\tau|^p
    G_0(d\tau)}{G_0([-M,M])}.
    \]
    Since $G_0$ is mean zero and variance 1, by Chebyshev's inequality, $G_0([-M, M]) \ge
    G_0([-2,2]) \ge 3/4$. Hence $
    \E[|\tau|^p \mid Z=z] \lesssim_{p, \alpha, A_0} M^p \lesssim_{p, \alpha, A_0} (|z| \vee 1)^p
    $,
    since we have bounded $p$\th{} moments by \cref{as:moments}.
\end{proof}

\section{A large-deviation inequality for the average Hellinger distance}
\label{sec:hellinger}

\begin{theorem}
\label{thm:large-deviation}
For some $n \ge 7$, let $\tau_1, \ldots, \tau_n \mid (\nu_1^2,\ldots, \nu_n^2)
\iid G_0$
where $G_0$
satisfies \cref{as:moments}.
Let $\nu_u = \max_i \nu_i$ and $\nu_\ell = \min_i \nu_i$.
Assume $Z_i \mid \tau_i, \nu_i^2 \sim \Norm(\tau_i,
\nu_i^2)$.
Fix positive sequences $\gamma_n, \lambda_n \to 0$ with $\gamma_n, \lambda_n \le 1$ and
constant $\epsilon > 0$. Fix some positive constant $C^*$.
Consider the set of distributions that approximately maximize the likelihood
\[
A(\gamma_n, \lambda_n) = \br{
    H \in \mathcal P(\R) : \sub_n(H) \le C^* \pr{\gamma_n^2 + \bar h(\fs{H}, \fs{G_0}) \lambda_n}
}.
\]
Also consider the set of distributions that are far from $G_0$ in $\bar h$: \[
B(t, \lambda_n, \epsilon) = \br{
    H  \in \mathcal P(\R) : \barh(\fs{H}, \fs{G_0}) \ge t B \lambda_n^{1-\epsilon}
}
\]
with some constant $B$ to be chosen.
Assume that for some $C_\lambda$, \[
\lambda_n^2 \ge \pr{\frac{C_\lambda}{n} (\log n)^{1+\frac{\alpha+2}{2 \alpha}} }\vee \gamma_n^2.
\numberthis
\label{eq:rate_lambda_gamma}
\] Then the probability that $A \cap B$ is nonempty is bounded for $t > 1$: There exists
a choice of $B$ that depends only on $\nu_\ell, \nu_u, C^*, C_\lambda$ such  that \[
\P\bk{
    A(\gamma_n, \lambda_n) \cap B(t, \lambda_n, \epsilon)  \neq \emptyset
} \le (\log_2(1/\epsilon) + 1) n^{-t^2}.
\numberthis
\label{eq:nonemptyintersection}
\]
\end{theorem}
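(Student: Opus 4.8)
The plan is to run a \emph{peeling} (slicing) argument over dyadic Hellinger shells and then sum over shells. Since every $H$ has $\barh(\fs H,\fs{G_0})\le 1$, the set $B(t,\lambda_n,\epsilon)$ lives in the band $[\,tB\lambda_n^{1-\epsilon},\,1\,]$, which I would cover by the shells $S_j=\{H:\lambda_n^{1-\epsilon 2^{j}}\le \barh(\fs H,\fs{G_0})\le \lambda_n^{1-\epsilon 2^{j+1}}\}$ for $j=0,\dots,J$ with $J=\lceil\log_2(1/\epsilon)\rceil$. The point of doubling the exponent deficit is that the lower and upper radii $r_j=\lambda_n^{1-\epsilon 2^{j}}$ and $R_j=\lambda_n^{1-\epsilon 2^{j+1}}$ satisfy the exact balance $r_j^2=\lambda_n R_j$, and the shells reach from the lower threshold of $B$ up to the maximal Hellinger distance in exactly $\log_2(1/\epsilon)+1$ pieces. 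On a given shell the $A$-membership condition reads $\sub_n(H)\le C^*(\gamma_n^2+\barh\lambda_n)\le C^*(\gamma_n^2+R_j\lambda_n)\lesssim r_j^2$, using $\gamma_n^2\le\lambda_n^2\le r_j^2$ from the rate hypothesis; so it suffices to show that, with probability at least $1-n^{-t^2}$, no $H$ with $\barh\in[r_j,R_j]$ can have its likelihood that close to that of $G_0$.

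The per-shell bound rests on the Hellinger-affinity (Chernoff) inequality at exponent $\tfrac12$. Writing $X_i(H)=\log\{f_{H,\nu_i}(Z_i)/f_{G_0,\nu_i}(Z_i)\}$ for $Z_i\sim f_{G_0,\nu_i}$ and using independence across $i$, I would use $\E[e^{X_i(H)/2}]=\int\sqrt{f_{H,\nu_i}f_{G_0,\nu_i}}=1-h^2(f_{H,\nu_i},f_{G_0,\nu_i})$, so that $\E\exp(\tfrac12\sum_iX_i(H))\le e^{-n\barh^2(\fs H,\fs{G_0})}$; a Markov step then controls $\tfrac1n\sum_iX_i(H)$ from above by $-c\,\barh^2$ plus fluctuations for each fixed $H$. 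To make this uniform over a shell I would discretize $\{H:\barh\le R_j\}$ by a finite net in the pseudometric $d_{\infty,M}^{(0)}$ of \cref{prop:covering_for_moments}, after truncating to $|Z_i|\le M$ with $M\asymp(\log n)^{1/\alpha}$ so that the unbounded increments become controllable and the omitted tails $\P_{G_0,\nu_i}(|Z_i|>M)$ are negligible by \cref{lemma:tail_bound}, and then take a union bound over the net using the single-$H$ estimate together with the continuity of $h$ with respect to $d_{\infty,M}^{(0)}$. The net entropy enters through $\log N\leH (\log(1/\delta))^2\max(1,M/\sqrt{\log(1/\delta)})$, which at the relevant resolution is of order $(\log n)^{1+(\alpha+2)/(2\alpha)}$, precisely the factor appearing in the rate condition \eqref{eq:rate_lambda_gamma}.

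Assembling the pieces, the uniform fluctuation on $S_j$ is governed by $\log N$ and a local variance proxy of order $\barh\lesssim R_j$, and balancing this against $r_j^2=\lambda_n R_j$ is exactly what the hypothesis $\lambda_n^2\ge (C_\lambda/n)(\log n)^{1+(\alpha+2)/(2\alpha)}$ secures: it forces $n\,r_j^2\gtrsim (\log N)\,t^2$ uniformly in $j$, so the bad event on each shell has probability at most $n^{-t^2}$ after the union bound (the extra $t^2$ comes from running the deviation at level $t\sqrt{\log n}$). Summing over the $J+1\le\log_2(1/\epsilon)+1$ shells then yields \eqref{eq:nonemptyintersection}. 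I expect the main obstacle to be this uniform empirical-process control of the increments $X_i(H)$ over the net: the log-likelihood ratios are unbounded and only subexponentially integrable under \cref{as:moments}, so the truncation level $M$, the net resolution $\delta$, and the Chernoff exponent must be chosen in tandem so that the truncation bias, the discretization error in $d_{\infty,M}^{(0)}$, and the stochastic fluctuation are all dominated by $r_j^2$ \emph{simultaneously across every shell}, while keeping the per-shell failure probability at $n^{-t^2}$.
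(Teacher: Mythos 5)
Your overall strategy is essentially the paper's: dyadic peeling of $B(t,\lambda_n,\epsilon)$ into $\log_2(1/\epsilon)+1$ Hellinger shells whose inner radius squared balances $\lambda_n$ times the outer radius, a per-shell bound from the exponent-$1/2$ Chernoff/Hellinger-affinity inequality plus Markov, uniformity from an $\omega$-net in the pseudometric of \cref{prop:covering_for_moments}, truncation at $M\asymp(\log n)^{1/\alpha}$, and a union bound over shells; your identification of the entropy exponent $(\log n)^{1+\frac{\alpha+2}{2\alpha}}$ as the origin of \eqref{eq:rate_lambda_gamma} also matches the paper's computation. The genuine gap is in the shell geometry. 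Your radii $r_j=\lambda_n^{1-\epsilon 2^{j}}$, $R_j=\lambda_n^{1-\epsilon 2^{j+1}}$ carry neither the factor $t$ nor a large tunable constant $B$. On every shell beyond the first, the constraint $\barh\ge tB\lambda_n^{1-\epsilon}$ coming from $B(t,\lambda_n,\epsilon)$ is slack (for large $n$ one has $r_j\gg tB\lambda_n^{1-\epsilon}$ when $j\ge 1$), so the only usable lower bound on $\barh$ is $r_j$. The per-net-point probability bound is then $\exp\left(ns/2-nr_j^2\right)$ with $s\le 2C^*r_j^2$ (using $\gamma_n^2\le r_j^2$ and $R_j\lambda_n=r_j^2$), i.e.\ at best $\exp\left(-nr_j^2(1-C^*)\right)$. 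This fails in two ways: $C^*$ is an arbitrary given constant, so $1-C^*$ can be negative; and the exponent does not scale with $t^2$, so no choice of constants yields $n^{-t^2}$ uniformly over $t>1$ (restricting attention to those $t$ for which the shell still intersects $B$ does not rescue it). Your proposed repair, ``running the deviation at level $t\sqrt{\log n}$,'' is not available: after Markov at exponent $1/2$ the failure probability of the fixed event $\{\sub_n(H')\le s,\ \barh\ge r\}$ is $e^{ns/2-nr^2}$; there is no free level to tune. The paper's shells $(t\mu_{n,k+1},t\mu_{n,k}]$ with $\mu_{n,k}=B\lambda_n^{1-2^{-k+1}}$ fix exactly this: the factor $t$ in the radii (with $t>1$ also absorbed into the suboptimality bound) makes the entire exponent proportional to $t^2$, and choosing $B$ large makes $\mu_{n,k+1}^2=B\lambda_n\mu_{n,k}$ dominate $C^*(\gamma_n^2+\mu_{n,k}\lambda_n)$ whatever $C^*$ is.

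Separately, the tail region $\{|Z_i|>M\}$ cannot be dismissed as negligible via \cref{lemma:tail_bound}: the likelihood ratio $\prod_i f_{H,\nu_i}(Z_i)/f_{G_0,\nu_i}(Z_i)$ includes those observations, the net only controls densities on $[-M,M]$, and a net point's density outside $[-M,M]$ can be arbitrarily small, so the corresponding log-ratios are unbounded. The paper's device is to majorize $f_{H,\nu_i}(z)$ for $|z|>M$ by $(f_{H_{k,j},\nu_i}(z)+2v(z))/(\sqrt{2\pi}\nu_i v(z))$ with $v(z)=\omega M^2/z^2$, splitting the bound into the net term \eqref{eq:term1} and the tail product \eqref{eq:term2}; the latter is controlled by a separate Markov argument at exponent $1/(2\log n)$ via \cref{lemma:boundsecond}, which is what forces $M\ge\nu_u\sqrt{8\log n}$ (and is where $\alpha\le 2$ is used, so that $(\log n)^{1/\alpha}\gtrsim\sqrt{\log n}$). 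You correctly flag this as the main obstacle, but without a device of this kind the per-shell bound does not close.
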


\corhellinger*
\begin{proof}
Let $\gamma = \frac{2 + \alpha}{2 \alpha} + \beta$. We first note that, for
$\varepsilon_n$
in
\eqref{eq:specific_rate_sub}, the choices
\[\lambda_n = n^{-p/(2p+1)} (\log n)^{\frac{2 + \alpha}{2 \alpha} + \beta} \minwith 1
=
\gamma_n\] satisfy \eqref{eq:rate_lambda_gamma}. Note that the choices of $\lambda_n,
\gamma_n$ are such that $\varepsilon_n \le C_\H (\lambda_n \barh + \gamma_n^2)$.

The event $\br{A_n, \barh(\fs{\hat G_n}, \fs{G_0}) > t \delta_n}$ is a subset of the union
of \[ E_1 = \br{A_n, \sub_n(\hat G_n) > C_\Hyperparams^*
\varepsilon_n}
\text{ and }
E_2 = \br{
    A_n, \sub_n(\hat G_n) \le C_\Hyperparams^*
    \varepsilon_n, \bar h(\fs{\hat G_n}, \fs{G_0}) > t  n^{-p/(2p+1)} (\log
    n)^{\gamma}
}.
\]
Thus $\P\bk{ A_n, \barh(\fs{\hat G_n}, \fs{G_0}) > t \delta_n } \le \P(E_1) +
    \P (E_2)$.  
    \Cref{cor:suboptimality} implies that $\P(E_1) \le 9/n$.

Now, note that
\begin{align*}
\P(E_2) \le \P\bk{
  A_n,  \sub_n(\hat G_n) \le C_\Hyperparams^* C_\Hyperparams (\lambda_n \barh +
  \gamma_n^2),  \bar h(\fs{\hat G_n}, \fs{G_0})  \ge t\lambda_n
}.
\end{align*}
Observe that, for $\epsilon = 1/\log(n)$ \begin{align*}
t \lambda_n^{1-\epsilon} &= t  \bk{n^{-\frac{p}{2p+1}(1-\epsilon)} (\log n)^{\gamma
(1-\epsilon)} \minwith 1}  \\
&= t \pr{n^{-\frac{p}{2p+1}} (\log n)^{\gamma} \bk{n^{\frac{\epsilon p}{2p+1}} (\log n)^
{-\gamma \epsilon}} \minwith 1} \\ 
&= t \pr{n^{-\frac{p}{2p+1}} (\log n)^{\gamma} \bk{e^{\frac{p}{2p+1}} (\log n)^
{-\gamma \epsilon}} \minwith 1} \\
&\le C_{p,\gamma} t \lambda_n \tag{$e^{\frac{p}{2p+1}} (\log n)^
{-\gamma \epsilon}$ is bounded by a constant}
\end{align*}
Thus, by \cref{thm:large-deviation}, for all sufficiently large $t$, \begin{align*}
\P(E_2) &\le \P\bk{
  \sub_n(\hat G_n) \le C_\Hyperparams^* C_\Hyperparams (\lambda_n \barh(\fs{\hat G_n}, \fs{G_0}) +
  \gamma_n^2), \barh(\fs{\hat G_n}, \fs{G_0}) \ge \frac{t}{C_{p,\gamma}} \lambda_n^{1-\epsilon}
  } \\
  &\le \P\bk{
  A(\gamma_n, \lambda_n) \cap B\pr{\frac{t}{BC_{p,\lambda}}, \lambda_n, \epsilon} \neq
  \emptyset
  } \le \pr{\log_2(\log n) + 1} n^{-t^2/C_\H}
\end{align*}
We can pick $t = B_\H$ sufficiently large such that $n^{-t^2/C_\H} \le 1/n$ and \[
\P\bk{ A_n, \barh(\fs{\hat G_n}, \fs{G_0}) > t \delta_n } \le \P(E_1) +
    \P (E_2) \le \pr{\frac{\log \log n}{\log 2} + 10} \frac{1}{n}. \qedhere
\]
\end{proof}

\subsection{Proof of \cref{thm:large-deviation}}

\subsubsection{Decompose $B(t, \lambda_n, \epsilon)$}

We decompose $
B(t, \lambda_n, \epsilon) \subset \bigcup_{k=1}^{K} B_k(t,\lambda_n)
$
where, for some constant $B > 1$ to be chosen, \[
B_k = \br{
    H : \barh\pr{\fs{H}, \fs{G_0}} \in \left(t B\lambda_n^{1-2^{-k}}, t B\lambda_n^{1-2^
    {-k+1}}\right]
}.
\]
The relation $B(t, \lambda_n, \epsilon)\subset \bigcup_k B_k$ holds if we take $K
= \lceil |\log_2(1/\epsilon)| \rceil$. 

In the remainder, we will bound \[
\P(A(\gamma_n, \lambda_n) \cap B_k(t, \lambda_n) \neq \emptyset) \le n^{-t^2}
\]
which becomes the bound \eqref{eq:nonemptyintersection} by a union bound. This argument
follows the argument for Theorem 7 in \citet{soloff2021multivariate} 
and Theorem 4 in \citet{jiang2020general}. For $k \in [K]$, define $\mu_{n,k} = B
\lambda_n^ {1-2^ {-k+1}}$ such that $B_k = \br{ H : \barh\pr{\fs{H}, \fs{G_0}} \in \left(t
\mu_{n,k+1}, t \mu_{n,k}\right] }.$ To that end, fix some $k$.

\subsubsection{Construct a net for the set of densities $f_G$}
Fix a positive constant $M$ and define the pseudonorm \[
\norm{G}_{\infty, M} = \max_{i\in [n]} \sup_{y \in [-M,M]} f_{G,\nu_i}(y).
\]
Note that $\norm{G_1-G_2}_{\infty, M} \rateeq_{\Hyperparams} d_{\infty, M}^{(0)}(G_1,
G_2)$
defined in
\cref{prop:covering_for_moments}. Fix $\omega = \frac{1}{n^2} > 0$ and consider an
$\omega$-net for $\mathcal P(\R)$ under $\norm{\cdot}_{\infty, M}$. Let
$N = N(\omega, \mathcal P(\R),
\norm{\cdot}_{\infty, M})$ and the $\omega$-net consists of the distributions $H_1,\ldots,
H_N$.
For each $j \in [N]$, let $H_{k,j}$ be a distribution, if it exists, with  \[
\norm{H_{k,j} - H_{j}}_{\infty, M} \le \omega \quad \bar h(\fs{H_{k,j}}, \fs{G_0}) \ge t
\mu_ {n,k+1} \numberthis \label{eq:Hkj_def}
\]
and let $J_k$ collect the indices $j$ for which $H_{j,k}$ exists.

\subsubsection{Project to the net and upper bound the likelihood}
Fix a distribution $H \in B_k(t, \lambda_n)$. There exists some member of the covering,
$H_j$, such that $ \norm{H -
H_j}_{\infty, M} \le \omega$. Moreover, $H$ serves as a witness that $H_{k,j}$
exists, with $
\norm{H - H_{k,j}}_{\infty, M} \le 2 \omega.
$

We can construct an upper bound for $f_{H, \nu_i}(z)$ via \[
f_{H, \nu_i}(z) \le \begin{cases}
    f_{H_{k,j}, \nu_i}(z) + 2 \omega & |z| \le M \\
    \frac{1}{\sqrt{2\pi} \nu_i} & |z| > M.
\end{cases}
\]
Define $v(z) = \omega \one(|z| \le M) + \frac{\omega M^2}{z^2} \one(|z| > M).$ Observe
that
\begin{align*}
f_{H,\nu_i}(z) \le
\begin{cases}
f_{H_{k,j}, \nu_i}(z) + 2v(z)   & |z| \le M\\
 \frac{f_{H_{k,j}, \nu_i}(z) + 2v(z)}{\sqrt{2\pi} \nu_i v(z)} & |z| > M.
\end{cases}
\end{align*}
Hence, the likelihood ratio between $H$ and $G_0$ is upper bounded:
\begin{align*}
\prod_{i=1}^n \frac{f_{H, \nu_i}(Z_i)}{ f_{G_0, \nu_i}(Z_i)} &\le
\prod_{i=1}^n \frac{f_{H_{k,j}, \nu_i}(Z_i) + 2v(Z_i)}{f_{G_0, \nu_i}(Z_i)} \prod_{i : |Z_i|
> M} \frac{1}{\sqrt{2\pi} \nu_i v(Z_i)} \\&\le \pr{\max_{j \in J_k}\prod_{i=1}^n \frac{f_
{H_
{k,j}, \nu_i}(Z_i) + 2v(Z_i)}{f_{G_0, \nu_i}(Z_i)}} \prod_{i : |Z_i|
> M} \frac{1}{\sqrt{2\pi} \nu_i v(Z_i)}
\end{align*}
If $H \in A(t, \gamma_n, \lambda_n)$, then the likelihood ratio is also lower bounded: \begin{align*}
\prod_{i=1}^n \frac{f_{H, \nu_i}(Z_i)}{ f_{G_0, \nu_i}(Z_i)} &\ge \exp\pr{-n C^*
(\gamma_n^2 + \barh\pr{\fs{H}, \fs{G_0}} \lambda_n)} \\& \ge \exp\pr{-n t C^*
(t\gamma_n^2 + \barh\pr{\fs{H}, \fs{G_0}} \lambda_n)} \tag{$t>1$} \\ 
&\ge \exp\pr{-n C^* (t^2 \gamma_n^2 + t \bar h \lambda_n)}\\
&\ge \exp\pr{-n C^* (t^2 \gamma_n^2 + t^2 \mu_{n,k} \lambda_n)}.
\end{align*}
Hence, \begin{align*}
&\P\bk{
    A(t, \gamma_n, \lambda_n) \cap B_k(t, \lambda_n) \neq \emptyset
} \\ &\le \P\Bigg\{
    \pr{\max_{j \in J_k}\prod_{i=1}^n \frac{f_{H_
{k,j}, \nu_i}(Z_i) + 2v(Z_i)}{f_{G_0, \nu_i}(Z_i)}} \prod_{i : |Z_i|
> M} \frac{1}{\sqrt{2\pi} \nu_i v(Z_i)} \ge \exp\pr{-nt^2 C^*(\gamma_n^2 +
\mu_{n,k} \lambda_n) }
\Bigg\}
\\
&\le
\P\bk{
    \max_{j\in J_k} \prod_{i=1}^n \frac{f_{H_{k,j}, \nu_i} + 2v(Z_i)}{f_{G_0, \nu_i}
    (Z_i)} \ge e^{-n t^2 a C^*  (\gamma_n^2 + \mu_{n,k} \lambda_n)}
} \numberthis \label{eq:term1}
\\
&\quad\quad+ \P\bk{
    \prod_{i:
|Z_i| > M} \frac{1}{\sqrt{2\pi} \nu_i v(Z_i) } \ge e^{n t^2 (a - 1) C^* (\gamma_n^2 +
 \mu_{n,k} \lambda_n)}
} \numberthis \label{eq:term2}
\end{align*}
The second
inequality follows from choosing some $a > 1$ and applying union bound.

\subsubsection{Bounding \eqref{eq:term1}} We consider bounding the first term
\eqref{eq:term1} now:
\begin{align*}
\eqref{eq:term1}
&\le \sum_{j\in J_k} \P\bk{\prod_{i=1}^n \frac{f_{H_{k,j}, \nu_i} + 2v(Z_i)}{f_{G_0,
\nu_i}
    (Z_i)} \ge e^{-n a t^2 C^* (\gamma_n^2 + \mu_{n,k} \lambda_n)}} \tag{Union bound}
\\
&\le \sum_{j\in J_k} \E\bk{
    \prod_{i=1}^n \sqrt{\frac{f_{H_{k,j}, \nu_i}(Z_i) + 2v(Z_i)}{f_{G_0, \nu_i}
    (Z_i)}}
    } e^{n a t^2 C^* (\gamma_n^2 + \mu_{n,k} \lambda_n)/2} \tag{Take square root
    of both sides, then apply Markov's inequality} \\
    &= \sum_{j\in J_k} e^{n a t^2 C^* (\gamma_n^2 + \mu_{n,k} \lambda_n)/2} \prod_
    {i=1}^n \E\bk{
    \sqrt{\frac{f_{H_{k,j}, \nu_i}(Z_i) + 2v(Z_i)}{f_{G_0, \nu_i}
    (Z_i)}}
    } \numberthis \label{eq:term1_inter}
\end{align*}
where the last step \eqref{eq:term1_inter} is by independence over $i$.
Note that \begin{align*}
\E\bk{
    \sqrt{\frac{f_{H_{k,j}, \nu_i}(Z_i) + 2v(Z_i)}{f_{G_0, \nu_i}
    (Z_i)}}
    } &= \int_{-\infty}^{\infty} \sqrt{ f_{H_{k,j}, \nu_i}(x) + 2v(x)} \sqrt{f_{G_0,
    \nu_i}
    (x)} \,dx
    \\
    &\le 1-h^2(f_{H_{k,j}, \nu_i}, f_{G_0, \nu_i}) + \int_{-\infty}^\infty \underbrace{
    \sqrt{2v (x)
        f_{G_0,\nu_i}(x)}}_{\sqrt{2v(x)/f_{G_0, \nu_i}} \cdot f_{G_0, \nu_i}} \,dx \tag{$
        \sqrt{a+b} \le \sqrt{a} + \sqrt{b}$}
    \\
    & \le  1-h^2(f_{H_{k,j}, \nu_i}, f_{G_0,\nu_i}) + \pr{2\int_{-\infty}^\infty v
    (x)\,dx}^{1/2} \tag{Jensen's inequality}
    \\
    &=  1-h^2(f_{H_{k,j}, \nu_i}, f_{G_0,\nu_i}) + \sqrt{8M   \omega}. 
    \tag{Direct integration}
\end{align*}

Also note that, for $t_i > 0$, we have \[
\prod_{i=1}^n t_i = \exp \pr{\sum_{i=1}^n \log t_i} \le \exp\pr {\sum_{i=1}^n (t_i - 1)}.
\]
Thus, \[
\prod_{i=1}^n \E
\bk{
    \sqrt{\frac{f_{H_{k,j}, \nu_i} + 2v(Z_i)}{f_{G_0, \nu_i}
    (Z_i)}}
    } \le \exp\bk{
    - n\barh^2(\fs{H_{k,j}}, \fs{G_0}) + n\sqrt{8M\omega}
    }.
\]

Thus, we can further bound \eqref{eq:term1_inter}: \begin{align*}
\eqref{eq:term1} &\le \eqref{eq:term1_inter} = \sum_{j\in J_k} e^{n \alpha t^2
(\gamma_n^2 +
\mu_{n,k}
\lambda_n)/2} \prod_{i=1}^n \E
\bk{
    \sqrt{\frac{f_{H_{k,j}, \nu_i} + 2v(Z_i)}{f_{G_0, \nu_i}
    (Z_i)}}
    }
\\
    &\le  \sum_{j\in J_k} \exp\br{
    \frac{na t^2C^*}{2} (\gamma_n^2 + \mu_{n,k}\lambda_n) - n \barh^2(
    \fs{H_{k,j}}, \fs{G_0} )+ n \sqrt{8M\omega}
    } \\
    & \le \sum_{j\in J_k} \exp\br{
    \frac{na t^2C^*}{2} (\gamma_n^2 + \mu_{n,k}\lambda_n) - n t^2 \mu_{n,k+1}^2 + n
    \sqrt{8M\omega}
    } \tag{$\barh^2(
    \fs{H_{k,j}}, \fs{G_0}) \ge t \mu_{n,k+1}$ by \eqref{eq:Hkj_def}} \\
    &\le  \exp\br{
    \frac{na t^2C^*}{2} (\gamma_n^2 + \mu_{n,k}\lambda_n) - n t^2 \mu_{n,k+1}^2 + n
    \sqrt{8M\omega} + \log N
    } \tag{$|J_k| \le N$} \\
    & \le \exp\br{
    \frac{na t^2C^*}{2} (\gamma_n^2 + \mu_{n,k}\lambda_n) - n t^2 \mu_{n,k+1}^2 + n
    \sqrt{8M\omega} + C \abs{\log \omega}^2 \max\pr{\frac{M}{\sqrt{|\log \omega|}}, 1}
    } \tag{\cref{prop:covering_for_moments}, $q=0$} \\
    &= \exp\br{
    \frac{na t^2C^*}{2} (\gamma_n^2 + \mu_{n,k}\lambda_n) - n t^2 \mu_{n,k+1}^2 +
    \sqrt{8M} + C (\log n)^2 \max\pr{\frac{M}{\sqrt{\log n}}, 1}
    }  \tag{Recall that $\omega = \frac{1}{n^2}$}.
\end{align*}

\subsubsection{Bounding \eqref{eq:term2}}

We now consider bounding the second term \eqref{eq:term2}. By Markov's inequality again
(taking $x \mapsto x^{1/(2\log n)}$ on both sides
),
we can choose to bound \[
\eqref{eq:term2} \le \E\bk{
    \prod_{i=1}^n \pr{\frac{1}{(2\pi \nu_i^2)^{1/4}} \frac{Z_i}{M \sqrt{\omega}}
}^{\frac{1}{\log n} \one(|Z_i| > M)}
} \exp\pr{-\frac{n (a - 1) t^2 C^* (\gamma_n^2 + \mu_{n,k} \lambda_n)}{2\log n}}
\]
instead. Define \[
a_i = \frac{1}{(2\pi \nu_i^2)^{1/4} M \sqrt{\omega}} \le  \frac{C_{\nu_\ell}n}{M} \quad
\lambda = \frac{1}{\log n}
\]
Apply \cref{lemma:boundsecond} to obtain the following. Note that to do so, we require $ M
\ge \nu_u \sqrt{8\log n}$ and $p \ge \frac{1}{\log n}. $
\begin{align*}
\log \E\bk{
    \prod_{i=1}^n \pr{\frac{1}{(2\pi \nu_i^2)^{1/4}} \frac{Z_i}{M \sqrt{\omega}}
}^{\frac{1}{\log n} \one(|Z_i| > M)}}
&= \log \E\bk{\prod_i (a_i Z_i)^{ \lambda \one(|Z_i| \ge M) }
}
\\
&\lesssim_{\nu_u} \sum_{i=1}^n (a_i M)^{\lambda} \pr{\frac{1}{Mn} + \frac{2^p \mu_p^p
(G_0)}{M^p}} 
\tag{\cref{lemma:boundsecond}}
\\
&\le \sum_{i=1}^n (C_{\nu_\ell} n)^{\frac{1}{\log n}} \pr{\frac{1}{Mn} + \frac{2^p \mu_p^p
(G_0)}{M^p}}
\\
&\lesssim_{\nu_u, \nu_\ell} \frac{1}{M} +  n \frac{2^p \mu_p^p
(G_0)}{M^p}
\end{align*}
As a result,
\begin{align*}
\log [\eqref{eq:term2}] & \le C_{\nu_u, \nu_\ell} \pr{\frac{1}{M} + \frac{2^p n \mu_p^p
(G_0)}{M^p}} -  \frac{n (a-1) }{2\log n} t^2 C^* \pr{\gamma_n^2 + B \lambda_n^{2(1-2^
{-k})}}. \numberthis \label{eq:second_bound_log}
\end{align*}

To conclude, note that by \cref{as:moments}, $\mu_p^p(G_0) \le A_0^p p^{p/\alpha}$. Let $M = 2e A_0 (c_m \log
n)^{1/\alpha}$ and $p = (M / (2eA_0))^{1/\alpha}$ so that \[
2^p \mu_p^p(G_0) / M^p \le \exp\pr{-c_m \log n}
\]
We choose $c_m \ge 2$ sufficiently large such that $M = 2e A_0 (c_m \log n)^{1/\alpha} >
\nu_u
\sqrt{8 \log n} \vee 1$ and $p \ge 1$ for all $n > 2$ to ensure that our application of
\cref{lemma:boundsecond} is correct. We also choose $a = 1.5$.

Plugging in these choices, we can verify that, via \eqref{eq:rate_lambda_gamma},
\begin{align*}
\log [\eqref{eq:term2}] &\le t^2 \bk{
    2 C_{\nu_u, \nu_\ell} - \frac{C^* B C_\lambda}{4} \pr{\log n}
} \\ 
\log [\eqref{eq:term1}] &\le -t^2 (\log n)^{1+\frac{2+\alpha}{2\alpha}}\bk{ C_\lambda \pr{-\frac{3}{4}C^* -
\frac{3}
{4}C^*B + B^2} - C}
\end{align*}
There exists a sufficiently large choice of $B$ such that $\log [\eqref{eq:term2}] \le
-t^2 \log n - \log 2$ and $\log [\eqref{eq:term1}] \le -t^2 \log n - \log 2$.  Thus, we
obtain that
$
\eqref{eq:term1} + \eqref{eq:term2} \le n^{-t^2}.
$
This concludes the proof.

\subsection{Auxiliary lemmas}

\begin{lemma}[Lemma 5, \citet{jiang2020general}]
\label{lemma:boundsecond}

Suppose $Z_i \mid \tau_i \sim \Norm(\tau_i, \nu_i^2)$ where $\tau_i \mid \nu_i^2 \sim G_0$
independently across $i$. Let $0 < \nu_u, \nu_\ell <\infty$ be the upper and lower bounds
for $\nu_i$. Then,
for all constants $M > 0,\lambda > 0,a_i > 0, p \in \N$
such that $M \ge
\nu_u \sqrt{8\log n}$,
$\lambda \in (0, p\minwith 1)$, and $a_1,\ldots, a_n > 0$:\[
\E\br{\prod_i |a_i Z_i|^{\lambda \one(|Z_i|\ge M)}} \le \exp\br{
    \sum_{i=1}^n (a_i M)^{\lambda} \pr{\frac{4\nu_u}{Mn\sqrt{2\pi}} + \pr{\frac{2\mu_p
    (G_0)}
    {M}}^p}
},
\]
where $\mu_p^p(G_0) = \int |\tau|^p G_0(d\tau)$.
\end{lemma}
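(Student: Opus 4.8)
The plan is to exploit independence across $i$ and reduce the claim to a single per-coordinate estimate. Since $\tau_1,\ldots,\tau_n \iid G_0$ and the Gaussian noises are independent across $i$, the $Z_i$ are mutually independent, so the expectation factorizes as $\E\bk{\prod_i |a_i Z_i|^{\lambda \one(|Z_i| \ge M)}} = \prod_{i=1}^n E_i$ with $E_i \equiv \E\bk{|a_i Z_i|^{\lambda \one(|Z_i| \ge M)}}$. On $\br{|Z_i| < M}$ the exponent vanishes and the integrand equals $1$, so $E_i = 1 - \P(|Z_i| \ge M) + \E\bk{\one(|Z_i| \ge M)|a_i Z_i|^\lambda} \le 1 + (a_i M)^\lambda\, \E\bk{\one(|Z_i| \ge M)(|Z_i|/M)^\lambda}$, where I discard the nonpositive term $-\P(|Z_i| \ge M)$ and factor out $(a_i M)^\lambda$. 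Applying $1 + x \le e^x$ for $x \ge 0$ termwise, the stated bound follows immediately once I prove the per-coordinate inequality
\begin{equation*}
\E\bk{\one(|Z_i| \ge M)(|Z_i|/M)^\lambda} \le \frac{4\nu_u}{Mn\sqrt{2\pi}} + \pr{\frac{2\mu_p(G_0)}{M}}^p. \tag{$\star$}
\end{equation*}

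To establish $(\star)$ I would write $Z_i = \tau_i + \nu_i W_i$ with $W_i \sim \Norm(0,1)$ independent of $\tau_i$, condition on $\tau_i = t$, and split the $\tau_i$-integral into $\br{|t| < M/2}$ (noise-dominated) and $\br{|t| \ge M/2}$ (signal-dominated). This split is exactly what produces the two distinct powers of $M$ in $(\star)$. On $\br{|t| < M/2}$, the event $\br{|Z_i| \ge M}$ forces $\nu_i|W_i| \ge M/2$, hence $|Z_i| \le |t| + \nu_i|W_i| \le 2\nu_i|W_i|$; because $\lambda \le 1$ and $2\nu_i|W_i|/M \ge 1$ there, $(|Z_i|/M)^\lambda \le 2\nu_i|W_i|/M$, so the conditional contribution is at most $\frac{2}{M}\,\E\bk{\one(\nu_i|W_i| \ge M/2)\,\nu_i|W_i|} = \frac{4\nu_i}{M}\varphi\pr{M/(2\nu_i)}$, using $\E\bk{\one(|W|\ge c)|W|} = 2\varphi(c)$. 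The hypothesis $M \ge \nu_u\sqrt{8\log n}$ gives $M^2/(8\nu_i^2) \ge \log n$, so $\varphi(M/(2\nu_i)) \le \frac{1}{\sqrt{2\pi}\,n}$, and the contribution is at most $\frac{4\nu_u}{Mn\sqrt{2\pi}}$ uniformly in $t$; integrating against the probability measure $G_0$ preserves this.

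On $\br{|t| \ge M/2}$ I drop the indicator and use Jensen's inequality (concavity of $x \mapsto x^\lambda$) to get $\E\bk{(|Z_i|/M)^\lambda \mid \tau_i = t} \le \pr{(|t| + \nu_u\sqrt{2/\pi})/M}^\lambda$; since $|t| \ge M/2 \ge \nu_u\sqrt{2\log n} > \nu_u\sqrt{2/\pi}$ (using $n \ge 7$), this is below $(2|t|/M)^\lambda \le (2|t|/M)^p$, where I use $2|t|/M \ge 1$ together with $\lambda \le p$. Integrating over $\br{|t| \ge M/2}$ then gives at most $\int (2|t|/M)^p\,G_0(dt) = (2\mu_p(G_0)/M)^p$, the second term of $(\star)$. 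Adding the two regional contributions proves $(\star)$ and hence the lemma.

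The main obstacle is obtaining the two terms of $(\star)$ with their genuinely different scalings in $M$: a single global bound such as $(|Z_i|/M)^\lambda \le (|Z_i|/M)^p$ would contaminate the Gaussian-tail term with an $M^{-p}$ factor and fail to reproduce the stated $M^{-1}$ scaling. The remedy is the signal/noise case split, which invokes $\lambda \le 1$ to linearize the power on the noise-dominated region and $\lambda \le p$ to absorb the signal-dominated region into the $p$-th moment; the only delicate point is tracking constants so that the Gaussian-tail estimate lands precisely on $\frac{4\nu_u}{Mn\sqrt{2\pi}}$.
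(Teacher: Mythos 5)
Your proof is correct. One thing to be aware of: the paper does not prove this lemma itself --- it is imported verbatim (statement and attribution) from Lemma 5 of \citet{jiang2020general} and used as a black box in the proof of \cref{thm:large-deviation} --- so there is no in-paper argument to compare against; your derivation is in effect a self-contained reconstruction of the cited source's result. The route you take is the natural one and it recovers the stated constants exactly: independence across $i$ factorizes the expectation, $1+x\le e^x$ reduces the claim to the per-coordinate bound on $\E[\one(|Z_i|\ge M)(|Z_i|/M)^\lambda]$, and the split of the prior integral at $|\tau_i|=M/2$ produces the two terms with their genuinely different scalings in $M$: the Gaussian identity $\E[\one(|W|\ge c)\,|W|]=2\varphi(c)$ together with $M\ge\nu_u\sqrt{8\log n}$ yields the term $4\nu_u/(Mn\sqrt{2\pi})$ on the noise-dominated region (your use of $\lambda\le 1$ to linearize the power there is exactly what keeps the scaling at $M^{-1}$ rather than $M^{-p}$), while Jensen's inequality, the absorption $|t|+\nu_u\sqrt{2/\pi}\le 2|t|$, and $\lambda\le p$ with $2|t|/M\ge 1$ yield $(2\mu_p(G_0)/M)^p$ on the signal-dominated region. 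The only step that relies on anything beyond the lemma's explicit hypotheses is that absorption, which needs $\sqrt{2\log n}>\sqrt{2/\pi}$, i.e.\ $n\ge 2$; your appeal to the paper's standing assumption $n\ge 7$ (\cref{as:npmle}) covers this, though it would be slightly cleaner to note that $n\ge 2$ already suffices.
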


\newpage

\part{Additional theoretical results}

\section{Estimating $\eta_0$ by local linear regression}
\label{sec:nuisance_estimation}

This section details how we estimate $\eta_0 = (m_0(\cdot), s_0(\cdot))$ by local linear
regression in \cref{sec:empirical}. It also outlines a detailed procedure and verifies
that this procedure satisfies the conditions we require for the conditional moment
estimation, when the true $\eta_0$ belong to a H\"older class of order $p=2$:
$m_0(\sigma), s_0(\sigma) \in C_{A_1}^2([\sigl,
\sigu])$. 

In our empirical application, we estimate $m_0, s_0$ by nonparametrically regressing $Y_i$
on $x_i \equiv \log_{10}(\sigma_i)$.\footnote{Correspondingly, let $\sigma(x) = 10^x.$}
Our procedure takes the following steps, which simply use kernel-based nonparametric
regression procedures implemented by \citet{calonico2019nprobust}
to estimate $m_0$ and $s_0$ and truncate the estimated $s_0$ below at some data-driven
point. Nonparametric regression is frequently applied to visualize data and to estimate
causal effects in regression discontinuity settings. 
\begin{enumerate}[label=(E-\arabic*)]
  \item \label{item:e1} Use the default procedure \citet{calonico2019nprobust} to
  estimate
  local linear regression of $Y_i$ on $x_i$ (Epanechnikov kernel, IMSE direct plug-in
  bandwidth). The resulting estimated conditional mean is
  $\hatm(\cdot)$. 
  \item Let $\hat R_i^2 = (Y_i - \hatm(x_i))^2$. Use the above local linear regression
   procedure again to estimate the conditional mean of $\hat R_i^2$ on $x_i$, and let
   $\hat v(x)$ be the estimated conditional mean. Let \[
\tilde s^2(\sigma_i) = \hat v(x_i) - \sigma_i^2. 
  \]
  \item Since $\hat v(x_i)$ is a linear smoother, it can be written as \[
\hat v(x) = \sum_{i=1}^n \ell_i(x) \hat R_i^2. 
  \] 
  for some weights $\ell_i(x)$.
  Let an estimate of the effective sample size be \[
p_n = \frac{1}{n}\sum_{i=1}^n \frac{1}{\sum_{j=1}^n \ell_i^2(x_j)}.
  \]
  \item \label{item:e4} Let the estimated conditional variance be \[
\hat s^2(\sigma_i) = \tilde s^2(\sigma_i) \vee \frac{2}{p_n+2} \pr{\hat v(x_i) \vee
\min_{j=1,\ldots, n} \sigma_j^2}
  \]
  where the additional truncation by $\min_{i=1,\ldots, n} \sigma_i^2$ deals with the
  unlikely scenario that $\hat v(x_i)$ is negative. Note that, in theory, the population
  analogue $v (x_i) = \E
  [R_i^2 \mid x_i] = s_0^2(\sigma_i) + \sigma_i^2 \ge \sigma_i^2$. See \cref{rmk:truncate}
  for a heuristic rationale of the above truncation rule. 
\end{enumerate}

The rest of the section analyzes the theoretical properties of a similar procedure for
analyzing $m_0(\cdot), s_0(\cdot)$ and connects them to the requirements in 
\cref{as:holder}. The product of this analysis is \cref{thm:llr_main_regret}, which
 verifies the same regret bound as in \cref{cor:maintext}, where we estimate $m_0, s_0$
 with the procedure we outline below.

There are a few minor inconveniences of the
above
procedure that we
strengthen below:
\begin{itemize}
  \item We would like to control for the fact that the bandwidths $\hat h_n$ for the local
  linear
  regression is data-driven. However, to establish uniform behavior in $\hat h_n$, we
  would like to restrict it to satisfy the optimal convergence rate almost surely: 
  For some $C > 0$, \[
C^{-1} n^{-1/5} \le \hat h_n \le Cn^{-1/5} \text{ almost surely.}
  \]
  \item We would like to ensure that the estimated functions $\hatm, \hats$ are H\"older
   continuous almost surely. Since $m_0(x), s_0(x)$ are H\"older continuous,\footnote
   {Since $\log (\cdot)$ is a smooth transformation on strictly positive compact sets,
   H\"older smoothness conditions for $(m_0, s_0)$ translate to the same conditions on $
   (\E[Y\mid x], \var(Y\mid x) - \sigma^2(x))$, with potentially different constants.
   Moreover, scaling and translating $x_i$ linearly do not affect our technical results.
   As a result, we assume, without essential loss of generality, $x_i \in[0,1]$. We abuse
   and recycle notation to write $m_0(x) = \E[Y_i \mid x_i = x], s_0(x) =
\var(\theta_i \mid x_i = x)$. We also note that $m_0(x), s_0(x) \in C_{A_3}^2([0,1])$ for
some $A_3 \leH A_1$.} we do not
  significantly
  incur estimation error if
  we project to H\"older continuous functions. 
\end{itemize}

We enforce these properties in the below procedure that we analyze. We anticipate the
projection steps to be unnecessary in practice and hold with high probability in theory.
Precisely, we add the steps
\cref{item:llr_project_bandwidth,item:hprojections,item:holderprojectionm,item:holderprojections}
to the procedure in \cref{item:e1}--\cref{item:e4}. We also make the dependence on the
selected bandwidths explicit:

\begin{enumerate}[label=(LLR-\arabic*)]
  \item \label{item:llr1} Fix some kernel $K(\cdot)$. Use the direct plug-in procedure of
  \citet{calonico2019nprobust} to estimate a
  bandwidth $\hat h_{n,m}$.
  \item \label{item:llr_project_bandwidth} For some $C_h > 1$, project $\hat h_{n,m}$ to
  some interval $[C_h^{-1} n^{-1/5},
  C_h
  n^{-1/5}]$ so as to
  enforce that it converges at the optimal rate:\footnote{We use the $\gets$ notation to
  reassign a variable so that we can reduce notation clutter.} \[
\hat h_{n,m} \gets (\hat h_{n,m} \maxwith C_h^{-1} n^{-1/5}) \minwith C_h
  n^{-1/5}.
  \]

  \item \label{item:hprojectionm} Using $\hat h_{n,m}$, estimate $m_0$ with the local
  linear regression estimator
  $\hatm_\raw$ under kernel $K(\cdot)$ and bandwidth $\hat h_{n,m}$.

  \item \label{item:holderprojectionm} Project the resulting estimator $\hatm$ to the
  H\"older class $C_{A_3}^2(
  [0,1])$: \[
\hatm \in \argmin_{m \in C_{A_3}^2(
  [0,1])} \norm{m - \hatm_\raw}_\infty.
  \]
  We obtain $\hatm$ through this procedure.

  \item Form estimated squared residuals $\hat R_i^2 = (Y_i - \hatm(x_i))^2$.

  \item Repeat \cref{item:llr1} on data $(\hat R_i^2, x_i)$ to obtain a bandwidth $\hat h_
  {n,s}$.

  \item \label{item:hprojections} Repeat \cref{item:llr_project_bandwidth} to project
  $\hat h_{n,s}$.
  \item Using $\hat h_{n,s}$, estimate $v(x) = \E[R_i^2 \mid X=x]$ with the local linear
  regression estimator $\hat v$ under kernel $K(\cdot)$.

  \item Since $\hat v$ is a local linear regression estimator, it can be written as a
  linear smoother $
\hat v(x) = \sum_{i=1}^n \ell_i(x; \hat h_{n,s}) \hat R_i^2.
  $
  Let an estimate of the effective sample size be \[p_n = \frac{1}{n} \sum_{i=1}^n
  \frac{1}{\sum_{j=1}^n \ell_i^2(x_j, \hat h_
  {n,s})} \numberthis \label{eq:effective_sample_size}.\]
  \item Truncate the estimated conditional standard deviation: \[
\hat s_\raw(x) = \sqrt{\hat v(x) - \sigma^2(x)} \vee \sqrt{\frac{2}{p_n+2} \hat v(x)}.
\numberthis
\label{eq:truncation_rule}
  \]
  \item \label{item:holderprojections} Finally, project the resulting estimate to the
  H\"older class as in
  \cref{item:holderprojectionm}: \[
\hat s(x) \in \argmin_{\substack{s\in C_{A_3}^2(
[0,1]) \\ s^2
(\cdot) \ge \frac{2}{p_n+2} \min_i \sigma_i^2}} \norm{s - \hats_\raw}_\infty.\]
\end{enumerate}

To ensure we always have a positive estimate of $s_0$, we truncate at a particular point
\eqref{eq:truncation_rule}. This truncation rule is a heuristic (and improper) application
of results from the literature on estimating non-centrality parameters. We digress and
discuss the truncation rule in the next remark.

\begin{rmksq}[The truncation rule in \eqref{eq:truncation_rule}]
\label{rmk:truncate}
The truncation rule in  \eqref{eq:truncation_rule} is an ad hoc adjustment without affecting asymptotic
performance.\footnote{Indeed, since we already assumed that the true conditional variance
$s_0(x) > s_{\ell}$, we can truncate by any vanishing sequence. Given any
vanishing sequence, eventually it is lower than $s_\ell$, and eventually $|\hats
- s_0|$ is small enough for the truncation to not bind. This is, in some sense,
  silly, since finite sample performance is likely affected if we truncate by, say,
  $\frac{1}{\log \log n}$, reflected in a large constant in the corresponding rate
  expression. Our following argument assumes that the truncation of order $O(n^{-4/5})$.
  Doing so is likely to achieve a smaller constant in the rate expression, despite not
  mattering asymptotically. } It is based on a literature on the estimation of non-central
  $\chi^2$ parameters \citep{kubokawa1993estimation}. Specifically, let $U_i \iid \Norm
  (\lambda_i, 1)$ and let $V = \sum_{i=1}^p U_i^2$ be a noncentral $\chi^2$ random
  variable with $p$ degrees of freedom and noncentrality parameter $\lambda = \sum_{i=1}^p
  \lambda_i^2$. The UMVUE for $\lambda$ is $V - p,$ which is dominated by its positive
  part $(V-p)_+$. \citet{kubokawa1993estimation} derive a class of estimators of the form
  $V - \phi(V; p)$ that dominate $(V-p)_+$ in squared error risk. An estimator in this
  class is $(V - p) \vee \frac{2}{p+2} V$.\footnote{Though, since neither $(V-p)_+$ nor
  $(V-p) \vee \frac{2}{p+2} V$ is differentiable in $V$, they are not
  admissible.}

  This setting is loosely connected to ours. Suppose $m_0$ is known, and we were using
  a Nadaraya--Watson estimator with uniform kernel. Then, for a given evaluation point
  $x_0$, we would be averaging nearby $R_i^2$'s. Each $R_i$ is conditionally Gaussian,
  $R_i \mid (\theta_i, \sigma_i) \sim \Norm(\theta_i - m_0
  (\sigma_i), \sigma_i^2)$ with approximately equal variance $\sigma_i^2 \approx
  \sigma(x_0)^2$. If there happens to be $p_0$ $R_i^2$'s that we are averaging, the
  Nadaraya--Watson estimator is of the form \[
\hat v(x_0) = \frac{\sigma(x_0)^2}{p_0} \sum_{i=1}^p \pr{\frac{R_i}{\sigma(x_0)}}^2
  \]
  Conditional on $\sigma_i^2, \theta_i$, the quantity $\sum_{i=1}^p \pr{\frac{R_i}{\sigma(x_0)}}^2$
  is (approximately) noncentral $\chi^2$ with $p$ degrees of freedom and noncentrality
  parameter \[
\lambda = \sum_{i=1}^{p_0} \pr{\frac{\theta_i - m_0(x_i)}{\sigma(x_0)}}^2
  \]
  Therefore, correspondingly, applying the truncation rule from
  \citet{kubokawa1993estimation}, an estimator for the sample variance of $\theta_i$, $
\frac{1}{p_0} \sum_{i=1}^{p_0} (\theta_i - m_0(x_i))^2,
  $ is \[
\pr{\hat v(x_0) - \sigma^2(x_0)} \vee \frac{2}{p_0+2} \hat v(x_0).
  \]

  Here, we apply this truncation rule (improperly) to the case where $\hat v(x_0)$ is a
  weighted average of the squared residuals, with potentially negative weights due to
  higher-order polynomials in the local polynomial regression. To do so, we would need to
  plug
  in an analogue of $p_0$. We note that when independent random variables $V_i$ have unit
  variance, the weighted average
has variance equal to the squared length of the weights  \[
\var\pr{\sum_i \ell_i(x) V_i} = \sum_{i=1}^n \ell_i^2(x).
  \]
  Since a simple average has variance equal to $1/n$, we can take $\pr{\sum_{i=1}^n
  \ell_i^2(x)}^{-1}$ to be an effective sample size. Our rule simply takes the average
  effective sample size over evaluation points in \eqref{eq:effective_sample_size} and use
  it as a candidate for $p$.
\end{rmksq}

The goal in this section is to control the following probability as a function of $t > 0$
\[
\P\pr{\norm{\hat\eta - \eta_0}_\infty > C_\Hyperparams t n^{-2/5} (\log n)^\beta}
\]
for some constants $\beta, C_\Hyperparams$ to be chosen. Since we treat $x_{1},\ldots,
x_n$ as fixed (fixed design), we shall do so placing some assumptions on sequences of the
design points $x_{1:n}$ as a function of $n$. These assumptions are mild and satisfied
when the design points are equally spaced. They are also satisfied with high probability
when the design points are drawn from a well-behaved density $f(\cdot)$.

Before doing so, we introduce some notation on the local linear regression estimator. Note
that, by translating and scaling if necessary, it is without essential loss of generality
to assume $x_i$ take values in $[0,1]$. Let $h_n$ denote some (possibly data-driven) choice of bandwidth.
Let $u(x) = [1, x]'$ and let $B_{nx} = B_{nx}(h_n) = \frac{1}{n h_n} \sum_{i=1}^n K\pr{
\frac{x_i - x}{h_n}}
u\pr{\frac{x_i - x}{h_n}} u\pr{\frac{x_i - x}{h_n}}'$. Then, it is easy to see that the
local linear regression weights can be written in terms of $B_{nx}$ and $u(\cdot)$: \[
s_n \equiv nh_n \quad \ell_i(x) = \ell_i(x, h_n) \equiv \frac{1}{s_n} u(0)'B_{nx}^{-1} u
\pr{\frac{x_i - x}
{h_n}} K\pr{\frac{x_i - x}{h_n}}.
\]

We shall maintain the following assumptions on the design points. The following
assumptions introduce constants $(C_h, n_0, \lambda_0, a_0, K_0, K(\cdot), c, C, C_K,
V_K)$ which we shall take as primitives like those in $\mathcal H$. The symbols $\lesssim,
\gtrsim,
\rateeq$ are relative to these constants, and we will not keep track of exact
dependencies through subscripts.

\begin{as}
\label{as:bandwidth}
For some constant $C_h > 1$, the data-driven bandwidth $h_n$ is almost surely contained in
the set $H_n \equiv [C_h^ {-1} n^{-1/5}
\maxwith
\frac{1}{2n}, C_h n^{-1/5}]$.
\end{as}

\Cref{as:bandwidth} is automatically satisfied by the projection steps
\cref{item:llr_project_bandwidth,item:hprojections}.

\begin{as}
\label{as:tsybakov_LP}
The sequence of design points $(x_i : i =1, \ldots, n)$ satisfy:
\begin{enumerate}
  \item There exists a real number $\lambda_0 > 0$ and integer $n_0 > 0$ such that, for
  all $n \ge n_0$, any $x \in[0,1]$, and any $\tilde h
  \in [C_h^{-1} n^{-1/5} \maxwith
\frac{1}{2n}, C_h n^{-1/5}]$,  the
  smallest eigenvalue $\lambda_{\min}(B_{nx}(\tilde h)) \ge \lambda_0$.
  \item There exists a real number $a_0 > 0$ such that for any interval $I \subset [0,1]$
  and all $n \ge 1$, \[
\frac{1}{n} \sum_{i=1}^n \one(x_i \in I) \le a_0 \pr{\lambda(I) \vee \frac{1}{n}}
  \]
  where $\lambda(I)$ is the Lebesgue measure of $I$.
  \item The kernel $K$ is supported on $[-1,1]$ and uniformly bounded by some
  positive constant $K_0$.
  \item There exists $c, C, n_0 > 0$ such that for all $n > n_0$, the choice of $p_n$ in
  \eqref{eq:effective_sample_size} satisfies $c  n^{4/5}
  \le p_n(\tilde h)
  \le C n^
  {4/5}$ for all $\tilde h \in [C_h^{-1} n^
  {-1/5} \maxwith
\frac{1}{2n}, C_h n^{-1/5}]$.
\end{enumerate}
\end{as}

\cref{as:tsybakov_LP}(1--3) is nearly the same as Assumption (LP) in
\citet{tsybakov2008introduction}. The only difference is that \cref{as:tsybakov_LP}(1)
requires the lower bound $\lambda_0$ to hold uniformly over a range of bandwidth choices,
relative to LP-1 in \citet{tsybakov2008introduction}, which requires $\lambda_0$ to hold
for some deterministic sequence $h_n$.
This is a mild strengthening of LP-1: Note that if $x_i$ are drawn from a
Lipschitz-continuous, everywhere-positive density $f(x)$, then for $h \to 0, nh \to
\infty$, \[B_{nx}(h) \approx \int
K(t)u(t)u(t)' f(x) \,dt \succeq \int K(t) u(t)u(t)' \,dt \pr{\min_ {x
\in
[0,1]} f(x)}
\]
where $\succ$ denotes the positive-definite matrix order. Thus the minimum eigenvalue of
$B_{nx}(h)$ should be positive irrespective of $x$ and $h$. See, also, Lemma 1.5
in \citet{tsybakov2008introduction}.

\cref{as:tsybakov_LP}(2)--(3) are the same as (LP-2)--(LP-3) in
\citet{tsybakov2008introduction}. (2) expects that the design points are sufficiently
spread out, and (3) is satisfied by, say, the Epanechnikov kernel.

Lastly, (4) expects that the average effective sample size is about $s_n = nh_n \rateeq n^
{-4/5}.$
Again, heuristically, if $x_i$ are drawn from a Lipschitz and everywhere-positive density
$f(x)$, then \[
\sum_{i=1}^n \ell_i^2(x_j) \approx n  \frac{1}{s_n^2} h_n \cdot \int (u(0)'B_{n,x_j}^{-1}
u (t) K (t))^2 f(x_j)\,dt = \frac{1}{s_n} \int (u(0)'B_{n,x_j}^{-1} u (t) K (t))^2
f(x_j)\,dt.
\]
Hence the mean reciprocal $p_n$ is of order $s_n$. We also remark that
\cref{as:tsybakov_LP} is satisfied by regular design points $x_i = i/n$.

\begin{as}
\label{as:polynomial_cover}
The kernel satisfies the following VC subgraph-type conditions. Let \[\mathcal F_k =
\br{y\mapsto \pr{\frac{y-x}{h}}^{k-1} K\pr{\frac{y-x}{h}} : x \in [0,1], h\in H_n}\]
for $k = 1,2$. For any finitely supported
measure $Q$, \[
N(\epsilon, \mathcal F_k, L_2(Q)) \le C_K (1/\epsilon)^{V_K}
\]
for $C_K, V_K$ that do not depend on $Q$.
\end{as}

\Cref{as:polynomial_cover} is satisfied for a wide range of kernels, e.g. the Epanechnikov
kernel. By Lemma 7.22 in \citet{sen2018gentle}, reproduced as \cref{lemma:sen2018} below,
so long as the function $t\mapsto t^{k-1} K(t)$ is bounded (assumed in
\cref{as:tsybakov_LP}(3)) and of bounded variation (satisfied by any absolutely continuous
kernel function), the covering number conditions hold by exploiting the finite VC
dimension of subgraphs of these functions.

We now state and prove the main results in this section.
The key to these arguments is \cref{prop:llr_large_dev} on the bias and variance of local
linear regression estimators. \Cref{prop:llr_large_dev} is uniform in both the evaluation point $x$
and the bandwidth $h$, as long as the latter converges at the optimal rate.

\begin{theorem}
\label{thm:kernel_rates}
  Suppose the conditional distribution $\theta_i \mid \sigma_i$ and the design points
  $\sigma_{1:n}$ satisfy   \cref{as:variance_bounds,as:moments,as:tsybakov_LP}. Moreover,
  suppose $m_0, s_0$ satisfies \cref{as:holder}(1) with $p=2$. Suppose the kernel
  $K(\cdot)$ satisfies \cref{as:polynomial_cover}. Let $\hatm, \hats$ denote the
  estimators computed by \cref{item:llr1} through \cref{item:holderprojections}. Then,
  there exists some $n_0 > 0$ such that 
   \begin{enumerate}
   \item $\P\pr{\hatm, \hats \in C_{A_3}^2([0,1])} = 1$
     \item For some $C$ depending only on the parameters in the assumptions, for all $n
     \ge 7$ and $t > 1$, \[
\P\pr{
  \max\pr{\norm{\hatm - m_0}_\infty, \norm{\hats- s_0}_\infty} \ge C t n^{-\frac{2}{5}} 
  (\log n)^
  {1+2/\alpha}
} \le \frac{1}{n^{10} t^2}. \numberthis \label{eq:largedeviationkernel}
     \]
     \item For some $c > 0$ depending only on the parameters in the assumptions, for all
     $n
     \ge n_0$, \[
\P\pr{\frac{c}{n} \le \hats} = 1.
     \]
   \end{enumerate}
\end{theorem}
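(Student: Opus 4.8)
The plan is to read off parts (1) and (3) directly from the construction and to reduce part (2) to the uniform bias--variance control promised by \cref{prop:llr_large_dev}. Part (1) is immediate: steps \cref{item:holderprojectionm} and \cref{item:holderprojections} \emph{define} $\hatm$ and $\hats$ as $\norm{\cdot}_\infty$-projections onto $C_{A_3}^2([0,1])$, so both lie in this class with probability one. For part (3), the feasibility constraint in \cref{item:holderprojections} forces $\hats^2(\cdot) \ge \tfrac{2}{p_n+2}\min_i \sigma_i^2$ almost surely; since $\min_i \sigma_i \ge \sigl$ by \cref{as:variance_bounds} and $p_n \le C n^{4/5}$ by \cref{as:tsybakov_LP}(4), this yields $\hats^2 \gtrsim n^{-4/5}$, hence $\hats \gtrsim n^{-2/5} \ge c/n$ for a suitable $c$ and all $n \ge 1$ (using $n^{-2/5}\ge n^{-1}$). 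This is in fact stronger than claimed.

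For part (2), I would first observe that the Hölder ball $C_{A_3}^2([0,1])$ is convex and, for the enlarged constant $A_3$, contains $m_0$ and $s_0$, so the projection at most doubles the error: $\norm{\hatm - m_0}_\infty \le 2\norm{\hatm_\raw - m_0}_\infty$, and likewise for $s_0$ (the true function being feasible, as $s_0 \ge s_\ell$ exceeds the $O(n^{-4/5})$ floor once $n$ is large, the finitely many small $n$ being absorbed into $C$), because the true function is feasible while the projection is the minimizer. It therefore suffices to bound the raw estimators. For $\hatm_\raw$ I would decompose $\hatm_\raw(x) - m_0(x)$ into a deterministic bias plus a stochastic term $\sum_i \ell_i(x;\hat h_{n,m})\,(Y_i - m_0(x_i))$. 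Under $p=2$ Hölder smoothness the bias is $O(\hat h_{n,m}^2) = O(n^{-2/5})$, uniformly over $h \in H_n$ by \cref{prop:llr_large_dev}; this uniformity is exactly what lets me plug in the data-driven $\hat h_{n,m} \in H_n$ guaranteed by \cref{as:bandwidth} (equivalently, by the projection step \cref{item:hprojectionm}). The noise terms $Y_i - m_0(x_i) = (Y_i - \theta_i) + s_0(x_i)\tau_i$ are sub-Weibull($\alpha$) by \cref{as:moments}; I would control the stochastic term uniformly in $x \in [0,1]$ and $h \in H_n$ by a chaining/maximal-inequality argument feeding the polynomial covering numbers of the kernel classes $\mathcal F_1,\mathcal F_2$ from \cref{as:polynomial_cover} into a truncated Bernstein bound. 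Since $\sum_i \ell_i(x)^2 \rateeq 1/(n\hat h_{n,m}) \rateeq n^{-4/5}$, the stochastic part is of order $n^{-2/5}$ up to logarithmic factors, where truncating the heavy tails costs $(\log n)^{1/\alpha}$ and the union bound reaching $n^{-10}$ confidence costs a further power of $\log n$.

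The estimation of $s_0$ is the main obstacle and requires propagating the first-stage error. On the event that the part-(2) bound for $\hatm$ holds, I would expand $\hat R_i^2 = (Y_i - m_0(x_i))^2 + 2(Y_i - m_0(x_i))(m_0(x_i) - \hatm(x_i)) + (m_0(x_i) - \hatm(x_i))^2$ and substitute into the second-stage regression $\hat v$. The leading term reproduces the regression of $R_i^2$ on $x$, handled as above but now with \emph{squared} residuals, which are sub-Weibull($\alpha/2$); the resulting maximal inequality therefore contributes $(\log n)^{2/\alpha}$ in place of $(\log n)^{1/\alpha}$, and this is precisely the source of the exponent $1+2/\alpha$ in \eqref{eq:largedeviationkernel} (the maximum over the two estimators selecting the larger log power). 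The cross and quadratic terms are lower order once $\norm{\hatm - m_0}_\infty \lesssim n^{-2/5}(\log n)^{1+2/\alpha}$ is in force. I would then pass from $\hat v - v$ to $\hats_\raw - s_0$ through the square-root map on the event $\{\hat v - \sigma^2 \ge \tfrac12 s_0^2\}$, which has high probability because $s_0 \ge s_\ell$ and the truncation floor in \eqref{eq:truncation_rule} is $O(n^{-4/5})$ and so does not bind asymptotically. Finally I would collect the first-stage failure probability, the sub-Gaussian concentration tail (of the form $e^{-ct^2}$ once the logarithmic scale is divided out), and the net/bandwidth union bound, absorbing all polynomial-in-$n$ prefactors into the $n^{-10}t^{-2}$ budget for $t>1$ and $n \ge 7$. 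The delicate bookkeeping --- simultaneous uniformity over evaluation points and over the full bandwidth range $H_n$, together with the sub-Weibull tail truncation that generates the precise log exponents --- is where the bulk of the work lies.
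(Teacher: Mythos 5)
Your proposal is correct and follows essentially the same route as the paper's proof: parts (1) and (3) are read directly off the projection and truncation steps, and part (2) reduces to the raw estimators via the feasibility-of-the-truth projection argument, invokes the uniform-in-bandwidth bound of \cref{prop:llr_large_dev} for $\hatm_\raw$, and propagates the first-stage error through the expansion of $\hat R_i^2$, with the squared residuals' moment control at index $\alpha/2$ producing the $(\log n)^{1+2/\alpha}$ factor, exactly as in the paper. The only cosmetic difference is the square-root step: you work on a high-probability event where the truncation floor does not bind, while the paper uses the deterministic inequalities $|\hats_\raw - s_0| \le |\hats_\raw^2 - s_0^2|/s_{0\ell}$ and $|\hat f \vee g - f_0| \le |\hat f - f_0| \vee |g|$ (plus \cref{lemma:tail_bound_max} for $\max_i|\xi_i|$ in the cross term); both routes deliver the same bound.
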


\begin{proof}
The first claim is true automatically by the projection to the H\"older space. 

The third
claim is true for all $n > n_0$ automatically by \cref{item:holderprojections}, since $p_n
\ge c n^ {4/5}$
and $n^{-4/5} \gtrsim n^
{-1}$. For $n \le n_0$, note that $\sum_i \ell_i(x, h) = \sum_i \ell_i(x,h) \underbrace{u(
(x_i-x)/h)'u
(0) }_{1} = \norm{u(0)}^2 = 1$
for all $h, x$. Hence \[
\frac{1}{n}\sum_i \ell_i^2(x, h) \ge \pr{\frac{1}{n}\sum_i \ell_i(x,h)}^2 \implies \sum_i
\ell_i^2(x, h) \ge \frac{1}{n} \implies p_n \le n
\]
regardless of $h$. As a result, the truncation point for $\hat s^2$ is at least of order
$\frac{1}{n}$. This is sufficient for $\hat s \ge c/n$. 

Now, we show the second claim. Since we assume that $m_0, s_0$ lies in the H\"older space
with $s_0 > s_{0\ell}$, then
projection to the H\"older space (and truncation by $2/(2+p_n) \min_i \sigma_i^2)$ worsens
performance by at most a factor of two for all sufficiently large $n$. The projection to
the H\"older space ensures that $\norm{\hateta - \eta_0}_\infty$ is bounded a.s. for all
$n$, so that
we can remove ``for all sufficiently large $n$'' at the cost of enlarging a constant so
as to accommodate the first finitely many values of $n$.
As a result, it suffices to show that \[
\P\pr{
  \max\pr{\norm{\hatm_\raw - m_0}_\infty, \norm{\hats_\raw - s_0}_\infty} > Ct n^{-2/5}
  (\log n)^\beta
} \le \frac{1}{n^{10} t^2}
\]
for some $C$ and $\beta = 1+2/\alpha$.

Let $Y_i = m_0(x_i) + \xi_i$ where $\xi_i = \theta_i - m_0(x_i) + (Y_i - \theta_i)$. Note
that we have simultaneous moment control for $\xi_i$: \[
\max_i \E[|\xi_i|^p]^{1/p} \lesssim p^{1/\alpha}
\]
where $\alpha$ is the constant in \cref{as:moments}. Therefore, we can apply
\cref{prop:llr_large_dev} to obtain \[
\P\pr{\norm{\hatm_\raw - m_0}_\infty > C t n^{-2/5} (\log n)^{1 + 1/\alpha}} \le \frac{1}
{2 n^{10} t^2}
\]
for the local linear regression estimator $\hatm_\raw$.

The same argument to control $\norm{\hats_\raw - s_0}_\infty$ is more involved. First
observe that \[
|\hats^2_\raw - s_0^2| = |\hats_\raw - s_0|(\hats_\raw + s_0) \ge s_{0\ell} |\hats_\raw -
s_0|.
\]
Also observe that for a positive $f_0$, \[
|\hat f \vee g - f_0| \le |\hat f - f_0| \vee |g|.
\]
As a result, it suffices to control the upper bound in \begin{align*}
\norm{\hats_\raw -
s_0}_\infty &\le \frac{1}{s_{0\ell} } \pr{\norm{\hat v - v_0}_\infty \vee \pr{\frac{2}
{2+p_n}\hat v}} \tag{$v_0(x) \equiv \var(Y_i \mid x_i=x)$}
\\
& \lesssim \norm{\hat v - v_0}_\infty \vee \frac{\norm{\hat v -
v_0}_\infty
+ \norm{v_0}_\infty}
{2+n^{4/5}} \tag{\cref{as:tsybakov_LP}}\\
& \lesssim \norm{\hat v - v_0}_\infty \numberthis \label{eq:dominate_s_with_v}
\end{align*}

Now, observe that $\hat R_i^2 = R_i^2 + (m_0 - \hatm)^2 - 2 (m_0 - \hatm) \xi_i$. Hence,
\begin{align*}
|\hat v(x) - v_0(x)| &\le \absauto{\sum_{i=1}^n \ell_i(x, \hat h_{n, s}) R_i^2 - v_0(x)} + \br{
\norm{m_0
- \hatm}_\infty^2  + 2 \norm{m_0 - \hat m}_\infty
\pr{\max_{i\in[n]} |\xi_i|}} \sum_{i=1}^n |\ell_i(x, \hat h_{n, s})| \\
& \le \absauto{\sum_{i=1}^n \ell_i(x, \hat h_{n, s}) R_i^2 - v_0(x)} + C\br{
\norm{m_0
- \hatm}_\infty^2  + 2 \norm{m_0 - \hat m}_\infty
\pr{\max_{i\in[n]} |\xi_i|}}. \numberthis
\label{eq:decomp_s_error}
\end{align*}
By Lemma 1.3 in \citet{tsybakov2008introduction}, the term $\sum_{i=1}^n |\ell_i(x, \hat h_
{n, s})|$ is bounded uniformly in $h$ and $x$ by a constant. Note that \[
\tilde \xi_i \equiv R_i^2 - v_0(x_i)
\]
has simultaneous moment control with a different parameter ($\tilde \alpha = \alpha / 2$):
\[
\max_{i} (\E|\tilde \xi_i|^p)^{1/p} \lesssim p^{2/\alpha}.
\]
Thus, applying \cref{prop:llr_large_dev} and taking care to plug in $\tilde \xi, \tilde
\alpha$, we can bound the first term in \eqref{eq:decomp_s_error} \[
\P\pr{
  \normauto{\sum_{i=1}^n \ell_i(x, \hat h_{n, s}) R_i^2 - v_0(x)}_\infty \ge C t n^{-2/5}
  (\log n)^{1 + 2/\alpha}
}\le \frac{1}{4 n^{10} t^2}.
\]

Note that by an application of \cref{lemma:tail_bound_max}, for any $a, b> 0$, we have
that \[
\P\pr{
  \max_{i} |\xi_i| > C(a,b)t (\log n)^{1/\alpha}
} < an^{-b} e^{-t^2}
\]
As a result, the second term in \eqref{eq:decomp_s_error} admits \[
\P\pr{
\norm{m_0
- \hatm}_\infty^2  + 2 \norm{m_0 - \hat m}_\infty
\pr{\max_{i\in[n]} |\xi_i|} > C t n^{-2/5} (\log n)^{1+2/\alpha}
} \le \frac{1}{4 n^{10}t^2}
\]
Finally, putting these bounds together, we have that \[
\P\pr{
  \norm{\hat v - v_0}_\infty > C tn^{-2/5} (\log n)^{1+2/\alpha}
} \le \frac{1}{2n^{10}t^2},
\]
where the same bound (with a different constant) holds for $\hats_\raw$ by
\eqref{eq:dominate_s_with_v}.

Combining the bounds for $\hatm$ and $\hats$, we obtain \eqref{eq:largedeviationkernel}.
This concludes the proof.
\end{proof}

\begin{theorem}
\label{thm:llr_main_regret}
Under the assumptions of \cref{thm:kernel_rates}, let $\hateta = (\hatm, \hats)$ denote
   estimators computed by \cref{item:llr1} through \cref{item:holderprojections}. Then, \[
\E\bk{\reg(\hat G_n, \hateta)} \lesssim n^{-2/5} (\log n)^{1+2/\alpha}.
   \]
\end{theorem}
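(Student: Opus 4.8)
The plan is to read the statement off \cref{cor:maintext} specialized to the H\"older exponent $p=2$, using \cref{thm:kernel_rates} to verify the hypotheses of that corollary for the local linear estimator $\hateta=(\hatm,\hats)$. With $p=2$ one has $\tfrac{2p}{2p+1}=\tfrac45$ and $\tfrac{p}{2p+1}=\tfrac25$, and the relevant logarithmic exponent is $\beta_0=1+2/\alpha$, which is exactly the exponent appearing in the large-deviation bound \eqref{eq:largedeviationkernel}. So the target reduces to checking that the structural requirements of \cref{cor:maintext}---\cref{as:holder}, the concentration condition $\P(\A_n(C_{1,\Hyperparams}))\ge 1-n^{-2}$, and the boundedness of \cref{as:variance_bounds}---are met by the procedure of \cref{item:llr1}--\cref{item:holderprojections}, after which I would simply invoke the corollary.

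First I would dispatch \cref{as:holder}(1)--(3) and the concentration condition. Part (1) is assumed. For part (2), \cref{thm:kernel_rates}(2) gives $\P\pr{\norm{\hateta-\eta_0}_\infty\ge C t\,n^{-2/5}(\log n)^{\beta_0}}\le n^{-10}t^{-2}$ for every $t>1$; taking $t=t(\epsilon)$ large makes the right-hand side smaller than any prescribed $\epsilon$, which is precisely \cref{as:holder}(2). For part (3), \cref{thm:kernel_rates}(1) provides $\hatm,\hats\in C_{A_3}^2([0,1])$ almost surely, so one takes $\mathcal V$ to be a H\"older ball of order $2$, whose uniform entropy obeys the bound demanded in \cref{as:holder}. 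Evaluating the same tail bound at a fixed large $t$ (so that $C_{1,\Hyperparams}=Ct$) gives $\P(\A_n(C_{1,\Hyperparams})^\comp)\le n^{-10}\le n^{-2}$, the extra hypothesis of \cref{cor:maintext}.

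The hard part is \cref{as:holder}(4): the truncation rule \eqref{eq:truncation_rule} only guarantees $\hats\ge c/n$ (\cref{thm:kernel_rates}(3)) rather than a constant floor $s_{0\ell}/2$, so the corollary does not apply off the shelf. This is the weakening anticipated in the footnote to \cref{as:holder}(4), and I would handle it by re-tracing the decomposition $\E[\reg]=\E[\reg\,\one(\A_n)]+\E[\reg\,\one(\A_n^\comp)]$ from the proof of \cref{cor:maintext}. On $\A_n$ the nuisance error obeys $\norm{\hats-s_0}_\infty\le\Delta_n\to0$, so for large $n$ the truncation never binds and $s_{0\ell}/2<\hats<2s_{0u}$; hence \cref{as:holder}(4) holds on $\A_n$ and \cref{thm:regret_on_An} (together with the $\br{\bar Z_n>M_n}$ bound in \cref{lemma:reg_on_anc}) applies verbatim, giving $\E[\reg\,\one(\A_n)]\leH n^{-4/5}(\log n)^{\frac{2+\alpha}{\alpha}+3+2\beta_0}$. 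On $\A_n^\comp$ I would instead use \cref{lemma:distance_bound_hard} with $s_{\ell n}=c/n$, which yields $|\hat\theta_i-\theta_i^*|\lesssim s_{\ell n}^{-2}\bar Z_n\lesssim n^2\bar Z_n$, hence $\tfrac1n\sum_i(\hat\theta_i-\theta_i^*)^2\lesssim n^4\bar Z_n^2$. A Cauchy--Schwarz split as in \cref{lemma:reg_on_anc} then gives $\E[\reg\,\one(\A_n^\comp)]\lesssim n^4\,\E[\bar Z_n^4]^{1/2}\,\P(\A_n^\comp)^{1/2}$; plugging in $\E[\bar Z_n^4]\lesssim(\log n)^{4/\alpha}$ from \cref{lemma:tail_bound_max} and $\P(\A_n^\comp)\lesssim n^{-10}$ bounds this by $n^4(\log n)^{2/\alpha}n^{-5}=n^{-1}(\log n)^{2/\alpha}$. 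The one quantitative point to watch is that the fast $n^{-10}$ tail---rather than the $n^{-2}$ required in \cref{cor:maintext}---is exactly what absorbs the $s_{\ell n}^{-4}=n^4$ blow-up, which is why the strong form of \cref{as:holder}(4) can be relinquished.

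Summing the two pieces yields $\E[\reg(\hat G_n,\hateta)]\leH n^{-4/5}(\log n)^{\frac{2+\alpha}{\alpha}+3+2\beta_0}$ with $\beta_0=1+2/\alpha$. Since $n^{-4/5}$ times any fixed power of $\log n$ is eventually dominated by $n^{-2/5}(\log n)^{1+2/\alpha}$, and the finitely many small $n$ fold into the implied constant, the claimed bound $\E[\reg(\hat G_n,\hateta)]\lesssim n^{-2/5}(\log n)^{1+2/\alpha}$ follows; in fact the argument delivers the sharper $n^{-4/5}$ rate.
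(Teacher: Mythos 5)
Your proposal is correct and follows essentially the same route as the paper's own proof: the paper likewise splits on a good event (on which, for large $n$, the accuracy $\norm{\hats-s_0}_\infty\le\Delta_n$ forces $s_{0\ell}/2<\hats<2s_{0u}$, so that \cref{thm:regret_on_An} applies mutatis mutandis) and controls the complement via \cref{lemma:distance_bound_hard} with $s_{\ell n}=c/n$, Cauchy--Schwarz, \cref{lemma:tail_bound_max}, and the $n^{-10}$ tail from \cref{thm:kernel_rates}, producing the same crude bound $\reg\lesssim n^{4}\bar Z_n^{2}$ and the same $n^{-1}(\log n)^{2/\alpha}$ contribution, then absorbs finitely many small $n$ into the constant. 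The only cosmetic difference is that you keep $\br{\bar Z_n>M_n}$ inside the good nuisance event (needing only a $n^{-2}$ tail there) while the paper folds it into the bad event $\tilde A_n^\comp$; the computation is otherwise identical, including your observation that the argument actually delivers the sharper $n^{-4/5}$ rate.
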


\begin{proof}
Recall the event $A_n$ in \eqref{eq:barzn_def} for $\Delta_n = C_1 n^{-2/5} (\log
n)^\beta$ and $M_n = C_2 (\log n)^{1/\alpha}$, where $C_1, C_2$ are to be chosen and
$\beta = 1+2/\alpha$. Define $\tilde A_n = A_n \cap \br{s_{0\ell}/2 \le \hat s \le 2s_
{0u}}$.
Decompose \[
\E\bk{\reg(\hat G_n, \hateta)} = \E\bk{\reg(\hat G_n, \hateta) \one(\tilde A_n)} + \E\bk{\reg
(\hat G_n, \hateta)\one(\tilde A_n^\comp)}.
\]

Note that, for all sufficiently large $n > N$, such that $N$ depends only on $C_1, \beta,
s_\ell, s_u$, the event $A_n$ already implies $\br{s_{0\ell}/2 \le \hat s \le 2s_{0u}}$
and hence
$A_n = \tilde A_n$. Thus, by
\cref{thm:kernel_rates}, for all sufficiently large $n$,  on the event $A_n$,
statements analogous to \cref{as:holder}(2--4) hold for the
estimator $\hateta$. As a result, we may apply \cref{thm:regret_on_An}, \emph{mutatis
mutandis}, to obtain that \[
\E\bk{\reg(\hat G_n, \hateta) \one(\tilde A_n)} \lesssim n^{-4/5} (\log n)^{
\frac{2+\alpha}
{\alpha} + 3 + 2\beta}
\]
for all sufficiently large choices of $C_1, C_2$.

To control $\E\bk{\reg
(\hat G_n, \hateta)\one(\tilde A_n^\comp)}$, we observe that under
\cref{lemma:distance_bound_hard} and \cref{thm:kernel_rates}(1 and 3), we have that almost
surely on $A_n^\comp$, \[
\reg(\hat G_n, \hateta) \lesssim n^{2} \bar Z_n^2.
\]
Hence, by Cauchy--Schwarz as in \cref{lemma:reg_on_anc}, \[
\E\bk{\reg
(\hat G_n, \hateta)\one(\tilde A_n^\comp)} \lesssim \P(\tilde A_n^{\comp})^{1/2} n^4
(\log n
)^{2/\alpha},
\]
where we apply \cref{lemma:tail_bound_max} to bound $\E[\bar Z_n^4]$. This bound holds for
all $n \ge 7$. 

For all sufficiently large $n > N$, \[
\P(A_n^\comp) = \P(\tilde A_n^\comp) \le \P(\bar Z_n > M_n) + \P(\norm{\hateta -
\eta_0}_\infty > \Delta_n).
\]
Sufficiently large $C_1, C_2$ can be chosen such that the right-hand side is bounded by
$n^{-10}$. To wit, we can apply \cref{thm:kernel_rates} to bound $\norm{\hateta -
\eta_0}_\infty$. We can apply \cref{lemma:tail_bound_max} to bound $\P(\bar Z_n > M_n)$.
As a result, we would obtain \[
\E\bk{\reg(\hat G_n, \hateta) \one(\tilde A_n^\comp)} \lesssim \frac{1}{n}(\log n)^
{2/\alpha}
\]
for all sufficiently large $n$.

Since $\E[\reg(\hat G_n, \hateta)] \lesssim n^{4} (\log n)^{2/\alpha}$ is finite for all
$n$, at the cost of enlarging the implicit constant, we have the result of the theorem
holding for all $n$.
\end{proof}

\subsection{Auxiliary lemmas}

\begin{prop}
\label{prop:llr_large_dev}
Consider the local linear regression of data $Y_i = f_0(x_i) + \xi_i$ on the design points
$x_i$, for $i = 1,\ldots, n$. Suppose $f_0$ belongs to a H\"older class of order two:
$f_0 \in C_L^2([0,1])$ for some $L > 0$. Suppose that the design points satisfy
\cref{as:tsybakov_LP} and the (possibly data-driven) bandwidths $h_n$ satisfy
\cref{as:bandwidth}. Assume the kernel additionally satisfies \cref{as:polynomial_cover}.

Assume that the residuals $\xi_i$ are mean zero, and there exists a constant $A_\xi > 0,
\alpha > 0$ such that \[
\max_{i=1,\ldots, n} (\E[|\xi_i|^p])^{1/p} \le A_\xi p^{1/\alpha}
\]
for all $p \ge 2$. Let $\ell_i(x,h)$ be the weights corresponding to local linear
regression, and define the bias part $b(x, h_n) = \pr{\sum_{i=1}^n \ell_i(x, h_n) f_0
(x_i)} - f_0(x_i)$
and the stochastic part $v(x, h) =
\sum_{i=1}^n \ell_i(x,h) \xi_i$. Recall that $H_n$ is the interval for $h_n$ in \cref{as:bandwidth}.
Then:
\begin{enumerate}
  \item The bias term is of order $n^{-2/5}$: \[
\sup_{x \in [0,1], h \in H_n} |b(x,h)| \lesssim n^{-2/5}.
  \]
  \item The variance term admits the following large-deviation inequality: For any
  $a, b > 0$, there exists
  a constant $C(a,b)$, which may additionally depend on the constants in the assumptions,
  such that for all $n > 1$ and $t \ge
  1$ \[
\P\pr{
  \sup_{x\in [0,1], h \in H_n} |v(x, h)| > C(a,b) \cdot t \cdot (\log n)^{1 + 1/\alpha} n^
  {-2/5}
} \le a n^{-b} \frac{1}{t^2}.
\]
\item As a result, let $\hat f(\cdot) = b(\cdot, h_n) + v(\cdot, h_n) + f_0(\cdot)$, we
have that for
any $a,b > 0$, there exists a constant $C(a,b)$ such that for all $n>1$ and $t \ge 1$, \[
\P\pr{
  \norm{\hat f - f_0}_{\infty} > C(a,b) t (\log n)^{1+1/\alpha} n^{-2/5}
} \le a n^{-b} \frac{1}{t^2}.
\]
\end{enumerate}
\end{prop}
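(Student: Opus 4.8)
The plan is to prove the three parts in order, treating the bias $b(x,h)$ as a purely deterministic quantity and concentrating the real effort on the uniform-in-$(x,h)$ control of the stochastic part $v(x,h)$. First I would record the two reproducing identities of the local linear weights, $\sum_{i=1}^n \ell_i(x,h)=1$ and $\sum_{i=1}^n \ell_i(x,h)(x_i-x)=0$, which are exactly the normal equations behind $\ell_i(x,h)=\frac{1}{s_n}u(0)'B_{nx}^{-1}u((x_i-x)/h)K((x_i-x)/h)$. Writing $b(x,h)=\sum_i \ell_i(x,h)[f_0(x_i)-f_0(x)]$ and Taylor-expanding to second order, the linear term is annihilated by the second identity, leaving $|b(x,h)|\le \tfrac{L}{2}\sum_i |\ell_i(x,h)|(x_i-x)^2$. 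Since $K$ is supported on $[-1,1]$ only indices with $|x_i-x|\le h$ contribute, so $(x_i-x)^2\le h^2$; and \cref{as:tsybakov_LP}(1,3) yields, via Lemma~1.3 of \citet{tsybakov2008introduction}, the uniform bound $\sup_{x,h}\sum_i|\ell_i(x,h)|\lesssim 1$ over $h\in H_n$. Hence $\sup_{x\in[0,1],h\in H_n}|b(x,h)|\lesssim h^2\le C_h^2 n^{-2/5}$, proving part (1).

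For part (2) I would first establish two uniform weight bounds: $\max_i\sup_{x,h}|\ell_i(x,h)|\lesssim 1/s_n$ and the variance proxy $\sup_{x,h}\sum_i \ell_i(x,h)^2\lesssim 1/s_n$, both following from the eigenvalue floor $\lambda_{\min}(B_{nx}(h))\ge \lambda_0$ (\cref{as:tsybakov_LP}(1)), the kernel bound $K\le K_0$, and the spread condition \cref{as:tsybakov_LP}(2). Because $s_n=nh\asymp n^{4/5}$ on $H_n$, both are of order $n^{-4/5}$, so the standard deviation of $v(x,h)=\sum_i \ell_i(x,h)\xi_i$ is of order $n^{-2/5}$, matching the target rate. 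Next I would bound the complexity of the index set $[0,1]\times H_n$: the map $(x,h)\mapsto \ell_i(x,h)$ is assembled from the kernel classes $\mathcal F_1,\mathcal F_2$ of \cref{as:polynomial_cover}, which have polynomial covering numbers, composed with the matrix inversion $B_{nx}(h)\mapsto B_{nx}(h)^{-1}$, whose entries lie in those classes and whose smallest eigenvalue is bounded below. This gives $\log N(\epsilon)\lesssim \log(1/\epsilon)$ for the induced weight functions; equivalently, there is a grid of cardinality $\mathrm{poly}(n)$ on which the oscillation of $v$ is at most $n^{-2/5}$, using that $\ell_i$ is Lipschitz in $(x,h)$ with constant polynomial in $n$.

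The large-deviation bound is then obtained by truncation. On the event $\{\max_i|\xi_i|\le \tau\}$ with $\tau=C(a,b)\,t\,(\log n)^{1/\alpha}$, whose complement has probability $\le \tfrac{a}{2}n^{-b}t^{-2}$ by \cref{lemma:tail_bound,lemma:tail_bound_max} applied to the simultaneous-moment control of $\xi_i$, each increment $\ell_i(x,h)\xi_i$ is bounded by $\tau/s_n\lesssim \tau n^{-4/5}$ while the weighted sum still has variance $\lesssim n^{-4/5}$. A Bernstein inequality at each grid point together with a union bound over the $\mathrm{poly}(n)$ points produces a threshold of order $n^{-2/5}\sqrt{\log n}+\tau n^{-4/5}\log n$, the first term dominating; adding back the oscillation $n^{-2/5}$ bounds $\sup_{x,h}|v(x,h)|$ by $Ct(\log n)^{1/2}n^{-2/5}\le Ct(\log n)^{1+1/\alpha}n^{-2/5}$ on this event. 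The mean of the truncated increments must also be subtracted, but since $\xi_i$ is mean zero and sub-Weibull the truncation bias is $O(\tau\,\P(|\xi_i|>\tau))=o(n^{-2/5})$ uniformly. Combining the two events and absorbing constants gives part (2); part (3) then follows immediately from $\norm{\hat f-f_0}_\infty\le \sup_{x,h}|b(x,h)|+\sup_{x,h}|v(x,h)|$, as the bias is deterministic and of order at most $n^{-2/5}$.

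The hard part will be the uniform-in-$(x,h)$ control of $v$: one must transfer the VC/polynomial-entropy property of the raw kernel classes $\mathcal F_1,\mathcal F_2$ through the matrix inversion $B_{nx}(h)^{-1}$ uniformly over the bandwidth range, and then reconcile the resulting empirical-process (chaining and Bernstein) machinery with the merely sub-Weibull tails of the noise. The truncation at $\tau\asymp(\log n)^{1/\alpha}$ and the union bound over the polynomial grid are precisely what generate the $(\log n)^{1/\alpha}$ and the extra $\log n$ factors, so the stated exponent $(\log n)^{1+1/\alpha}$ should be read as a clean, slightly loose envelope that comfortably dominates the sharp $\sqrt{\log n}$ rate while keeping the statement reusable in the calling results.
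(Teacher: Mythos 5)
Your bias argument (part 1) is fine and is essentially the content of Proposition 1.13 of \citet{tsybakov2008introduction}, which the paper simply cites: the reproducing identities kill the linear term, and $\sup_{x,h}\sum_i|\ell_i(x,h)|\lesssim 1$ plus the window restriction give $|b(x,h)|\lesssim h^2$. Your overall plan for part 2 (truncate the noise at a level $\asymp(\log n)^{1/\alpha}$, control the truncation bias, then do Bernstein plus a union bound over a finite net and add an oscillation term) is also a legitimate alternative to the paper's chaining argument, and your bookkeeping of the resulting $(\log n)$ factors is consistent with the stated rate.

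The genuine gap is the step you yourself flag as ``the hard part'': uniform-in-$(x,h)$ control of the weight class. Your justification is that $\ell_i(x,h)$ is Lipschitz in $(x,h)$ with a polynomially large constant, or alternatively that polynomial entropy can be ``transferred'' through the matrix inversion $B_{nx}(h)^{-1}$. Neither holds under the paper's stated assumptions. \Cref{as:tsybakov_LP}(3) and \cref{as:polynomial_cover} only require the kernel to be bounded, compactly supported, and of VC/polynomial-covering type --- satisfied, via bounded variation, by discontinuous kernels such as the box kernel. For such kernels $(x,h)\mapsto\ell_i(x,h)$ is not even continuous, so the Lipschitz-grid route fails as stated; repairing it requires a point-counting argument (how many $x_i$ cross the kernel's jump locations between grid points, via \cref{as:tsybakov_LP}(2)) together with a perturbation bound for $B_{nx}(h)^{-1}$, none of which you supply. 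The entropy-transfer route is also incomplete: the entries of $B_{nx}(h)$ involve the functions $y\mapsto\bigl(\tfrac{y-x}{h}\bigr)^{2}K\bigl(\tfrac{y-x}{h}\bigr)$, i.e.\ a power-2 class that \cref{as:polynomial_cover} does not cover (it only assumes the classes $\mathcal F_1,\mathcal F_2$ with powers $0$ and $1$), and you would additionally need a quantitative stability argument taking covering numbers of the matrix entries to covering numbers of the inverted, composed weight class.

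The paper's proof avoids this entirely with a simple trick you are missing: by \cref{as:tsybakov_LP}(1), $\|u(0)'B_{nx}(h)^{-1}\|\le 1/\lambda_0$ uniformly over $x\in[0,1]$ and $h\in H_n$, so one writes $v(x,h)$ as $A_1(x,h)$ times a kernel-weighted sum plus $A_2(x,h)$ times a $\bigl(\tfrac{x_i-x}{h}\bigr)$-weighted kernel sum, and pulls the bounded factors $A_k$ out of the supremum by the triangle inequality. The remaining suprema are empirical processes indexed exactly by the assumed classes $\mathcal F_1,\mathcal F_2$; since the truncated noise is bounded (hence subgaussian), Dudley's chaining (\cref{lemma:dudley}) applies directly, with the diameter bound $\lesssim n^{-1/10}$ coming from \cref{as:tsybakov_LP}(2). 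The heavy tail is then handled separately by Cauchy--Schwarz and Markov on $\sum_i\ell_i(x,h)\xi_i\one(|\xi_i|>M)$, which is where the $1/t^2$ in the probability bound comes from. If you either adopt this factoring step, or strengthen the kernel assumption to Lipschitz (so your grid argument goes through), your proof would be complete; as written, the uniformity step does not follow from the hypotheses.
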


\begin{proof}
Note that (3) follows immediately from (1) and (2) since the bounds in (1) and (2) are
uniform over all $h \in H_n$. We now verify (1) and (2).

\begin{enumerate}[wide]
  \item This claim follows immediately from the bound for $b(x_0)$ in Proposition 1.13 in
  \citet{tsybakov2008introduction}. The argument in \citet{tsybakov2008introduction} shows
  that \[
\sup_{x \in [0,1]}|b(x, h_n)| \le C h_n^2,
  \]
  which is uniformly bounded by $ C n^{-2/5}$ by \cref{as:bandwidth}. Hence \[
\sup_{x \in [0,1], h \in H_n} |b(x,h)| \lesssim n^{-2/5}.
  \]
  \item Let $M$ be a truncation point to be defined. Let \[\xi_{i, <M} = \xi_i \one(|\xi_i|
  \le
  M) - \E[\xi_i \one(|\xi_i|\le
  M)]
\quad \xi_{i, >M} = \xi_i \one(|\xi_i|
  >
  M) - \E[\xi_i \one(|\xi_i |>
  M)]
  \]
  be truncated and demeaned variables. Note that \[
\xi_i = \xi_{i, <M} + \xi_{i, >M}.
  \]

  First, let $V_{1n}(x, h_n) = \sum_{i=1}^n \ell_i(x, h_n) \xi_{i, >M}$. Note that by
  Cauchy--Schwarz, uniformly over $x, h_n$, \begin{align*}
  V_{1n}^2 &\le \sum_{i=1}^n \ell_i(x, h_n)^2 \sum_{i=1}^n \xi_{i, >M}^2 \\
  &\lesssim \frac{1}{h_n^2} \frac{1}{n} \sum_{i=1}^n \xi_{i, >M}^2 \tag{Lemma 1.3(i) in
  \citet{tsybakov2008introduction} shows that $|\ell_i(x,h_n)| \le \frac{C}{nh_n}$} \\
  &\lesssim n^{2/5} \frac{1}{n} \sum_{i=1}^n \xi_{i, >M}^2
  \end{align*}
  Now, for some $C$ related to the implicit constant in the above display, \[
\P\pr{\sup_{x \in [0,1], h_n \in H_n} V_{1n}^2(x,h_n) > C t^2} \le \P\pr{
  \frac{1}{n} \sum_{i=1}^n \xi_{i, >M}^2 > t^2 n^{-2/5}
} \le \frac{\max_{i} \E\xi_{i, >M}^2}{ t^2} n^{2/5} \tag{Markov's inequality}. \]
We note that by Cauchy--Schwarz, \[
\E[\xi_{i, >M}^2] \le \sqrt{\E[\xi_i^4]} \sqrt{\P(|\xi_i| > M)} \lesssim \sqrt{\P
(|\xi_i| > M)} \le \exp\pr{-c M^\alpha} \tag{\cref{lemma:tail_bound}}
\]
where $c$ depends on $A_\xi$.
Hence, for a potentially different constant $C$, \[
\P\pr{\sup_{x \in [0,1], h_n \in H_n} |V_{1n}(x,h_n)| > Ct} \le \exp\pr{-c M^\alpha - 2
\log t +
\frac{2}{5} \log n }. \numberthis \label{eq:llr_large_tail_term}
\]

Next, consider the process \begin{align*}
V_{2n}(x, h_n) &= \sum_{i=1}^n \ell_i(x,h_n) \xi_{i, <M} \\&= \frac{1}{nh_n}\sum_{i=1}^n
\underbrace{u
(0)'
B_{nx}^{-1}
\colvecb{2}{1}{0}}_{A_1(x, h_n)} K\pr{\frac{x_i - x}{h_n}} \xi_{i, <M} \\
&\quad + \frac{1}{nh_n}\sum_{i=1}^n \underbrace{u
(0)'
B_{nx}^{-1}
\colvecb{2}{0}{1}}_{A_2(x, h_n)} K\pr{\frac{x_i - x}{h_n}}\pr{\frac{x_i - x}{h_n}} \xi_{i,
<M} \\
&\equiv \frac{A_{1}(x, h_n)}{h_n} \frac{1}{n}\sum_{i=1}^n K\pr{\frac{x_i - x}{h_n}} \xi_
{i, <M} + \frac{A_{2}(x, h_n)}{h_n} \frac{1}{n}\sum_{i=1}^n K\pr{\frac{x_i - x}{h_n}} \pr{\frac{x_i - x}{h_n}} \xi_
{i, <M}.
\end{align*}
Note that, by \cref{as:tsybakov_LP}(1), uniformly over $x \in [0,1]$ and $h_n \in H_n$,
\[
|A_k(x, h_n)| \le \norm{u(0)' B_{nx}^{-1}} \le \frac{1}{\lambda_0}.
\]
By triangle inequality, \begin{align*}
V_{2n}(x, h_n) &\lesssim \frac{1}{h_n} \absauto{\frac{1}{n}\sum_{i=1}^n K\pr{
\frac{x_i - x}{h_n}} \xi_
{i, <M}} + \frac{1}{h_n} \absauto{\frac{1}{n}\sum_{i=1}^n K\pr{
\frac{x_i - x}{h_n}} \pr{\frac{x_i - x}{h_n}} \xi_
{i, <M}} \\
&\equiv \frac{1}{\sqrt{n} h_n} V_{2n, 1}(x, h_n) + \frac{1}{\sqrt{n} h_n} V_{2n, 2}(x, h_n).
\end{align*}
We will control the $\psi_2$-norm of the left-hand side. Note that it suffices to control
the $\psi_2$-norm of both terms on the right-hand side: \[
\normauto{\sup_{x \in [0,1], h_n \in H_n} |V_{2n}(x, h_n)|}_{\psi_2} \lesssim \frac{1}{
\sqrt{n} h_n}
\max_{k = 1,2}\pr{
  \normauto{\sup_{x \in [0,1], h_n \in H_n} |V_{2n, k}(x, h_n)|}_{\psi_2}
}.
\]
The above display follows from replacing the sum with two times the maximum and Lemma
2.2.2 in \citet{vaart1996weak}.

We will do so by applying \cref{lemma:dudley}. The analogue of $f$ in \cref{lemma:dudley}
is \[
t\mapsto f(t; x,h) = \pr{\frac{t-x}{h}}^{k-1} K\pr{\frac{t-x}{h}}
\]
for $V_{2n, k}, k = 1,2$. Naturally, the analogues of $\mathcal F$ is \[
\mathcal F_k = \br{t\mapsto f(t; x,h) : x \in [0,1], h\in H_n} \cup \{t \mapsto 0\}.
\]
Note that \[
f(t; x,h) \le \one(|t-x| \le h) K_0
\]
and thus the diameter of $\mathcal F_k$ is at most \[
\sup_{A\subset [0,1] : \lambda(A) \le 4C_h n^{-1/5}} K_0 \sqrt{\frac{1}{n}\sum_{i=1}^n
\one(x_i \in A)} \lesssim n^{-1/10}
\]
by \cref{as:tsybakov_LP}(2). Therefore, by \cref{as:polynomial_cover}, we apply
\cref{lemma:dudley} and obtain that for $k = 1,2$ \[
\normauto{\sup_{x\in [0,1], h \in H_n} |V_{2n,k}(x,h)|}_{\psi_2} \lesssim M n^{-1/10}
\sqrt{\log n}.
\]
Finally, this argument shows that \[
\normauto{\sup_{x \in [0,1], h \in H_n} |V_{2n}(x, h)|}_{\psi_2} \lesssim \frac{1}
{\sqrt{n } h_n n^{1/10}} M \sqrt{\log n} \lesssim n^{-2/5} M \sqrt{\log n} \numberthis
\label{eq:llr_bbd_term}.
\]

Putting things together, we can choose $M = (c_m \log n)^{1/\alpha}$ for sufficiently
large $c_m$ so that by \eqref{eq:llr_large_tail_term}, \[
\P\pr{\sup_{x \in [0,1], h \in H_n} |V_{1n}(x,h)| > C  t n^{-2/5}} \le \frac{a}{2} n^{-b}
\frac{1}
{t^2},
\]
where $c_m$ depends on $a,b$.
The bound \eqref{eq:llr_bbd_term} in turns shows that \[
\P\pr{
  \sup_{x \in [0,1], h_n \in H_n} |V_{2n}(x,h_n)| > C(a,b) t (\log n)^{\frac{2 + \alpha}
  {2
  \alpha}} n^{-2/5}
} \le 2 e^{-t^2}
\]
Taking $t = \sqrt{b\log n + \log(a/4)} s$ gives \[
\P\pr{
  \sup_{x \in [0,1], h_n \in H_n} |V_{2n}(x,h_n)| > C(a,b) s (\log n)^{1 + 1/\alpha} n^
  {-2/5}
  e^
  {-s^2}
} \le \frac{a}{2} n^{-b} e^{-s^2} < \frac{a}{2} n^{-b} \frac{1}{s^2}
\]
for all $s > 1$.

Therefore, combining the two bounds, \[
\P\pr{
  \sup_{x\in [0,1], h_n \in H_n} |v(x, h_x)| > C(a,b) t (\log n)^{1 + 1/\alpha} n^{-2/5}
} \le a n^{-b} \frac{1}{t^2}.
\]
\end{enumerate}
\end{proof}

\begin{lemma}
\label{lemma:dudley}
Suppose $\xi_i$ are bounded by $M \ge 1$ and mean zero. Consider the process \[
V_n(f) = \frac{1}{\sqrt{n}}\sum_{i=1}^n f(x_i) \xi_{i}
\]
over a class of real-valued functions $f \in \mathcal F$ and evaluation points
$x_1,\ldots, x_n \in [0,1]$.
Define the seminorm $\norm{\cdot}_n$
relative to $x_1,\ldots, x_n$ by \[
\norm{f}_n^2 = \frac{1}{n} \sum_{i=1}^n f(x_i)^2.
\]
Suppose $0 \in \mathcal F$ and
$\mathcal F$ has
polynomial covering numbers: \[N(\epsilon,
\mathcal F, \norm{\cdot}_n) \le C (1/\epsilon)^
{V} \quad \epsilon \in [0,1]\]
where $C, V > 0$ depend solely on $\mathcal F$.
Then \[
\normauto{\sup_{f\in \mathcal F} |V_n(f)|}_{\psi_2} \lesssim M \mathrm{diam}(\mathcal F)
\sqrt{\log (1/\mathrm{diam}(\mathcal F))},
\]
where $\mathrm{diam}(\mathcal F) = \sup_{f_1, f_2 \in \mathcal F} \norm{f_1 - f_2}_n$.
\end{lemma}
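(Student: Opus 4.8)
The plan is to recognize $V_n$ as a mean-zero stochastic process indexed by $\mathcal F$ whose increments are subgaussian with respect to the seminorm $\norm{\cdot}_n$ (up to the scale $M$), and then to invoke the tail form of Dudley's chaining inequality, exactly as in \cref{subsub:dudley}. Throughout I write $D = \mathrm{diam}(\mathcal F)$ for the $\norm{\cdot}_n$-diameter, and I use that the stated bound is only meaningful (and is only applied) when $D < 1$, which holds in our setting where $D \lesssim n^{-1/10}$.

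First I would establish the increment bound. For $f_1, f_2 \in \mathcal F$,
\[
V_n(f_1) - V_n(f_2) = \frac{1}{\sqrt n}\sum_{i=1}^n (f_1(x_i) - f_2(x_i))\,\xi_i
\]
is a sum over $i$ of independent, mean-zero terms, each bounded in magnitude by $M\,|f_1(x_i) - f_2(x_i)|$. By Hoeffding's lemma (bounded mean-zero variables are subgaussian), the sum is subgaussian with
\[
\norm{V_n(f_1) - V_n(f_2)}_{\psi_2}^2 \lesssim \frac{1}{n}\sum_{i=1}^n M^2 (f_1(x_i)-f_2(x_i))^2 = M^2 \norm{f_1-f_2}_n^2 .
\]
Thus $V_n$ has subgaussian increments with parameter $K \rateeq M$ relative to $\norm{\cdot}_n$. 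Since $0 \in \mathcal F$ and $V_n(0)=0$, I have $\sup_f |V_n(f)| \le \sup_{f_1,f_2} |V_n(f_1) - V_n(f_2)|$, so it suffices to bound the process diameter. Applying Theorem 8.1.6 in \citet{vershynin2018high} (the same result used in \cref{subsub:dudley}) and integrating its tail yields
\[
\norm{\sup_f |V_n(f)|}_{\psi_2} \lesssim M\left[\int_0^{D} \sqrt{\log N(\epsilon, \mathcal F, \norm{\cdot}_n)}\,d\epsilon + D\right].
\]

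Next I would evaluate the entropy integral using the polynomial covering bound. Plugging in $N(\epsilon, \mathcal F, \norm{\cdot}_n) \le C(1/\epsilon)^V$ for $\epsilon \in [0,1]$ (valid on $[0,D]$ since $D<1$), and using $\sqrt{a+b}\le\sqrt a+\sqrt b$,
\[
\int_0^D \sqrt{\log N(\epsilon, \mathcal F, \norm{\cdot}_n)}\,d\epsilon \lesssim \int_0^D \left(\sqrt{\log C} + \sqrt{V}\,\sqrt{\log(1/\epsilon)}\right) d\epsilon \lesssim_{C,V} D\sqrt{\log(1/D)} ,
\]
where the final step uses the elementary estimate $\int_0^D \sqrt{\log(1/\epsilon)}\,d\epsilon \lesssim D\sqrt{\log(1/D)}$ for $D$ bounded away from $1$. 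Because $D \le D\sqrt{\log(1/D)}$ for small $D$, the stray additive $D$ from Dudley's bound is dominated, and combining the displays gives the claim, with the $\mathcal F$-dependent constants $C, V$ absorbed into the implied constant.

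The routine parts are the increment computation and the covering-number substitution; the only points requiring care are (i) passing from the tail bound of Dudley's inequality to the stated $\psi_2$-norm bound, correctly handling the additive $u\cdot\mathrm{diam}(\mathcal F)$ term in Theorem 8.1.6 and verifying it is dominated, and (ii) justifying the entropy-integral asymptotic $\int_0^D\sqrt{\log(1/\epsilon)}\,d\epsilon \lesssim D\sqrt{\log(1/D)}$ on the relevant regime $D<1$ rather than merely in the limit $D\to 0$. Neither is a genuine obstacle, but both must be stated rather than asserted, since the target bound's $\sqrt{\log(1/D)}$ factor comes precisely from this integral.
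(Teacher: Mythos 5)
Your proof is correct and takes essentially the same approach as the paper's: establish subgaussian increments $\norm{V_n(f_1)-V_n(f_2)}_{\psi_2} \lesssim M \norm{f_1-f_2}_n$, apply Dudley's chaining, evaluate the entropy integral via the polynomial covering bound to get $\mathrm{diam}(\mathcal F)\sqrt{\log(1/\mathrm{diam}(\mathcal F))}$, and use $0 \in \mathcal F$ to handle the base point. The only cosmetic differences are that you route through the tail form of Dudley's inequality (Vershynin's Theorem 8.1.6) and integrate the tail, whereas the paper invokes the Orlicz-norm form directly (van der Vaart and Wellner's Corollary 2.2.5), and that you make explicit the caveat---left implicit in the paper---that the bound is meaningful only in the regime $\mathrm{diam}(\mathcal F) \le 1/e$, which holds in the application.
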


\begin{proof}
The process $V_n(f)$ has subgaussian increments with respect to $\norm{\cdot}_n$: \[
\norm{V_n(f_1) - V_n(f_2)}_{\psi_2} \lesssim M \norm{f_1 - f_2}_n.
\]
Hence, by Dudley's chaining argument (e.g. Corollary 2.2.5 in \citet{vaart1996weak}), for
some fixed $f_0 \in \mathcal F$, \[
\normauto{\sup_{f} V_n(f)}_{\psi_2} \le \norm{V_n(f_0)}_{\psi_2} + CM \int_0^{
\mathrm{diam}(\mathcal F)} \sqrt{\log N(\delta, \mathcal F, \norm{\cdot}_n)} \,d \delta.
\]
Note that (i) the metric entropy
integral is bounded by $C \mathrm{diam}(\mathcal F)
\sqrt{\log (1/\mathrm{diam}(\mathcal F))}$, and (ii) for a fixed $f_0$, $\norm{V_n(f_0)}_
{\psi_2} \lesssim \norm{f_0}_n M \le \mathrm{diam}(\mathcal F) M$ since $0 \in \mathcal
F$. Therefore, \[
\normauto{\sup_{f} V_n(f)}_{\psi_2} \lesssim M \mathrm{diam}(\mathcal F)
\sqrt{\log (1/\mathrm{diam}(\mathcal F))}.
\]
\end{proof}

\begin{lemma}[Lemma 7.22(ii) in \citet{sen2018gentle}]
\label{lemma:sen2018}
Let $q(\cdot)$ be a real-valued function of bounded variation on
$\R$. The covering number of $\mathcal F = \br{x \mapsto q(ax + b) : (a,b) \in \R}$
satisfies \[
N(\epsilon, \mathcal F, L_2(Q)) \le K_1 \epsilon^{-V_1}
\]
for some $K_1$ and $V_1$ and for a constant envelope.
\end{lemma}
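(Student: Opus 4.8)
The plan is to reduce the two-parameter family $\mathcal F$ to \emph{VC-subgraph} classes, for which polynomial uniform covering numbers are classical. First I would exploit the bounded-variation hypothesis through the Jordan decomposition: on $\R$, a function $q$ of finite total variation $V(q)$ can be written $q = c + q_1 - q_2$, where $c = q(-\infty)$ is a constant and $q_1, q_2$ are bounded nondecreasing functions (the positive and negative variation of $q$). In particular $q$ is bounded, so $\mathcal F$ admits the \emph{constant} envelope $M \equiv \sup_{x} |q(x)| \le |c| + V(q) < \infty$, which is the envelope asserted in the statement; the additive constant $c$ is a fixed translate of every member of $\mathcal F$ and so plays no role in the covering-number computation.

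Next I would treat each monotone piece separately. For $j \in \br{1,2}$ set $\mathcal F_j = \br{x \mapsto q_j(ax+b) : (a,b) \in \R^2}$. The affine maps $\br{x \mapsto ax+b}$ span a two-dimensional vector space of functions and hence form a VC-subgraph class of bounded index (Lemma 2.6.15 in \citet{vaart1996weak}); composing with the fixed monotone function $q_j$ preserves the VC-subgraph property with the same index (Lemma 2.6.18(viii) in \citet{vaart1996weak}), regardless of the sign of $a$. Thus each $\mathcal F_j$ is VC-subgraph with an index $V_j$ bounded by a universal constant, and Theorem 2.6.7 in \citet{vaart1996weak}, applied with the constant envelope $M$, yields $N(\epsilon, \mathcal F_j, L_2(Q)) \le K_j (1/\epsilon)^{V_j'}$ for $\epsilon \in (0,1]$, uniformly over finitely supported $Q$, with constants $K_j, V_j'$ independent of $Q$.

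Finally I would recombine the pieces. Although the \emph{same} $(a,b)$ enters both components of $q_j(ax+b)$, this only helps: $\mathcal F \subseteq c + (\mathcal F_1 - \mathcal F_2)$, where the difference now ranges over independent parameters, and covering numbers are monotone under this inclusion and invariant under the translate by $c$. Since an $\epsilon/2$-net of $\mathcal F_1$ and an $\epsilon/2$-net of $\mathcal F_2$ combine, by the triangle inequality in $L_2(Q)$, into an $\epsilon$-net of $\mathcal F_1 - \mathcal F_2$, I obtain $N(\epsilon, \mathcal F, L_2(Q)) \le K_1 K_2 (2/\epsilon)^{V_1' + V_2'}$, which is of the asserted form $K_1 \epsilon^{-V_1}$ after relabeling constants. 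The only point requiring care — and the closest thing to an obstacle — is this last reduction: one must observe that the shared-parameter family is a \emph{subset} of the free Minkowski difference, so that the product bound for covering numbers of differences of classes applies without needing $\mathcal F$ itself to be VC-subgraph. This matters because a difference of two VC-subgraph classes need not be VC-subgraph, yet its covering numbers still multiply, which is exactly what the polynomial conclusion requires.
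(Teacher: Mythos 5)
Your proposal is correct: the Jordan decomposition of a bounded-variation $q$ into a constant plus a difference of bounded monotone functions, the VC-subgraph property of affine maps composed with a fixed monotone function (Lemmas 2.6.15 and 2.6.18(viii) in \citet{vaart1996weak}), the uniform polynomial covering bound of Theorem 2.6.7 in \citet{vaart1996weak} with the constant envelope, and the product bound for nets of the free Minkowski difference $\mathcal F_1 - \mathcal F_2 \supseteq \mathcal F - c$ are exactly the right ingredients, and you correctly flag that the last step needs only multiplicativity of covering numbers, not a VC-subgraph property for the difference class. The paper itself gives no proof of this lemma---it is cited verbatim from \citet{sen2018gentle}---and the standard argument behind that citation is the same one you wrote, so your proof matches the intended route.
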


\section{Auxiliary lemmas for \cref{thm:minimaxlower,thm:worstcaserisk}}

\begin{lemma}
\label{lemma:minimax}
In the proof of \cref{thm:minimaxlower}, suppose $Y_i \sim \Norm(m_0(\sigma_i), s_0^2 +
\sigma_i^2)$, then 
\[\inf_{\hatm} \sup_{m_0}  \E\bk{\frac{1}{n} \sum_{i=1}^n (\hat
m(\sigma_i) - m_0 (\sigma_i))^2}\gtrsim_\H n^{-2p/(2p+1)},\] where the infimum is over all
estimators of $m_0$ from $(Y_i, \sigma_i)_{i=1}^n$ and the supremum is over the H\"older
space $m_0 \in C_{A_1}^p([\sigl, \sigu])$.
\end{lemma}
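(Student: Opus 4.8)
The plan is to establish this as a classical minimax lower bound over H\"older classes via Assouad's lemma, using a family of hypotheses built from localized bumps (as in \citet{tsybakov2008introduction}, Section 2.6). Fix a $C^\infty$ bump $K$ supported on $[-1/2,1/2]$ with $K(0)=1$, set the bandwidth $h \asymp n^{-1/(2p+1)}$ and $M \asymp 1/h$, partition $[\sigl,\sigu]$ into $M$ disjoint subintervals of length $\asymp h$ with centers $x_1,\dots,x_M$, and for $\omega \in \br{0,1}^M$ define
\[
m_\omega(x) = L h^p \sum_{j=1}^M \omega_j\, K\pr{\frac{x-x_j}{h}},
\]
where $L>0$ is a small constant to be chosen. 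Because the bumps have disjoint supports, the $k$-th derivative of $m_\omega$ is bounded by $L h^{p-k}\norm{K^{(k)}}_\infty$ and the top H\"older seminorm of order $p-\lfloor p\rfloor$ scales like $L$ times a constant, uniformly in $h$. Hence for $L$ small enough (depending only on $A_1$ and $K$), every $m_\omega \in C^p_{A_1}([\sigl,\sigu])$, so all $2^M$ hypotheses are admissible in the supremum.

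Next I would compute the per-coordinate separation and the neighbor Kullback--Leibler divergence. With $\sigma_{1:n}$ equally spaced in $[\sigl,\sigu]$, each bump's support contains $\asymp nh$ design points, a constant fraction of which have $K(\cdot)^2$ bounded below, so for Hamming distance $\rho_H$,
\[
\frac{1}{n}\sum_{i=1}^n \pr{m_\omega(\sigma_i)-m_{\omega'}(\sigma_i)}^2 \gtrsim_{\Hyperparams} \rho_H(\omega,\omega')\, L^2 h^{2p}\cdot \frac{nh}{n} = \rho_H(\omega,\omega')\, L^2 h^{2p+1},
\]
giving a per-coordinate squared separation $\alpha \asymp h^{2p+1}$. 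Under $Y_i \sim \Norm(m_\omega(\sigma_i),\, s_0^2+\sigma_i^2)$ with variances bounded in $[s_0^2+\sigl^2,\, s_0^2+\sigu^2]$ by \cref{as:variance_bounds}, two strings differing only in coordinate $j$ satisfy
\[
\KL{P_\omega}{P_{\omega'}} = \frac12 \sum_{i=1}^n \frac{(m_\omega(\sigma_i)-m_{\omega'}(\sigma_i))^2}{s_0^2+\sigma_i^2} \le \frac{L^2 h^{2p}}{2(s_0^2+\sigl^2)} \sum_{i=1}^n K\pr{\frac{\sigma_i-x_j}{h}}^2 \lesssim_{\Hyperparams} L^2\, n h^{2p+1}.
\]
Since $h \asymp n^{-1/(2p+1)}$ gives $nh^{2p+1}\asymp 1$, the neighbor divergence is at most a constant multiple of $L^2$, which can be driven below $1/8$ by shrinking $L$.

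Assouad's lemma then yields, after using Pinsker's inequality to convert the bounded neighbor KL into a total-variation bound away from one,
\[
\inf_{\hatm}\sup_{m_0} \E\bk{\frac1n \sum_{i=1}^n (\hatm(\sigma_i)-m_0(\sigma_i))^2} \gtrsim_{\Hyperparams} M\cdot\alpha\cdot\pr{1 - \max_{\rho_H(\omega,\omega')=1}\norm{P_\omega-P_{\omega'}}_{TV}} \gtrsim_{\Hyperparams} \frac1h\cdot h^{2p+1} = h^{2p} \asymp n^{-\frac{2p}{2p+1}},
\]
which is the claimed bound. The argument is entirely standard; the only points needing care are (i) that the bump family lies in $C^p_{A_1}$ \emph{uniformly} in the bandwidth $h$, which the $h^p$ height scaling together with the disjoint supports handles, and (ii) that the fixed, equally-spaced design produces the correct per-coordinate separation $\asymp h^{2p+1}$ and keeps neighbor KL divergences bounded despite the heteroskedastic variances $s_0^2+\sigma_i^2$. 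Both reduce to the uniform variance bounds of \cref{as:variance_bounds} and to counting $\asymp nh$ design points per bump, with every implicit constant depending only on $\Hyperparams$; this bookkeeping is the main (though routine) obstacle.
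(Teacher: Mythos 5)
Your proposal is correct, but it takes a genuinely different route from the paper. You prove the bound from scratch: a hypercube of disjoint bumps of height $Lh^p$ with $h \asymp n^{-1/(2p+1)}$, a per-coordinate separation $\asymp h^{2p+1}$ at the equally spaced design points, a neighbor KL bound that uses only the variance lower bound $s_0^2 + \sigl^2$ from \cref{as:variance_bounds}, and Assouad's lemma applied directly to the design-point loss $\frac{1}{n}\sum_i (\hatm(\sigma_i) - m_0(\sigma_i))^2$. The paper instead makes two reductions and then cites a textbook result: (i) it writes $Y_i = \theta_i + \sigl W_i + (\sigma_i^2 - \sigl^2)^{1/2}U_i$ and argues the heteroskedastic problem is at least as hard as the homoskedastic one with variance $\sigl^2 + s_0^2$ (one can always simulate the noisier data from the cleaner data); (ii) it invokes Corollary 2.3 of \citet{tsybakov2008introduction} for the \emph{integrated} MSE lower bound, and then converts integrated loss into design-point loss by constructing a piecewise function $\tilde m$ whose cell averages equal the $\hatm(\sigma_i)$ and using the Lipschitz continuity of $m_0$ (the $p\ge 1$ assumption), incurring an extra $O(n^{-2})$ term. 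What each approach buys: the paper's argument is shorter on the page because the bump construction and testing argument are delegated to the cited result, but it needs both reductions and the discretization bookkeeping; yours is self-contained, handles heteroskedasticity for free (only the variance lower bound enters the KL), and bounds the quantity of interest directly without the integrated-to-discrete embedding. The only step you should spell out in a full writeup is the standard Assouad reduction from estimation loss to Hamming loss (minimum-distance decoding per coordinate), which works here precisely because your bumps have disjoint supports so the design-point loss decomposes across coordinates.
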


\begin{proof} 

First, note that learning $m_0$ from $(Y_i, \sigma_i)$ is a nonparametric
regression problem with heteroskedastic variances. This problem is more difficult than
a corresponding problem with homoskedastic variances $\sigl^2 + s_0^2$, since we may
represent \[
Y_i = \theta_i + \sigl W_i + (\sigma_i^2 - \sigl^2)^{1/2} U_i
\]
for independent Gaussians $W_i, U_i \sim \Norm(0,1)$. Let $V_i = \theta_i + \sigl W_i$.
Note that we can do no worse for estimating $m_0$ with $(V_i, \sigma_i)$ than with $(Y_i,
\sigma_i)$, and estimating $m_0$ from $(V_i, \sigma_i)$ is a homoskedastic regression
problem, where $V_i \sim \Norm(m_0(\sigma_i), \sigl^2 + s_0^2)$. It remains to show that
the minimax rate for estimating $m_0$ on the grid points $\sigma_ {1:n}$ from $(V_i,
\sigma_i)$ is $n^{-2p/(2p+1)}$.

Since we simply have a nonparametric regression problem,
we may translate and rescale so that the design points $\sigma_ {1:n}$ are equally spaced
in $[0, 1]$ ($\sigma_i = i/n$) and the variance of $V_i$ is 1---potentially changing the
constant $A_1$ for
the H\"older smoothness condition. Corollary 2.3 in \citet{tsybakov2008introduction} shows
a lower bound for integrated MSE: \[
\inf_{\tilde m} \sup_{m_0} \E\bk{\int_{0}^1 (\tilde m
(x) - m_0(x))^2\,dx} \gtrsim_\Hyperparams n^{-\frac{2p}{2p+1}}
\]
where the infimum is over all (randomized) estimators using $(V_i, \sigma_i)$. It thus
suffices to connect the MSE objective over the fixed design points
$\sigma_1,\ldots,\sigma_n$ to the integrated
MSE.

Observe that for any $\hatm(\sigma_1),\ldots,
\hatm(\sigma_n)$, we can define the piecewise constant function $\tilde m : [0,1] \to
\R$ such
 that it is equal to $\hatm(\sigma_i)$ on $[(i-1)/n, i/n)$.  Now, note that \begin{align*}
\int_{0}^1 (\tilde m(x) - m_0(x))^2 \,dx &= \sum_{i=1}^n \int_{[(i-1)/n, i/n]} (\tilde
m (x) - m_0(x))^2 \,dx
\\
&\le 2\sum_{i=1}^n \int_{[(i-1)/n, i/n]} (\tilde m(x) - m_0(\sigma_i))^2 + (m_0(\sigma_i)
- m_0(x))^2 \,dx \tag{$(a+b)^2 \le 2a^2 + 2b^2$} \\
&\le 2\sum_{i=1}^n \bk{\frac{1}{n} (\hat m_i - m_0(\sigma_i))^2  + \frac{L^2}{n^3}} \\
&=  \frac{2}{n} \sum_{i=1}^n (\hat m_i - m_0(\sigma_i))^2 + \frac{2L^2}{n^2}.
\end{align*}
The third line follows by observing that $m_0(\cdot)$ is Lipschitz for some constant $L$
that depends solely on $p, A_1$, since $p \ge 1$ in \cref{as:holder}. Therefore,
\begin{align*}
\inf_{\hatm} \sup_{m_0}  \E\bk{\frac{1}{n} \sum_{i=1}^n (\hat m(\sigma_i) - m_0
(\sigma_i))^2} &\ge  \frac{1}{2}  \inf_{\tilde m} \sup_{m_0} \br{\E\bk{\int_{0}^1 (\tilde m
(x) - m_0(x))^2\,dx} - \frac{2L^2}{n^2}} \\
&\gtrsim_\Hyperparams n^{-\frac{2p}{2p+1}}.\qedhere
\end{align*}
\end{proof}

\begin{lemma}
\label{lemma:optimal_bayes_linear}
Assume that $
\theta_i \mid \sigma_i $ has mean $m_0(\sigma)$ and variance $s_0^2(\sigma)$, without
 assuming \eqref{eq:location_scale}. Consider the decision rule 
\eqref{eq:gaussian_cond_b} and denote it by $\delta^*$.
 Then \[
\delta^*(Y_i, \sigma_i) \in \argmin_{\delta(Y_i, \sigma_i) \in L} \E\bk{
  (\delta(Y_i, \sigma_i) - \theta_i )^2 \mid \sigma_i
}
\]
where $
L = \br{
  \delta(Y_i, \sigma_i) = a(\sigma_i) + b(\sigma_i) Y_i : a(\cdot), b(\cdot) \text{
  measurable}
}.
$
\end{lemma}
\begin{proof}
For a given $a(\cdot), b(\cdot)$, we can compute by bias-variance decomposition, \[
\E[(\delta(Y_i, \sigma_i) - \theta_i)^2 \mid \sigma_i] = (a(\sigma_i) + b(\sigma_i)m
(\sigma_i) - m
(\sigma_i))^2 + b^2(\sigma_0) s_0^2 + b^2(\sigma_0) \sigma_i^2 + \sigma_i^2 - 2b(\sigma_i)
s_0^2(\sigma_i).
\]
Minimizing the above expression for $a(\sigma_i), b(\sigma_i)$ yields \[
b(\sigma_i) = \frac{s_0^2(\sigma_i)}{s_0^2(\sigma_i) + \sigma_i^2} \quad a(\sigma_i) = 
(1-b(\sigma_i)) m(\sigma_i).
\]
This corresponds exactly to the decision rule $\delta^*$. 
\end{proof}

\begin{lemma}
\label{lemma:minimax_close_gauss}
Consider the setup of \cref{thm:worstcaserisk}. The minimax risk is achieved by the
decision rule \eqref{eq:gaussian_cond_b}. That is, let $\delta_i^*$ denote the decision
rule
\eqref{eq:gaussian_cond_b}. Then, \[
\sup_{P_0 \in \mathcal P(m_0, s_0)} \E_{P_0}\bk{
  \frac{1}{n} \sum_{i=1}^n (\delta_i^* - \theta_i)^2
}  = \inf_{\delta_{1:n}} \sup_{P_0 \in \mathcal P(m_0, s_0)} \E_{P_0}\bk{
  \frac{1}{n} \sum_{i=1}^n (\delta_i - \theta_i)^2
} 
\]
where the infimum is taken over all (randomized) decision rules $\delta_i(Y_{1:n},
\sigma_{1:n})$, with knowledge of $m_0, s_0$. 
\end{lemma} 

\begin{proof}
The $\ge$ direction is immediate. We consider the $\le$ direction. 
Note that \[
\E_{P_0}\bk{
  \frac{1}{n} \sum_{i=1}^n (\delta_i^* - \theta_i)^2
}  = \frac{1}{n} \sum_{i=1}^n \frac{s_0^2(\sigma_i) \sigma_i^2 }{s_0^2(\sigma_i) +
\sigma_i^2}
\]
for all $P_0 \in \mathcal P(m_0, s_0)$, regardless of $P_0$. Thus, 
\[
\sup_{P_0 \in \mathcal P(m_0, s_0)} \E_{P_0}\bk{
  \frac{1}{n} \sum_{i=1}^n (\delta_i^* - \theta_i)^2
} = \frac{1}{n} \sum_{i=1}^n \frac{s_0^2(\sigma_i) \sigma_i^2 }{s_0^2(\sigma_i) +
\sigma_i^2}.
\]
Suppose $P_0$ denotes the distribution where $\theta_i \mid \sigma_i \sim \Norm(m_0
(\sigma_i), s_0^2(\sigma_i))$. Then, for any decision rule $\delta_i$, \[
\E_{P_0}\bk{\frac{1}{n} \sum_{i=1}^n (\delta_i - \theta_i)^2} \ge\frac{1}{n} \sum_{i=1}^n \frac{s_0^2(\sigma_i) \sigma_i^2 }{s_0^2(\sigma_i) +
\sigma_i^2}.
\]
This is because the right-hand side is the Bayes risk under the Gaussian model. As a
result, \[
\inf_{\delta_{1:n}} \sup_{P_0 \in \mathcal P(m_0, s_0)} \E_{P_0}\bk{
  \frac{1}{n} \sum_{i=1}^n (\delta_i - \theta_i)^2
} \ge 
\frac{1}{n} \sum_{i=1}^n \frac{s_0^2(\sigma_i) \sigma_i^2 }{s_0^2(\sigma_i) +
\sigma_i^2}.
\]
This concludes the proof.
\end{proof}

\begin{lemma}
\label{lemma:indepgauss_not_bounded}
Given $s_0(\cdot), m_0(\cdot)$, let \[
s_0^2 = \frac{1}{n} \sum_{i=1}^n (s_0^2(\sigma_i) + (m_0(\sigma_i) - m_0)^2)
\text{ and }
m_0 = \frac{1}{n} \sum_{i=1}^n m_0(\sigma_i).
\]
Fix $C > 0$, there exists choices of $s_0(\cdot) > 0, \sigma_i, m_0(\cdot), P_0 \in
\mathcal
P(m_0, s_0)$ such that $\max_i
s_0^2
(\sigma_i) / \sigma_i^2 < C$ but \[
\frac{\E_{P_0}\bk{
  \frac{1}{n} \sum_{i=1}^n (\hat\theta_i - \theta_i)^2
}}{
  \E_{P_0}\bk{
  \frac{1}{n} \sum_{i=1}^n (\check\theta_i - \theta_i)^2
}
}
\]
is arbitrarily large, where $\hat\theta_i$ are the \indepgauss{} posterior means and
$\check\theta_i$ are the \closegauss{} posterior means: \begin{align*}
\hat\theta_i &= \frac{\sigma_i^2}{s_0^2 + \sigma_i^2} m_0 + \frac{s_0^2}{s_0^2 +
\sigma_i^2} Y_i \\
\check\theta_i &= \frac{\sigma_i^2}{s_0^2(\sigma_i) + \sigma_i^2} m_0(\sigma_i) + 
\frac{s_0^2 (\sigma_i) }{s_0^2(\sigma_i)
+
\sigma_i^2} Y_i.
\end{align*}
\end{lemma}
\begin{proof}
Choose $\sigma_i = 1 + i/n$. Choose a constant $s_0(\sigma_i) = \epsilon>0$ and some
non-constant $m_0
(\sigma_i)$ normalized so that $m_0 = 0$. Thus $s_0^2 \ge \frac{1}{n} \sum_{i=1}^n m_0
(\sigma_i)^2 \equiv K$. Thus, \[
\var(\hat\theta_i) \ge \pr{\frac{K}{K+\sigma_i^2}}^2 (\sigma_i^2 + s_0^2(\sigma^2)) > c >
0
\] for some $c >0$ for all $\epsilon > 0$. Therefore, the numerator is bounded below by
 $c$. The denominator converges to zero as $\epsilon \to 0$. With this choice, $\max_i
 s_0^2(\sigma_i)/\sigma_i^2 < \epsilon$ is eventually smaller than any positive $C$. Thus
 the ratio is arbitrarily large as we take $\epsilon \to 0$. 
\end{proof}

\section{Maximum posterior discrepancy of priors satisfying moment constraints}
\label{asec:max_gauss}

This section contains the main result of \citet{chen2023bayesrisk} (arXiv:2303.08653) and
supersedes that paper.\footnote{\label{fn:error}I am grateful to Isaiah Andrews, Xiao-Li
Meng, Natesh
Pillai, Neil Shephard, and Elie Tamer for their comments. The previous version of the
paper (arXiv:2303.08653v1) claimed that $R (G_1, \sigma; G_0)$ is uniformly bounded over
all $G_0, G_1, \sigma>0$, subjected to the constraints on the first two moments of
$G_0,G_1$. Regrettably, it contained a critical error that rendered its proof incorrect.
In particular, in that version, the display before (A1) on p.4 is incorrect: Posterior
means of mixture priors are mixtures of posterior means under each mixing component, but
the mixing weights are posterior probabilities assigned to each mixing component; thus,
the mixing weights depend on the data rather than being fixed.

This section partially restores that result. \Cref{thm:max_risk}
shows that the maximum Bayes risk under $G_0$ is uniformly bounded over all $G_0, G_1,
\sigma^2$ where $G_1$ satisfies an additional tail condition \eqref{eq:g1_cond}. The
bound we obtain depends on the tail condition, and thus \cref{thm:max_risk} is insufficient
for the result claimed in arXiv:2303.08653v1.} The
notation and setup is entirely self-contained.

Consider an observation $X$ whose likelihood is $X \mid \theta \sim \Norm(\theta,
\sigma^2)$ for some known
$\sigma^2$. There are two priors for $\theta$, denoted by $G_0$ and $G_1$. Suppose both
priors have zero mean and have finite variances bounded by $V > 0$. Consider the decision
problem of estimating $\theta$ under squared error, with $L(a,\theta) = (a-\theta)^2$. For
the Bayesian with prior $G_1$, the Bayes decision rule is the posterior mean $\E_{G_1}
[\theta \mid X]$ under the prior $G_1$. This decision rule attains Bayes risk under the
prior $G_0$\[
R(G_1, \sigma; G_0) \equiv \E_{\theta \sim G_0}\bk{
  (\E_{G_1}[\theta \mid X] - \theta)^2
}. 
\] We can think of $R(G_1, \sigma; G_0)$ as a measure of decision quality under
 disagreement. It measures the quality of $G_1$'s decision from $G_0$'s point of view.
 When $G_1 \neq G_0$, how large can $R (G_1,
\sigma; G_0)$ be?

Since \[ R(G_1, \sigma; G_0) \le 2\pr{
\E_{G_0}[\E_{G_1}[\theta \mid X]^2] + \E_{G_0}[\theta^2] } \le 2V + 2\E_{G_0}[\E_{G_1}
 [\theta \mid X]^2],\] it thus suffices to bound $\E_{G_0}[\E_{G_1}[\theta \mid X]^2]$
 modulo constants. That is,  it suffices to bound the 2-norm $\norm{\E_{G_1}
 [\theta \mid X]}^2$ under the law $X \sim \Norm(0,\sigma^2)
\star G_0$. 

The rest of the section shows that this quantity is uniformly bounded over all $G_0,
G_1, \sigma^2 > 0$. Specifically, \cref{lemma:bound_small_sigma} shows that for all $G_0,
G_1$ that are mean zero and have variance bounded by $V$, $\E_{G_0}[\E_{G_1}[\theta
\mid X]^2]$ is bounded by a constant that depends only on $(V, \sigma^2)$. This bound is
 large when $\sigma^2$ is large. To improve this bound, \cref{thm:max_risk} then shows
 that, if $G_1$ additionally satisfies some conditions on its tail behavior, $\E_ {G_0}
 [\E_{G_1}[\theta
\mid X]^2]$ is bounded by a constant that depends only on $V$ and the tail condition---and
does not depend on $\sigma$.

\begin{lemma}
\label{lemma:bound_small_sigma}
Suppose $G_0, G_1$ have mean zero and variances bounded by $V$, then
\[
\E_{G_0}[\E_{G_1}[\theta \mid X]^2] \le 6V + 4\sigma^2
\]
uniformly over $G_0, G_1, \sigma^2$. 
\end{lemma}
\begin{proof}
Let $f_{G,\sigma}(x) = \int f_X(x \mid \theta) \,G(d\theta)$.
\citet{jiang2020general} (Lemma 1) shows that \[
\pr{\frac{f'_{G,\sigma}(x)}{f_{G,\sigma}(x)}}^2 \le \frac{1}{\sigma^2} \log \pr{\frac{1}
{2\pi \sigma^2 f^2_{G,\sigma}(x)}}.
\]
Plugging in the bound \eqref{eq:jensen_denom} in \cref{lemma:jensen_denom}, we have that
for all $X$, \[
\pr{\E_{G_1}[\theta \mid X] - X}^2 = \pr{\sigma^2\frac{f'_{G_1,\sigma}(x)}{f_
{G_1,\sigma} (x)}}^2 \le \sigma^4 \frac{1}
{\sigma^2} \frac{X^2 + V}
{\sigma^2} = X^2 + V,
\]
where the first equality is due to Tweedie's formula. 

Now, note that \[
(\E_{G_1}[\theta \mid X])^2 \le 2 \pr{
  (\E_{G_1}[\theta \mid X] - X)^2 + X^2
} \tag{$(a+b)^2 \le 2(a^2 + b^2)$}.
\]
Hence, \[
\E_{G_0}\bk{
  \E_{G_1}[\theta \mid X])^2
} \le 2 \E_{G_0}[2X^2+V] \le 2(2(V+\sigma^2) + V) = 6V + 4\sigma^2. \qedhere
\]
\end{proof}

To show a more powerful bound, we require a stronger condition on the tails of $G_1$ and
derive bounds that are independent of $\sigma$ but are dependent on the tail conditions.
In particular, assume
\[ \max\pr{1-G_1(s), G_1(-s) }\le C_{G_1} s^{-k} \numberthis\label{eq:g1_cond}
\]
for some $k > 2$ and $C_{G_1} > 0$, for all $s > 0$. We will also assume that $\E
_{G_1}[\theta^2 \mid X]$ exists almost surely. Note that if $\E_{G_1} |\theta|^
{2+\epsilon} < m$, then $k$ can be taken to be $2+\epsilon$ and $C_{G_1}$ can be taken to
be $m$ by Markov's inequality. In the rest of the proof, we let $C_t < \infty$ denote a
positive constant that depends only on $t$. Occurrences of
$C_t$ might correspond to different constant values.

\begin{theorem}
\label{thm:max_risk}
Suppose $k > 2.$ There exists a constant $Q < \infty$ that depends solely on $(C_ {G_1},
k, V)$ such that, uniformly for all $(G_0, G_1)$ and $\sigma \in \R$, where (i) $G_0,G_1$
have mean zero and variance bounded by $V$ and (ii) $G_1$ satisfies
\eqref{eq:g1_cond} with $(C_{G_1},
k)$, \[
\E_{G_0}\bk{\E_{G_1}[\theta\mid X]^2} \le Q.
\]
\end{theorem}

\begin{proof}
Assume that $ \sigma^2 \ge 1 $. For all $\sigma^2 < 1$, we can apply 
\cref{lemma:bound_small_sigma}
so that $\E_{G_0}\bk{\E_{G_1}[\theta\mid X]^2} \le 6V + 4$. 

Observe that \begin{align*}
\E_{G_0}\bk{\E_{G_1} [\theta \mid X]^2} &\le \E_{G_0} \bk{\E_{G_1}[\theta^2 \mid X]}  \tag{Jensen's
inequality} \\
&= \E_{G_0}\bk{\int_{0}^\infty P_{G_1}(\theta^2 > t \mid X) \,dt} \\
&= 2\E_{G_0}\bk{\int_0^\infty s P_{G_1}(|\theta| > s \mid X)\,ds \tag{Change of variable $s
=
\sqrt{t}$}}
\\
&= 2\E_{G_0}\bk{\int_0^\infty s P_{G_1}(\theta > s \mid X)\,ds + \int_0^\infty s P_{G_1}
(-\theta > -s
\mid X)\,ds}.
\end{align*}
Therefore, it suffices to bound the first term, since the second term follows by a
symmetric argument. We do so in the remainder of the proof. Here, $\E_{G_1}[\theta^2
\mid X]$ exists since $(\theta^2, X)$ is integrable. 

Writing out the first term as an integral:
\begin{align*} &\E_{G_0}
\bk{\int_0^\infty s P_ {G_1} (\theta > s \mid
X)\,ds} \\ &= \int_ {\mu=-\infty}^\infty\int_{x=-\infty}^\infty \int_{s=0}^\infty  s
P_{G_1}[\theta > s
\mid
X=x]\, ds f_X (x\mid \mu)
\,dx\,G_0(d\mu)\\
&= \int_
{\mu=-\infty}^\infty  \int_{s=0}^\infty s \int_{x=-\infty}^\infty P_{G_1}[\theta > s
\mid
X=x]  f_X (x\mid \mu) \,dx \,ds \,G_0(d\mu). \tag{Fubini's theorem}
\end{align*}

The outer integral in $\mu$ can be decomposed into $|\mu| \le \sigma$ and $|\mu| >
\sigma$: \begin{align}
&\E_{G_0}\bk{\int_0^\infty s P_ {G_1} (\theta > s \mid
X)\,ds} \nonumber\\
&= \int_
{|\mu| > \sigma}  \int_{s=0}^\infty s \int_{x=-\infty}^\infty P_{G_1}[\theta > s
\mid
X=x]  f_X (x\mid \mu) \,dx \,ds \,G_0(d\mu)\label{eq:small_sigma}
\\
&\quad+ \int_
{|\mu| < \sigma} \int_{s=0}^\infty s \int_{x=-\infty}^\infty P_{G_1}[\theta > s
\mid
X=x]  f_X (x\mid \mu) \,dx \,ds \,G_0(d\mu) \label{eq:large_sigma}
\end{align}

First, we consider \eqref{eq:small_sigma}. Decompose the integral in $x$ further along $x
\le s/2$ and $x > s/2$: 
\begin{align}
\eqref{eq:small_sigma} &= \int_{|\mu| > \sigma} \int_0^\infty s \int_{s/2}^{\infty}P_{G_1}
(\theta > s
\mid X=x) f_X(x\mid \mu) \,dx \,ds \,G_0(d\mu)\label{eq:term3}
\\&\quad + \int_{|\mu| > \sigma} \int_0^\infty s \int_{-\infty}^{s/2} P_{G_1}
(\theta > s
\mid X=x) f_X(x\mid \mu) \,dx \,ds\,G_0(d\mu) \label{eq:term2_}.
\end{align}

For large $\mu$ and large $x$ \eqref{eq:term3}, we have that
\begin{align*}
\eqref{eq:term3} &\le \int_{|\mu| > \sigma} \int_0^\infty s \int_{s/2}^{\infty} f_X(x\mid
\mu) \,dx ds G_0(d\mu) \tag{$P_{G_1}
(\theta > s
\mid X=x) \le 1$}\\
&= \int_{|\mu| > \sigma} \int_0^\infty s P(X > s/2 \mid \mu)\,ds \,G_0(d\mu) \\ 
&\le C\int_{|\mu| > \sigma} \underbrace{\E[X^2 \mid \mu]}_{\mu^2 + \sigma^2 \le
2\mu^2}\, G_0(d\mu)  \le C \int
2\mu^2
\,G_0(d\mu) \le C_{V}.\tag{$\int 2sP(X > s \mid \mu)\,ds \le \E[X^2 \mid \mu]$}
\end{align*}

For large $\mu$ and small $x$ \eqref{eq:term2_}, note that for $x < s/2< s$, by 
\cref{lemma:jensen_denom}
\begin{align*}
P_{G_1}(\theta > s \mid X=x) &\le C_{V} e^{x^2/(2\sigma^2)} e^{-\frac{1}{2\sigma^2}(x-s)^2}
(1-G_1(s)). \tag{$f_X(x\mid \theta) \le \frac{1}{\sqrt{2\pi} \sigma} e^{-\frac{1}
{2\sigma^2} (x-s)^2}$}
\end{align*}
Now, integrating the above display with respect to $f_X(x\mid \mu)\,dx$ yields \[
 \int_{-\infty}^{s/2}P_{G_1}
(\theta > s
\mid X=x) f_X(x\mid \mu) \,dx \le C_{V}(1-G_1(s)) \cdot \frac{\sigma^2}{s} \le 
C_{V}(1-G_1(s)) \frac{\mu^2} {s} \tag{$|\mu| > \sigma$ for \eqref{eq:small_sigma}}
\]
Finally, integrating it again with respect to $s$ yields \[
\int_0^\infty s \times C_{V}(1-G_1(s)) \frac{\mu^2} {s} \,ds = C_V \mu^2 \int_{0}^\infty
1-G_1(s)\,ds \le C_V \mu^2 \E_{G_1}[|\theta|] \le C_V \mu^2.
\]
Therefore, \[
\eqref{eq:term2_} \le  C_V \E_{G_0}\mu^2 \le C_V.
\]
This shows that \eqref{eq:small_sigma} is uniformly bounded.

Moving on to \eqref{eq:large_sigma}, we first decompose the integral on $s$ into $s \le K$
and $s > K$, for some $K \ge e$ to be chosen: \[
\eqref{eq:large_sigma} \le \underbrace{\int_{|\mu| < \sigma} \int_0^K s \,ds \,G_0(d\mu)}_
{\le K^{2}/2} + \int_
{|\mu| < \sigma} \int_K^\infty s \int_{-\infty}^\infty P_{G_1}(\theta > s \mid X=x) f_X
(x\mid \mu) \,dx\,G_0(d\mu) \numberthis \label{eq:large_sigma_bound}
\] Thus we focus on the piece where $s > K$. Fix \[u = C \sigma \sqrt{\log (s)}\] for some
 $C
\ge 2$ to be chosen. On $s > K$,
$u/\sigma > 2$ and thus $\frac{u}{\sigma} - 1 > \frac{u}{2\sigma}$.
Observe that by \cref{lemma:jensen_denom} and the fact that $\sigma > 1$, \[
P_{G_1}(\theta > s \mid X=x) \le C_V \exp\pr{\frac{x^2}{2\sigma^2}} (1-G(s)). \numberthis
\label{eq:survival_bound}
\]
Therefore, \begin{align*}
&\int_{-\infty}^\infty P_{G_1}(\theta > s \mid X=x) f_X
(x\mid \mu)\,dx \\
&\le \int_{|x| \le u} P_{G_1}(\theta > s \mid X=x) f_X
(x\mid \mu)\,dx + \P(|X|>u \mid \mu) \tag{$P_{G_1}(\theta > s \mid X=x) \le 1$}\\
&\le C_V e^{-\mu^2/(2\sigma^2)}(1-G(s)) \bk{
  \frac{\sinh\pr{\frac{\mu}{\sigma} \frac{u}{\sigma}}}{\mu/\sigma}
} + 2\underbrace{\bar\Phi\pr{\frac{u}{\sigma} - \frac{|\mu|}{\sigma}}}_{\le \bar\Phi
\pr{u/\sigma - 1} \le \bar\Phi\pr{\frac{u}{2 \sigma}}} \tag{$\bar\Phi = 1-\Phi$ is the
complementary Gaussian CDF}\\
&\le C_V(1-G(s)) \frac{\sinh\pr{\frac{\mu}{\sigma} \frac{u}{\sigma}}}{\mu/\sigma} +
2\bar\Phi\pr{\frac{u}{2 \sigma}} \numberthis \label{eq:intermediate_large_sigma}
\end{align*}
where the second inequality follows from directly integrating the upper bound 
\eqref{eq:survival_bound} within $|x| \le u$. Now, observe that, for $|c| < 1$ and $t >
0$, \begin{align*}
t \frac{\sinh(ct)}{ct} &\le t \frac{\sinh(|c|t)}{|c|t} \tag{$\sinh(x)/x$ is an even
function}  \\
&\le t \frac{\sinh(t)}{t} \tag{$\sinh(x)/x$ is an increasing function on $x > 0$} \\
&\le \frac{1}{2}e^{t} \tag{$\sinh(x) = (e^x-e^{-x})/2\le \frac{1}{2}e^x$ for $x > 0$}.
\end{align*}
Therefore, \begin{align*}
\eqref{eq:intermediate_large_sigma} &\le C_V (1-G(s)) \exp\pr{\frac{C}{\sqrt{\log s}} \log
s} + 2 \bar\Phi
(C \sqrt{\log s}) \\
&\le C_V (1-G(s)) \exp\pr{\frac{C}{\sqrt{\log s}} \log
s} + \exp\pr{-\frac{C^2}{2} \log s} \tag{\cref{lemma:millsratio}}.
\end{align*}
Choose $C = k$ and $ K = \exp\pr{1 \vee \frac{(2C)^2}{(k-2)^2}}$. This yields that, for
$s > K$, \[
\frac{C}{\sqrt{\log s}}  \le \frac{k-2}{2} \quad \frac{C^2}{2} = \frac{k^2}{2}.
\]
Hence, integrating with respect to $s$: \begin{align*}
&\int_K^\infty s \int_{-\infty}^\infty P_{G_1}(\theta > s \mid X=x) f_X
(x\mid \mu)\,dx\,ds \\
&\le \int_K^\infty s 
\pr{C_V (1-G(s)) \exp\pr{ \frac{k-2}{2} \log
s} + \exp\pr{-\frac{k^2}{2} \log s}} \,ds \\
&\le C_{V} C_{G_1} \int_K^\infty s^{1-k+\frac{k-2}{2}}\,ds + \int_{K}^\infty s^{-k^2/2 +1}
\,ds \\
&\le C_V C_{G_1} C_k + C_k. \tag{$1-k+(k-2)/2 < -1$ and $-k^2/2+1 < -1$}
\end{align*}
as both integrals converge and depend only on $k > 2$. Returning to 
\eqref{eq:large_sigma_bound}, this shows that \eqref{eq:large_sigma} is uniformly bounded
with a constant that depends only on $V, C_{G_1}, k$. This concludes the proof.
\end{proof}

\begin{lemma}
\label{lemma:jensen_denom}
Suppose $G_1$ has mean zero and variance bounded by $V$. Let \[
f_{G_1, \sigma}(x) \equiv \int f_X(x\mid \theta) \, G_1(d\theta).
\] Then, \[
f_{G_1,\sigma}(x) \ge \frac{1}{\sqrt{2\pi}\sigma} \exp\pr{-\frac{x^2 + V}
{2\sigma^2}}
\text{, or }
\frac{1}{f_{G_1, \sigma}(x)} \le \sqrt{2\pi} \sigma \exp\pr{\frac{x^2 + V}{2\sigma^2}}.
\numberthis \label{eq:jensen_denom}
\]
\end{lemma}

\begin{proof}
Observe that, by Jensen's inequality, \[
f_{G_1, \sigma}(x) \equiv \int f_X(x\mid \theta) \, G_1(d\theta) \ge \exp \int \log f_X
(x\mid
\theta) \,G_1(d\theta).
\]
We compute \[
\log f_X(x \mid \theta) = \log\frac{1}{\sqrt{2\pi} \sigma} - \frac{1}{2\sigma^2} 
(x-\theta)^2. 
\]
Note that $\E_{\theta \sim G_1}[(x-\theta)^2] = x^2 - 2x \E_{G_1}[\theta] + \E_
{G_1}\theta^2 \le x^2 + V$.
Thus \eqref{eq:jensen_denom} follows. \end{proof}

\begin{lemma}
\label{lemma:millsratio}
For all $x \ge 0$, $\bar\Phi(x) \le \frac{1}{2} e^{-x^2/2}$. 
\end{lemma}
\begin{proof}
Note that $\bar\Phi(0) = \frac{1}{2}$ and thus the bound holds with equality at $x=0$.
Differentiate, \[
\bar \Phi'(x) = -\varphi(x) \quad \frac{d}{dx} \frac{1}{2} e^{-x^2/2} = -\frac{x}{2} e^
{-x^2/2} = (-x \sqrt{\pi/2}) \varphi(x)
\]
For $x \in [0, \sqrt{2/\pi}]$, \[
\frac{d}{dx}\bar\Phi(x) \le \frac{d}{dx} \frac{1}{2} e^{-x^2/2} \implies \bar\Phi(x) \le 
\frac{1}{2} e^{-x^2/2}. 
\]
Note that since Mill's ratio is bounded by $1/x$, we have that for all $x > 0$ \[
\bar\Phi(x) \le \varphi(x)/x.
\]
Take $x > \sqrt{2/\pi}$, we have that\[
\bar \Phi(x) \le \varphi(x) \sqrt{\frac{\pi}{2}} = \frac{1}{2} e^{-x^2/2}.
\]
Hence the inequality holds for all $ x \ge 0 $.
\end{proof}

\end{appendices}

\printbibliography
\end{refsection}

\end{document}